\pdfoutput = 1
\documentclass[a4paper,11pt]{article}
\usepackage[margin=2cm]{geometry}

\usepackage{commands}

\definecolor[named]{urlblue}{cmyk}{1,0.58,0,0.21}

\hypersetup{
    breaklinks=true,
    colorlinks=true,
    citecolor=purple!70!blue!60!black,
    linkcolor=purple!70!blue!90!black,
    urlcolor=urlblue,
    pdflang={en},
    pdfborder={0 0 0},
    pdftitle={Optimally Repurposing Existing Algorithms to Obtain Exponential-Time Approximations},
    pdfauthor={Bar\i\c{s} Can Esmer, Ariel Kulik, D{\'{a}}niel Marx, Daniel Neuen, and Roohani Sharma},
    pdfcreator={},
}

\title{Optimally Repurposing Existing Algorithms to\\Obtain Exponential-Time Approximations}

\author[1]{Bar\i\c{s} Can Esmer\thanks{The author is part of Saarbrücken Graduate School of Computer Science, Germany.}}
\author[1]{Ariel Kulik}
\author[1]{D{\'{a}}niel Marx\thanks{Research supported by the European Research Council (ERC) consolidator grant No.~725978 SYSTEMATICGRAPH.}}
\author[2]{Daniel Neuen}
\author[3]{Roohani Sharma}
\affil[1]{CISPA Helmholtz Center for Information Security, Saarbr\"ucken, Germany. \{\texttt{baris-can.esmer|ariel.kulik|marx}\}\texttt{@cispa.de}}
\affil[2]{University of Bremen, Bremen, Germany. \texttt{dneuen@uni-bremen.de}}
\affil[3]{Max Planck Institute for Informatics, Saarland Informatics Campus, Saarbr\"ucken, Germany. \texttt{rsharma@mpi-inf.mpg.de}}

\begin{document}

\maketitle
\thispagestyle{empty}

\begin{abstract}
 The goal of this paper is to understand how exponential-time approximation algorithms can be obtained from existing polynomial-time approximation algorithms, existing parameterized exact algorithms, and existing parameterized approximation algorithms.
 More formally, we consider a monotone subset minimization problem over a universe of size $n$ (e.g., \textsc{Vertex Cover} or \textsc{Feedback Vertex Set}).
 We have access to an algorithm that finds an $\alpha$-approximate solution in time $c^k\cdot n^{\OO(1)}$ if a solution of size $k$ exists (and more generally, an extension algorithm that can approximate in a similar way if a set can be extended to a solution with $k$ further elements).
 Our goal is to obtain a $d^n \cdot n^{\OO(1)}$ time $\beta$-approximation algorithm for the problem with $d$ as small as possible.
 That is, for every fixed $\alpha,c,\beta \geq 1$, we would like to determine the smallest possible $d$ that can be achieved in a model where our problem-specific knowledge is limited to checking the feasibility of a solution and invoking the $\alpha$-approximate extension algorithm.
 Our results completely resolve this question:
 \begin{enumerate}
  \item For every fixed $\alpha,c,\beta \geq 1$, a simple algorithm (``approximate monotone local search'') achieves the optimum value of $d$.
  \item Given $\alpha,c,\beta \geq 1$, we can efficiently compute the optimum $d$ up to any precision $\eps > 0$.
 \end{enumerate}
 Earlier work presented algorithms (but no lower bounds) for the special case $\alpha = \beta = 1$ [Fomin et al., J.\ ACM 2019] and for the special case $\alpha = \beta > 1$ [Esmer et al., ESA 2022].
 Our work generalizes these results and in particular confirms that the earlier algorithms are optimal in these special cases.

 We compare the performance of the resulting algorithms to what is obtainable by brute force, that is, in a setting where we have no problem-specific knowledge beyond checking the feasibility of a solution. We show that, except in the case $\alpha>\beta=1$, the resulting $d$ is strictly better than what can be obtained by brute force.
 For example, somewhat counterintuitively, given access to a $1000$-approximate extension algorithm running in time $1000^k\cdot n^{\OO(1)}$ allows us to obtain a $1.001$-approximation algorithm with running time $d^n\cdot n^{\OO(1)}$ strictly better than what is possible by brute force.
 Our technique gives novel results for a wide range of problems including \textsc{Feedback Vertex Set}, \textsc{Directed Feedback Vertex Set}, \textsc{Odd Cycle Traversal}  and \textsc{Partial Vertex Cover}.
\end{abstract}

\newpage
\tableofcontents
\thispagestyle{empty}

\clearpage
\setcounter{page}{1}

\section{Introduction}
\label{sec:intro}
It is widely believed that NP-hard problems cannot be solved in polynomial time and any algorithm solving them has some form of exponential running time. During the past decades, there has been a great deal of interest in trying to obtain improved exponential-time algorithms for basic NP-hard problems, see for example the monograph of Fomin and Kratsch \cite{FominK10}.
Typically, for \emph{subset problems}, where the goal is to find a subset of a given $n$-sized universe $U$ that satisfies some property $\Pi$, a solution can be found by enumerating all $2^n$ subsets of $U$.
Therefore, the goal is to design algorithms that beat this exhaustive search and run in time $\OO^*\left(d^n\right)$\footnote{The $\OO^*$ notation hides polynomial factors in the expression.} for as small $1 < d < 2$ as possible.
More recently, there has been interest in \emph{exponential-time approximation algorithms} \cite{EsmerKMNS22,AroraBS15,BansalCLNN19,BourgeoisEP11,CyganKW09,EscoffierPT16,ManurangsiT18} to obtain approximation ratios that are better than what is considered possible in polynomial time.
In this paper, we analyze how the simple technique of monotone local search can be used to derive exponential-time approximation algorithms by repurposing existing exact parameterized algorithms, existing polynomial-time approximation algorithms, and existing parameterized approximation algorithms.
Furthermore, we show that monotone local search is the optimal way to convert between those types of algorithms.

Our setting is the following.
We consider \emph{subset minimization} problems where the goal is to find a subset of the $n$-sized universe $U$ of \emph{minimum cardinality} that satisfies some additional property $\Pi$.
To make approximation feasible, we consider only monotone properties, that is, if $S\subseteq U$ satisfies $\Pi$, then so does any superset of $S$.
For any approximation ratio $\beta \geq 1$, we say that a subset $S \subseteq U$ satisfying the property $\Pi$ is a \emph{$\beta$-approximate solution} if $|S| \leq \beta \cdot |\OPT|$, where $\OPT \subseteq U$ is an optimum solution.

An \emph{exponential $\beta$-approximation algorithm} for a subset minimization problem returns a $\beta$-ap\-pro\-xi\-ma\-te solution and runs in time $\OO^*(d^n)$ for some $1 < d < 2$.
We assume that we are given access to an algorithm with the following specification: given a problem instance and an integer $k$, if the optimum solution has size at most $k$, then the algorithm returns a solution of size at most $\alpha\cdot k$ in time $\OO^*(c^k)$.
Let us observe that in the special case of $c=1$, it is equivalent to the notion of polynomial-time constant-factor approximation algorithm \cite{Vazirani01} and in the special case of $\alpha=1$, it is equivalent to an exact fpt-algorithm \cite{CyganFKLMPPS15}.
In general, the definition covers constant-factor parameterized approximation algorithms, which have received increased attention recently \cite{BrankovicF13,BrankovicF11,FellowsKRS18,KulikS20,LokshtanovMRSZ21,Marx08,BhattacharyyaBE21,ChitnisFM21,FeldmannSLM20,ChalermsookCKLM20,SLM19,LinRSW23,LinRSW22,Lin21,KawarabayashiL20,ChenL19}.
For technical reasons, instead of an algorithm finding a small solution, we need an algorithm finding a small \emph{extension}: given a set $X$ that can be extended to a solution by $k$ further elements, it returns such an extension with at most $\beta\cdot k$ further elements.
For many problems that are defined in terms of deletions (e.g., \textsc{Vertex Cover}, \textsc{Feedback Vertex Set}, \textsc{Multicut} etc.), the two notions are equivalent via a simple reduction: the extension problem is equivalent to solving the problem on $G-X$.

Our main goal is to understand, for a given $\alpha$, $\beta$, and $c$, what is the best $\OO^*(d^n)$ time $\beta$-approximation algorithm we can obtain if we have access to a parameterized $\alpha$-approximate extension algorithm running in time $\OO^*(c^k)$.

\medskip
\begin{center}
 \begin{tikzpicture}[>=stealth]
  \node[draw,rectangle,fill=gray!20,minimum width=4.5cm,minimum height=1.5cm,align=center] (left) at (0,0.75) {\small $\OO^*(c^k)$ time\\$\alpha$-approximate\\extension algoritm};
  \node[draw,rectangle,fill=gray!20,minimum width=4.5cm,minimum height=1.5cm,align=center] (right) at (6,0.75) {\small $\OO^*(d^n)$ time\\$\beta$-approximate\\algorithm};
  \draw[line width=2pt,->,shorten >=2pt,shorten <=2pt] (left.east) -- (right.west);
 \end{tikzpicture}
\end{center}
\medskip

The special case when $\alpha=\beta=1$, that is, using exact fpt-algorithms to obtain exact exponential-time algorithms, was treated by Fomin et al.~\cite{FominGLS19}: they give a very simple procedure, monotone local search, that repurposes an exact fpt-algorithm with running time $\OO^*(c^k)$ to obtain an exponential-time algorithm with $d=2-\frac{1}{c}$.
Monotone local search was extended to an approximate version by Esmer et al.~\cite{EsmerKMNS22} to handle the case $\alpha=\beta>1$, with a much more complicated (non-closed-form) expression for $d $, which we denote by $\esaamlsbound(\beta,c)$.
Some  simulated  results for the case $\alpha=1$ and $\beta>1$ were given in the thesis of Lee~\cite{Lee21}. 
Table~\ref{tab:intro} shows various special cases of our setting.

\begin{table}
 \newcommand{\polyword}{\textcolor{red}{\sf polytime}}
 \newcommand{\approxword}{\textcolor{black}{\rm approximation}}
 \newcommand{\exactword}{\textcolor{red}{\sf exact}}
 \newcommand{\fptword}{\textcolor{black}{\rm fpt}}
 \newcommand{\expword}{\textcolor{black}{\rm exptime}}
 \newcommand{\forword}{{\Large $\rightsquigarrow$}}

 \centering
 \begin{tabular}{|m{1cm}|m{2.5cm}|m{9cm}m{0.5cm}|}
  \hline
  $c$ & $\alpha,\beta$ & &\\
  \hline
  $c=1$ & $\alpha>\beta>1$ & \polyword\ \approxword\ \forword\ \expword\ \approxword\newline with \textit{better} ratio &\ \newline\ \\
  \hline
  $c>1$ & $\alpha=\beta=1$ & \fptword\ \exactword\ \forword\ \expword\ \exactword\ \cite{FominGLS19}&\ \newline\ \\
  \hline
  $c>1$ & $\alpha=1, \beta>1$ & \fptword\ \exactword\ \forword\ \expword\  \approxword\  &\ \newline\ \\
  \hline
  $c>1$ & $\alpha>1 , \beta=1$ & \fptword\ \approxword\ \forword\ \expword\ \exactword\newline useless, cannot improve $2^n$ brute force&\ \newline\ \\
  \hline
  $c>1$ & $\alpha=\beta>1$ & \fptword\ \approxword\ \forword\ \expword\ \approxword\newline  with the \textit{same} ratio \cite{EsmerKMNS22}&\ \newline\ \\
  \hline
  $c>1$ & $1< \alpha<\beta$ & \fptword\ \approxword\ \forword\ \expword\ \approxword\newline  with \textit{worse} ratio &\ \newline\ \\
  \hline
  $c>1$ & $\alpha>\beta>1$ & \fptword\ \approxword\ \forword\ \expword\ \approxword\newline  with \textit{better} ratio&\ \newline\ \\
  \hline
 \end{tabular}
 \caption{Special cases of our setting.}
 \label{tab:intro}
\end{table}

These previous results suggest two obvious further research goals.
First, one would like to extend the understanding to the $\alpha\neq \beta$ case.
For example, Esmer et al.~\cite{EsmerKMNS22} showed how to obtain an exponential 5-approximation algorithm if we are given an $\OO^*(2^k)$ time 5-approximate parameterized extension algorithm (i.e., $c=2$, $\alpha=\beta=5$).
We would like to understand whether we can obtain a \emph{faster} 5-approximation algorithm if the extension algorithm is 3-approximate ($c=2$, $\alpha=3$, $\beta=5$) and whether the 5-approximate extension algorithm is useful \emph{at all} for obtaining an exponential 3-approximation ($c=2$, $\alpha=5$, $\beta=3$).

Second, the previous results \cite{FominGLS19,EsmerKMNS22} did not provide any lower bounds.
Is the $\OO^*((2-\frac{1}{c})^n)$ algorithm obtained by Fomin~et al.~\cite{FominGLS19} really the best we can have without any problem-specific knowledge?
We can formalize this question in a model where all we can do is checking the validity of a solution in polynomial time and using an $\alpha$-approximate extension algorithm running in time $\OO^*(c^k)$.
If we have lower bounds in this model, then we can evaluate whether the previous results \cite{FominGLS19,EsmerKMNS22} really repurposed the extension algorithms in an optimal way and we can compare how the algorithms resulting from two sets of parameters $(\alpha,c,\beta)$ and $(\alpha',c',\beta')$ relate to each other.

Our main result fully achieves both of these goals: for every combination of parameters, we provide tight upper and (unconditional) lower bounds on the best possible exponential-time approximation algorithm. 

\medskip
\begin{mdframed}[backgroundcolor=gray!20]
 For every fixed $\alpha,\beta,c\ge 1$, we determine the best possible $d = \bestbound(\alpha,c,\beta)$ such that a $\OO^*(d^n)$ time $\beta$-approximation algorithm can be obtained from an $\alpha$-approximate extension algorithm running in time $\OO^*(c^k)$.
\end{mdframed}
\medskip

Similar to \cite{EsmerKMNS22}, we do not expect a simple closed-form expression for $\bestbound(\alpha,c,\beta)$.
Indeed, it may very well be that $\bestbound(\alpha,c,\beta)$ has no closed-form description similar to, for example, the running time of certain branching algorithms where the base corresponds to the root of a polynomial of degree at least five (see, e.g., \cite{CyganFKLMPPS15}).
This raises the philosophical question of when can we consider the problem of determining $\bestbound(\alpha,c,\beta)$ ``resolved.'' Our answer consists of two parts:
\begin{enumerate}[label = (\arabic*)]
 \item\label{item:goal-1} For every $\alpha,\beta,c \geq 1$, a simple approximate mononotone local search algorithm (which naturally extends existing algorithms \cite{FominGLS19,Lee21,EsmerKMNS22}) achieves the optimal running time (up to polynomial factors; \Cref{thm:upper_bound_random}).
 \item\label{item:goal-2} This algorithm runs precisely in time $\OO^*((\bestbound(\alpha,c,\beta))^n)$ and given $\alpha, \beta, c \geq 1$ and $\eps>0$, we can compute $\bestbound(\alpha,c,\beta)$ up to an additive error of $\eps>0$, in time polynomial in the total encoding length of the input (Theorem~\ref{thm:compute}).
\end{enumerate}
 
That is, we describe the optimal algorithm and show how to analyze its running time.
Arguably, these two results satisfy any intuitive expectation of resolving the problem. 
The basic approximate monotone local search algorithm in inherently randomized, but it can be derandomized at the cost of a subexponential factor in the running time (Theorem~\ref{thm:upper_bound_deter}).
To attain statement \ref{item:goal-1}, we show that the running time of approximate monotone local search is optimal, up to polynomial factors, independently of the computation of the running time itself.
This lower bound proof uses a simple combinatorial argument that lower bounds the running time of \emph{any} repurposing algorithm in terms of the (unknown) running time of approximate monotone local search.
 
To reach statement \ref{item:goal-2}, we describe $\bestbound(\alpha,c,\beta)$ as the solution of a continuous, convex optimization problem, which allows us to evaluate $\bestbound(\alpha,c,\beta)$ up to any precision $\eps>0$ in time polynomial in the encoding length of $\alpha$, $c$, $\beta$ and $\eps$ using standard tools from convex optimization (see, e.g., \cite{GrotschelLS88}).

We show that $\bestbound(1,c,1) = 2 - \frac{1}{c}$ and, more generally, $\bestbound(\beta,c,\beta) = \esaamlsbound(\beta,c)$ which implies that previous algorithms \cite{FominGLS19,EsmerKMNS22} already exploited existing algorithms in an optimal way in their respective restricted setting.
These lower bounds are unconditional and do not rely on any complexity assumption such as the (Strong) Exponential-Time Hypothesis: the lower bounds are proved in a formal setting where our only problem-specific knowledge is being able to test the feasibility of a solution and invoke the $\alpha$-approximation extension algorithm.

To further appreciate the running time of our algorithm, we mathematically compare $\bestbound(\alpha,c,\beta)$ to existing benchmarks.

\paragraph*{Benchmark 1: Brute-Force for Exponential Approximation.}
\label{benchmark-brute}

A key feature of the bound $d = 2 - \frac{1}{c}$ obtained by Fomin et al.~\cite{FominGLS19} is that it is always strictly better than the brute-force search running in time $\OO^*(2^n)$.
We extend this result to the approximate setting.

If our goal is to find a $\beta$-approximation for some $\beta>1$, then the  $\OO^*(2^n)$ brute force search is certainly not optimal: for example, if $\beta =2$ it suffices to only iterate over subsets of size at most $\frac{1}{3} n $ and at least $\frac{2}{3}n$, which only takes $\OO^*\left(1.8899^n\right)$.   This approach can be further optimized.
Indeed, Esmer et al.\ \cite{EsmerKMNS22} showed that for every monotone subset minimization problem, the classic brute-force approach can be generalized to a \emph{$\beta$-approximation brute-force algorithm} running in time $\OO^*(\brute(\beta)^n)$,
where $\brute(\beta) \coloneqq 1 + \exp\left(-\beta \cdot  \HH\left(\frac{1}{\beta}\right) \right)$ and $\HH(\beta) \coloneqq -\beta \ln \beta - (1-\beta) \ln (1-\beta)$ denotes the entropy function.
Moreover, this running time is optimal if the family of the solution sets can only be accessed via a membership oracle.
Note that $\brute(1) = 2$, i.e., in the exact setting, this recovers the standard brute-force algorithm running in time $\OO^*(2^n)$.

We compare approximate monotone local search to the $\beta$-approximation brute-force algorithm for every choice of $\alpha,c \geq 1$.

\medskip
\begin{mdframed}[backgroundcolor=gray!20]
 For every fixed $\alpha,c\ge 1$ and $\beta>1$, we have $\bestbound(\alpha,c,\beta) < \brute(\beta)$: \\approximate monotone local search is strictly faster than what can be obtained by brute force.
\end{mdframed}
\medskip

In other words, our main finding is that repurposing an $\alpha$-approximation algorithm always leads to a $\beta$-approximation algorithm strictly better than brute force except in the degenerate case $\alpha>\beta=1$: an approximation algorithm cannot be used to obtain an exact algorithm better than the $\OO^*(2^n)$ brute force.
That is, somewhat counterintuitively, even a $1000$-approximation algorithm running in time $\OO^*(1000^k)$ is actually useful for obtaining an exponential-time $1.001$-approximation algorithm better than brute force. Intuitively, brute force corresponds to the limit $c\to \infty$ and indeed approximate local search converges to brute force as $c$ goes to $\infty$. This also implies that even if the parameterized extension algorithm  is exact (i.e., $\alpha=1$), a running time $\OO^*(2^{\omega(k)})$ is not sufficient to obtain a $\beta$-approximation algorithm running in time $\OO^*((\brute(\beta) - \eps)^n)$ for any fixed $\eps > 0$.

\paragraph*{Benchmark 2: AMLS with Equal Approximation Ratios.}
\label{benchmark-amls-equal}

The results of \cite{EsmerKMNS22} can also be used to derive an exponential-time $\beta$-approximation algorithm from a $\alpha$-approximate parameterized extension algorithm with running time $\OO^*(c^k)$, in case $\alpha\leq \beta$.
This is done by interpreting the $\alpha$-approximate parameterized extension algorithm as a $\beta$-approximate parameterized extension algorithm (which is correct as $\alpha\leq \beta$), therefore leading to an exponential-time $\beta$-approximation algorithm which runs in time $\OO^*(d^n)$, where $d=\esaamlsbound(\beta,c)$.
Since $\esaamlsbound(\beta,c)<\brute(\beta)$ for all $\beta >1$ and $c\geq 1$ (see \cite{EsmerKMNS22}), this approach leads to a better than brute-force $\beta$-approximation for a wide range of problems for which there is an exact (i.e., $\alpha = 1$) parameterized algorithm with running time~$\OO^*(c^k)$.

For example, the best known exact parameterized algorithm for \textsc{Odd Cycle Traversal} runs in time $\OO^*(2.3146^k)$ \cite{LokshtanovNRRS14}.
In particular, this algorithm is a parameterized $1.5$-approximation algorithm for \textsc{Odd Cycle Traversal}.
Thus, using the result of \cite{EsmerKMNS22} the algorithm can be used to derive an exponential time $1.5$-approximation algorithm for \textsc{Odd Cycle Traversal} which runs in time $\OO^*(d^n)$ where $d=\esaamlsbound(1.5, 2.3146) \approx 1.340<\brute(1.5)\approx1.3849$.
Intuitively, using the result of \cite{EsmerKMNS22} in such a setting appears suboptimal.
We confirm this intuition.

\medskip
\begin{mdframed}[backgroundcolor=gray!20]
 For every $\beta>\alpha\geq 1$ and every $c > 1$ it holds that $\bestbound(\alpha,c,\beta) < \esaamlsbound(\beta,c)$.
\end{mdframed}

\paragraph*{Using Multiple Parameterized Approximation Algorithms.}

So far, all algorithms we described only use a single parameterized extension algorithm as a subroutine.
However, since with our new results any $\alpha$-approximate extension algorithm can be used to obtain a $\beta$-approximation algorithm, a natural extension is to use multiple $\alpha$-approximate extension algorithms for different values of $\alpha$ and $c$ at the same time.
For example, \textsc{Feedback Vertex Set} can be solved exactly in time $\OO^*(2.7^k)$ \cite{LiN22} (i.e., $\alpha_1 = 1$ and $c_1 = 2.7$) and admits a polynomial-time $2$-approximation algorithm \cite{BafnaBF99} (i.e., $\alpha_2 = 2$ and $c_2 = 1$).
Instead of using only one of these subroutines to design an exponential approximation, it seems much more natural to allow an algorithm to rely on both subroutines together.

We extend all of our results to the setting where any finite number of parameterized extension subroutines may be used by a single approximation algorithm.
Maybe surprisingly, this allows us to obtain further improvements over using only a single extension algorithm as a subroutine.
That is, there are parameter settings where given two extension algorithms with $(\alpha_1,c_1)$ and $(\alpha_2,c_2)$, we can obtain a $\OO^*(d^n)$ time $\beta$-approximation algorithm with $d$ being \emph{strictly smaller} than both $\bestbound(\alpha_1,c_1,\beta)$ and $\bestbound(\alpha_2,c_2,\beta)$.
Unfortunately, we observe that, for many concrete problems, these improvements are small and often restricted to only a small range of approximation ratios.

\paragraph*{Applications.}

Our results can be used to obtain  exponential approximation algorithms for a wide range of problems.
For many of these problems, there is no direct previous work on exponential-time approximations, thus our results serve as a baseline for future works.
For problems, such as \textsc{Vertex Cover} or \textsc{Feedback Vertex Set}, for which there are existing works on exponential approximations, our algorithms attain better running times than the state of art for the majority of approximation ratios.

The most natural application are deletion problems to hereditary graph classes, where the input is a graph $G$ (which may be undirected or directed, and may contain labeled vertices), and we wish to delete the minimum number of vertices to ensure a certain hereditary property (i.e., the family of solution sets is closed under supersets).
For example, we obtain exponential $\beta$-approximation algorithms for \textsc{FVS}, \textsc{Tournament FVS}, \textsc{Subset FVS}, \textsc{$d$-Hitting Set} , \textsc{Interval Vertex Deletion}, \textsc{Proper Interval Vertex Deletion}, \textsc{Block Graph Vertex Deletion}, \textsc{Cluster Graph Vertex Deletion}, \textsc{Split Vertex Deletion}, \textsc{Edge Multicut on Trees}, \textsc{Subset DFVS}, \textsc{DOCT} and \textsc{Multicut}.

To demonstrate the wide applicability, let us briefly discuss three illustrative examples here (a more thorough discussion of the applications to our results can be found in Section \ref{sec:application}; also running times for all problems listed above and various approximation ratios $\beta$ are listed in Appendix \ref{sec:running_times}).

\begin{itemize}
 \item \textsc{Odd Cycle Transversal} has no constant-factor polynomial-time approximation under UGC \cite{Khot02}, but it can be solved exactly in time $\OO^*(2.3146^k)$ \cite{LokshtanovNRRS14}.
  We obtain an exponential $\beta$-approximation algorithm for every $\beta > 1$, which significantly improves upon brute force.
  For example, we obtain a $1.1$-approximation running in time $\OO^*(1.3689^n)$ while $\brute(1.1) \approx 1.7153$, and the $1.1$-approximation obtained via \hyperref[benchmark-amls-equal]{Benchmark 2} runs in time $\OO^*(1.4223^n)$.

 \item \textsc{Directed Feedback Vertex Set (DFVS)} has no constant-factor polynomial-time approximation under UGC \cite{GuruswamiL16}, and it is an open question to determine if \textsc{DFVS} can be solved exactly in time $\OO^*(c^k)$ for any constant (see, e.g., \cite{ChenLLOR08}).
  But \textsc{DFVS} has a parameterized $2$-approximation algorithm running in time $\OO^*(c^k)$ for some constant $c$ \cite{LokshtanovMRSZ21}.
  Hence, we obtain an exponential $\beta$-approximation algorithm for every $\beta > 1$ that is faster than the $\beta$-approximation brute-force algorithm.
  For $1 < \beta < 2$, our algorithm is the first non-trivial $\beta$-approximation algorithm, and for $\beta > 2$ our algorithm improves over the previous best algorithm from \cite{EsmerKMNS22}.
  For $\beta = 2$, our running time matches that of \cite{EsmerKMNS22}.

 \item \textsc{Partial Vertex Cover} has a polynomial-time $2$-approximation \cite{BshoutyB98}, but is known to be ${\sf W[1]}$-hard \cite{GuoNW07}.
  We obtain the first non-trivial exponential $\beta$-approximation algorithm for every $1 < \beta < 2$.
  For example, for $\beta = 1.1$, our algorithm runs in time $\OO^*(1.6588^n)$.
\end{itemize}

\section{Our Results}
\label{sec:computational_model}
In this section, we discuss our results in five parts.
We define the computational model in \Cref{sec:comp_model}, and present the optimal algorithm as well as our the main results in \Cref{sec:amls_results}.
In \Cref{sec:optimality_overview} we describe the arguments to show that approximate monotone local search is the optimal way of repurposing existing parameterized approximation algorithms.
After that, \Cref{sec:evaluate_running_time} deals with the computation of the running time, and \Cref{sec:comparison_overview} compares the running times to the benchmarks.

\subsection{Computation Model}
\label{sec:comp_model}

We state our results in an oracle-based computation model that properly reflects the setting described in the introduction. 
Let $U$ be a universe of elements (i.e., a finite set).
A \emph{set system} of $U$ is a family $\CF \subseteq 2^{U}$ of subsets of $U$.
We say the set system $\CF$ is \emph{monotone} if (i) $U\in \CF $ and (ii) for every $S\subseteq T \subseteq U$, if $S\in \CF$ then $T\in \CF$.
We consider minimization problems in which the objective is to find $S\in \CF$ that minimizes $|S|$.

In the computation model, the universe $U$ is given as part of the input to the algorithm.
The set system $\CF$, however, is not part of the input.
Instead, the algorithm can implicitly access $\CF$ using \emph{extension oracles}.

\begin{definition}
	Let $U$ be a finite universe, $\CF$ be a set system of $U$ and $\ell \in \NN$.
	We say that $S \subseteq U$ is an $\ell$-extension of $X \subseteq U$ if $X \cup S \in \CF$ and $|S| \leq \ell$.
\end{definition}

Informally, a \emph{random $\alpha$-extension} oracle of a universe $U$ and a monotone set system $\CF$ gets $X\subseteq U$ and $\ell \in \NN$ as an input, and returns a set $Y\subseteq U$ such such $X\cup Y \in \CF$ and $Y$ satisfies the following property with probability at least $\frac{1}{2}$:
\begin{itemize}
	\item[] If there exists an $\ell$-extension of $X$ then $Y$ is an $(\alpha\cdot \ell)$-extension of $X$.
\end{itemize}

Though intuitive, this definition does not properly define what kind of an object an oracle is, and considers an undefined probability space.
These details will be important when proving lower bounds.
We provide a formal definition of an extension oracle as a function that, in addition to $X$ and $\ell$, also receives a bit-string $r$ as part of its argument.
The bit string serves as the source of randomness for the oracle, and we assume the algorithm provides a random bit-string alongside each query.

\begin{definition}[Random Extension Oracle]
	\label{def:random_extension_oracle}
	Let $U$ be a set and $\CF$ be a monotone set system of $U$.
	A \emph{random $\alpha$-extension oracle} for $U$ and $\CF$ is a function
	$\oracle\colon2^U \times \mathbb{N} \times \{0,1\}^m \to 2^U$ where $m \in \mathbb{N}$ that satisfies the following properties:
	\begin{enumerate}
		\item $\oracle(X,\ell,r)\cup  X \in \CF$ for every $(X,\ell, r)\in 2^U\times \mathbb{N}\times \{0,1\}^m$, and
		\item for every $(X,\ell) \in 2^{U} \times \NN$ such that $X$ has an $\ell$-extension it holds that
			\[\left|\Big\{r \in \{0,1\}^m \Bigmid \oracle(X,\ell,r) \text{ is an $\alpha \cdot \ell$-extension of } X \Big\}\right| \geq \frac{1}{2} \cdot \abs{\{0,1\}^m}.\]
	\end{enumerate}
	If $m=0$, then we say the oracle is deterministic.
\end{definition}

An algorithm may have access to several extension oracles.
We associate a \emph{cost} $c \geq 1$ with each oracle, representing the cost incurred by quering the oracle. 
The \emph{cost} of the oracle query $(X,\ell)$ is $c^{\ell}$.\footnote{We commonly omit the third argument to the oracle.}
Observe that invocations to an extension oracle with $\ell=0$ are equivalent to membership queries, and hence extension oracles can be viewed as generalizations of membership oracles.
The cost of a query $(X,\ell)$ represents the running time $\OO^*(c^\ell)$ of a parameterized algorithm which emulates the oracle in our applications.

An \emph{(oracle) specification list} is a non-empty and finite set $\CL=\{(\alpha_1, c_1),\ldots, (\alpha_{s}, c_{s})\}$ such that  $\alpha_j, c_j \geq 1$ for every $j\in [s]$.
We define a minimization problem for every oracle specification $\CL$.
An instance of  the \emph{$\CL$-subset minimization problem}  ($\LSUB$)
consists of a set $U$ and a monotone set system $\CF$ of $U$.
The objective is to find a set $S\in \CF$ such that $|S|$ is minimized.
In our computational model the set $U$ is given to the algorithm as part of the input.
Furthermore, the algorithm has access to an $\alpha$-extension oracle for $U$ and $\CF$, associated with cost $c$, for every $(\alpha,c)\in \CL$ (that is, the algorithm is given $|\CL|$ oracles for the same set system $\CF$). In particular, $\CF$ is part of the \emph{instance}, but is not part of the \emph{input}.

Let $\A$ be an algorithm for $\LSUB$.
The \emph{cost of an execution} of $\A$ is the sum of costs over all oracle queries initiated by the algorithm, plus the number of computational operations conducted throughout the execution.  That is, if $Q_{\alpha,c}\subseteq 2^U\times \NN$ is the set of queries the algorithm makes to the $\alpha$-extension oracle for every $(\alpha,c)\in \CL$ in a specific execution and $p$ is the number of computational operations, then the cost of the execution is $p+\sum_{(\alpha, c)\in \CL} \sum_{(X,\ell)\in Q_{\alpha,c}} c^{\ell}$. We define $\cost_{\A}(n)$ to be the maximal cost of an execution of $\A$ given an input which satisfies $\abs{U}\leq n$.
We say  $\A$ is of cost $f:\NN\rightarrow \NN$ if  $\cost_{\A}(n)\leq f(n)$ for all $n\in \NN$.

Following the standard notion of approximation algorithms, we say an algorithm $\A$ is a \emph{(randomized) $\beta$-approximation} for $\LSUB$ if for every universe $U$, monotone set system $\CF$ over $U$,  and
randomized extension oracles  $\oracle_{\alpha,c}$ for every $(\alpha,c)\in \CL$, the algorithm always returns $S\in \CF$ (with probability $1$) and it holds that $|S|\leq \beta\cdot\min_{T\in \CF} |T|$ with probability at least $\frac{1}{2}$.

We also consider deterministic algorithms for $\LSUB$.
In this case we restrict our attention to inputs with deterministic oracles.
Formally, we say an algorithm $\A$ is a \emph{deterministic $\beta$-approximation} for $\LSUB$ if for every universe $U$, monotone set system $\CF$ over $U$, and \emph{deterministic} extension oracles $\oracle_{\alpha,c}$ for every $(\alpha,c)\in \CL$, the algorithms returns $S\in \CF$ such that $|S|\leq \beta\cdot\min_{T\in \CF} |T|$.

For every specification list $\CL$ and $\beta\geq 1$, we define $\bestbound(\CL,\beta)$ to be the base of the best cost $\beta$-approximation algorithm for $\LSUB$. Formally,
\begin{equation}
	\label{eq:amls_star}
	\bestbound(\CL,\beta) = \inf\left\{ d \geq 1 ~\middle|~ \textnormal{there is a  $\beta$-approximation for $\LSUB$ with cost $d^n \cdot n^{\OO(1)}$}\right\}.
\end{equation}
The paper revolves around the value of $\bestbound(\CL,\beta)$.
Our primary objectives are to attain an algorithm with cost $\OO^*\left( \bestbound(\left(\CL,\beta\right)^n\right)$, derive a method to compute $\bestbound(\CL,\beta)$, and analytically compare it to the benchmarks.

\subsection{Approximate Monotone Local Search}
\label{sec:amls_results}

Our first main result is that a simple monotone local search algorithm, $\amlsalgo_{\CL,\beta}$ (see \Cref{algo:final}) is a $\beta$-approximation algorithm for $\LSUB$ with \emph{optimal} cost of $\OO^*\left( (\bestbound(\CL,\beta))^n\right)$.
The algorithm is a natural generalization of the monotone local search algorithms used in \cite{FominGLS19,EsmerKMNS22}. 

The algorithm is based on a simple sampling procedure (\Cref{algo:intermediary}).
Let $\OPT = \argmin_{S\in \CF} \abs{S}$ and assume $k=\abs{\OPT}$.
The sampling procedure sample a set $X\subseteq U$ of size $t$ uniformly at random, and then extends the set to a solution $Z=X\cup Y$, where $Y$ is attained via a query of the form $(X,\ell)$ to the $\alpha$-extension oracle $\oracle_{\alpha,c}$.
To keep the intuitive description simple we assume the oracle is deterministic.

The sampling produces a solution of size $\beta \cdot k$ assuming  $\abs{X\cap\OPT}\geq x$, for a carefully selected value $x$. Subject to this assumption, the set $\OPT\setminus X$ is a $(k-x)$-extension of $X$, thus, according to \Cref{def:random_extension_oracle}, $Y=\oracle_{\alpha,c}(X,k-x)$ is an $\alpha(k-x)$-extension of $X$.
It is therefore guaranteed that $X\cup Y\in \CF$ and $\abs{X\cup Y} \leq t+ \alpha(k-x)$.
As our objective is to find a set in $\CF$ of cardinality at most $\beta\cdot k$, the value of $x$ needs to satisfy $t+\alpha(k-x)\leq \beta k$.
Indeed, we set $x = \x(k,t) \coloneqq  \left( 1-\frac{\beta}{\alpha}\right) k +\frac{t}{\alpha}$, which is the minimal value which satisfies $t+\alpha(k-x)\leq \beta k$.

The distribution of $\abs{X\cap \OPT}$ is commonly referred as \emph{hyper-geometric}.
Define $\p(n,k,t,x)$ to be the probability that a uniformly random set $X$ of $t$ items out of $[n] \coloneqq \{1,\ldots,n\}$ satisfies $|X \cap [k]| \geq x$.
Note that
\begin{equation}
	\label{eq:hyper}
	\p(n,k,t,x) = \sum_{y = \lceil x\rceil}^{\min\{t,k\}} \frac{\binom{k}{y}\binom{n-k}{t-y}}{\binom{n}{t}} =\sum_{y = \lceil x\rceil}^{\min\{t,k\}} \frac{\binom{t}{y}\binom{n-t}{k-y}}{\binom{n}{k}} .
\end{equation}
It follows that the sampling procedure returns a solution of cardinality $\beta k$ or less with probability (at least) $\p\left(n,k,t, \x(k,t) \right)$.
Furthermore, the cost of the procedure is $\OO\left(c^{k-\ceil{\x(k,t)}} \right) = \OO\left(c^{\frac{\beta k - t}{\alpha}} \right)$.
Thus, to obtain a constant success probability the sampling procedure has to be executed $\approx \left(\p\left(n,k,t, \x(k,t) \right)\right)^{-1}$ times, leading to a total cost of
\begin{equation}
	\label{eq:sampling_cost}
	\OO\left(\frac{c^{\frac{\beta \cdot k - t}{\alpha}}}{\p\left(n,k,t,\x(k,t) \right)}\right).
\end{equation}

The sampling procedure is used by \Cref{algo:final}.
This algorithm iterates over all possible values of $k=\abs{\OPT}$.
For each value of $k$ the algorithm selects an oracle $(\alpha,c)\in \CL$ and $t\in \NN$ which minimizes~\eqref{eq:sampling_cost} (Line \ref{amls:select_t} of \Cref{algo:final}) and then invokes the sampling procedure sufficiently many times to attain success probability of $\frac{1}{2}$.
The range of values $t$ can take is restricted to $\left[ M^*_{\alpha,\beta}\cdot k,~\beta \cdot k\right]\cap \NN$, where
\begin{equation}
	\label{eq:Mstar_def}
	M^*_{\alpha, \beta} \coloneqq
	\begin{cases}
		0 &\text{if } \alpha \leq \beta\\
		\frac{\alpha - \beta}{\alpha - 1} &\text{if } \alpha > \beta
	\end{cases}
\end{equation}
for all $\alpha,\beta \geq 1$.
This restriction ensures the algorithm only considers values of $t$ for which  $\x(k,t)\leq t$.

\begin{algorithm}
	\begin{algorithmic}[1]
		\Input A universe $U$ , $k \in \mathbb{N}$, $t \in \mathbb{N}$, $\alpha,\beta \geq 1$ and an $\alpha$-extension oracle $\oracle_{\alpha,c}$.
		\State Sample a set $X$ of size $t$ from $U$ uniformly at random. \label{int:select}
		\State  $Y \gets \oracle_{\alpha,c}\left( X, k - \left\lceil\left( 1 - \frac{\beta}{\alpha} \right)\cdot k + \frac{t}{\alpha}\right\rceil \right) $. \label{int:Aext}
		\State Return $Z \gets X \cup Y$. \label{int:T}
	\end{algorithmic}
	\caption{$\sample(U,k,t,\alpha,\beta, \oracle_{\alpha,c})$}
	\label{algo:intermediary}
\end{algorithm}

\begin{algorithm}
	\begin{algorithmic}[1]
		\Input A universe $U$ and an extension oracle $\oracle_{\alpha,c}$ for every $(\alpha,c)\in \CL$
		\State $\sol\leftarrow\emptyset$, $n\leftarrow \abs{U}$.
		\For{\label{amls:loop}$k$ from $0$ to $\frac{n}{\beta}$}
		\State 
		Find $(\alpha,c)\in \CL$ and  $t\in\left[M^*_{\alpha, \beta}\cdot k,\beta \cdot k\right]\cap \NN$  which minimize 
		$\left(\frac{c^{\frac{\beta k - t}{\alpha}}}{\p\left(n,k,t,\left( 1-\frac{\beta}{\alpha}\right)\cdot k+\frac{t}{\alpha}\right)}\right)$. \label{amls:select_t}
		\State Run $\sol \leftarrow \sol \cup \left\{ \sample(U,k,t,\alpha,\beta, \oracle_{\alpha,c})\right\}$ for $2\cdot \left\lceil\left(\p\left( n, k, t, (1 - \frac{\beta}{\alpha} )\cdot k + \frac{t}{\alpha}\right)\right)^{-1}\right\rceil$ times. \label{amls:call_sample}
		\EndFor
		\State {\bf Return} a minimum-sized set in $\sol$.
	\end{algorithmic}
	\caption{$\amlsalgo_{\CL,\beta}$}
	\label{algo:final}
\end{algorithm}

Our main theorem asserts that $\amlsalgo_{\CL,\beta}$ has the best possible cost of a $\beta$-approximation for $\LSUB$.

\begin{theorem}[\bf Main result: randomized algorithm]
	\label{thm:upper_bound_random}
	For every specification list $\CL$ and $\beta\geq 1$, $\amlsalgo_{\CL,\beta}$ is a randomized $\beta$-approximation for $\LSUB$ of cost $n^{\OO(1)}\cdot \left(\bestbound(\CL,\beta)\right)^n$.
\end{theorem}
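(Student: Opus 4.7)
The plan is to decompose the proof into three components: (i)~correctness of $\amlsalgo_{\CL,\beta}$ as a randomized $\beta$-approximation; (ii)~an explicit cost bound in terms of the max-min sampling cost; and (iii)~identification of this quantity with $\bestbound(\CL,\beta)^n$ via the matching lower bound announced in Section~\ref{sec:optimality_overview}.

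For correctness, fix an instance, let $\OPT \in \argmin_{S \in \CF}|S|$, and set $k^\star = |\OPT|$. The decisive outer-loop iteration is $k = k^\star$; the other iterations only help, since the algorithm returns the minimum over $\sol$. In a single call to $\sample(U, k^\star, t, \alpha, \beta, \oracle_{\alpha,c})$, the hypergeometric formula~\eqref{eq:hyper} yields $|X \cap \OPT| \geq \lceil \x(k^\star,t)\rceil$ with probability $\p(n,k^\star,t,\x(k^\star,t))$. Conditioned on this event, $\OPT \setminus X$ is a $(k^\star - \lceil \x(k^\star,t)\rceil)$-extension of $X$, so Definition~\ref{def:random_extension_oracle} guarantees that with probability at least $\frac{1}{2}$ (over the oracle's independent randomness) the returned $Y$ satisfies $X \cup Y \in \CF$ and $|Y| \leq \alpha(k^\star - \lceil \x(k^\star,t)\rceil)$; since $\lceil \x\rceil \geq \x$ and $t + \alpha(k^\star - \x(k^\star,t)) = \beta k^\star$ by definition of $\x$, we obtain $|X \cup Y| \leq \beta k^\star$. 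Each call therefore succeeds with probability at least $\frac{1}{2}\p(n,k^\star,t,\x(k^\star,t))$, and the $2\lceil 1/\p\rceil$ independent repetitions in Line~\ref{amls:call_sample} boost the iteration's success probability to at least $1 - (1 - \p/2)^{2/\p} > \frac{1}{2}$.

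For the cost bound, one call to $\sample$ uses $n^{\OO(1)}$ time plus a single oracle query of cost $c^{k - \lceil \x(k,t)\rceil} \leq c^{(\beta k - t)/\alpha}$ (the last inequality again from $\lceil \x\rceil \geq \x$). Multiplying by the $2\lceil 1/\p(n,k,t,\x(k,t))\rceil$ repetitions, taking the minimum from Line~\ref{amls:select_t}, and summing over the $O(n)$ values of $k$, the total cost of $\amlsalgo_{\CL,\beta}$ is at most
\[
n^{\OO(1)} \cdot A(n), \qquad \text{where} \qquad A(n) := \max_{0 \leq k \leq n/\beta}\; \min_{(\alpha,c) \in \CL,\; t \in [M^*_{\alpha,\beta} k,\, \beta k] \cap \NN}\; \frac{c^{(\beta k - t)/\alpha}}{\p(n,k,t,\x(k,t))}.
\]

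Finally, since $\amlsalgo_{\CL,\beta}$ is already a valid randomized $\beta$-approximation, the definition of $\bestbound(\CL,\beta)$ as an infimum gives $\bestbound(\CL,\beta) \leq \limsup_n A(n)^{1/n}$. The converse $A(n) \leq n^{\OO(1)} \cdot \bestbound(\CL,\beta)^n$ is supplied by the unconditional lower bound of Section~\ref{sec:optimality_overview}, which shows that any $\beta$-approximation for $\LSUB$ in our oracle model must incur cost at least $A(n)/n^{\OO(1)}$; plugging this back into the cost bound finishes the proof. The principal obstacle is step~(iii): steps~(i) and~(ii) reduce to routine probabilistic accounting based on the explicit formulas for $\x$ and $\p$, whereas identifying $A(n)$ with $\bestbound(\CL,\beta)^n$ requires the adversarial lower-bound construction of Section~\ref{sec:optimality_overview}, which must rule out algorithms that exploit the structure of $\CF$ beyond what the oracles reveal and force every algorithm to essentially pay the same max-min sampling cost.
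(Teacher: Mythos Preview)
Your steps (i) and (ii) are correct and match the paper's Lemmas~\ref{lem:correctness} and~\ref{lem:runtime_f}; the quantity you call $A(n)$ is exactly $f_{\CL,\beta}(n)$.

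Step (iii), however, has a genuine gap. The lower bound of Lemma~\ref{thm:minimization_lower_bound} says every $\beta$-approximation has cost at least $A(n)/n^{\OO(1)}$. Combined with the definition of $\bestbound$ as an infimum, this yields only that for every $d>\bestbound(\CL,\beta)$ there is some polynomial $p_d$ with $A(n)\le p_d(n)\cdot d^n$. Since $p_d$ may depend on $d$, you cannot pass to the limit $d\downarrow\bestbound$ and conclude $A(n)\le n^{\OO(1)}\cdot\bestbound^n$. What your argument actually proves is the limit statement~\eqref{eq:best_to_f}, namely $\bestbound=\lim_n A(n)^{1/n}$; the paper explicitly notes right after~\eqref{eq:best_to_f} that this is \emph{not} enough for Theorem~\ref{thm:upper_bound_random}, since a priori one could have, say, $A(n)=2^{n+\sqrt n}$ with $\bestbound=2$.

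The paper closes this gap differently: it introduces the continuous optimization value $\amlsbound(\CL,\beta)$ in~\eqref{eq:amls_def} and proves directly, via entropy estimates for binomial coefficients (Lemma~\ref{lem:f_to_amls}), that $A(n)\le n^{\OO(1)}\cdot\amlsbound(\CL,\beta)^n$. Only after both this upper bound and the lower bound are in hand does Corollary~\ref{cor:amls_is_the_best} identify $\amlsbound$ with $\bestbound$. So the missing ingredient in your proposal is not the lower bound but the analytic work of Section~\ref{sec:f_to_amls} relating the discrete max--min to the continuous one with only polynomial slack.
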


Similar to \cite{FominGLS19,EsmerKMNS22}, it is possible to derandomize Algorithm \ref{algo:final} with sub-exponential overhead in the running time.
The derandomized version of the algorithm, $\detamlsalgo_{\CL,\beta}$ (\Cref{algo:derandomized}), is given in \Cref{sec:analysis}.

\begin{theorem}[\bf Main result: deterministic algorithm]
	\label{thm:upper_bound_deter}
	For every specification list $\CL$ and $\beta\geq 1$, $\detamlsalgo_{\CL,\beta}$ (\Cref{algo:derandomized}) is a  $\beta$-approximation for $\LSUB$ with cost at most $\left(\bestbound(\CL,\beta)\right)^{n} \cdot 2^{o(n)}$.
\end{theorem}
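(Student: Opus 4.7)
The plan is to derandomize \Cref{algo:final} by replacing the random sampling step (Line~\ref{int:select} of \Cref{algo:intermediary}) with a deterministic enumeration over an explicit family of $t$-subsets that suffices to guarantee success. Concretely, for integers $n,k,t,x$ with $0\le x\le \min\{k,t\}$ and $k\le n$, I would use a family $\CF(n,k,t,x)\subseteq \binom{[n]}{t}$ with the \emph{covering property}: for every $k$-subset $S\subseteq [n]$ there exists $X\in \CF(n,k,t,x)$ with $|X\cap S|\ge x$. Such families can be constructed using the derandomization techniques developed in \cite{FominGLS19} (for $x=t$) and generalized in \cite{EsmerKMNS22} (for arbitrary $x\le t$), e.g., via explicit $(n,k)$-splitters in the spirit of Naor--Schulman--Srinivasan, yielding $|\CF(n,k,t,x)| \le \p(n,k,t,x)^{-1}\cdot 2^{o(n)}$ and enumerable within the same bound.

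Given such a family, $\detamlsalgo_{\CL,\beta}$ is defined identically to $\amlsalgo_{\CL,\beta}$, except that Line~\ref{amls:call_sample} iterates $\sample(U,k,t,\alpha,\beta,\oracle_{\alpha,c})$ over all $X\in \CF\bigl(n,k,t,\lceil \x(k,t)\rceil\bigr)$. For correctness, fix $\OPT\in\CF$ of minimum cardinality $k$. For the iteration indexed by this $k$ and the $(\alpha,c,t)$ selected on Line~\ref{amls:select_t}, the covering property yields some $X$ in the enumerated family with $|X\cap \OPT|\ge \lceil \x(k,t)\rceil$. Then $\OPT\setminus X$ is a $(k-\lceil\x(k,t)\rceil)$-extension of $X$, so the oracle returns $Y=\oracle_{\alpha,c}(X,k-\lceil\x(k,t)\rceil)$ of cardinality at most $\alpha(k-\lceil\x(k,t)\rceil)$, and $|X\cup Y|\le t+\alpha(k-\lceil\x(k,t)\rceil)\le \beta k$ by the definition of $\x(k,t)$. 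Hence $\sol$ is guaranteed to contain a $\beta$-approximate solution.

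For the cost analysis, a single invocation of $\sample$ with parameters $(k,t,\alpha,c)$ costs $c^{(\beta k - t)/\alpha}\cdot n^{\OO(1)}$, and iterating over $\CF(n,k,t,\lceil \x(k,t)\rceil)$ multiplies this by at most $\p(n,k,t,\x(k,t))^{-1}\cdot 2^{o(n)}$, which is exactly the expression~\eqref{eq:sampling_cost} minimized at Line~\ref{amls:select_t} times a $2^{o(n)}$ overhead. Summing over all $k\le n/\beta$ and the choices of $(\alpha,c)\in\CL$ and $t$, the total cost is bounded by the cost of $\amlsalgo_{\CL,\beta}$ times a $2^{o(n)}$ factor, which by \Cref{thm:upper_bound_random} is $(\bestbound(\CL,\beta))^n\cdot 2^{o(n)}$.

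The main obstacle is the explicit construction of $\CF(n,k,t,x)$ with size matching $\p(n,k,t,x)^{-1}$ up to a $2^{o(n)}$ factor and enumerable in time of the same order. The construction of \cite{EsmerKMNS22} already delivers such a family for arbitrary $x\le t$, and since it depends only on the numerical parameters $(n,k,t,x)$ and not on the relation between $\alpha$ and $\beta$, it transfers verbatim to our setting; one simply plugs in $x=\lceil \x(k,t)\rceil$ for the value of $t$ chosen by the derandomized analog of Line~\ref{amls:select_t}. Beyond invoking this construction, the argument is entirely parallel to the randomized analysis.
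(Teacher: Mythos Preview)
Your proposal is correct and follows essentially the same approach as the paper. The paper's derandomization (\Cref{algo:derandomized}) likewise replaces the random sampling by enumeration over an explicit $(n,k,t,y)$-set-intersection-family obtained from \cite[Theorem~4.2]{EsmerKMNS22}; the only presentational difference is that the paper explicitly chooses the parameter $y\in\{\lceil \x(k,t)\rceil,\dots,\min\{k,t\}\}$ minimizing $\kappa(n,k,t,y)$ and then bounds $\kappa(n,k,t,y)\le n/\p(n,k,t,\x(k,t))$, whereas you directly claim a family of size $\p(n,k,t,x)^{-1}\cdot 2^{o(n)}$; and the paper routes the final bound through $f_{\CL,\beta}(n)$ via \Cref{lem:deter_upper_bound_via_f,lem:f_to_amls} and \Cref{cor:amls_is_the_best} rather than invoking \Cref{thm:upper_bound_random}.
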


Though \Cref{thm:upper_bound_random} states the cost of \Cref{algo:final} is the best possible, it does not provide any method by which this cost can be computed.
The next theorem addresses this issue.

\begin{theorem}[\bf Main result: computing $\bestbound$]
	\label{thm:compute}
	There is an algorithm  which given $\beta\geq 1$, a specification list $\CL$ and $\eps>0$ computes $\bestbound(\CL,\beta)$ up to additive precision of $\eps$, and runs in polynomial time in the encoding length of $\CL$,~$\beta$ and $\eps$.
\end{theorem}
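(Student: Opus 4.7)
The plan is to characterize $\bestbound(\CL,\beta)$ as the value of a continuous convex optimization problem on a bounded region and then evaluate that optimum up to additive error $\eps$ using standard primitives from convex optimization \cite{GrotschelLS88}.

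First I would carry out the asymptotic analysis implicit in the cost expression driving $\amlsalgo_{\CL,\beta}$. Applying Stirling's approximation to the binomial coefficients in \eqref{eq:hyper} identifies the large-deviation rate function
\[
I_{\alpha,\beta}(\kappa,\tau) \;=\; H(\tau)-\kappa\, H\!\left(\tfrac{\chi(\kappa,\tau)}{\kappa}\right)-(1-\kappa)\, H\!\left(\tfrac{\tau-\chi(\kappa,\tau)}{1-\kappa}\right),
\]
where $\chi(\kappa,\tau)=(1-\beta/\alpha)\kappa+\tau/\alpha$ and $H$ is the binary entropy. Setting $k=\kappa n$ and $t=\tau n$, uniform Stirling bounds give $-\tfrac{1}{n}\ln\p(n,k,t,\chi n)=I_{\alpha,\beta}(\kappa,\tau)+o(1)$ on compact subsets where the entropies are well defined, so the $\tfrac{1}{n}\log$ of the minimand appearing in Line~\ref{amls:select_t} of \Cref{algo:final} converges to
\[
\Phi_{\CL,\beta}(\kappa) \;=\; \min_{(\alpha,c)\in\CL}\; \min_{\tau\in[M^*_{\alpha,\beta}\kappa,\, \beta\kappa]} \left[\frac{(\beta\kappa-\tau)\ln c}{\alpha}+I_{\alpha,\beta}(\kappa,\tau)\right].
\]
By \Cref{thm:upper_bound_random}, this yields $\ln\bestbound(\CL,\beta)=\max_{\kappa\in[0,1/\beta]}\Phi_{\CL,\beta}(\kappa)$.

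Second, I would establish the convexity structure needed to evaluate $\Phi_{\CL,\beta}$ efficiently. For each fixed $(\alpha,c)\in\CL$ and $\kappa$, the bracketed expression is convex in $\tau$: the linear term trivially so, and $I_{\alpha,\beta}(\kappa,\cdot)$ is convex because $\chi$ is affine in $\tau$ and the resulting combination of entropy terms is a standard Kullback-Leibler divergence, known to be jointly convex. Hence each of the $\abs{\CL}$ inner one-dimensional convex problems can be solved to additive precision $\eps'$ in time polynomial in $\log(1/\eps')$ and the encoding length of $(\alpha,c,\beta)$ (for instance by ternary search or an interior-point method), and $\Phi_{\CL,\beta}(\kappa)$ is the minimum of the resulting $\abs{\CL}$ values. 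A direct estimate on the partials of $\Phi_{\CL,\beta}$ yields a Lipschitz constant $L$ on $[0,1/\beta]$ polynomial in the input encoding length.

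Finally, I would handle the outer maximization by grid search on $\kappa\in[0,1/\beta]$ at resolution $\eps/(2L)$, evaluating $\Phi_{\CL,\beta}$ at each grid point to precision $\eps/2$; this uses only polynomially many evaluations and returns an additive-$\eps$ approximation of $\max_\kappa\Phi_{\CL,\beta}(\kappa)$, which transfers to an additive-$\eps$ approximation of $\bestbound(\CL,\beta)=\exp(\max_\kappa\Phi_{\CL,\beta}(\kappa))$ with an appropriate choice of $\eps'$. The hard part is the reduction from the discrete definition \eqref{eq:amls_star} to the continuous program: the upper direction follows from \Cref{thm:upper_bound_random} together with uniform control of the Stirling error and a minor rounding argument to account for integrality of $k$ and $t$, while the tightness (lower) direction leans on the unconditional lower bound developed in \Cref{sec:optimality_overview}. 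Carefully treating the degenerate boundary regions where $\chi\in\{0,\kappa,\tau,1\}$ and entropy terms vanish, so as to keep the continuous problem well posed and the Lipschitz bound meaningful, is the most delicate technical step.
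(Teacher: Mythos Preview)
Your reduction to the continuous program and the convexity of the inner minimization in $\tau$ match the paper's approach (\Cref{lem:f_to_amls,lem:g_convex_by_tau}). The gap is in the outer maximization over $\kappa$. Your claim that ``a direct estimate on the partials of $\Phi_{\CL,\beta}$ yields a Lipschitz constant $L$ on $[0,1/\beta]$ polynomial in the input encoding length'' is not justified and is in fact false in general: the derivative of $g^*_{\alpha,\beta,c}$ with respect to $\kappa$ contains the term $\alpha\,\D{1/\alpha}{\kappa}$ (see \Cref{lem:deriv_g_by_kappa} together with \eqref{eq:gstar_deriv}), which diverges as $\kappa\to 0^+$; for $\alpha>\beta$ the competing divergent terms do not cancel, so $(g^*)'(\kappa)\to+\infty$. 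Hence $\Phi_{\CL,\beta}$ is \emph{not} Lipschitz on $[0,1/\beta]$, and a uniform grid search cannot be controlled the way you propose.

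This is precisely where the paper's main technical contribution enters. Instead of Lipschitz control, the paper proves that $g^*_{\alpha,\beta,c}(\kappa)$ is \emph{concave} on $[0,1/\beta]$ (\Cref{lem:concave}), so that $\Phi_{\CL,\beta}=\min_{(\alpha,c)\in\CL}g^*_{\alpha,\beta,c}$ is concave as well (\Cref{cor:min_concave}), and the outer maximum can be computed by one-dimensional convex optimization (\Cref{cor:compute_amls}). Establishing this concavity is nontrivial: it is obtained by showing that the determinant of the Hessian of $g_{\alpha,\beta,c}$ is negative at every critical point (\Cref{lem:hessian_negative}, proved via the explicit formula in \Cref{lem:hessian_formula}) and then applying the implicit function theorem. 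Your proposal skips exactly this step; without it, neither Lipschitz grid search nor any other black-box maximization gives the claimed polynomial running time.
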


\subsection{Optimality of the Algorithm}
\label{sec:optimality_overview}

A naive calculation reveals that the cost of \Cref{algo:final} can be bounded by the function $f_{\CL,\beta}$ defined by
\begin{equation}
	\label{eq:fdef_intro}
	f_{\CL,\beta}(n) \coloneqq \max_{~k\in \left[0,\frac{n}{\beta}\right] \cap \NN~}
	\min_{~(\alpha,c)\in \CL~}
	\min_{~t\in\left[M^*_{\alpha, \beta}\cdot k,\beta\cdot  k\right] \cap \NN~}
	\frac{\exp\left( \frac{\beta k -t} {\alpha }\cdot \ln c \right)} {\p\left( n,k, t, (1 - \frac{\beta}{\alpha}) \cdot k +\frac{t}{\alpha} \right)}.
\end{equation}

\begin{lemma}
	\label{lem:amls_upper_bound_via_f}
	For every $\beta \geq 1$ and specification list $\CL$, it holds that $\amlsalgo_{\CL,\beta}$ is a $\beta$-approximation algorithm for $\LSUB$ with cost at most $n^{\OO(1)} \cdot f_{\CL,\beta}(n)$.
\end{lemma}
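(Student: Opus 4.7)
The plan is to split the argument in two: (i)~establish correctness, i.e.\ that $\amlsalgo_{\CL,\beta}$ returns a set in $\CF$ of size at most $\beta\cdot\abs{\OPT}$ with probability at least $\tfrac{1}{2}$, and (ii)~bound its cost by $n^{\OO(1)}\cdot f_{\CL,\beta}(n)$.

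For \emph{correctness}, every set added to $\sol$ lies in $\CF$ by property~1 of \Cref{def:random_extension_oracle}, so I only need a $\beta$-approximate set to reach $\sol$ with probability $\geq\tfrac{1}{2}$. First, observe that if $k^\star\coloneqq\abs{\OPT}>n/\beta$, then any element of $\CF$ is trivially a $\beta$-approximation, so the iteration $k=0$ (which always adds some set to $\sol$) already suffices. Otherwise I focus on the iteration $k=k^\star$, where the algorithm selects $(\alpha,c)\in\CL$ and $t\in[M^*_{\alpha,\beta}k^\star,\beta k^\star]\cap\NN$ on Line~\ref{amls:select_t}. Set $p^\star\coloneqq\p(n,k^\star,t,\x(k^\star,t))$. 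In one call to $\sample$, \eqref{eq:hyper} gives $\Pr[\abs{X\cap\OPT}\geq \lceil\x(k^\star,t)\rceil]=p^\star$; conditioned on this event, $\OPT\setminus X$ witnesses that $X$ has a $(k^\star-\lceil\x(k^\star,t)\rceil)$-extension, so property~2 of \Cref{def:random_extension_oracle} ensures the oracle on Line~\ref{int:Aext} returns an $\alpha(k^\star-\lceil\x(k^\star,t)\rceil)$-extension with conditional probability $\geq\tfrac{1}{2}$. A direct algebraic check using $\alpha(k-\x(k,t))=\beta k-t$ then yields $\abs{X\cup Y}\leq t+\alpha(k^\star-\lceil\x(k^\star,t)\rceil)\leq \beta k^\star$. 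Hence each invocation of $\sample$ succeeds with probability $\geq p^\star/2$, and across the $N=2\lceil(p^\star)^{-1}\rceil$ independent calls on Line~\ref{amls:call_sample} the failure probability is at most $(1-p^\star/2)^N\leq e^{-1}<\tfrac{1}{2}$.

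For the \emph{cost bound}, the plan is to bound the per-iteration cost by the expression minimised on Line~\ref{amls:select_t}. Each call to $\sample$ performs $n^{\OO(1)}$ sampling work plus one oracle query of cost $c^{k-\lceil\x(k,t)\rceil}$, and since $\lceil\x(k,t)\rceil\geq \x(k,t)$ and $c\geq 1$, this is at most $c^{k-\x(k,t)}=\exp\bigl(\tfrac{\beta k-t}{\alpha}\ln c\bigr)$. Multiplied by $N$, the contribution of iteration $k$ is at most $n^{\OO(1)}\cdot\exp\bigl(\tfrac{\beta k-t}{\alpha}\ln c\bigr)/\p(n,k,t,\x(k,t))$ evaluated at the optimising $(\alpha,c,t)$, which by construction equals the minimum over $(\alpha,c)\in\CL$ and $t\in[M^*_{\alpha,\beta}k,\beta k]\cap\NN$. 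Summing the $\OO(n)$ values of $k$ and absorbing into $n^{\OO(1)}$ yields total cost $n^{\OO(1)}\cdot f_{\CL,\beta}(n)$ by definition~\eqref{eq:fdef_intro}.

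The main subtle point is the probability analysis, where the sample randomness and the oracle randomness must be combined via conditioning so that both property~2 of \Cref{def:random_extension_oracle} and \eqref{eq:hyper} contribute multiplicatively, followed by the standard $(1-x)^{1/x}\leq e^{-1}$ concentration to clear the $\tfrac{1}{2}$ threshold. Secondary issues---the ceiling loss in $\lceil\x(k,t)\rceil$ (absorbed by $c\geq 1$), the role of the range $t\in[M^*_{\alpha,\beta}k,\beta k]$ in ensuring $0\leq \x(k,t)\leq t$ and a nonnegative oracle parameter, and the $\OO(n)$ outer sum---are routine bookkeeping.
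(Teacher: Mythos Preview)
Your proposal is correct and mirrors the paper's own argument, which also splits into a correctness lemma (analysing one call to $\sample$ at $k=\abs{\OPT}$, combining the hypergeometric hit probability with the oracle's $\tfrac12$ guarantee, then boosting via $N=2\lceil(p^\star)^{-1}\rceil$ repetitions) and a separate running-time lemma bounding each iteration by the quantity minimised on Line~\ref{amls:select_t}. One small remark: the range $t\in[M^*_{\alpha,\beta}k,\beta k]$ only guarantees $\x(k,t)\leq t$, not $\x(k,t)\geq 0$ (take $\alpha<\beta$, $t=0$), but this is harmless since a negative target in $\p$ yields probability $1$ and the oracle parameter $k-\lceil\x(k,t)\rceil$ stays in $[0,k]$ whenever $t\leq\beta k$.
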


The proof of \Cref{lem:amls_upper_bound_via_f} is given in \Cref{sec:analysis}.
The same section also proves a variant of \Cref{lem:amls_upper_bound_via_f} which refers to $\detamlsalgo_{\CL,\beta}$ (\Cref{algo:derandomized}).

\begin{lemma}
	\label{lem:deter_upper_bound_via_f}
	For every $\beta \geq 1$ and specification list $\CL$,  \Cref{algo:derandomized}  is a deterministic $\beta$-approximation for $\LSUB$ with cost at most $f_{\CL,\beta}(n) \cdot 2^{o(n)}$.
\end{lemma}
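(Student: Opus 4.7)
The plan is to derandomize $\amlsalgo_{\CL,\beta}$ by replacing each uniform random sample of a $t$-subset with enumeration over a deterministic hitting family, following the approach used for the special cases $\alpha=\beta=1$ in \cite{FominGLS19} and $\alpha=\beta$ in \cite{EsmerKMNS22}. The only internal source of randomness in \Cref{algo:final} is the sampling in Line \ref{int:select} of $\sample$. What the analysis underlying \Cref{lem:amls_upper_bound_via_f} actually uses is not uniformity of $X$ but only the existence, inside the search space, of an $X$ with $|X\cap\OPT|\geq \lceil\x(k,t)\rceil$; once this event occurs, the extension oracle yields a $Y$ with $|X\cup Y|\leq t+\alpha\bigl(k-\lceil\x(k,t)\rceil\bigr)\leq \beta k$.

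This motivates the following combinatorial object. A family $\mathcal{H}\subseteq \binom{[n]}{t}$ is an \emph{$(n,k,t,x)$-covering family} if, for every $k$-subset $K\subseteq[n]$, some $X\in\mathcal{H}$ satisfies $|X\cap K|\geq x$. Standard constructions based on splitters or lopsided universal sets — identical to the ones already invoked in \cite{FominGLS19,EsmerKMNS22} — yield, for integer threshold $\lceil x\rceil\geq 1$, an $(n,k,t,\lceil x\rceil)$-covering family of size
\[
|\mathcal{H}|\;\leq\; \frac{2^{o(n)}}{\p(n,k,t,x)},
\]
constructible in time $|\mathcal{H}|\cdot n^{\OO(1)}$; the $2^{o(n)}$ factor is the unavoidable overhead relative to the tight probabilistic existence bound $\OO(n/\p)$. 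When $\lceil x\rceil\leq 0$, any single $t$-subset is a valid covering family and $\p=1$, so the bound still holds trivially.

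\Cref{algo:derandomized}, $\detamlsalgo_{\CL,\beta}$, is then obtained by mirroring \Cref{algo:final} but, for each $k$ and the associated optimal $(\alpha,c)\in\CL$ and $t$ chosen in Line \ref{amls:select_t}, constructing a covering family $\mathcal{H}_{n,k,t}$ and, for every $X\in\mathcal{H}_{n,k,t}$, issuing a single deterministic extension query $Y\gets \oracle_{\alpha,c}\bigl(X,\,k-\lceil\x(k,t)\rceil\bigr)$ and recording $X\cup Y$; the output is a minimum-sized recorded set. Correctness mirrors the randomized case: in the iteration $k=|\OPT|$, the covering property supplies some $X\in\mathcal{H}_{n,k,t}$ with $|X\cap\OPT|\geq\lceil\x(k,t)\rceil$, so the deterministic oracle returns the required $\beta$-approximate solution. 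For the cost, each iteration contributes at most
\[
|\mathcal{H}_{n,k,t}|\cdot c^{k-\lceil\x(k,t)\rceil}\cdot n^{\OO(1)}
\;\leq\; 2^{o(n)}\cdot \frac{c^{(\beta k - t)/\alpha}}{\p\bigl(n,k,t,\x(k,t)\bigr)},
\]
and summing over the polynomially many choices of $k$, $t$, and $(\alpha,c)\in\CL$ yields a total cost of at most $f_{\CL,\beta}(n)\cdot 2^{o(n)}$, since the maximum over $k$ of the minimum over $(\alpha,c)$ and $t$ is exactly $f_{\CL,\beta}(n)$.

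The main obstacle is producing the covering family with the stated $2^{o(n)}/\p$ size within commensurate time. This, however, is not a new ingredient for our setting: the object depends only on $n$, $k$, $t$ and the integer threshold $\lceil\x(k,t)\rceil$ and is precisely the one already constructed in \cite{FominGLS19,EsmerKMNS22}, so generalizing from the exact or equal-ratio cases to arbitrary $\alpha,\beta\geq 1$ introduces no new derandomization difficulty. The remainder of the proof is a mechanical replacement of the $2\lceil 1/\p\rceil$ random trials by enumeration over $\mathcal{H}_{n,k,t}$ and a verbatim replay of the cost calculation of \Cref{lem:amls_upper_bound_via_f}, combined with the final invocation of \Cref{thm:upper_bound_random} to identify $f_{\CL,\beta}(n)$ with $(\bestbound(\CL,\beta))^{n}\cdot n^{\OO(1)}$.
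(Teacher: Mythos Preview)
Your proposal is correct and follows essentially the same route as the paper: replace random sampling by enumeration over a deterministically constructed set-intersection family (your ``covering family'') whose size is $2^{o(n)}/\p(n,k,t,x)$, then replay the cost analysis of \Cref{lem:amls_upper_bound_via_f}. The paper makes one detail explicit that you absorb into the citation: the construction from \cite{EsmerKMNS22} (their Theorem~4.2) yields a family of size $\kappa(n,k,t,y)\cdot 2^{o(n)}$ for a \emph{fixed} intersection value $y$, so \Cref{algo:derandomized} additionally optimizes over $y\in\{\lceil x\rceil,\dots,\min\{k,t\}\}$ and uses $\min_y \kappa(n,k,t,y)\leq n/\p(n,k,t,x)$ to recover the $1/\p$ bound.

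One small correction: your final sentence invoking \Cref{thm:upper_bound_random} to ``identify $f_{\CL,\beta}(n)$ with $(\bestbound(\CL,\beta))^{n}\cdot n^{\OO(1)}$'' is both unnecessary and misattributed. The lemma you are proving stops at the bound $f_{\CL,\beta}(n)\cdot 2^{o(n)}$; the identification of $f_{\CL,\beta}$ with $\bestbound$ comes from \Cref{lem:f_to_amls} and \Cref{cor:amls_is_the_best}, not from \Cref{thm:upper_bound_random}, and is used to derive \Cref{thm:upper_bound_deter} \emph{from} \Cref{lem:deter_upper_bound_via_f}, not the other way around.
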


Maybe surprisingly, one of the main insights in this paper is that these simple algorithms are actually optimal in the oracle model defined above, i.e.,  we can also use $f_{\CL,\beta}$ as a lower bound on the cost of any algorithm for $\LSUB$.

\begin{lemma}
	\label{thm:minimization_lower_bound}
	For any $\beta\geq 1$ and specification list $\CL$, every $\beta$-approximation algorithm for $\LSUB$ has cost at least $n^{-\OO(1)} \cdot f_{\CL,\beta}(n)$.
\end{lemma}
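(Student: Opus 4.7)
The plan is to prove \Cref{thm:minimization_lower_bound} via an adversary argument using Yao's minimax principle. It suffices to exhibit a distribution over instances and deterministic extension oracles on which every deterministic $\beta$-approximation algorithm for $\LSUB$ has expected cost at least $n^{-\OO(1)} \cdot f_{\CL,\beta}(n)$. I fix $n$ and let $k^* \in [0, n/\beta] \cap \mathbb{N}$ attain the outer maximum in the definition \eqref{eq:fdef_intro} of $f_{\CL,\beta}(n)$; the distribution takes $U = [n]$ and $\CF_\OPT = \{S \subseteq [n] : \OPT \subseteq S\}$ with $\OPT$ drawn uniformly from $\binom{[n]}{k^*}$, yielding a monotone set system that contains $[n]$.

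For each $(\alpha,c) \in \CL$, I would design the adversary's oracle $\oracle^\OPT_{\alpha,c}$ to leak as little information about $\OPT$ as possible. On query $(X,\ell)$: if $|\OPT \setminus X| \leq \ell$, return a set $Y \subseteq [n] \setminus X$ of size exactly $\alpha\ell$ containing $\OPT \setminus X$, with the padding chosen by an $\OPT$-dependent canonical rule; otherwise return $Y = [n]\setminus X$. Both responses satisfy $X \cup Y \in \CF_\OPT$, and the size constraint on $\alpha\ell$-extensions is respected whenever an $\ell$-extension exists. Crucially, the padding ensures that the only usable deduction from a single query is the set-containment $\OPT \subseteq X \cup Y$.

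The main step is a charging/counting argument. For each $\OPT$ on which the algorithm produces a valid $\beta$-approximate $S_\OPT$, I would identify a ``witness'' query $i$ with $|\OPT \setminus X_i| \leq \ell_i$ and $|X_i| + \alpha_i\ell_i \leq \beta k^* + o(k^*)$, and charge $\OPT$ to this query. The number of $\OPT$s chargeable to a fixed query with $t = |X_i|$ and $\ell = \ell_i$ is at most $\binom{n}{k^*} \cdot \p(n, k^*, t, k^* - \ell)$ directly from \eqref{eq:hyper}. A brief ``shift'' computation (setting $t' = \beta k^* - \alpha\ell$, so that $\ell = (\beta k^* - t')/\alpha$ is tight for $t'$ and $k^* - \ell = \x(k^*, t')$, with $t' \in [M^*_{\alpha,\beta} k^*, \beta k^*]$ by feasibility of $(t,\ell)$), together with the monotonicity $\p(n, k^*, t, x) \leq \p(n, k^*, t', x)$ for $t \leq t'$, gives
\[
\frac{c^{\ell}}{\p(n, k^*, t, k^* - \ell)} \;\geq\; \frac{c^{(\beta k^* - t')/\alpha}}{\p(n, k^*, t', \x(k^*, t'))} \;\geq\; f_{\CL,\beta}(n),
\]
where the last inequality uses that the middle expression lies in the inner minimum of $f_{\CL,\beta}(n)$ and $k^*$ attains the outer maximum. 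Combining this ratio bound with the success requirement that at least $\binom{n}{k^*}/2$ many $\OPT$s are charged yields total cost $\geq n^{-\OO(1)} \cdot f_{\CL,\beta}(n)$.

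The main obstacle is handling the adaptivity of the algorithm, which may combine the supersets $X_i \cup Y_i$ over multiple successful queries so that the certified superset $C_\OPT = \bigcap_{i \text{ succ}}(X_i \cup Y_i)$ is strictly smaller than every individual $|X_i \cup Y_i|$. The resolution exploits that, conditioned on the algorithm's transcript, $\OPT$ is uniform on the $k^*$-subsets of $C_\OPT$, so any output $S_\OPT$ with $|S_\OPT| \leq \beta k^*$ succeeds with conditional probability at most $\binom{|S_\OPT|}{k^*}/\binom{|C_\OPT|}{k^*}$. Requiring this conditional probability to be nontrivial on a large fraction of transcripts forces $|C_\OPT|$ to lie within a polynomial factor of $\beta k^*$; a careful potential-function accounting of the cost invested in shrinking the certified superset then reduces the combined-query analysis to the single-query charging above, losing only $n^{\OO(1)}$ factors and completing the proof.
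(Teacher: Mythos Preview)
Your proposal has a genuine gap in the adaptivity handling, and the root cause is the oracle design. With your oracle, a successful query $(X,\ell)$ returns $Y$ containing $\OPT\setminus X$ plus deterministic padding; the algorithm can strip the padding and recover $\OPT\setminus X$ exactly, so the claim that ``the only usable deduction is $\OPT\subseteq X\cup Y$'' is false, and the conditional-uniformity assertion collapses. More seriously, even if you somehow hid $\OPT\setminus X$, nothing prevents the algorithm from making many queries with $|X|+\alpha\ell$ \emph{much larger} than $\beta k^*$, each of which is ``successful'' and yields a superset of $\OPT$, and then intersecting the responses down to size $\le\beta k^*$. None of these individual queries satisfies your witness condition $|X_i|+\alpha_i\ell_i\le\beta k^*+o(k^*)$, so the charging step never fires; the hand-waved potential-function paragraph does not address this.

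The paper's proof fixes both problems with one idea you are missing: it enlarges the set system to $\F_T=\{S:T\subseteq S\}\cup\Fadv$ where $\Fadv=\{S:|S|\ge\lfloor\beta k^*\rfloor+1\}$, and designs the oracle so that any query with $|X|+\alpha\ell\ge\lfloor\beta k^*\rfloor+1$ is answered by a \emph{fixed} $T$-independent response in $\Fadv$. Thus such ``large'' queries leak nothing, and the algorithm's entire execution is identical across all $T$ (and to the execution on $\Fadv$) until it makes a query whose \emph{coverage} $\{S:|S|=k^*,\ |S\setminus X|\le\ell,\ |X|+\alpha\ell\le\beta k^*\}$ contains $T$. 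The lower bound then follows from a clean counting argument on coverage (your shift computation is essentially the paper's Claim~5.4/5.5), with no need for conditional distributions or potential functions.
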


The proof of \Cref{thm:minimization_lower_bound}, given in \Cref{sec:lower_bounds}, follows from the inability of an algorithm for $\LSUB$ to distinguish between instances in which $\CF$ contains all sets of size at least $\beta\cdot k+1$, versus instances in which $\CF$ contains a set $R$ of cardinality $k$, its supersets, and all sets of size at least $\beta\cdot k+1$.
Returning a valid solution for the later requires the algorithm to initiate an oracle query of the form $(X,\ell)$ to an $\alpha$-extension oracle such that $\abs{X}+\alpha\ell\leq \beta k$ and $X$ has an $\ell$-extension.
As we select $R$ to be a random set, we can use this property to lower bound the total cost of the queries the algorithm must initiate in order find an $\ell$-extension with a constant probability.
The value of $k$ used in the construction is the value which attains the maximum in \eqref{eq:fdef_intro}.
We note the oracles used in the proof of \Cref{thm:minimization_lower_bound} are deterministic.
Hence, the lower bound holds even if the algorithm is guaranteed the oracles are deterministic.
Together \Cref{lem:amls_upper_bound_via_f,thm:minimization_lower_bound} indicate that $\amlsalgo_{\CL,\beta}$ attains the best possible cost of a $\beta$-approximation algorithm for $\LSUB$, up to polynomial factors.
Similarly, \Cref{lem:deter_upper_bound_via_f,thm:minimization_lower_bound} imply that \Cref{algo:derandomized} is  optimal up to sub-exponential factors.

It follows from \Cref{lem:amls_upper_bound_via_f,thm:minimization_lower_bound} that
\begin{equation}
	\label{eq:best_to_f}
	\bestbound(\CL,\beta) = \lim_{n\rightarrow \infty} \left(f_{\CL,\beta}(n)\right)^{\frac{1}{n}}
\end{equation}
for all specification lists $\CL$ and $\beta\geq 1$.
We note that the above limit does not imply \Cref{thm:upper_bound_random}, though it can be used to establish as slightly weaker claim.
Hypothetically, it is possible that $f_{\CL,\beta}(n) = 2^{n +\sqrt{n}}$ and thus, $\bestbound(\CL,\beta)=2$ but $f_{\CL,\beta}(n)\neq \OO^*(2^n)$.  We will later rule out the existence of such cases.

The proof of \Cref{thm:minimization_lower_bound} can also be adapted to the exact setting of~\cite{FominGKLS10}.
Given a set $U$ and a subset family $\CF$ of $U$ (not necessarily monotone), an \emph{exact extension oracle} for $\CF$  takes as an input a set $X\subseteq U$ and $\ell \in \NN$. The oracle either returns $\yes$ or $\no$.
If $X$ has an $\ell$-extension then the oracle returns $\yes$ with probability at least $\frac{1}{2}$.
If $X$ does not have an $\ell$-extension then the oracle returns $\no$.
Similarly to the approximate case, we associate a number $c\geq 1$ with the oracle.
The \emph{cost} of an oracle query is $c^{\ell}$.

In the \emph{$c$-decision problem} ($\cDEC$) the input is a universe $U$ and an exact extension oracle for a set family $\CF$ of $U$. The objective is to determine if $\CF\neq \emptyset$ (in particular, the set system does not have to be monotone). The execution cost of an algorithm for $\cDEC$ is the sum of costs of all oracle queries plus the number of computational operations, where the cost of a query $(X,\ell)$ is $c^{\ell}$.  Similarly to $\LSUB$, we say that an algorithm $\A$ for $\cDEC$ is of cost $f:\NN\to \NN$ if every execution of $\A$ with an input for which $\abs{U}\leq n$ has cost at most $f(n)$.
In \cite{FominGLS19} it was shown that there is a randomized algorithm for $c$-DEC of cost $n^{\OO(1)}\cdot \left(2-\frac{1}{c}\right)^n$.

Using the same ideas as in the proof of \Cref{thm:minimization_lower_bound} we can show the following.

\begin{lemma}
	\label{lem:exact_lower_bound_via_f}
	For every $c>1$, every randomized algorithm for $\cDEC$ has cost of at least $n^{-\OO(1)}\cdot f_{\{(1,c)\},1}(n)$.
\end{lemma}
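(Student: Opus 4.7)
The proof mirrors that of \Cref{thm:minimization_lower_bound}, with the output being a single yes/no bit rather than a set. The plan is to exhibit two families of deterministic exact extension oracles that any correct algorithm must distinguish, and to show that the expected cost of doing so is $\Omega(f_{\{(1,c)\},1}(n))$. Fix $c>1$ and let $k^\ast = k^\ast(n)$ attain the maximum in the definition of $f_{\{(1,c)\},1}(n)$; in the exact setting ($\alpha=\beta=1$) this reduces to $\max_k \min_{t\in[0,k]\cap\NN} c^{k-t}/\p(n,k,t,t)$. Consider a universe $U$ with $|U|=n$ and two kinds of oracles: the null oracle, corresponding to $\CF=\emptyset$, always answers $\no$; and, for each $R\subseteq U$ with $|R|=k^\ast$, the $R$-oracle, corresponding to $\CF=\{R\}$, answers $\yes$ to $(X,\ell)$ iff $X\subseteq R$ and $|R\setminus X|\leq\ell$. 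Both are legitimate deterministic exact extension oracles, and any correct $\cDEC$-algorithm must output $\no$ on the null instance and $\yes$ on every $R$-instance.

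Given a randomized $\cDEC$-algorithm $\A$ of cost $T(n)$, I would first amplify its success probability to $3/4$ by a constant number of independent repetitions combined with a majority vote (constant-factor cost overhead). Fix a random tape $\rho$, and let $q^\rho_1=(X^\rho_1,\ell^\rho_1), q^\rho_2=(X^\rho_2,\ell^\rho_2),\ldots$ be the (deterministic) sequence of queries made by $\A$ on the null instance, with null-instance output $o_\rho\in\{\yes,\no\}$. The key coupling is that on any $R$-instance with the same $\rho$, the execution of $\A$ coincides with the null execution up to the first oracle $\yes$; in particular, if no query $q^\rho_i$ ``collides'' with $R$ (i.e.\ $X^\rho_i\subseteq R$ and $|R\setminus X^\rho_i|\leq \ell^\rho_i$), then the algorithm outputs $o_\rho$. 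Hence the event that $\A$ outputs $\yes$ on the $R$-instance with tape $\rho$ is contained in $\{o_\rho=\yes\}\cup\{R\in A(\rho)\}$, where $A(\rho)$ denotes the set of $R$'s that collide with some query. Averaging over uniformly random $R$ of size $k^\ast$ and over $\rho$, and invoking the amplified correctness ($\Pr_\rho[o_\rho=\yes]\leq 1/4$ and $\Pr_\rho[\A \text{ outputs } \yes \text{ on }R]\geq 3/4$ for every $R$), yields $\mathbb{E}_\rho\Pr_R[R\in A(\rho)]\geq 1/2$.

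For a single query $(X,\ell)$ with $|X|=t$, collision with a uniformly random $R$ of size $k^\ast$ can occur only when $k^\ast-\ell\leq t\leq k^\ast$, in which case the collision probability is $\binom{k^\ast}{t}/\binom{n}{t}=\p(n,k^\ast,t,t)$, and necessarily $\ell\geq k^\ast-t$, so the query cost $c^\ell$ is at least $c^{k^\ast-t}$. By the choice of $k^\ast$ as the outer maximizer, $c^{k^\ast-t}/\p(n,k^\ast,t,t)\geq f_{\{(1,c)\},1}(n)$ for every valid $t$, hence $c^{\ell^\rho_i}\geq f_{\{(1,c)\},1}(n)\cdot\Pr_R[q^\rho_i\text{ collides with }R]$ for every query (trivially when the right side is zero). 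Summing over $i$, applying a union bound to $\Pr_R[R\in A(\rho)]$, and taking expectation over $\rho$ shows that the worst-case cost of the amplified algorithm is at least $f_{\{(1,c)\},1}(n)/2$; undoing the $\OO(1)$ amplification overhead then yields $T(n)\geq n^{-\OO(1)}\cdot f_{\{(1,c)\},1}(n)$. The only subtle point — extracting useful slack from a bare $\tfrac{1}{2}$-success-probability guarantee — is handled by the amplification step; compared to \Cref{thm:minimization_lower_bound} the argument is in fact easier, since a $\yes$ output directly certifies an oracle collision, without any need to verify the feasibility of a returned set.
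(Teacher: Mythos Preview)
Your proof is correct and follows essentially the same route as the paper: the same two oracle families (the null oracle and the $\{R\}$-oracles for $|R|=k^\ast$), the same coupling of the $R$-instance execution with the null-instance execution until the first colliding query, and the same per-query inequality $c^{\ell}\ge f_{\{(1,c)\},1}(n)\cdot \Pr_R[\text{collision}]$, summed and combined with a coverage lower bound. The paper phrases the last step as ``pick the random tape $b^\ast$ maximising $|\coverage(Q(b^\ast))|$'' rather than ``take expectation over $\rho$'', but these are equivalent.

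The one substantive difference is your amplification step. The paper does not amplify; instead, its proof silently assumes the algorithm has one-sided error on \no-instances (it asserts that if all oracle answers are $\no$ the algorithm ``has to return $\no$''), which makes the implication $\{\text{output }\yes\text{ on }R\}\subseteq\{R\in\coverage(Q(b))\}$ hold per tape. Your amplification handles the two-sided case, at the cost of needing the success probability to be bounded away from $\tfrac12$ (otherwise majority vote does nothing). Either way, the core combinatorial argument is identical; the paper in fact obtains the sharper constant $\cost_\A(n)\ge\tfrac12\,f_{\{(1,c)\},1}(n)$ directly, with no polynomial loss.
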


As we can also show that  $f_{\{(1,c)\},1}(n)\geq n^{-\OO(1)}\cdot \left(2-\frac{1}{c}\right)^n$, we obtain the following theorem.

\begin{theorem}
	\label{lem:exact_lower_bound}
	For all $c>1$, every randomized algorithm for $\cDEC$ has cost of at least $n^{-\OO(1)}\cdot \left( 2-\frac{1}{c}\right)^n$.
\end{theorem}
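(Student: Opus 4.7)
The plan is to invoke \Cref{lem:exact_lower_bound_via_f} as a black box, which reduces the theorem to the purely combinatorial claim
\[
f_{\{(1,c)\},1}(n) \;\geq\; n^{-\OO(1)} \cdot \left(2 - \tfrac{1}{c}\right)^n.
\]
Specializing \eqref{eq:fdef_intro} with $\CL=\{(1,c)\}$ and $\alpha=\beta=1$ gives $M^*_{1,1}=0$ and $\x(k,t)=t$, so $\p(n,k,t,t)=\binom{k}{t}/\binom{n}{t}$, and the target reduces to
\[
\max_{0\le k\le n}\;\min_{0\le t\le k}\;\frac{c^{k-t}\binom{n}{t}}{\binom{k}{t}} \;\geq\; n^{-\OO(1)}\cdot\left(2-\tfrac{1}{c}\right)^n.
\]

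The first step is to rewrite the inner ratio via the elementary identity $\binom{n}{t}\binom{n-t}{k-t}=\binom{n}{k}\binom{k}{t}$, which yields $\frac{c^{k-t}\binom{n}{t}}{\binom{k}{t}} = \binom{n}{k}\cdot\frac{c^{k-t}}{\binom{n-t}{k-t}}$. Substituting $s=k-t$ turns $\binom{n-t}{k-t}$ into $\binom{n-k+s}{s}$, and the generating-function identity $\sum_{s\ge 0}\binom{n-k+s}{s}z^s=(1-z)^{-(n-k+1)}$ evaluated at $z=1/c$ (convergent since $c>1$) gives $\sum_{s\ge 0}\binom{n-k+s}{s}/c^s=\left(\tfrac{c}{c-1}\right)^{n-k+1}$. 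Since every summand is nonnegative, each single term is bounded above by the whole sum, so $\binom{n-k+s}{s}/c^s\leq (c/(c-1))^{n-k+1}$ for every $s\geq 0$. Taking reciprocals (and noting that restricting $s$ to $[0,k]\cap\NN$ only increases the resulting minimum) produces
\[
\min_{0\le t\le k}\;\frac{c^{k-t}\binom{n}{t}}{\binom{k}{t}} \;\geq\; \frac{c-1}{c}\cdot\binom{n}{k}\left(\frac{c-1}{c}\right)^{n-k}.
\]

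The second step is to take $\max_k$. By the binomial theorem, $(2-\tfrac{1}{c})^n = \bigl(1+\tfrac{c-1}{c}\bigr)^n = \sum_{k=0}^n \binom{n}{k} \bigl(\tfrac{c-1}{c}\bigr)^{n-k}$, and since this is a sum of $n+1$ nonnegative terms, the largest is at least $(n+1)^{-1}$ times the whole. Combined with the bound above this yields $f_{\{(1,c)\},1}(n)\geq \frac{c-1}{c(n+1)}\cdot (2-\tfrac{1}{c})^n$, which is the required inequality (the constant $(c-1)/c$ is absorbed into $n^{-\OO(1)}$ since $c$ is fixed). A single appeal to \Cref{lem:exact_lower_bound_via_f} then produces the cost lower bound $n^{-\OO(1)}\cdot(2-\tfrac{1}{c})^n$, completing the proof.

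I do not expect a serious obstacle. The cornerstone is the ``flip'' identity $\binom{n}{t}/\binom{k}{t}=\binom{n}{k}/\binom{n-t}{k-t}$, which reduces the $\min_t$ computation to bounding a single term of a convergent series; from there the rest is a one-line application of the binomial theorem together with \Cref{lem:exact_lower_bound_via_f}. The only mildly delicate point is that the minimum ranges over $t\in[0,k]\cap\NN$ whereas the series runs over all nonnegative $s$, but restricting the domain only increases the minimum, so the lower bound transfers without loss.
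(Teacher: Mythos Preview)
Your proof is correct and takes a genuinely different route from the paper. The paper invokes \Cref{lem:exact_lower_bound_via_f} just as you do, but then passes through the continuous-optimization machinery: it applies \Cref{lem:f_to_amls} to get $f_{\{(1,c)\},1}(n)\geq n^{-\OO(1)}\cdot(\amlsbound(\{(1,c)\},1))^n$, and then appeals to \Cref{lem:coincide_with_esa} together with a short verification that $\D{1}{1/c}=\ln c$ to conclude $\amlsbound(\{(1,c)\},1)=2-\tfrac{1}{c}$. Your argument avoids all of this: the identity $\binom{n}{t}/\binom{k}{t}=\binom{n}{k}/\binom{n-t}{k-t}$ factors out $\binom{n}{k}$, the negative-binomial generating function $\sum_{s\ge 0}\binom{n-k+s}{s}c^{-s}=(c/(c-1))^{n-k+1}$ uniformly bounds each term, and the binomial theorem recovers $(2-\tfrac{1}{c})^n$ as the sum of the resulting expressions over $k$. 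This is shorter and self-contained, requiring nothing beyond elementary identities; the paper's route, by contrast, reuses lemmas that are needed anyway for the general $\amlsbound(\CL,\beta)$ theory, so it comes essentially for free in context but is heavier machinery for this one special case.
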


In particular, \Cref{lem:exact_lower_bound} indicates the result of \cite{FominGLS19} cannot be improved. The proofs of \Cref{lem:exact_lower_bound_via_f}  and \Cref{lem:exact_lower_bound} are given in \Cref{sec:lower_bounds}.

\subsection{Evaluating the Running Time of Approximate Monotone Local Search}
\label{sec:evaluate_running_time}

So far, we showed that $\amlsalgo_{\CL,\beta}$ attains the best possible cost of a $\beta$-approximation for $\LSUB$, up to polynomial factors.
However, the tools presented so far do not provide a method for evaluating the running time of the algorithm, and do not suffice to show \Cref{thm:upper_bound_random,thm:upper_bound_deter}.

The proof of \Cref{thm:compute}, which shows $\bestbound(\CL,\beta)$ can be computed efficiently,  consists of two main stages.
The first stage shows that $f_{\CL,\beta}\approx d^n$ (ignoring polynomial factors), where $d$ is a solution for a continuous max-min optimization problem.
As a by product, the stage provides the missing ingredient towards the proofs of \Cref{thm:upper_bound_random,thm:upper_bound_deter}.
The second stage shows the minimization part of the optimization problem is a minimization of a convex function, and the maximization part is a maximization of a concave function.
Hence, both parts of the optimization problem can be easily solved using known tools from convex optimization (see, e.g., \cite{GrotschelLS88}).

Using the standard $\binom{n}{k}\approx \exp\left(n\cdot \entropy\left( \frac{k}{n}\right) \right)$ estimation for binomial coefficients and basic analysis of the hyper-geometric distribution $\p$,
the discrete optimization problem defined in \eqref{eq:fdef_intro} can be converted to a continuous optimization problem. 	For every $\alpha,\beta,c\geq 1$ we define the following functions.
\begin{alignat}{2}
	&\delta_{\alpha,\beta}(\kappa, \tau) &&=~
	\begin{cases}
		\frac{\frac{\beta}{\alpha} \kappa -\frac{\tau}{\alpha}}{1-\tau} = \frac{\frac{\beta}{\alpha} \kappa -\frac{1}{\alpha}}{1-\tau}+\frac{1}{\alpha} &\text{if } \tau \neq 1\\
		\frac{1}{\alpha} &\text{if } \tau = 1
	\end{cases}
	\label{eq:delta_def}
	\\
	\label{eq:gamma_def}
	&\gamma_{\alpha,\beta}(\kappa, \tau) &&=~
	\begin{cases}
		\left(1-\frac{\beta}{\alpha}\right)\frac{\kappa}{\tau} +\frac{1}{\alpha} &\text{if } \tau \neq 0\\
		\frac{1}{\alpha}  &\text{if } \tau=0
	\end{cases}
	\\
	\label{eq:g_def}
	&g_{\alpha,\beta,c}(\kappa,\tau) &&= ~\frac{\beta \kappa - \tau}{\alpha} \ln c - \tau\cdot \entropy\left(\gamma_{\alpha,\beta}(\kappa,\tau)\right) -(1-\tau)\cdot \entropy\left(\delta_{\alpha,\beta} (\kappa,\tau)\right) + \entropy\left(\kappa\right)
	\\
	\label{eq:Mdef}
	&M_{\alpha, \beta}(\kappa) &&=~
	\begin{cases}
		\frac{\beta - \alpha}{1-\alpha \cdot \kappa} \cdot \kappa &\text{if } \alpha < \beta\\
		0 &\text{if } \alpha =\beta\\
		\frac{\alpha - \beta}{\alpha - 1}\cdot  \kappa &\text{if } \alpha > \beta
	\end{cases}
\end{alignat}

Observe that $M_{\alpha,\beta}(\kappa) = M^*_{\alpha,\beta}\cdot \kappa$ ($M^*_{\alpha,\beta}$ is defined in \eqref{eq:Mstar_def}) if $\alpha\geq\beta$, but $M_{\alpha,\beta}(\kappa) \neq M^*_{\alpha,\beta}\cdot \kappa$ if $\alpha<\beta$.
We follow the standard notation in which $0 \ln 0 = 0$ and $\entropy(0) = \entropy(1)=0$.
For every $\beta \geq 1$ and specification list $\CL$ we define
\begin{equation}
	\label{eq:amls_def}
	\amlsbound(\CL,\beta) = \exp\left(  \max_{~0\leq \kappa \leq \frac{1}{\beta} ~} ~\min_{(\alpha,c)\in \CL} ~\min_{~ M_{\alpha,\beta}(\kappa)  \leq  \tau \leq \beta \kappa~}~ g_{\alpha,\beta,c}(\kappa,\tau)\right).
\end{equation}
If $\CL = \{(\alpha,c)\}$, we also write $\amlsbound(\alpha,c,\beta)$ instead of $\amlsbound(\{(\alpha,c)\},\beta)$.

\begin{restatable}{lemma}{ftoamls}
	\label{lem:f_to_amls}
	For every $\beta \geq 1$ and specification list $\CL$ it holds that
	\[n^{-\OO(1)}\cdot \left(\amlsbound(\CL,\beta )\right)^n~\leq~ f_{\CL,\beta}(n)~\leq~ n^{\OO(1)}\cdot \left(\amlsbound(\CL,\beta)\right)^n.\]
\end{restatable}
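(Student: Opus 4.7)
The plan is to convert the discrete max-min defining $f_{\CL,\beta}(n)$ into the continuous max-min defining $\amlsbound(\CL,\beta)$ via the Stirling estimate $\binom{n}{k}=\exp(n\,\entropy(k/n))\cdot n^{\pm \OO(1)}$. Writing $\kappa=k/n$, $\tau=t/n$ and $y=\lceil x\rceil$ for $x=(1-\beta/\alpha)k+t/\alpha$, the single summand in \eqref{eq:hyper} with index $y$ evaluates, after applying Stirling to each of the three binomial coefficients, to $\exp(n(\tau\entropy(y/t)+(1-\tau)\entropy((k-y)/(n-t))-\entropy(\kappa)))\cdot n^{\pm \OO(1)}$. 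A direct computation shows $y/t=\gamma_{\alpha,\beta}(\kappa,\tau)+\OO(1/n)$ and $(k-y)/(n-t)=\delta_{\alpha,\beta}(\kappa,\tau)+\OO(1/n)$, so by Lipschitz continuity of $\entropy$ on any compact subinterval of $(0,1)$ the exponent matches the entropic terms of $g_{\alpha,\beta,c}(\kappa,\tau)$ up to $\OO(1/n)$. Combined with the exact identity $\exp(\frac{\beta k-t}{\alpha}\ln c)=\exp(n(\beta\kappa-\tau)\ln c/\alpha)$, the complete ratio in \eqref{eq:fdef_intro} evaluates to $\exp(n\cdot g_{\alpha,\beta,c}(\kappa,\tau))\cdot n^{\pm \OO(1)}$.

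To upgrade this single-term estimate to a two-sided bound on $\p(n,k,t,x)$, I would use the well-known log-concavity of $y\mapsto\binom{k}{y}\binom{n-k}{t-y}/\binom{n}{t}$: in the tail regime $x\ge kt/n=\mathbb{E}[|X\cap[k]|]$ the summands are monotone decreasing for $y\ge\lceil x\rceil$, so the sum is bounded between its first term and $(n+1)$ times its first term; while in the saturated regime $x<kt/n$, hypergeometric concentration gives $\p\ge\Omega(1)$. The inequality $x\ge kt/n$ rewrites as $\tau(1-\alpha\kappa)\ge(\beta-\alpha)\kappa$, which for $\alpha<\beta$ is exactly $\tau\ge M_{\alpha,\beta}(\kappa)$, and for $\alpha\ge\beta$ is automatic throughout $[M^*_{\alpha,\beta}\kappa,\beta\kappa]$ (where moreover $M_{\alpha,\beta}(\kappa)=M^*_{\alpha,\beta}\kappa$). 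In the saturated regime, which is only accessible for $\alpha<\beta$ because the discrete range starts at $0$, the ratio in \eqref{eq:fdef_intro} simplifies to $\exp(n(\beta\kappa-\tau)\ln c/\alpha)\cdot n^{\OO(1)}$, strictly decreasing in $\tau$; hence any discrete minimizer over $t$ lies at normalized value within $\OO(1/n)$ of $[M_{\alpha,\beta}(\kappa),\beta\kappa]$, reconciling the discrete and continuous feasible regions up to polynomial factors.

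With these pointwise estimates, both inequalities follow by routine passage to the extremum. For $f_{\CL,\beta}(n)\le n^{\OO(1)}\amlsbound(\CL,\beta)^n$: for each integer $k\in[0,n/\beta]$ with $\kappa=k/n$, pick $(\alpha^*,c^*)\in\CL$ and a continuous minimizer $\tau^*\in[M_{\alpha^*,\beta}(\kappa),\beta\kappa]$ of $g_{\alpha^*,\beta,c^*}(\kappa,\cdot)$, and take $t=\lceil n\tau^*\rceil$ (clipped into the discrete range); continuity of $g_{\alpha,\beta,c}$, explicitly built in at the boundary values $\tau\in\{0,1\}$ via \eqref{eq:delta_def}--\eqref{eq:g_def}, gives $g_{\alpha^*,\beta,c^*}(\kappa,t/n)\le g_{\alpha^*,\beta,c^*}(\kappa,\tau^*)+\OO(1/n)$, and taking the maximum over $k$ yields the upper bound. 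Conversely, for $n^{-\OO(1)}\amlsbound(\CL,\beta)^n\le f_{\CL,\beta}(n)$, pick $\kappa^*\in[0,1/\beta]$ attaining the outer maximum in \eqref{eq:amls_def} and set $k=\lfloor n\kappa^*\rfloor$; for every $(\alpha,c)\in\CL$ and every integer $t\in[M^*_{\alpha,\beta}k,\beta k]$ the ratio is at least $n^{-\OO(1)}\exp(n\cdot g_{\alpha,\beta,c}(\kappa^*,t/n))$, and its minimum over $t$ is at least $\amlsbound(\CL,\beta)^n\cdot n^{-\OO(1)}$ by the saturation argument.

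The main obstacle is uniformity of the Stirling and concentration estimates across all parameter regimes: near the boundaries $\gamma,\delta\in\{0,1\}$ the binary entropy has unbounded derivative, so the $\OO(1/n)$ additive slack in $y/t$ and $(k-y)/(n-t)$ must be converted into additive slack in the exponent with care, and the ``transition'' window $\tau\approx M_{\alpha,\beta}(\kappa)$ where neither the tail nor the saturation bound is sharp must be bridged by an interpolation argument. Both concerns are standard consequences of the local central limit theorem for the hypergeometric distribution, but must be applied uniformly so that every error term is absorbed into a polynomial factor $n^{\OO(1)}$ rather than deteriorating at the boundaries of the feasible set.
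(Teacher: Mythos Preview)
Your proposal is correct and follows essentially the same route as the paper: split into upper and lower bounds, lower-bound $\p$ by its leading summand and upper-bound it by $(n+1)$ times the largest summand, apply the Stirling estimate $\binom{n}{k}=\exp(n\,\entropy(k/n))\cdot n^{\pm\OO(1)}$, and split cases on whether $x\gtrless kt/n$ (equivalently $\tau\gtrless M_{\alpha,\beta}(\kappa)$), which is exactly the content of the paper's Lemmas~6.4 and Claims~6.8--6.9.

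The one substantive difference is in how the boundary regime $\gamma,\delta\to\{0,1\}$ is handled. You invoke Lipschitz continuity of $\entropy$ away from the boundary, flag the failure at the boundary as the ``main obstacle,'' and propose the local CLT for the hypergeometric as the fix. The paper instead uses a single elementary perturbation estimate (its Lemma~6.1, taken from~\cite{EsmerKMNS22}): for $0\le b\le a\le n$ and $|\eps|,|\delta|\le d$ with $0\le b+\delta\le a+\eps$, one has $|a\,\entropy(b/a)-(a+\eps)\,\entropy((b+\delta)/(a+\eps))|=\OO(d\log n)$, valid \emph{uniformly} including at the boundary. This replaces both your Lipschitz step and your proposed local-CLT patch with one line, and also cleanly absorbs the rounding $k\mapsto\lfloor n\kappa^*\rfloor$ and $t\mapsto\lceil n\tau^*\rceil$ (the paper's Claim~6.11). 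Your approach would work, but the perturbation lemma is lighter and avoids the interpolation you anticipate near $\tau\approx M_{\alpha,\beta}(\kappa)$.
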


Note that the constants represented by $\OO(1)$ in \Cref{lem:f_to_amls} may depend on $\CL$ and $\beta$.
The following corollary is an immediate consequence of \Cref{lem:f_to_amls} and \eqref{eq:best_to_f}.

\begin{corollary}
	\label{cor:amls_is_the_best}
	 For every specification list  $\CL$  and $\beta\geq 1$ it holds that $	\amlsbound(\CL,\beta) = \bestbound(\CL,\beta)$.
\end{corollary}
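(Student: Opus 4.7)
The plan is to derive the equality $\amlsbound(\CL,\beta) = \bestbound(\CL,\beta)$ directly from the two-sided sandwich in Lemma~\ref{lem:f_to_amls} combined with the limit characterization~\eqref{eq:best_to_f}. The argument is essentially a routine computation of $n$-th roots, so no substantial new ideas are required.

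Concretely, I would proceed as follows. Starting from Lemma~\ref{lem:f_to_amls}, which states that
\[
n^{-\OO(1)}\cdot \left(\amlsbound(\CL,\beta)\right)^n \;\leq\; f_{\CL,\beta}(n) \;\leq\; n^{\OO(1)}\cdot \left(\amlsbound(\CL,\beta)\right)^n,
\]
I would take the $n$-th root of every term. This yields
\[
\left(n^{-\OO(1)}\right)^{1/n}\cdot \amlsbound(\CL,\beta) \;\leq\; \left(f_{\CL,\beta}(n)\right)^{1/n} \;\leq\; \left(n^{\OO(1)}\right)^{1/n}\cdot \amlsbound(\CL,\beta).
\]
The key observation is that any factor of the form $n^{c}$ for a constant $c$ (which may depend on $\CL$ and $\beta$, but not on $n$) satisfies $\left(n^{c}\right)^{1/n} = \exp(c \ln n / n) \to 1$ as $n \to \infty$. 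Consequently, both the left- and right-hand sides of the displayed inequality converge to $\amlsbound(\CL,\beta)$ as $n \to \infty$.

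Applying the squeeze theorem, the middle quantity $\left(f_{\CL,\beta}(n)\right)^{1/n}$ also converges to $\amlsbound(\CL,\beta)$. Invoking~\eqref{eq:best_to_f}, which identifies $\bestbound(\CL,\beta)$ as exactly this limit, yields $\bestbound(\CL,\beta) = \amlsbound(\CL,\beta)$ as required. There is no genuine obstacle here; the only minor subtlety is to make sure the implicit $\OO(1)$ constants in Lemma~\ref{lem:f_to_amls} are genuinely constant in $n$ (possibly depending on $\CL$ and $\beta$), so that the factors $n^{\pm \OO(1)/n}$ do indeed tend to~$1$. Since $\CL$ and $\beta$ are fixed throughout the statement, this is immediate.
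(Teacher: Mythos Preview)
Your proposal is correct and takes essentially the same approach as the paper, which simply states that the corollary ``is an immediate consequence of \Cref{lem:f_to_amls} and \eqref{eq:best_to_f}'' without spelling out the squeeze argument you give. Your added remark about the $\OO(1)$ constants being independent of $n$ is exactly the right sanity check.
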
 

\Cref{thm:upper_bound_random} (\Cref{thm:upper_bound_deter}) follows from \Cref{lem:amls_upper_bound_via_f} (\Cref{lem:deter_upper_bound_via_f}),  \Cref{lem:f_to_amls} and \Cref{cor:amls_is_the_best}.

Now, our next challenge is to show that $\amlsbound$ can be computed.
Our first observation towards this goal is that $g_{\alpha,\beta,c}(\kappa,\tau)$ is \emph{convex} as a function of $\tau$ for every fixed $\kappa$.

\begin{restatable}{lemma}{gconvexbytau}
	\label{lem:g_convex_by_tau}
	Let $\alpha,c\geq 1$, $\beta > 1$ and $0 < \kappa < \frac{1}{\beta}$. The function $h(\tau) =g_{\alpha,\beta,c} (\kappa,\tau)$ is convex in the domain $\left[M_{\alpha,\beta}(\kappa), \beta \cdot \kappa \right]$ and $\min_{\tau \in\left[M_{\alpha,\beta}(\kappa), \beta \cdot \kappa \right] }g_{\alpha,\beta,c}(\kappa,\tau)<h(\beta\kappa)=g_{\alpha,\beta,c} (\kappa,\beta \kappa)$.
\end{restatable}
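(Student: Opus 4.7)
The plan is to prove both assertions by direct differentiation of $h$. Convexity will follow from an expression for $h''(\tau)$ that is manifestly non-negative, while the strict inequality will follow by showing $h'(\tau) \to +\infty$ as $\tau \to (\beta\kappa)^-$, which forces $h$ to be strictly increasing near the right endpoint of the interval.

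Set $A = \beta/\alpha$ and $B = 1/\alpha$, and write $\gamma, \delta$ in place of $\gamma_{\alpha,\beta}(\kappa,\tau)$ and $\delta_{\alpha,\beta}(\kappa,\tau)$. The key observation is that $\tau\gamma = (1-A)\kappa + B\tau$ and $(1-\tau)\delta = A\kappa - B\tau$ are both affine in $\tau$ with slopes $B$ and $-B$. The product rule therefore yields the clean identities $\tau\gamma' = B-\gamma$ and $(1-\tau)\delta' = \delta-B$. Applying the chain rule together with $H'(x) = \ln\frac{1-x}{x}$ and collecting terms, I expect to obtain
\[
h'(\tau) = -B\ln c + B\ln\frac{\gamma}{\delta} + (1-B)\ln\frac{1-\gamma}{1-\delta}.
\]
Differentiating once more, using $\frac{B}{\gamma} - \frac{1-B}{1-\gamma} = \frac{B-\gamma}{\gamma(1-\gamma)}$ and the analogous identity for $\delta$, the cross terms collapse and one arrives at
\[
h''(\tau) = \frac{(B-\gamma)^2}{\tau\,\gamma(1-\gamma)} + \frac{(\delta-B)^2}{(1-\tau)\,\delta(1-\delta)}.
\]
Each summand is a square divided by a positive quantity, so $h''(\tau) \geq 0$; this gives convexity on the open interval $(M_{\alpha,\beta}(\kappa),\beta\kappa)$ and, by continuity of $h$, on the closed interval.

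For the strict inequality, note that $\delta = (A\kappa - B\tau)/(1-\tau) \to 0^+$ and $\gamma \to 1/\beta \in (0,1)$ as $\tau \to (\beta\kappa)^-$, so the term $-B\ln\delta$ in the formula for $h'(\tau)$ diverges to $+\infty$ while every other term remains bounded. Hence there is $\varepsilon > 0$ with $h'(\tau) > 0$ on $(\beta\kappa-\varepsilon,\beta\kappa)$, which together with continuity of $h$ yields $h(\tau^*) < h(\beta\kappa)$ for any $\tau^* \in (\beta\kappa-\varepsilon,\beta\kappa)$. A short case analysis verifies $M_{\alpha,\beta}(\kappa) < \beta\kappa$ in each of the regimes $\alpha=\beta$, $\alpha<\beta$, $\alpha>\beta$ (using $\kappa<1/\beta$ in the first two and $\beta>1$ in the third), so $\varepsilon$ can be shrunk to place $\tau^*$ inside the interval, completing the argument. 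The main obstacle is bookkeeping: carrying out the algebraic simplifications cleanly, and verifying $\gamma,\delta \in (0,1)$ throughout $(M_{\alpha,\beta}(\kappa),\beta\kappa)$ so that the derivative formulas are well defined, which again reduces to a short check of the endpoint values of $\gamma,\delta$ in each of the three cases.
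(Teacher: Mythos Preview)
Your proposal is correct and follows essentially the same route as the paper: the paper computes $h'(\tau)=-\tfrac{\ln c}{\alpha}-\D{\tfrac{1}{\alpha}}{\gamma}+\D{\tfrac{1}{\alpha}}{\delta}$ and $h''(\tau)=\bigl(\gamma-\tfrac{1}{\alpha}\bigr)^{2}\Gamma+\bigl(\delta-\tfrac{1}{\alpha}\bigr)^{2}\Delta$ (your formulas rewritten), uses the same positivity argument for convexity, and deduces the strict inequality from $\lim_{\tau\to\beta\kappa}h'(\tau)=+\infty$ via $\delta\to 0^{+}$. The only cosmetic difference is that the paper packages $h'$ via the KL divergence and separates the endpoint checks on $\gamma,\delta$ into standalone lemmas, whereas you do the same computations inline.
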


The lemma follows from a standard calculus argument. 
In fact, we are able to show a slightly stronger claim, which states that, up to some corner cases, the minimum of $h(\tau)$ (as defined in \Cref{lem:g_convex_by_tau}) in the interval $\left[M_{\alpha,\beta}(\kappa), \beta \cdot \kappa \right]$ is an interior point (that is, not $M_{\alpha,\beta}(\kappa)$ or $\beta \kappa$).
One of the corner cases occurs when $\alpha=\beta$, in which the minimum may be at $\tau=0=M_{\alpha,\alpha}(\kappa)$.
This distinction provides some evidence that the analysis inevitably has to differ from the approaches taken in \cite{FominGLS19,EsmerKMNS22} which deal with the special case of $\alpha=\beta$. 

For any $\alpha,\beta,c\geq 1$ and $0\leq \kappa\leq \frac{1}{\beta}$ we define
\begin{equation}
	\label{eq:gstar_def}
	g^*_{\alpha,\beta,c}(\kappa) = ~\min_{~ M_{\alpha,\beta}(\kappa)  \leq  \tau \leq \beta \kappa~}~ g_{\alpha,\beta,c}(\kappa,\tau).
\end{equation}
Therefore,
\begin{equation}
	\label{eq:amls_via_gstar}
	\amlsbound(\CL,\beta)= \exp\left(  \max_{~0\leq \kappa \leq \frac{1}{\beta} ~} ~\min_{(\alpha,c)\in \CL} ~g^*_{\alpha,\beta,c}(\kappa)\right).
\end{equation}
By \Cref{lem:g_convex_by_tau}, $g^*_{\alpha,\beta,c}(\kappa)$ is the solution for a convex minimization over a closed interval. Observe that  $g_{\alpha,\beta,c}(\kappa,\tau)$ can be evaluated up to additive precision of $\eps$ in polynomial time in the encoding length of   $\alpha$, $\beta$, $\kappa$, $\tau$, $\kappa$ and  $\eps$. Thus, using standard  convex optimization tools (e.g., \cite[Theorem 4.3.13]{GrotschelLS88}) we attain the following result.

\begin{corollary}
	\label{lem:compute_gstar}
	There exists an algorithm which, given $\alpha,\beta,c\geq 1$, $0\leq \kappa\leq \frac{1}{\beta}$ and $\eps>0$, computes  $g^*_{\alpha,\beta,c}(\kappa)$ up to an additive precision of $\eps>0$, and runs in polynomial time in the encoding length of $\alpha$, $\beta$, $c$, $\kappa$ and $\eps$.
\end{corollary}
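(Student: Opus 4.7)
The plan is to reduce the corollary to a one-dimensional convex minimization problem with an approximate evaluation oracle and invoke a standard polynomial-time convex-optimization procedure such as \cite[Theorem~4.3.13]{GrotschelLS88}. By \Cref{lem:g_convex_by_tau}, for $\beta > 1$ and $\kappa \in (0, 1/\beta)$ the map $\tau \mapsto g_{\alpha,\beta,c}(\kappa,\tau)$ is convex on the closed interval $[M_{\alpha,\beta}(\kappa), \beta\kappa]$, whose endpoints are explicit rationals of encoding length polynomial in $\alpha,\beta,\kappa$. The degenerate cases $\kappa \in \{0, 1/\beta\}$ and $\beta = 1$ cause the interval to collapse or become trivial and can be disposed of by a single evaluation of $g_{\alpha,\beta,c}$ at the beginning of the algorithm.

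The first step is to build an efficient evaluation oracle for $g_{\alpha,\beta,c}(\kappa,\tau)$: given rational inputs and a precision parameter $s$, the oracle outputs a value within additive error $2^{-s}$ in time polynomial in $s$ and the bit-length of the inputs. Inspecting \eqref{eq:delta_def}--\eqref{eq:g_def}, $g$ is a fixed arithmetic combination of $\ln c$ and the three binary entropies $\HH(\kappa)$, $\HH(\gamma_{\alpha,\beta}(\kappa,\tau))$, $\HH(\delta_{\alpha,\beta}(\kappa,\tau))$. Since $\HH(z)$ is computable to additive error $2^{-s}$ in polynomial time for any rational $z \in [0,1]$ using standard polynomial-time routines for the natural logarithm, such an oracle is straightforward. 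The piecewise cases $\tau \in \{0,1\}$ in \eqref{eq:delta_def}--\eqref{eq:gamma_def} are dealt with by explicit case distinction, since they merely record continuous extensions of the relevant formulas.

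The second step is to establish a Lipschitz bound $L$ for $g_{\alpha,\beta,c}(\kappa,\cdot)$ on the interval $[M_{\alpha,\beta}(\kappa), \beta\kappa]$ and an upper bound on $|g|$ there, both of encoding length polynomial in the inputs. These follow from a direct differentiation of $g$ with respect to $\tau$ (already implicit in the proof of \Cref{lem:g_convex_by_tau}) together with $\HH(z) \le \ln 2$ and the observation that on the open interior of the interval the arguments $\gamma_{\alpha,\beta}(\kappa,\tau)$ and $\delta_{\alpha,\beta}(\kappa,\tau)$ are separated from $\{0,1\}$ by a gap of polynomial encoding length in the inputs. Feeding $L$, the range bound, the evaluation oracle, and the convexity from \Cref{lem:g_convex_by_tau} into the cited convex-minimization theorem returns an $\eps$-approximation of $g^*_{\alpha,\beta,c}(\kappa)$ using $\mathrm{poly}(\log(L/\eps))$ oracle calls.

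The main obstacle will be propagating the evaluation-oracle error through the convex optimization: to obtain final additive precision $\eps$, one must call the oracle with precision $\eps' = \Theta(\eps/L)$, and verify that $\log(1/\eps')$ remains polynomial in the encoding length of $\eps$ and of the other inputs---this is exactly what the bound on $L$ from the previous step guarantees. A secondary technical point is that $\partial g/\partial \tau$ may blow up at the endpoints where $\gamma_{\alpha,\beta}$ or $\delta_{\alpha,\beta}$ touches $0$ or $1$; however, since \Cref{lem:g_convex_by_tau} shows the minimum lies strictly below $g_{\alpha,\beta,c}(\kappa, \beta\kappa)$, we may safely restrict to a slightly shrunken subinterval (a polynomial shrinkage suffices) on which the derivative is bounded, without changing the value of the minimum by more than $\eps$.
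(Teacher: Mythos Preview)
Your approach is essentially the same as the paper's: the paper simply notes that, by \Cref{lem:g_convex_by_tau}, $g^*_{\alpha,\beta,c}(\kappa)$ is the minimum of a convex function over a closed interval, observes that $g_{\alpha,\beta,c}(\kappa,\tau)$ can be evaluated up to additive precision $\eps$ in polynomial time in the encoding length of the inputs, and then directly cites \cite[Theorem~4.3.13]{GrotschelLS88}. You flesh out considerably more technical detail (the evaluation oracle, Lipschitz issues near the endpoints, interval shrinkage) than the paper, which treats the corollary as an immediate consequence and does not address these points explicitly.
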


The main insight behind the proof of \Cref{thm:compute} is the following.

\begin{restatable}{lemma}{concave}
	\label{lem:concave}
	For all $\alpha,c\geq 1$ and $\beta > 1$, it holds that $g^*_{\alpha,\beta,c}(\kappa)$ is concave in the interval $\left[0,\frac{1}{\beta} \right]$.	
\end{restatable}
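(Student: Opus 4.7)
My plan is to use the envelope theorem to reduce concavity of $g^*_{\alpha,\beta,c}$ to a pointwise sign condition on $\det \mathrm{Hess}(g_{\alpha,\beta,c})$, and then verify that condition as a polynomial inequality in the cell-probabilities of an implicit $2\times 2$ table. By \Cref{lem:g_convex_by_tau}, for $\kappa \in (0, 1/\beta)$ the map $\tau \mapsto g_{\alpha,\beta,c}(\kappa,\tau)$ is strictly convex on $[M_{\alpha,\beta}(\kappa), \beta\kappa]$ and its minimum is strictly less than $g_{\alpha,\beta,c}(\kappa, \beta\kappa)$, so the minimizer $\tau^*(\kappa)$ lies in $[M_{\alpha,\beta}(\kappa), \beta\kappa)$. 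The feasibility region $\{(\kappa,\tau) : M_{\alpha,\beta}(\kappa) \leq \tau \leq \beta\kappa\}$ is convex, because $M_{\alpha,\beta}(\kappa)$ is convex in $\kappa$ (linear when $\alpha \geq \beta$; of the form $(\beta-\alpha)\kappa/(1-\alpha\kappa)$, hence convex, when $\alpha < \beta$). At an interior optimum, the envelope theorem combined with implicit differentiation of $g_\tau(\kappa, \tau^*(\kappa)) = 0$ yields $(g^*)''(\kappa) = g_{\kappa\kappa} - g_{\kappa\tau}^{2}/g_{\tau\tau} = \det \mathrm{Hess}(g) / g_{\tau\tau}$ at $(\kappa, \tau^*(\kappa))$; since $g_{\tau\tau} > 0$, the concavity claim reduces to $\det \mathrm{Hess}(g) \leq 0$ at the optimum. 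The boundary regime $\tau^*(\kappa) = M_{\alpha,\beta}(\kappa)$ is handled by direct substitution into the one-dimensional function $\kappa \mapsto g(\kappa, M_{\alpha,\beta}(\kappa))$.

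The key to computing $\mathrm{Hess}(g)$ is that the two entropy terms in $g$ are values of the perspective function $\phi(x, a) \coloneqq a\, \entropy(x/a)$: namely $\tau\, \entropy(\gamma_{\alpha,\beta}(\kappa,\tau)) = \phi(\eta,\tau)$ and $(1-\tau)\, \entropy(\delta_{\alpha,\beta}(\kappa,\tau)) = \phi(\kappa-\eta, 1-\tau)$, where $\eta \coloneqq (1-\beta/\alpha)\kappa + \tau/\alpha$. Because $\phi$ is positively $1$-homogeneous, $\mathrm{Hess}(\phi)$ has rank exactly one everywhere; hence $-\mathrm{Hess}(\Phi) = vv^\top + ww^\top$ is a sum of two rank-one positive semi-definite matrices for explicit vectors $v,w$, where $\Phi(\kappa,\tau) \coloneqq \tau\,\entropy(\gamma_{\alpha,\beta}) + (1-\tau)\,\entropy(\delta_{\alpha,\beta})$. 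Using the $2\times 2$ identity $\det(vv^\top + ww^\top) = (v_1 w_2 - v_2 w_1)^2$ and accounting for the rank-one correction from $\entropy(\kappa)$, a direct computation rearranges $\det \mathrm{Hess}(g) \leq 0$ into the polynomial inequality
\[
(\alpha-\beta)^2\,\kappa^{2}\cdot CD(C+D) \;+\; (1-\beta\kappa)^{2}\cdot AB(A+B) \;\geq\; (\alpha-\beta)^2\, C^{2}\, \kappa(1-\kappa), \qquad (\star\star)
\]
where $(A,B,C,D) \coloneqq (\eta, \tau-\eta, \kappa-\eta, 1-\tau-\kappa+\eta)$ are the four cell probabilities of the implicit $2\times 2$ contingency table (so $A+C = \kappa$, $B+D = 1-\kappa$, $A+B = \tau$, $C+D = 1-\tau$), while the definition of $\eta$ translates into the structural relation $B = (\beta-1)A - (\alpha-\beta)C$.

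Finally, I would verify $(\star\star)$. Substituting $A = \kappa-C$, $B = (\beta-1)\kappa - (\alpha-1)C$, and $D = (1-\beta\kappa) + (\alpha-1)C$, the inequality becomes a polynomial inequality in $(\kappa, C)$ with parameters $\alpha,\beta$ valid on the feasible region $\{A,B,C,D \geq 0\}$, which can be checked by direct expansion. When $\alpha = \beta$, the coefficient $(\alpha-\beta)^2$ vanishes and $(\star\star)$ becomes trivial, recovering the observation implicit in \cite{EsmerKMNS22} that $g^*(\kappa) - \entropy(\kappa)$ is in fact affine in this case. The main anticipated obstacle is the algebraic verification of $(\star\star)$ in the general case $\alpha \neq \beta$; given the low degree of the resulting polynomial and the two structural identities $A+B+C+D = 1$ and $B = (\beta-1)A - (\alpha-\beta)C$, I expect a sum-of-squares-style argument, possibly split by the sign of $\alpha - \beta$, to suffice.
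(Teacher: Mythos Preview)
Your reduction via the envelope theorem to the sign of $\det\mathrm{Hess}(g)$ at the minimizer, and the subsequent computation using the rank-one Hessians of the perspective function $\phi(x,a)=a\,\entropy(x/a)$, are both sound and match the paper's strategy in \Cref{sec:gstar_is_concave}. The cell-probability parametrization $(A,B,C,D)$ is a nice alternative to the paper's $(\gamma,\delta,\Gamma,\Delta)$ bookkeeping; both lead to a factored expression for the determinant (the paper's version is \Cref{lem:hessian_formula}).

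There is, however, a genuine gap in the case $\alpha>\beta$. Your plan is to verify $(\star\star)$ as a polynomial inequality valid on the whole feasible region $\{A,B,C,D\geq 0\}$. Since the Hessian of $g$ does not depend on $c$ (the $c$-term is linear in $(\kappa,\tau)$), this would amount to proving $\det\mathrm{Hess}(g)\leq 0$ for \emph{all} feasible $(\kappa,\tau)$, not just at the minimizer. The paper explicitly notes (just before \Cref{lem:critical_point_condition}) that this fails when $\alpha>\beta>1$: the determinant can be positive at some feasible points. What rescues the argument is the first-order condition $g_\tau(\kappa,\tau^*(\kappa))=0$, which by \Cref{lem:deriv_g_by_tau} yields the transcendental constraint $\D{1/\alpha}{\gamma}\leq \D{1/\alpha}{\delta}$; the paper then exploits this via a nontrivial KL-divergence comparison (\Cref{lem:hb_bound_alpha_>_beta}, in particular Claims~\ref{claim:beta_eliminator} and~\ref{claim:KL_prop}). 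Your purely polynomial route cannot encode this constraint, so the anticipated sum-of-squares verification of $(\star\star)$ will not go through for $\alpha>\beta$; you need to bring the first-order condition into the inequality.

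For $\beta>\alpha$ your approach is viable (the paper's \Cref{lem:sign_A_delta_B_negative_1} confirms the determinant is negative everywhere in that regime), and your observation that $\alpha=\beta$ collapses to an affine remainder is correct and matches \Cref{lem:gst_alpha_eq_beta_formula}.
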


To prove \Cref{lem:concave} we show that if $(\kappa,\tau)$ is a critical point of $g_{\alpha,\beta,c}$ then the determinant of the Hessian matrix  of $g_{\alpha,\beta,c}$ at $(\kappa,\tau)$ is negative. Once this argument is established, the lemma follows quite easily. Since the minimum of concave functions is also a concave function, \Cref{lem:concave} immediately implies the following.

\begin{corollary}
	\label{cor:min_concave}
	For every specification list $\CL$ and $\beta \geq 1$, the function $h(\kappa)=\min_{ (\alpha,c)\in \CL} g^*_{\alpha,\beta,c}(\kappa)$ is concave on the interval $\left[0,\frac{1}{\beta}\right]$.
\end{corollary}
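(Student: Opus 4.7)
The plan is to combine Lemma~\ref{lem:concave} with the elementary fact that the pointwise minimum of a finite collection of concave functions on a common interval is again concave.

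First, I would note that by definition a specification list is a non-empty \emph{finite} set, so the minimum $h(\kappa) = \min_{(\alpha,c) \in \CL} g^*_{\alpha,\beta,c}(\kappa)$ is taken over finitely many summand functions. For $\beta > 1$, Lemma~\ref{lem:concave} directly gives that each of these summand functions $\kappa \mapsto g^*_{\alpha,\beta,c}(\kappa)$ is concave on the interval $\left[0, \tfrac{1}{\beta}\right]$.

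Second, I would invoke (or reproduce in one line) the general fact: given real-valued concave functions $f_1, \ldots, f_s$ on an interval $I$, the pointwise minimum $f = \min_i f_i$ is concave. The verification is immediate: for $\kappa_1, \kappa_2 \in I$ and $\lambda \in [0,1]$,
\begin{equation*}
f\bigl(\lambda \kappa_1 + (1-\lambda) \kappa_2\bigr) = \min_i f_i\bigl(\lambda \kappa_1 + (1-\lambda) \kappa_2\bigr) \geq \min_i \bigl(\lambda f_i(\kappa_1) + (1-\lambda) f_i(\kappa_2)\bigr) \geq \lambda f(\kappa_1) + (1-\lambda) f(\kappa_2),
\end{equation*}
where the first inequality uses concavity of each $f_i$ and the second uses the standard superadditivity of the minimum, $\min_i(a_i + b_i) \geq \min_i a_i + \min_i b_i$, applied with $a_i = \lambda f_i(\kappa_1)$ and $b_i = (1-\lambda) f_i(\kappa_2)$. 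Specializing to $f_i = g^*_{\alpha_i, \beta, c_i}$ over the finitely many pairs in $\CL$ yields the concavity of $h$ on $\left[0, \tfrac{1}{\beta}\right]$, as required.

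The one loose end is the boundary case $\beta = 1$, since Lemma~\ref{lem:concave} is stated under the hypothesis $\beta > 1$. I expect this to be the only (and very minor) obstacle: either an inspection of the proof of Lemma~\ref{lem:concave} reveals that the argument extends verbatim to $\beta = 1$ on $[0,1]$, or the concavity of $g^*_{\alpha,1,c}$ can be verified directly from its definition together with the convexity-in-$\tau$ structure of Lemma~\ref{lem:g_convex_by_tau}. Either way, once the concavity of the individual $g^*_{\alpha,\beta,c}$ is in hand, the corollary is a two-line consequence of the minimum-of-concave principle above.
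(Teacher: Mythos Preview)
Your proposal is correct and essentially identical to the paper's approach: the paper simply writes ``Since the minimum of concave functions is also a concave function, \Cref{lem:concave} immediately implies the following'' and states the corollary. Your observation about the $\beta = 1$ boundary case is a fair point that the paper also glosses over; in practice the interval degenerates and the relevant uses of the corollary concern $\beta > 1$.
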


By \Cref{cor:min_concave} it follows that $\amlsbound(\CL,\beta)$ \eqref{eq:amls_via_gstar} is  the maximum of a concave function on a closed interval. Furthermore, by \Cref{lem:compute_gstar} it holds that the function  being maximized can be computed, up to an additive error of $\eps$, in polynomial time. Thus, using convex optimization once more (e.g.,  \cite[Theorem 4.3.13]{GrotschelLS88}), we get the following.

\begin{corollary}
	\label{cor:compute_amls}
	There is an algorithm which, given a specification list $\CL$, $\beta\geq 1$ and $\eps>0$, computes $\amlsbound(\CL,\beta)$ up to an additive error of $\eps$ in time polynomial in the encoding length of $\CL$, $\beta$ and $\eps$.
\end{corollary}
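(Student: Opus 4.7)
The plan is to combine \eqref{eq:amls_via_gstar} with \Cref{cor:min_concave,lem:compute_gstar} and feed the resulting concave objective into a standard ellipsoid-based convex optimization routine. By \eqref{eq:amls_via_gstar}, writing $h(\kappa) \coloneqq \min_{(\alpha,c)\in\CL} g^*_{\alpha,\beta,c}(\kappa)$, we have $\amlsbound(\CL,\beta) = \exp\bigl(\max_{\kappa\in[0,1/\beta]} h(\kappa)\bigr)$, so it suffices to approximate $h^* \coloneqq \max_{\kappa\in[0,1/\beta]} h(\kappa)$ to additive precision $\eps/C$ for a suitable absolute constant $C$, and then exponentiate.

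First I would build an approximate evaluation oracle for $h$. By \Cref{lem:compute_gstar}, for each fixed $(\alpha,c)\in\CL$ and each $\kappa\in[0,1/\beta]$, the value $g^*_{\alpha,\beta,c}(\kappa)$ can be computed up to any prescribed additive precision $\eps'$ in polynomial time in the encoding lengths of $\alpha$, $\beta$, $c$, $\kappa$, and $\eps'$. Since $\CL$ is a finite set given in the input, running this routine for every $(\alpha,c)\in\CL$ and taking the pointwise minimum produces a polynomial-time oracle that evaluates $h(\kappa)$ to any prescribed additive precision $\eps'$.

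Second, by \Cref{cor:min_concave}, the function $-h$ is convex on the compact interval $[0,1/\beta]$. I would apply a weakly polynomial convex minimization routine such as \cite[Theorem 4.3.13]{GrotschelLS88} to $-h$ (using the oracle from the previous step), obtaining $\tilde\kappa$ and $\tilde h^* = h(\tilde\kappa)$ with $|\tilde h^* - h^*| \leq \eps''$, in time polynomial in the encoding length of the input and of $\eps''$. To invoke this routine one must provide an a priori polynomial bound on $\|h\|_\infty$ over $[0,1/\beta]$; this is immediate from the definition of $g_{\alpha,\beta,c}$ in \eqref{eq:g_def}, since entropy is bounded by $\ln 2$ and the only other term is $\tfrac{\beta\kappa-\tau}{\alpha}\ln c$, yielding $|h(\kappa)| \leq \beta\ln c_{\max} + 3$ where $c_{\max} = \max_{(\alpha,c)\in\CL} c$, and $\ln c_{\max}$ is linear in the encoding length of $\CL$.

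Finally, I would return $\exp(\tilde h^*)$. To control the error, observe that $\amlsbound(\CL,\beta) \leq \brute(\beta) \leq 2$ by the brute-force benchmark, so $h^* \leq \ln 2 \leq 1$, and hence $\exp$ is $e$-Lipschitz on $(-\infty, 1]$. Choosing $\eps'' \coloneqq \eps'\coloneqq \eps/(3e)$ therefore yields $|\exp(\tilde h^*) - \amlsbound(\CL,\beta)| \leq \eps$ and the overall runtime is polynomial in the encoding lengths of $\CL$, $\beta$, and $\eps$. The main obstacle is the bookkeeping of the two nested layers of precision: the inner precision $\eps'$ at which $h$ is evaluated must be driven down fast enough to be absorbed by the outer precision $\eps''$ of the ellipsoid routine without blowing up the runtime, which is precisely what the ``approximate oracle'' version of \cite[Theorem 4.3.13]{GrotschelLS88} is designed to handle; a secondary subtlety is verifying that the univariate domain $[0,1/\beta]$, whose encoding depends on $\beta$, fits into the hypotheses of that theorem, which is straightforward after rescaling.
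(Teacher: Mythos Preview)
Your proposal is correct and follows essentially the same approach as the paper: reduce to maximizing the concave function $h(\kappa)=\min_{(\alpha,c)\in\CL} g^*_{\alpha,\beta,c}(\kappa)$ over $[0,1/\beta]$ via \Cref{cor:min_concave}, evaluate $h$ through \Cref{lem:compute_gstar}, and invoke \cite[Theorem~4.3.13]{GrotschelLS88}. You additionally spell out the exponentiation step and the precision bookkeeping (Lipschitz bound on $\exp$, a priori bound on $\|h\|_\infty$), which the paper leaves implicit; note that the bound $h^*\le\ln 2$ can also be obtained directly from $g^*_{\alpha,\beta,c}(\kappa)\le g_{\alpha,\beta,c}(\kappa,\beta\kappa)=\xi_\beta(\kappa)\le\entropy(\kappa)\le\ln 2$, avoiding any forward reference to the brute-force comparison.
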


\Cref{thm:compute} immediately follows from \Cref{lem:f_to_amls} and \Cref{cor:compute_amls}.
In particular, to prove \Cref{thm:compute} we are left to provide proofs for \Cref{lem:f_to_amls,lem:g_convex_by_tau,lem:concave}.
The proof of \Cref{lem:f_to_amls} is given in \Cref{sec:f_to_amls}, and the proofs of \Cref{lem:g_convex_by_tau,lem:concave} are given in \Cref{sec:math_functions}.
We note that  our computations of specific values of $\bestbound(\CL,\beta)$ do not implement the theoretical algorithm from \cite{GrotschelLS88}.
Instead, we use a combination of Golden Section Search \cite[Section 10.2]{PressTVF07} and a simple binary search which finds the root of the derivative.
We use \texttt{mpmath} \cite{mpmath} for high-precision arithmetics.

\subsection{Comparisons}
\label{sec:comparison_overview}

Since $\bestbound(\CL,\beta)= \amlsbound(\CL,\beta)$, 
we can use the definition of $\amlsbound(\CL,\beta)$ as the optimum of an optimization problem \eqref{eq:amls_def} to compare its value to Benchmarks \hyperref[benchmark-brute]{1} and \hyperref[benchmark-amls-equal]{2}.
We first compare $\bestbound$ to $\brute$.

\begin{theorem}
	\label{thm:amls_smaller_brute}
	For every $\beta>1$ and specification list $\CL$ it holds that $\bestbound(\CL,\beta)<\brute(\beta)$.
	Moreover, $\lim_{c \to \infty} \bestbound(\alpha,c,\beta) = \brute(\beta)$ for every $\alpha\geq 1$ and $\beta > 1$. 
\end{theorem}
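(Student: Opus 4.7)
The plan is to split the theorem into the strict inequality and the limit, and to reduce both to a singleton specification list. Since $\amlsbound(\CL,\beta)$ is defined in \eqref{eq:amls_def} as $\max_\kappa \min_{(\alpha,c)\in\CL}(\cdots)$, enlarging $\CL$ can only decrease $\amlsbound$; together with \Cref{cor:amls_is_the_best} this yields $\bestbound(\CL,\beta)\le \amlsbound(\alpha_0,c_0,\beta)$ for any $(\alpha_0,c_0)\in\CL$, so it suffices to handle the case $\CL=\{(\alpha,c)\}$.

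The key computation is to locate the brute-force benchmark inside the optimization problem. Substituting $\tau=\beta\kappa$ into \eqref{eq:delta_def}--\eqref{eq:g_def} gives $\gamma_{\alpha,\beta}(\kappa,\beta\kappa)=\tfrac{1}{\beta}$ and $\delta_{\alpha,\beta}(\kappa,\beta\kappa)=0$, hence
\begin{equation*}
 g_{\alpha,\beta,c}(\kappa,\beta\kappa) \;=\; \HH(\kappa)-\beta\kappa\,\HH(1/\beta),
\end{equation*}
which is independent of $\alpha$ and $c$. An elementary one-variable optimization (the critical-point equation $\ln\tfrac{1-\kappa}{\kappa}=\beta\HH(1/\beta)$ together with the identity $\ln(1+e^A)-A=\ln(1+e^{-A})$ with $A=\beta\HH(1/\beta)$) shows $\max_{\kappa\in[0,1/\beta]}(\HH(\kappa)-\beta\kappa\HH(1/\beta))=\ln\brute(\beta)$, attained uniquely at the interior point $\kappa^\star=1/\brute(\beta)\in(0,1/\beta)$; moreover, the value at both endpoints $\kappa\in\{0,1/\beta\}$ is $0$.

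For the strict inequality, \Cref{lem:g_convex_by_tau} yields $g^*_{\alpha,\beta,c}(\kappa)<g_{\alpha,\beta,c}(\kappa,\beta\kappa)$ for every $\kappa\in(0,1/\beta)$, while at the endpoints direct substitution gives $g^*_{\alpha,\beta,c}(0)\le g_{\alpha,\beta,c}(0,0)=0$ and $g^*_{\alpha,\beta,c}(1/\beta)\le g_{\alpha,\beta,c}(1/\beta,1)=0$ (one checks that $\tau=0$ and $\tau=1$ are admissible at the respective endpoints for all three cases of $M_{\alpha,\beta}$). Since $g_{\alpha,\beta,c}(\kappa,\beta\kappa)$ attains its maximum $\ln\brute(\beta)$ only at the interior $\kappa^\star$ and decays to $0$ at the endpoints, combining these bounds with continuity of $g^*_{\alpha,\beta,c}$ on the interior (from concavity, \Cref{lem:concave}) forces any maximizer of $g^*_{\alpha,\beta,c}$ to lie in a compact sub-interval of $(0,1/\beta)$ where the strict inequality of \Cref{lem:g_convex_by_tau} applies. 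Hence $\max_\kappa g^*_{\alpha,\beta,c}(\kappa)<\ln\brute(\beta)$, i.e., $\amlsbound(\alpha,c,\beta)<\brute(\beta)$.

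For the limit the upper bound just proved is uniform in $c$. For the matching lower bound, fix $\kappa=\kappa^\star$ and let $\tau_c$ minimize $g_{\alpha,\beta,c}(\kappa^\star,\cdot)$ over its admissible interval. Since $g_{\alpha,\beta,c}(\kappa^\star,\tau_c)\le g_{\alpha,\beta,c}(\kappa^\star,\beta\kappa^\star)=\ln\brute(\beta)$ and the entropy contributions in \eqref{eq:g_def} are uniformly bounded by $\ln 2$, the term $\tfrac{\beta\kappa^\star-\tau_c}{\alpha}\ln c$ must remain $O(1)$, forcing $\tau_c\to\beta\kappa^\star$ as $c\to\infty$. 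Continuity of $\gamma_{\alpha,\beta}$, $\delta_{\alpha,\beta}$, and $\HH$ then yields $g^*_{\alpha,\beta,c}(\kappa^\star)\to\ln\brute(\beta)$, so $\amlsbound(\alpha,c,\beta)\to\brute(\beta)$. The main delicate point is ensuring the strict inequality survives at the boundary $\kappa\in\{0,1/\beta\}$, where \Cref{lem:g_convex_by_tau} does not apply; the explicit endpoint evaluations bridge this gap and also explain why the quantitative gap between $\amlsbound$ and $\brute$ must vanish as $c\to\infty$.
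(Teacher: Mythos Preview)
Your proposal is correct and follows essentially the same route as the paper's proof in \Cref{sec:better_than_brute}: reduce to a singleton list via monotonicity of \eqref{eq:amls_def}, identify $g_{\alpha,\beta,c}(\kappa,\beta\kappa)=\HH(\kappa)-\beta\kappa\,\HH(1/\beta)$ (the paper calls this $\xi_\beta$) with maximum $\ln\brute(\beta)$, then use \Cref{lem:g_convex_by_tau} on the interior and the explicit endpoint values for the strict inequality, and for the limit force $\tau_c\to\beta\kappa^\star$ and pass to the limit by continuity.

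Two small presentational remarks. For the strict inequality, your ``compact sub-interval'' phrasing is an unnecessary detour: the paper (and your own ingredients) give the cleaner case split directly---if the maximizer of $g^*$ is an endpoint then $g^*\le 0<\ln\brute(\beta)$, and if it is interior then $g^*(\kappa_0)<g(\kappa_0,\beta\kappa_0)\le\ln\brute(\beta)$; no compactness-of-sub-interval argument is needed once you know $g^*$ attains its maximum on $[0,1/\beta]$. For the limit, your argument is in fact slightly slicker than the paper's: from boundedness of $g^*(\kappa^\star)$ and of the entropy terms you get $\beta\kappa^\star-\tau_c=O(1/\ln c)\to 0$ directly, whereas the paper passes through a Bolzano--Weierstrass subsequence before reaching the same conclusion.
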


The proof of \Cref{thm:amls_smaller_brute} is given in \Cref{sec:better_than_brute}.

In \cite{EsmerKMNS22} the authors showed that $\amls_{\{(\beta,c)\},\beta}$ is a $\beta$-approximation algorithm for $\LSUB[\{(\beta,c)\}]$ of cost $\OO^*(\left(\esaamlsbound(\beta,c)\right)^n)$,
where $\esaamlsbound(\beta,c)$ is the unique value $d\in \left(1, 1+\frac{c-1}{\beta} \right)$ which satisfies $\D{\frac{1}{\beta}}{\frac{d-1}{c-1}}=\frac{\ln c}{\beta}$, for every $\beta,c>1$.\footnote{$\D{a}{b}  = a \ln \frac{a}{b} + (1-a)\ln \frac{1-a}{1-b}$ is the Kullback-Leibler divergence between two Bernoulli distributions with parameters $a$ and $b$.}
The next lemma, which we prove in Section \Cref{sec:math_functions}, also implies that the analysis in \cite{EsmerKMNS22} is tight.

\begin{restatable}{lemma}{coincidewithesa}
	\label{lem:coincide_with_esa}
	For every $\beta,c>1$ it holds that $\bestbound(\beta,c,\beta) = \esaamlsbound(\beta,c)$.
\end{restatable}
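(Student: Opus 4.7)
By \Cref{cor:amls_is_the_best} we have $\bestbound(\beta,c,\beta) = \amlsbound(\beta,c,\beta)$, so it suffices to evaluate the max-min optimization \eqref{eq:amls_def} defining $\amlsbound(\beta,c,\beta)$ and match its value to the implicit equation for $\esaamlsbound(\beta,c)$. Write $d = \esaamlsbound(\beta,c)$, so by definition $\D{\frac{1}{\beta}}{\frac{d-1}{c-1}} = \frac{\ln c}{\beta}$. Substituting $\alpha=\beta$ into \eqref{eq:gamma_def}--\eqref{eq:Mdef} simplifies things greatly: $\gamma_{\beta,\beta}\equiv \frac{1}{\beta}$ is constant, $M_{\beta,\beta}(\kappa)=0$, and $\delta := \delta_{\beta,\beta}(\kappa,\tau) = \frac{\kappa-\tau/\beta}{1-\tau}$, so
\[g_{\beta,\beta,c}(\kappa,\tau) = \tfrac{\beta\kappa-\tau}{\beta}\ln c - \tau\,\entropy(\tfrac{1}{\beta}) - (1-\tau)\,\entropy(\delta) + \entropy(\kappa).\]
The identity $(1-\tau)(\delta-\tfrac{1}{\beta}) = \kappa-\tfrac{1}{\beta}$, or equivalently $\tau(\delta-\tfrac{1}{\beta}) = \delta-\kappa$, follows directly from the definition of $\delta$ and will drive every subsequent cancellation.

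\paragraph*{Inner minimization over $\tau$.}
By \Cref{lem:g_convex_by_tau}, $g$ is convex in $\tau$ on $[0,\beta\kappa]$. Computing $\frac{\partial g}{\partial\tau}$ using $\frac{\partial\delta}{\partial\tau} = \frac{\delta-1/\beta}{1-\tau}$ and $\entropy'(\delta) = \ln\frac{1-\delta}{\delta}$, the $(1-\tau)$ factors cancel and the first-order condition collapses to
\[\entropy(\delta) - \entropy(\tfrac{1}{\beta}) + (\delta-\tfrac{1}{\beta})\ln\tfrac{\delta}{1-\delta} = \tfrac{\ln c}{\beta},\]
which is exactly $\D{\frac{1}{\beta}}{\delta} = \frac{\ln c}{\beta}$. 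Thus the inner optimum fixes the constant $\delta^* = \frac{d-1}{c-1}$ \emph{independently} of $\kappa$, and the optimal $\tau^*(\kappa)$ is the unique $\tau$ solving $\delta_{\beta,\beta}(\kappa,\tau)=\delta^*$, namely $\tau^*(\kappa) = \frac{\beta(\kappa-\delta^*)}{1-\beta\delta^*}$.

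\paragraph*{Outer maximization and evaluation.}
By the envelope theorem, $\frac{d g^*}{d\kappa}(\kappa) = \frac{\partial g}{\partial\kappa}\big|_{\tau=\tau^*(\kappa)}$. Using $\frac{\partial\delta}{\partial\kappa} = \frac{1}{1-\tau}$, a short calculation gives $\frac{\partial g}{\partial\kappa} = \ln c - \ln\frac{1-\delta}{\delta} + \ln\frac{1-\kappa}{\kappa}$, so the outer critical point obeys $\frac{\kappa^*}{1-\kappa^*} = \frac{c\delta^*}{1-\delta^*}$, giving $\kappa^* = \frac{c(d-1)}{d(c-1)} \in (0, 1/\beta)$. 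To evaluate $g^*(\kappa^*)$, I would substitute the inner first-order condition in the form $\ln c = \beta\bigl[\entropy(\delta^*) - \entropy(\tfrac{1}{\beta}) + (\delta^*-\tfrac{1}{\beta})\ln\tfrac{\delta^*}{1-\delta^*}\bigr]$ and then apply $\tau^*(\delta^*-\tfrac{1}{\beta}) = \delta^*-\kappa^*$: this cancels $\entropy(\tfrac{1}{\beta})$ and one copy of $\entropy(\delta^*)$, leaving
\[g^*(\kappa^*) = \kappa^*\ln(c\delta^*) + (1-\kappa^*)\ln(1-\delta^*) + \entropy(\kappa^*).\]
The outer critical-point relation then converts this into $-\ln(1-\kappa^*) + \ln(1-\delta^*) = \ln\frac{1-\delta^*}{1-\kappa^*}$, and substituting $1-\delta^*=\frac{c-d}{c-1}$ and $1-\kappa^*=\frac{c-d}{d(c-1)}$ yields $g^*(\kappa^*) = \ln d$. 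By \Cref{lem:concave} this interior critical point is the global maximum, so $\amlsbound(\beta,c,\beta) = d = \esaamlsbound(\beta,c)$.

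\paragraph*{Main obstacle.}
The technical heart is the entropic bookkeeping at the optimum: the $\entropy$ terms in $g$ are neither sums nor products in the natural variables, and the cancellations are not a priori obvious. The two crucial identities are $\tau^*(\delta^*-\tfrac{1}{\beta}) = \delta^*-\kappa^*$ and the rewriting $\D{\frac{1}{\beta}}{\delta^*} = -\entropy(\tfrac{1}{\beta}) + \entropy(\delta^*) + (\delta^*-\tfrac{1}{\beta})\ln\tfrac{\delta^*}{1-\delta^*}$; together they let one eliminate the $\entropy$ of the two auxiliary variables simultaneously, after which the outer first-order condition closes the computation. A minor subtlety is verifying $\tau^*(\kappa^*) \in (0,\beta\kappa^*)$, which follows from the bounds $d\in\left(1,\,1+\frac{c-1}{\beta}\right)$ built into the definition of $\esaamlsbound(\beta,c)$.
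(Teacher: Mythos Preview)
Your proposal is correct and takes essentially the same approach as the paper: both exploit that for $\alpha=\beta$ one has $\gamma\equiv\tfrac{1}{\beta}$ and $M_{\beta,\beta}\equiv 0$, so the inner first-order condition reduces to $\D{\tfrac{1}{\beta}}{\delta}=\tfrac{\ln c}{\beta}$, fixing $\delta^*$ independently of $\kappa$; both then locate the outer critical point $\kappa^*=\tfrac{c\delta^*}{1+(c-1)\delta^*}$ and evaluate $g^*(\kappa^*)=\ln\bigl(1+(c-1)\delta^*\bigr)$. The only organizational difference is that the paper first packages the inner minimization into a standalone formula $g^*_{\beta,\beta,c}(\kappa)=\kappa\ln c-\D{\kappa}{\delta^*}$ for $\kappa\ge\delta^*$ (their \Cref{lem:gst_alpha_eq_beta_formula}, which also records the boundary regime $\kappa<\delta^*$ where $\tau^*=0$) and then maximizes that one-variable function, whereas you work directly at the joint critical point via the envelope theorem and invoke \Cref{lem:concave} for global optimality; your parameterization via $d=\esaamlsbound(\beta,c)$ from the outset is the reverse of the paper's, which computes $\amlsbound$ and then checks it satisfies the defining equation.
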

Let $\beta>\alpha\geq 1$ and $c\geq 1$. Since an $\alpha$-extension oracle is also a $\beta$-extension oracle, 
the result of \cite{EsmerKMNS22} can be used to obtain a $\beta$-approximation algorithm for $\LSUB[\{(\alpha,c)\}]$ by executing the $\beta$-approximation algorithm for $\LSUB[\{(\beta,c)\}]$ 
whose running time is $\esaamlsbound(\beta,c) =\bestbound(\beta,c,\beta)$. 
This approach, which views an $\alpha$-extension oracle as a special case of $\beta$-extension oracle intuitively seems suboptimal.
The next lemma, proven in \Cref{sec:monotonicity_amls}, confirms this intuition.

\begin{lemma}
	\label{lem:better_than_esa}
	For every $\beta> \alpha \geq 1$ and every $c > 1$ it holds that $\bestbound(\alpha,c,\beta) < \esaamlsbound(\beta,c)$.
\end{lemma}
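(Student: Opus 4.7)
The plan is to prove the pointwise strict inequality
\[
g^*_{\alpha,\beta,c}(\kappa) < g^*_{\beta,\beta,c}(\kappa) \qquad \text{for every } 1 \leq \alpha < \beta,\ c > 1,\ \kappa \in (0, 1/\beta],
\]
and then pass through the outer maximization in \eqref{eq:amls_via_gstar}. Since $g^*_{\alpha,\beta,c}(0)=0$ while $\max_\kappa g^*_{\alpha,\beta,c}(\kappa)>0$ for $c>1$, the argmax $\kappa^*_\alpha$ lies in $(0, 1/\beta]$, so the pointwise inequality at $\kappa^*_\alpha$ yields $\amlsbound(\alpha,c,\beta) = \exp\bigl(g^*_{\alpha,\beta,c}(\kappa^*_\alpha)\bigr) < \exp\bigl(g^*_{\beta,\beta,c}(\kappa^*_\alpha)\bigr) \leq \amlsbound(\beta,c,\beta)$. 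Combined with $\bestbound=\amlsbound$ (\Cref{cor:amls_is_the_best}) and $\bestbound(\beta,c,\beta)=\esaamlsbound(\beta,c)$ (\Cref{lem:coincide_with_esa}), this gives the lemma.

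To prove the pointwise inequality I would show that $\alpha \mapsto g^*_{\alpha,\beta,c}(\kappa)$ is strictly increasing on $[1,\beta]$ via the envelope theorem. First I verify that the minimizer $\tau^*(\alpha)$ of $g_{\alpha,\beta,c}(\kappa,\cdot)$ on $[M_{\alpha,\beta}(\kappa), \beta\kappa]$ is an interior point whenever $\alpha<\beta$ and $c>1$: \Cref{lem:g_convex_by_tau} excludes $\tau=\beta\kappa$, while a direct evaluation of \eqref{eq:gamma_def}--\eqref{eq:delta_def} gives the pleasant boundary identity $\gamma_{\alpha,\beta}(\kappa, M_{\alpha,\beta}(\kappa)) = \delta_{\alpha,\beta}(\kappa, M_{\alpha,\beta}(\kappa)) = \kappa$, and substituting into the simplified formula for $\partial g/\partial\tau$ (obtained using the calculus identity $\entropy(x)-x\entropy'(x)=-\ln(1-x)$) yields $-\ln c/\alpha<0$ at the left endpoint, ruling it out. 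The envelope theorem then gives
\[
\frac{d}{d\alpha} g^*_{\alpha,\beta,c}(\kappa) = \frac{\partial g_{\alpha,\beta,c}}{\partial \alpha}(\kappa,\tau^*) = \frac{\beta\kappa-\tau^*}{\alpha^2}\,\ln\frac{(1-\delta^*)\gamma^*}{c\,\delta^*(1-\gamma^*)},
\]
where $\gamma^*=\gamma_{\alpha,\beta}(\kappa,\tau^*)$ and $\delta^*=\delta_{\alpha,\beta}(\kappa,\tau^*)$. The first-order optimality condition $\partial g/\partial\tau=0$, which rearranges to $\ln c = \alpha\ln\frac{1-\gamma^*}{1-\delta^*}+\ln\frac{(1-\delta^*)\gamma^*}{\delta^*(1-\gamma^*)}$, then collapses this expression to $-\frac{\beta\kappa-\tau^*}{\alpha}\ln\frac{1-\gamma^*}{1-\delta^*}$.

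The final step is to prove strict positivity of this derivative, which amounts to $\gamma^*>\delta^*$. From \eqref{eq:gamma_def}--\eqref{eq:delta_def}, for $\alpha<\beta$ and $\kappa<1/\beta$, the map $\tau\mapsto\gamma_{\alpha,\beta}(\kappa,\tau)$ is strictly increasing while $\tau\mapsto\delta_{\alpha,\beta}(\kappa,\tau)$ is strictly decreasing on $[M_{\alpha,\beta}(\kappa), \beta\kappa]$; combined with the boundary identity $\gamma=\delta=\kappa$ at $\tau=M_{\alpha,\beta}(\kappa)$ this forces $\gamma^*>\kappa>\delta^*$ at any interior $\tau^*$, whence $\ln\frac{1-\gamma^*}{1-\delta^*}<0$. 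Since $\beta\kappa-\tau^*>0$ by interiority, $dg^*/d\alpha>0$ on $[1,\beta)$, and continuity of $g^*_{\alpha,\beta,c}(\kappa)$ in $\alpha$ at $\beta$ (immediate from $M_{\alpha,\beta}(\kappa)\to 0$ as $\alpha\to\beta^-$ and joint continuity of $g_{\alpha,\beta,c}$) upgrades strict monotonicity on $[1,\beta)$ to the strict inequality for all $\alpha<\beta$. The main technical hurdle will be the algebraic collapse of $\partial g/\partial\alpha$ under the first-order condition; the qualitative insight driving the sign is that $\gamma$ and $\delta$ agree at the common value $\kappa$ on the left boundary but split apart in opposite directions as $\tau$ grows inside the domain.
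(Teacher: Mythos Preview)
Your argument is correct and takes a genuinely different route from the paper. Both proofs share the outer skeleton: establish the pointwise strict inequality $g^*_{\alpha,\beta,c}(\kappa) < g^*_{\beta,\beta,c}(\kappa)$ for $\kappa\in(0,1/\beta)$, then use that the maximizer $\kappa^*_\alpha$ lies in this open interval (the paper verifies $\max_\kappa g^*_{\alpha,\beta,c}(\kappa)>0$ explicitly via Jensen at $\tilde\kappa=\tfrac{1}{2\beta}$, a point you only assert). The difference is in how the pointwise inequality is obtained. You differentiate the one-variable value function in $\alpha$ via the envelope theorem, collapse $\partial g/\partial\alpha$ to $-\tfrac{\beta\kappa-\tau^*}{\alpha}\ln\tfrac{1-\gamma^*}{1-\delta^*}$ using the first-order condition, and read off the sign from $\gamma^*>\delta^*$; you then need a continuity argument at $\alpha=\beta$ to close the strict inequality. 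The paper instead introduces a two-variable reformulation $g^*_{\alpha,\beta,c}(\kappa)=\min_{(\tau,y)\in D_{\alpha,\beta}(\kappa)}\tilde g^\kappa_c(\tau,y)$ with an objective independent of $\alpha$, shows the feasible region $D_{\alpha,\beta}(\kappa)$ shrinks as $\alpha$ increases, and obtains strictness by proving the minimizer for $\alpha'$ is never the minimizer for $\alpha<\alpha'$ (it has the wrong $y$-coordinate). Your approach is shorter and avoids the auxiliary lemma; the paper's approach yields a structural explanation (value of an $\alpha$-free objective over a nested family of domains) and handles $\alpha'=\beta$ directly without a limit. Two small cleanups: the pointwise strict inequality fails at $\kappa=1/\beta$ (both sides vanish), so your interval should be open; and the envelope step tacitly uses that $\tau^*$ varies smoothly in $\alpha$, which follows from the implicit function theorem exactly as in the paper's \Cref{lem:tstar_differentiable}.
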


\section{Applications}
\label{sec:application}
In this section, we demonstrate how our results can be used to obtain exponential approximation algorithms for a wide range of problems.
All the problems considered in this section are defined in Appendix \ref{sec:problem_definitions}.
Moreover, extensive data sets providing the running times of the obtained algorithms are provided in Appendix \ref{sec:running_times}.

\subsection{Combining Exact FPT and Polynomial-Time Approximation Algorithms}
\label{sec:application_main}

The most common application of our results is to problems that admit a single-exponential FPT algorithm and/or a constant-factor approximation algorithm.
Indeed, both types of algorithms have been intensively studied in the literature (see, e.g., \cite{CyganFKLMPPS15,Vazirani01}), and there is an abundance of problems admitting single-exponential fpt algorithms and/or constant-factor approximation algorithms which we can use to obtain exponential approximation algorithms.
Actually, from the view point of applications, this is a key advantage over the previous work \cite{EsmerKMNS22} that requires a parameterized $\beta$-approximation algorithm, since such algorithms are still somewhat rare.

A large class of problems, many of which fall into this category, are deletion problems to some graph property $\Pi$.

\medskip
\defproblem{{\sc $\Pi$ Vertex Deletion}}{An (undirected or directed) graph $G$.}{Find a minimum set $S$ of vertices of $G$ such that $G-S \in \Pi$.}
\medskip

This type problem can be translated into our framework by setting $U \coloneqq V(G)$ to be the set of vertices of $G$, and the task is to find a minimum set in the set system $\F \coloneqq \{S \subseteq U \mid G - S \in \Pi\}$.
If $\Pi$ is a hereditary graph property (i.e., it is closed under subgraphs), the set system $\F$ is monotone which allows us to apply the algorithmic tools described in \Cref{sec:computational_model}.
As two illustrative examples, let us consider the \textsc{Feedback Vertex Set (FVS)} problem (over undirected graphs), which corresponds to $\Pi$ being the class of forests, and the \textsc{Tourament Feedback Vertex Set (Tournament FVS)} where the input is a tournament graph, and $\Pi$ contains all acyclic tournaments.

\begin{table}
 \small
 \centering
 {\sc Feedback Vertex Set}
 \medskip

 \begin{tabular}{c|c|c|c|c|c|c|c|c|c|}
	$(\alpha,c)$ & $1.1$ & $1.2$ & $1.3$ & $1.4$ & $1.5$ & $1.6$ & $1.7$ & $1.8$ & $1.9$\\
	\hline
	$(\beta,2.69998)$ & $1.465$ & $1.3861$ & $1.3331$ & $1.294$ & $1.2637$ & $1.2393$ & $1.2193$ & $1.2024$ & $1.188$\\
	\hline
	$(1.0,2.69998)$ & $1.4156$ & $1.3289$ & $1.2753$ & $1.2378$ & $1.2099$ & $1.1881$ & $1.1706$ & $1.1561$ & $1.144$\\
	\hline
	$(2.0,1.0)$ & $1.6588$ & $1.4847$ & $1.3657$ & $1.2768$ & $1.2072$ & $1.1507$ & $1.1037$ & $1.064$ & $1.0298$\\
	\hline
	combined & $1.4156$ & $1.3289$ & $1.2753$ & $1.2378$ & $1.2068$ & $1.1507$ & $1.1037$ & $1.064$ & $1.0298$\\
	\hline
\end{tabular}

 \medskip
 \medskip
 {\sc Tournament Feedback Vertex Set}
 \medskip

 \begin{tabular}{c|c|c|c|c|c|c|c|c|c|}
	$(\alpha,c)$ & $1.1$ & $1.2$ & $1.3$ & $1.4$ & $1.5$ & $1.6$ & $1.7$ & $1.8$ & $1.9$\\
	\hline
	$(\beta,1.618)$ & $1.2912$ & $1.2463$ & $1.2152$ & $1.1918$ & $1.1734$ & $1.1583$ & $1.1458$ & $1.1352$ & $1.126$\\
	\hline
	$(1.0,1.618)$ & $1.2348$ & $1.1837$ & $1.1531$ & $1.132$ & $1.1164$ & $1.1042$ & $1.0945$ & $1.0865$ & $1.0798$\\
	\hline
	$(2.0,1.0)$ & $1.6588$ & $1.4847$ & $1.3657$ & $1.2768$ & $1.2072$ & $1.1507$ & $1.1037$ & $1.064$ & $1.0298$\\
	\hline
	combined & $1.2348$ & $1.1837$ & $1.1531$ & $1.132$ & $1.1164$ & $1.1042$ & $1.0945$ & $1.064$ & $1.0298$\\
	\hline
\end{tabular}

 \caption{Running times for {\sc Feedback Vertex Set} and {\sc Tournament Feedback Vertex Set}.
  An entry at in row $(\alpha,c)$ and column $\beta$ is $\bestbound(\alpha,c,\beta)$.
  The last row contains $\bestbound(\CL_{\textsc{FVS}},\beta)$ and $\bestbound(\CL_{\textsc{TFVS}},\beta)$, respectively.}
 \label{tab:runtimes_fvs_tfvs}
\end{table}

The best (randomized) parameterized algorithm for \textsc{FVS} has been obtained by Li and Nederlof \cite{LiN22} and runs in time $\OO^*(2.69998^k)$.
Moreover, \textsc{FVS} admits a polynomial-time $2$-approximation algorithm \cite{BafnaBF99}.
The first algorithm provides a $1$-extension oracle with cost $2.69998$, and second algorithm implements a $2$-extension oracle with cost $1$.
Together, we obtain an oracle specification list $\CL_{\textsc{FVS}} \coloneqq \{(1,2.69998),(2,1)\}$.
Using \Cref{thm:upper_bound_random}, we can use approximate monotone local search to obtain a $\beta$-approximation algorithm for \textsc{FVS} running in time $\OO^*(\left(\bestbound(\CL_{\textsc{FVS}},\beta)\right)^n)$ for every $\beta \geq 1$.

Similarly, \textsc{Tournament Feedback Vertex Set} can be solved in time $\OO^*(1.618^k)$ \cite{KumarL16} and admits polynomial-time $2$-approximation algorithm \cite{LokshtanovMMPPS18}, which gives rise to the oracle specification list $\CL_{\textsc{TFVS}} \coloneqq \{(1,1.618),(2,1)\}$.
Hence, we obtain a $\beta$-approximation algorithm for \textsc{Tournament FVS} running in time $\OO^*(\left(\bestbound(\CL_{\textsc{TFVS}},\beta)\right)^n)$.

We provide the values of  $\bestbound(\CL_{\textsc{FVS}},\beta)$ and $\bestbound(\CL_{\textsc{TFVS}},\beta)$ for selected approximation ratios $\beta$ in \Cref{tab:runtimes_fvs_tfvs}, and give a graphical visualization in \Cref{fig:runtimes_fvs_tfvs}.
We also compare $\bestbound(\CL_{\textsc{FVS}},\beta)$ and $\bestbound(\CL_{\textsc{TFVS}},\beta)$ with the running times of several other algorithms.
As the most basic benchmark, we compare the running times to the brute-force search as described in \cite{EsmerKMNS22} (see also \hyperref[benchmark-brute]{Benchmark 1}).
Also, by interpreting an exact single-exponential fpt algorithm as a $\beta$-approximation algorithm, we can use Approximate Monotone Local Search for $\alpha = \beta$ \cite{EsmerKMNS22} as a second benchmark (see also \hyperref[benchmark-amls-equal]{Benchmark 2}).
It can be observed that $\bestbound(\CL_{\textsc{FVS}},\beta)$ and $\bestbound(\CL_{\textsc{TFVS}},\beta)$ are strictly better than both of these algorithms for all $\beta > 1$ (see also \Cref{lem:better_than_esa} and \Cref{thm:amls_smaller_brute}).
We remark that another exponential $\beta$-approximation algorithm for \textsc{FVS} has been obtained in \cite{EscoffierPT16}.\footnote{The running time of this algorithm is not correctly stated in \cite[Theorem 3.1]{EscoffierPT16}; their $\beta$-approximation algorithm runs in time $\OO^*(d^n)$ where $d \geq 1$ is the unique solution to the equation $1 = d^{-1} + d^{-\beta}$.}
However, this algorithm is slower than the brute-force $\beta$-approximation algorithm described in \cite{EsmerKMNS22}, and thus, our algorithm is also significantly faster than the algorithm from \cite{EscoffierPT16}.

\begin{figure}
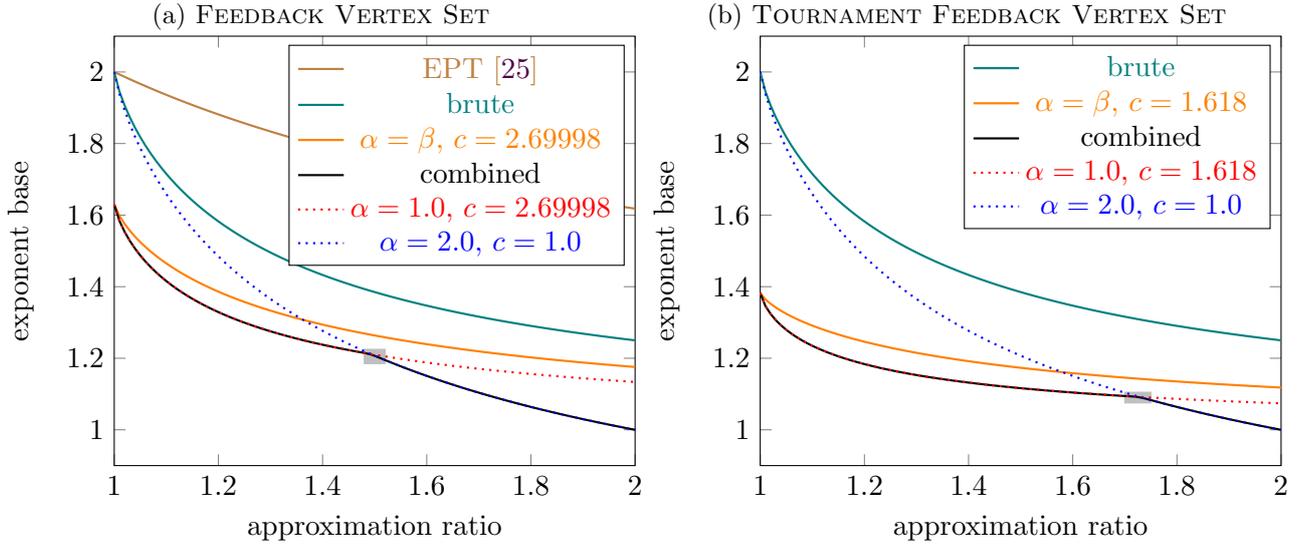

 \centering
 \begin{subfigure}{.5\textwidth}
  \centering
  \caption{{\sc Feedback Vertex Set}}
  \input{plots/figure_fvs_box.tex}
  \label{fig:fvs_results}
 \end{subfigure}%
 \begin{subfigure}{.5\textwidth}
  \centering
  \caption{{\sc Tournament Feedback Vertex Set}}
  \input{plots/figure_tfvs_box.tex}
  \label{fig:tfvs_results}
 \end{subfigure}%
 \caption{Results for {\sc Feedback Vertex Set} and {\sc Tournament Feedback Vertex Set}.
   A dot at $(\beta,d)$ means that the respective algorithm outputs an $\beta$-approximation in time $\OO^*(d^n)$.
   Figure~\ref{fig:runtimes_fvs_tfvs_zoom} zooms into the gray regions.}
 \label{fig:runtimes_fvs_tfvs}
\end{figure}

\begin{figure}
 \centering
 \begin{subfigure}{.5\textwidth}
  \centering
  \caption{{\sc Feedback Vertex Set}}
  \begin{tikzpicture}[scale = 1.0]
	\begin{axis}[xmin = 1.48, xmax = 1.52, ymin = 1.185, ymax = 1.225, xlabel = {approximation ratio}, ylabel = {exponent base}]

	\addplot[black, thick] coordinates {
		(1.48, 1.214852570165177)
		(1.4805, 1.21472361545438)
		(1.481, 1.214591080061285)
		(1.4815, 1.214454296156112)
		(1.482, 1.214313307865633)
		(1.4825, 1.214168158927233)
		(1.483, 1.214018892692388)
		(1.4835, 1.213865552130128)
		(1.484, 1.213708179830459)
		(1.4845, 1.213546818007764)
		(1.485, 1.213381508504176)
		(1.4855, 1.21321229279292)
		(1.486, 1.213039211981629)
		(1.4865, 1.212862306815632)
		(1.487, 1.21268161768122)
		(1.4875, 1.212497184608875)
		(1.488, 1.21230904727648)
		(1.4885, 1.2121172450125)
		(1.489, 1.211921816799138)
		(1.4895, 1.21172280127546)
		(1.49, 1.211520236740499)
		(1.4905, 1.211314161156331)
		(1.491, 1.211104612151127)
		(1.4915, 1.210891627022173)
		(1.492, 1.210675242738876)
		(1.4925, 1.21045549594573)
		(1.493, 1.210232422965275)
		(1.4935, 1.21000605980101)
		(1.494, 1.209776442140304)
		(1.4945, 1.209543605357261)
		(1.495, 1.209307584515576)
		(1.4955, 1.209068414371359)
		(1.496, 1.208826129375942)
		(1.4965, 1.208580763678649)
		(1.497, 1.208332351129558)
		(1.4975, 1.20808092528223)
		(1.498, 1.207826519396417)
		(1.4985, 1.207569166440745)
		(1.499, 1.20730889909538)
		(1.4995, 1.207045749754662)
		(1.5, 1.206779750529724)
		(1.5005, 1.206510933251087)
		(1.501, 1.206239329471228)
		(1.5015, 1.205964970467132)
		(1.502, 1.205687887242819)
		(1.5025, 1.205408110531847)
		(1.503, 1.205125670799799)
		(1.5035, 1.204840598246742)
		(1.504, 1.204552922809673)
		(1.5045, 1.204262674164931)
		(1.505, 1.203969881730603)
		(1.5055, 1.203674574668897)
		(1.506, 1.203376781888501)
		(1.5065, 1.203076532046919)
		(1.507, 1.202773853552788)
		(1.5075, 1.202468774568171)
		(1.508, 1.202162183770185)
		(1.5085, 1.201855856503398)
		(1.509, 1.201549844844549)
		(1.5095, 1.20124414826243)
		(1.51, 1.200938766227105)
		(1.5105, 1.20063369820991)
		(1.511, 1.200328943683445)
		(1.5115, 1.200024502121572)
		(1.512, 1.199720372999413)
		(1.5125, 1.199416555793341)
		(1.513, 1.19911304998098)
		(1.5135, 1.198809855041202)
		(1.514, 1.198506970454118)
		(1.5145, 1.198204395701079)
		(1.515, 1.19790213026467)
		(1.5155, 1.197600173628705)
		(1.516, 1.197298525278227)
		(1.5165, 1.196997184699499)
		(1.517, 1.196696151380004)
		(1.5175, 1.196395424808441)
		(1.518, 1.196095004474719)
		(1.5185, 1.195794889869953)
		(1.519, 1.195495080486464)
		(1.5195, 1.195195575817772)
		(1.52, 1.194896375358593)
	};

	\addplot[red, thick] coordinates {
		(1.48, 1.214852570165177)
		(1.4805, 1.214724040980207)
		(1.481, 1.214595675193105)
		(1.4815, 1.214467472474564)
		(1.482, 1.214339432496202)
		(1.4825, 1.214211554930566)
		(1.483, 1.214083839451123)
		(1.4835, 1.21395628573226)
		(1.484, 1.213828893449279)
		(1.4845, 1.213701662278392)
		(1.485, 1.213574591896723)
		(1.4855, 1.213447681982299)
		(1.486, 1.213320932214049)
		(1.4865, 1.213194342271801)
		(1.487, 1.21306791183628)
		(1.4875, 1.2129416405891)
		(1.488, 1.212815528212768)
		(1.4885, 1.212689574390673)
		(1.489, 1.212563778807088)
		(1.4895, 1.212438141147166)
		(1.49, 1.212312661096934)
		(1.4905, 1.212187338343294)
		(1.491, 1.212062172574017)
		(1.4915, 1.21193716347774)
		(1.492, 1.211812310743964)
		(1.4925, 1.211687614063051)
		(1.493, 1.211563073126219)
		(1.4935, 1.211438687625541)
		(1.494, 1.21131445725394)
		(1.4945, 1.21119038170519)
		(1.495, 1.211066460673905)
		(1.4955, 1.210942693855546)
		(1.496, 1.21081908094641)
		(1.4965, 1.21069562164363)
		(1.497, 1.210572315645173)
		(1.4975, 1.210449162649835)
		(1.498, 1.21032616235724)
		(1.4985, 1.210203314467835)
		(1.499, 1.210080618682889)
		(1.4995, 1.209958074704487)
		(1.5, 1.209835682235533)
		(1.5005, 1.20971344097974)
		(1.501, 1.209591350641632)
		(1.5015, 1.209469410926539)
		(1.502, 1.209347621540595)
		(1.5025, 1.209225982190734)
		(1.503, 1.20910449258469)
		(1.5035, 1.208983152430989)
		(1.504, 1.208861961438953)
		(1.5045, 1.20874091931869)
		(1.505, 1.208620025781099)
		(1.5055, 1.208499280537858)
		(1.506, 1.208378683301431)
		(1.5065, 1.208258233785056)
		(1.507, 1.208137931702751)
		(1.5075, 1.208017776769304)
		(1.508, 1.207897768700275)
		(1.5085, 1.207777907211989)
		(1.509, 1.20765819202154)
		(1.5095, 1.207538622846781)
		(1.51, 1.207419199406325)
		(1.5105, 1.207299921419543)
		(1.511, 1.207180788606559)
		(1.5115, 1.20706180068825)
		(1.512, 1.206942957386241)
		(1.5125, 1.206824258422903)
		(1.513, 1.206705703521351)
		(1.5135, 1.206587292405443)
		(1.514, 1.206469024799773)
		(1.5145, 1.206350900429672)
		(1.515, 1.206232919021205)
		(1.5155, 1.206115080301166)
		(1.516, 1.205997383997081)
		(1.5165, 1.205879829837199)
		(1.517, 1.205762417550492)
		(1.5175, 1.205645146866654)
		(1.518, 1.205528017516099)
		(1.5185, 1.205411029229953)
		(1.519, 1.205294181740059)
		(1.5195, 1.205177474778969)
		(1.52, 1.205060908079943)
	};

	\addplot[blue, thick] coordinates {
		(1.48, 1.219837214810051)
		(1.4805, 1.21951232444582)
		(1.481, 1.219187781603104)
		(1.4815, 1.218863585672358)
		(1.482, 1.218539736045568)
		(1.4825, 1.218216232116243)
		(1.483, 1.217893073279409)
		(1.4835, 1.217570258931607)
		(1.484, 1.217247788470889)
		(1.4845, 1.216925661296806)
		(1.485, 1.216603876810412)
		(1.4855, 1.216282434414253)
		(1.486, 1.215961333512363)
		(1.4865, 1.215640573510262)
		(1.487, 1.215320153814946)
		(1.4875, 1.215000073834888)
		(1.488, 1.21468033298003)
		(1.4885, 1.214360930661775)
		(1.489, 1.214041866292991)
		(1.4895, 1.213723139287995)
		(1.49, 1.213404749062559)
		(1.4905, 1.213086695033896)
		(1.491, 1.212768976620663)
		(1.4915, 1.212451593242952)
		(1.492, 1.212134544322283)
		(1.4925, 1.211817829281606)
		(1.493, 1.211501447545292)
		(1.4935, 1.211185398539127)
		(1.494, 1.210869681690312)
		(1.4945, 1.210554296427454)
		(1.495, 1.210239242180564)
		(1.4955, 1.209924518381052)
		(1.496, 1.209610124461722)
		(1.4965, 1.209296059856766)
		(1.497, 1.208982324001764)
		(1.4975, 1.208668916333674)
		(1.498, 1.208355836290831)
		(1.4985, 1.208043083312942)
		(1.499, 1.207730656841081)
		(1.4995, 1.207418556317684)
		(1.5, 1.207106781186548)
		(1.5005, 1.206795330892819)
		(1.501, 1.206484204882998)
		(1.5015, 1.206173402604928)
		(1.502, 1.205862923507795)
		(1.5025, 1.205552767042119)
		(1.503, 1.205242932659754)
		(1.5035, 1.204933419813883)
		(1.504, 1.204624227959011)
		(1.5045, 1.204315356550963)
		(1.505, 1.20400680504688)
		(1.5055, 1.203698572905213)
		(1.506, 1.203390659585722)
		(1.5065, 1.203083064549466)
		(1.507, 1.202775787258807)
		(1.5075, 1.202468827177399)
		(1.508, 1.202162183770185)
		(1.5085, 1.201855856503398)
		(1.509, 1.201549844844549)
		(1.5095, 1.20124414826243)
		(1.51, 1.200938766227105)
		(1.5105, 1.20063369820991)
		(1.511, 1.200328943683445)
		(1.5115, 1.200024502121572)
		(1.512, 1.199720372999413)
		(1.5125, 1.199416555793341)
		(1.513, 1.19911304998098)
		(1.5135, 1.198809855041202)
		(1.514, 1.198506970454118)
		(1.5145, 1.198204395701079)
		(1.515, 1.19790213026467)
		(1.5155, 1.197600173628705)
		(1.516, 1.197298525278227)
		(1.5165, 1.196997184699499)
		(1.517, 1.196696151380004)
		(1.5175, 1.196395424808441)
		(1.518, 1.196095004474719)
		(1.5185, 1.195794889869953)
		(1.519, 1.195495080486464)
		(1.5195, 1.195195575817772)
		(1.52, 1.194896375358593)
	};

	\addlegendentry[no markers, black]{combined}
	\addlegendentry[no markers, red]{$\alpha = 1.0$, $c = 2.69998$}
	\addlegendentry[no markers, blue]{$\alpha = 2.0$, $c = 1.0$}

	\end{axis}
\end{tikzpicture}
  \label{fig:fvs_zoom_results}
 \end{subfigure}%
 \begin{subfigure}{.5\textwidth}
  \centering
  \caption{{\sc Tournament Feedback Vertex Set}}
  \begin{tikzpicture}[scale = 1.0]
	\begin{axis}[xmin = 1.7, xmax = 1.75, ymin = 1.075, ymax = 1.105, xlabel = {approximation ratio}, ylabel = {exponent base}]

	\addplot[black, thick] coordinates {
		(1.7, 1.094435710503451)
		(1.7005, 1.09439180846425)
		(1.701, 1.094347949101091)
		(1.7015, 1.094304132349109)
		(1.702, 1.094260358143573)
		(1.7025, 1.094216626419892)
		(1.703, 1.094172937113605)
		(1.7035, 1.094129290160392)
		(1.704, 1.094085685496065)
		(1.7045, 1.09404212305657)
		(1.705, 1.09399860277799)
		(1.7055, 1.09395512459654)
		(1.706, 1.093911688448568)
		(1.7065, 1.093868294270557)
		(1.707, 1.093824941999121)
		(1.7075, 1.093781631571008)
		(1.708, 1.093738362923098)
		(1.7085, 1.0936951359924)
		(1.709, 1.093651950716059)
		(1.7095, 1.093608807031347)
		(1.71, 1.093565704875669)
		(1.7105, 1.093522644186559)
		(1.711, 1.093479624901682)
		(1.7115, 1.093436646958832)
		(1.712, 1.093393710295932)
		(1.7125, 1.093350814851035)
		(1.713, 1.093307960562322)
		(1.7135, 1.093265147368102)
		(1.714, 1.093222375206812)
		(1.7145, 1.093179329075699)
		(1.715, 1.093133427302266)
		(1.7155, 1.09308411927807)
		(1.716, 1.09303142985525)
		(1.7165, 1.092975383786692)
		(1.717, 1.092916005726858)
		(1.7175, 1.092853320232634)
		(1.718, 1.092787351764152)
		(1.7185, 1.092718124685626)
		(1.719, 1.09264566326617)
		(1.7195, 1.092569991680621)
		(1.72, 1.092491134010353)
		(1.7205, 1.092409114244087)
		(1.721, 1.092323956278701)
		(1.7215, 1.092235683920029)
		(1.722, 1.09214432088366)
		(1.7225, 1.092049890795733)
		(1.723, 1.091952417193723)
		(1.7235, 1.091851923527228)
		(1.724, 1.091748433158749)
		(1.7245, 1.091641969364464)
		(1.725, 1.091532555335)
		(1.7255, 1.0914202141762)
		(1.726, 1.091304968909883)
		(1.7265, 1.091186842474603)
		(1.727, 1.091065857726397)
		(1.7275, 1.090942037439541)
		(1.728, 1.090815404307283)
		(1.7285, 1.090685980942589)
		(1.729, 1.090553789878872)
		(1.7295, 1.090418853570722)
		(1.73, 1.090281194394629)
		(1.7305, 1.0901408346497)
		(1.731, 1.089997796558379)
		(1.7315, 1.089852102267146)
		(1.732, 1.089703773847232)
		(1.7325, 1.089552833295307)
		(1.733, 1.089399302534181)
		(1.7335, 1.089243203413492)
		(1.734, 1.089084557710386)
		(1.7345, 1.0889233871302)
		(1.735, 1.08875971330713)
		(1.7355, 1.088593557804905)
		(1.736, 1.088424942117448)
		(1.7365, 1.088253887669531)
		(1.737, 1.088080415817431)
		(1.7375, 1.087904547849576)
		(1.738, 1.087726304987187)
		(1.7385, 1.087545708384915)
		(1.739, 1.087362779131473)
		(1.7395, 1.08717753825026)
		(1.74, 1.086990006699982)
		(1.7405, 1.086800205375269)
		(1.741, 1.086608155107285)
		(1.7415, 1.086413876664328)
		(1.742, 1.086217390752432)
		(1.7425, 1.086018718015962)
		(1.743, 1.085818235937165)
		(1.7435, 1.085617684089784)
		(1.744, 1.085417291370921)
		(1.7445, 1.085217057578325)
		(1.745, 1.085016982510105)
		(1.7455, 1.084817065964728)
		(1.746, 1.084617307741019)
		(1.7465, 1.084417707638158)
		(1.747, 1.084218265455684)
		(1.7475, 1.084018980993486)
		(1.748, 1.083819854051813)
		(1.7485, 1.083620884431262)
		(1.749, 1.083422071932786)
		(1.7495, 1.083223416357689)
		(1.75, 1.083024917507624)
	};

	\addplot[red, thick] coordinates {
		(1.7, 1.094435710503451)
		(1.7005, 1.09439180846425)
		(1.701, 1.094347949101091)
		(1.7015, 1.094304132349109)
		(1.702, 1.094260358143573)
		(1.7025, 1.094216626419892)
		(1.703, 1.094172937113605)
		(1.7035, 1.094129290160392)
		(1.704, 1.094085685496065)
		(1.7045, 1.09404212305657)
		(1.705, 1.09399860277799)
		(1.7055, 1.09395512459654)
		(1.706, 1.093911688448568)
		(1.7065, 1.093868294270557)
		(1.707, 1.093824941999121)
		(1.7075, 1.093781631571008)
		(1.708, 1.093738362923098)
		(1.7085, 1.0936951359924)
		(1.709, 1.093651950716059)
		(1.7095, 1.093608807031347)
		(1.71, 1.093565704875669)
		(1.7105, 1.093522644186559)
		(1.711, 1.093479624901682)
		(1.7115, 1.093436646958832)
		(1.712, 1.093393710295932)
		(1.7125, 1.093350814851035)
		(1.713, 1.093307960562322)
		(1.7135, 1.093265147368102)
		(1.714, 1.093222375206812)
		(1.7145, 1.093179644017017)
		(1.715, 1.093136953737409)
		(1.7155, 1.093094304306805)
		(1.716, 1.093051695664153)
		(1.7165, 1.093009127748523)
		(1.717, 1.092966600499112)
		(1.7175, 1.092924113855244)
		(1.718, 1.092881667756367)
		(1.7185, 1.092839262142053)
		(1.719, 1.092796896952)
		(1.7195, 1.092754572126029)
		(1.72, 1.092712287604087)
		(1.7205, 1.092670043326242)
		(1.721, 1.092627839232686)
		(1.7215, 1.092585675263735)
		(1.722, 1.092543551359826)
		(1.7225, 1.09250146746152)
		(1.723, 1.092459423509498)
		(1.7235, 1.092417419444563)
		(1.724, 1.09237545520764)
		(1.7245, 1.092333530739775)
		(1.725, 1.092291645982135)
		(1.7255, 1.092249800876005)
		(1.726, 1.092207995362792)
		(1.7265, 1.092166229384023)
		(1.727, 1.092124502881343)
		(1.7275, 1.092082815796516)
		(1.728, 1.092041168071427)
		(1.7285, 1.091999559648077)
		(1.729, 1.091957990468585)
		(1.7295, 1.091916460475189)
		(1.73, 1.091874969610245)
		(1.7305, 1.091833517816225)
		(1.731, 1.091792105035718)
		(1.7315, 1.091750731211429)
		(1.732, 1.091709396286182)
		(1.7325, 1.091668100202913)
		(1.733, 1.091626842904677)
		(1.7335, 1.091585624334642)
		(1.734, 1.091544444436093)
		(1.7345, 1.091503303152427)
		(1.735, 1.09146220042716)
		(1.7355, 1.091421136203917)
		(1.736, 1.09138011042644)
		(1.7365, 1.091339123038584)
		(1.737, 1.091298173984316)
		(1.7375, 1.091257263207719)
		(1.738, 1.091216390652985)
		(1.7385, 1.091175556264421)
		(1.739, 1.091134759986445)
		(1.7395, 1.091094001763586)
		(1.74, 1.091053281540486)
		(1.7405, 1.091012599261898)
		(1.741, 1.090971954872686)
		(1.7415, 1.090931348317823)
		(1.742, 1.090890779542394)
		(1.7425, 1.090850248491594)
		(1.743, 1.090809755110727)
		(1.7435, 1.090769299345208)
		(1.744, 1.090728881140559)
		(1.7445, 1.090688500442412)
		(1.745, 1.090648157196509)
		(1.7455, 1.090607851348699)
		(1.746, 1.090567582844937)
		(1.7465, 1.090527351631291)
		(1.747, 1.090487157653932)
		(1.7475, 1.09044700085914)
		(1.748, 1.090406881193302)
		(1.7485, 1.090366798602912)
		(1.749, 1.09032675303457)
		(1.7495, 1.090286744434981)
		(1.75, 1.090246772750959)
	};

	\addplot[blue, thick] coordinates {
		(1.7, 1.103684103612658)
		(1.7005, 1.103469068554248)
		(1.701, 1.103254211458409)
		(1.7015, 1.103039532088625)
		(1.702, 1.102825030208821)
		(1.7025, 1.102610705583361)
		(1.703, 1.102396557977046)
		(1.7035, 1.102182587155116)
		(1.704, 1.101968792883245)
		(1.7045, 1.101755174927544)
		(1.705, 1.101541733054556)
		(1.7055, 1.101328467031259)
		(1.706, 1.101115376625059)
		(1.7065, 1.100902461603797)
		(1.707, 1.100689721735742)
		(1.7075, 1.100477156789591)
		(1.708, 1.10026476653447)
		(1.7085, 1.100052550739929)
		(1.709, 1.099840509175948)
		(1.7095, 1.099628641612927)
		(1.71, 1.099416947821692)
		(1.7105, 1.099205427573492)
		(1.711, 1.098994080639996)
		(1.7115, 1.098782906793293)
		(1.712, 1.098571905805895)
		(1.7125, 1.098361077450728)
		(1.713, 1.098150421501138)
		(1.7135, 1.097939937730887)
		(1.714, 1.097729625914153)
		(1.7145, 1.097519485825528)
		(1.715, 1.097309517240018)
		(1.7155, 1.097099719933039)
		(1.716, 1.096890093680423)
		(1.7165, 1.096680638258408)
		(1.717, 1.096471353443645)
		(1.7175, 1.096262239013192)
		(1.718, 1.096053294744515)
		(1.7185, 1.095844520415486)
		(1.719, 1.095635915804384)
		(1.7195, 1.095427480689892)
		(1.72, 1.095219214851096)
		(1.7205, 1.095011118067485)
		(1.721, 1.094803190118952)
		(1.7215, 1.094595430785788)
		(1.722, 1.094387839848686)
		(1.7225, 1.094180417088736)
		(1.723, 1.093973162287429)
		(1.7235, 1.093766075226649)
		(1.724, 1.093559155688681)
		(1.7245, 1.093352403456201)
		(1.725, 1.093145818312281)
		(1.7255, 1.092939400040388)
		(1.726, 1.092733148424378)
		(1.7265, 1.092527063248501)
		(1.727, 1.092321144297398)
		(1.7275, 1.092115391356097)
		(1.728, 1.091909804210016)
		(1.7285, 1.091704382644963)
		(1.729, 1.091499126447131)
		(1.7295, 1.091294035403097)
		(1.73, 1.091089109299828)
		(1.7305, 1.09088434792467)
		(1.731, 1.090679751065356)
		(1.7315, 1.090475318510001)
		(1.732, 1.090271050047098)
		(1.7325, 1.090066945465526)
		(1.733, 1.08986300455454)
		(1.7335, 1.089659227103774)
		(1.734, 1.089455612903242)
		(1.7345, 1.089252161743334)
		(1.735, 1.089048873414815)
		(1.7355, 1.088845747708827)
		(1.736, 1.088642784416885)
		(1.7365, 1.08843998333088)
		(1.737, 1.088237344243072)
		(1.7375, 1.088034866946096)
		(1.738, 1.087832551232958)
		(1.7385, 1.08763039689703)
		(1.739, 1.087428403732059)
		(1.7395, 1.087226571532156)
		(1.74, 1.087024900091801)
		(1.7405, 1.086823389205842)
		(1.741, 1.08662203866949)
		(1.7415, 1.086420848278323)
		(1.742, 1.086219817828283)
		(1.7425, 1.086018947115675)
		(1.743, 1.085818235937165)
		(1.7435, 1.085617684089784)
		(1.744, 1.085417291370921)
		(1.7445, 1.085217057578325)
		(1.745, 1.085016982510105)
		(1.7455, 1.084817065964728)
		(1.746, 1.084617307741019)
		(1.7465, 1.084417707638158)
		(1.747, 1.084218265455684)
		(1.7475, 1.084018980993486)
		(1.748, 1.083819854051813)
		(1.7485, 1.083620884431262)
		(1.749, 1.083422071932786)
		(1.7495, 1.083223416357689)
		(1.75, 1.083024917507624)
	};

	\addlegendentry[no markers, black]{combined}
	\addlegendentry[no markers, red]{$\alpha = 1.0$, $c = 1.618$}
	\addlegendentry[no markers, blue]{$\alpha = 2.0$, $c = 1.0$}

	\end{axis}
\end{tikzpicture}
  \label{fig:tfvs_zoom_results}
 \end{subfigure}%
 \caption{Results for {\sc Feedback Vertex Set} and {\sc Tournament Feedback Vertex Set}.
   A dot at $(\beta,d)$ means that the respective algorithm outputs an $\beta$-approximation in time $\OO^*(d^n)$.}
 \label{fig:runtimes_fvs_tfvs_zoom}
\end{figure}

As a further comparison, we also consider the running of our algorithm when only a single oracle is used.
More precisely, let us define $\CL_{\textsc{FVS}}' \coloneqq \{(1,2.69998)\}$ and $\CL_{\textsc{FVS}}'' \coloneqq \{(2,1)\}$.
Clearly, $\bestbound(\CL_{\textsc{FVS}},\beta) \leq \min(\bestbound(\CL_{\textsc{FVS}}',\beta),\bestbound(\CL_{\textsc{FVS}}'',\beta))$ for all $\beta \geq 1$.
Interestingly, this inequality is strict for some values of $\beta$.
Indeed, while this may not be visible from Figure \ref{fig:runtimes_fvs_tfvs}, one can observe from Figure \ref{fig:runtimes_fvs_tfvs_zoom} that using both oracles together leads to a better running for $\beta$ roughly in the range $[1.481,1.507]$.
However, it can also be observed that the improvement obtained this way is rather small.
For example, we have $\bestbound(\CL_{\textsc{FVS}},1.5) \approx 1.2068$ and $\min(\bestbound(\CL_{\textsc{FVS}}',1.5),\bestbound(\CL_{\textsc{FVS}}'',1.5)) = \bestbound(\CL_{\textsc{FVS}}'',1.5) \approx 1.2072$.
Similar observations can be made for \textsc{Tournament FVS}.
Note that the fact that we only obtain small improvements by using multiple oracles is not a shortcoming of the algorithms designed in this paper, but inherent to the problem by Theorem \ref{thm:minimization_lower_bound}.

\begin{table}
 \centering
 \begin{tabular}{l|lr|c|lr|c|}
  Problem & $c_1$ & & det. & $\alpha_2$ & & det. \\
  \hline
  \textsc{FVS} & $2.69998$ & \cite{LiN22} & \xmark & $2$ & \cite{BafnaBF99} & \cmark \\
  \hline
  \textsc{Tournament FVS} & $1.618$ & \cite{KumarL16} & \cmark & $2$ & \cite{LokshtanovMMPPS18} & \xmark \\
  \hline
  \textsc{Subset FVS} & $4.0$ & \cite{IwataWY16} & \cmark & $8$ & \cite{EvenNZ00} & \cmark \\
  \hline
  \textsc{$d$-Hitting Set} ($d \geq 3$) & $(d - 0.9245)$ & \cite{FominGKLS10} & \cmark & $d$ & \cite{Bar-YehudaE81} & \cmark \\
  \hline
  \textsc{Interval Vertex Deletion} & $8.0$ & \cite{Cao16} & \cmark & $8$ & \cite{Cao16} & \cmark \\
  \hline
  \textsc{Proper Interval Vertex Deletion} & $6.0$ & \cite{HofV13} & \cmark & $6$ & \cite{HofV13} & \cmark \\
  \hline
  \textsc{Block Graph Vertex Deletion} & $4.0$ & \cite{AgrawalKLS16} & \cmark & $4$ & \cite{AgrawalKLS16} & \cmark \\
  \hline
  \textsc{Cluster Graph Vertex Deletion} & $1.9102$ & \cite{BoralCKP16} & \cmark & $2$ & \cite{AprileDFH23} & \cmark \\
  \hline
  \textsc{Cograph Vertex Deletion} & $3.0755$ & \cite{FominGKLS10} & \cmark & $4$ & & \cmark \\
  \hline
  \textsc{Split Vertex Deletion} & $2.0$ & \cite{GhoshK0MPRR15} & \cmark & $2 + \epsilon$ & \cite{DrescherFH20} & \cmark\\
  \hline
  \textsc{Edge Multicut on Trees} & $1.5538$ & \cite{KanjLLTXXYZZZ15} & \cmark & $2$ & \cite{GargVY93} & \cmark \\
  \hline
 \end{tabular}
 \caption{List of deletion problems admitting an single-exponential parameterized algorithm running in time $\OO^*(c_1^k)$ and a polynomial-time $\alpha_2$-approximation algorithm.}
 \label{tab:overview_problems}
\end{table}

We stress that these results are not limited to \textsc{FVS} and \textsc{Tournament FVS}.
Indeed, there is wide range of vertex-deletion problems for which a single-exponential fpt algorithm running in time $\OO^*(c_1^k)$ as well as  a polynomial-time $\alpha_2$-approximation is known, for suitable constants $c_1,\alpha_2 > 1$.
A list of examples is given in Table \ref{tab:overview_problems} (running times for all problems can be found in Appendix \ref{sec:running_times}).

Notably, \textsc{Edge Multicut on Trees} is not a vertex-deletion problem, but an edge-deletion problem.
This means we set $U \coloneqq E(G)$ which implies that the running time in \Cref{thm:upper_bound_random} is measured with respect to the number of edges rather than the number of vertices.
However, for this particular problem, the input graph $G$ is a tree which implies that $|E(G)| \leq |V(G)|$, and hence we obtain the same runtime bound with respect to the number of vertices.

For all the problems listed in Table \ref{tab:overview_problems}, by \Cref{thm:upper_bound_random}, monotone local search results in a $\beta$-approximation algorithm running in time $\OO^*(\left(\bestbound(\CL_{c_1,\alpha_2},\beta)\right)^n)$, where $\CL_{c_1,\alpha_2} \coloneqq \{(1,c_1),(\alpha_2,1)\}$, for all $\beta \geq 1$ that outperforms all previously existing algorithms.
Note that the algorithms we obtain are randomized.
However, by Theorem \ref{thm:upper_bound_deter}, we can also obtain a deterministic $\beta$-approximation algorithm at the cost of an additional subexponential factor if all parameterized extension subroutine are deterministic.
Looking at Table \ref{tab:overview_problems}, this is true for all listed problems except \textsc{FVS} and \textsc{Tournament FVS}.

Note that our results are also applicable if only either a single-exponential fpt algorithm or a polynomial-time constant-factor approximation algorithm is available.
As a notable example, \textsc{Odd Cycle Transversal (OCT)} can be solved in time $\OO^*(2.3146^k)$ \cite{LokshtanovNRRS14}, and has no constant-factor approximation algorithm assuming the Unique Games Conjecture \cite{Khot02}.
On the other side, the \textsc{Partial Vertex Cover} problem has a polynomial-time $2$-approximation \cite{BshoutyB98}, and is known to be ${\sf W[1]}$-hard \cite{GuoNW07} which means that it cannot be solved in single-exponential fpt time assuming ${\sf FPT} \neq {\sf W[1]}$.
Still, for both problems, we obtain a $\beta$-approximation algorithm that is faster than the brute-force search (see \Cref{thm:amls_smaller_brute}) and, in the case of \textsc{Odd Cycle Transversal (OCT)}, than the algorithm obtained from \cite{EsmerKMNS22} (see \Cref{lem:better_than_esa}).
For both problems, the running times of the obtained algorithms can again be found in Appendix \ref{sec:running_times}.

\subsection{Exploiting Parameterized Approximation Algorithms}

We also obtain new algorithms for problems that are neither known to have a single-exponential fpt algorithm nor a polynomial-time approximation algorithm, but admit a single-exponential parameterized constant-factor approximation algorithm.
For \textsc{DFVS}, \textsc{Subset DFVS}, \textsc{DOCT} and \textsc{Multicut}, \cite{LokshtanovMRSZ21} provides a $2$-approximation algorithm that runs in time $\OO^*(c^k)$ for some constant $c$.
For example, one can easily observe from the description of the \textsc{DFVS} algorithm in \cite{LokshtanovMRSZ21} that it runs in time $\OO^*(1024^k)$.
Using \Cref{thm:upper_bound_random}, monotone local search results in an exponential $\beta$-approximation algorithm that runs in time $\OO^*(\bestbound(2, 1024,\beta))$ for all $\beta > 1$.
By \Cref{thm:amls_smaller_brute}, this algorithm is qualitatively better than the brute-force $\beta$-approximation algorithm running in time $\OO^*((\brute(\beta))^n)$.
For example, $\bestbound(2,1024,1.1) \approx 1.71520$ and $\brute(1.1) \approx 1.71527$.
Similar results can be obtained for the other problems.

Moreover, for the problem \textsc{Symmetric Directed  Multicut}, it is possible to adapt a parameterized $2$-approximation algorithm (which runs in time $k^{\OO(k)} \cdot n^{\OO(1)}$) \cite{EibenRW22} to obtain a parameterized $\alpha$-approximation algorithm running in time $\OO^*(c^k)$ for some constants $\alpha,c > 1$ \cite{Wahlstrom23}.
As a consequence, we also obtain an exponential $\beta$-approximation algorithm for this problem that beats the brute-force $\beta$-approximation algorithm for every $\beta > 1$. 

Similarly, using the $\OO^*(c^k)$-time $2$-approximation algorithm for \textsc{$d$-Steiner Multicut} in~\cite[Theorem 37]{OsipovW23} for some $c >1$, one can obtain a $\beta$-approximation algorithm that beats the brute-force $\beta$-approximation algorithm for all $\beta > 1$.

\subsection{Vertex Cover and 3-Hitting Set}
\label{sec:application_vc_3hs}

Finally, we consider the \textsc{Vertex Cover} and \textsc{$3$-Hitting Set} problem.
Both problems have not only been extensively studied for their exact parameterized complexity \cite{ChenKX10,Wahlstrom07}, but also received significant attention in the area of parameterized approximation algorithms \cite{BrankovicF13,FellowsKRS18,KulikS20}.
As a result, these two problems are the main applications considered in \cite{EsmerKMNS22} for transforming a parameterized $\beta$-approximation algorithm into an exponential $\beta$-approximation algorithm.
Despite a parameterized $\beta$-approximation being the natural oracle choice to obtain an exponential $\beta$-approximation, our algorithmic framework allows us to obtain further improvements for both problems compared to \cite{EsmerKMNS22}.

For every $\alpha \in [1,2]$ the best known running time of a parameterized randomized $\alpha$-approximation algorithm for \textsc{VC} is attained in \cite{KulikS20} if $\alpha \succsim 1.03$, and in \cite{BrankovicF13} if $\alpha \precsim 1.03$ (using the exact algorithm from \cite{ChenKX10} for $\alpha = 1$).
Let us denote by $c_{\textsc{vc}}(\alpha)$ the base of the currently fastest known parameterized $\alpha$-approximation algorithm for \textsc{VC} (i.e., an $\alpha$-approximation can be computed in time $\OO^*((c_{\textsc{vc}}(\alpha))^k)$).
Similarly, for $\alpha \in [1,3]$, we write $c_{\textsc{hs}}(\alpha)$ for the base of the currently fastest known parameterized $\alpha$-approximation algorithm for \textsc{$3$-HS}.
This best known base is attained by either \cite{FellowsKRS18} if $\alpha \precsim 1.08$ (using the exact algorithm from \cite{Wahlstrom07} for $\alpha = 1$), or \cite{KulikS20} if $\alpha \succsim 1.08$.
Note that $c_{\textsc{vc}}(2) = c_{\textsc{hs}}(3) = 1$.

As indicated above, the currently fastest (randomized) exponential $\beta$-approximation algorithm for \textsc{VC} (resp.\ \textsc{$3$-HS}) was obtained in \cite{EsmerKMNS22} and runs in time $\OO^*(d^n)$ where $d \coloneqq \bestbound(\beta,c_{\textsc{vc}}(\beta),\beta)$ (resp.\ $d \coloneqq \bestbound(\beta,c_{\textsc{hs}}(\beta),\beta)$) using \Cref{lem:coincide_with_esa}.

Now, to apply our algorithmic framework, we need to fix an oracle specification list $\CL_{\textsc{vc}}$ (resp.\ $\CL_{\textsc{hs}}$).
Since we can only provide a finite number of oracles (and it is a priori unclear how to choose those oracles optimally), we adopt the basic approach of equally discretizing the range for $\alpha$.
We set $A_{\textsc{vc}} \coloneqq \{1,1.01,1.02,1.03,\dots,1.99,2\}$ and $A_{\textsc{hs}} \coloneqq \{1,1.02,1.04,1.06,\dots,2.98,3\}$ (both sets contain $101$ elements). 
Then we define $\CL_{\textsc{vc}} \coloneqq \{(\alpha,c_{\textsc{vc}}(\alpha)) \mid \alpha \in A_{\textsc{vc}}\}$ and $\CL_{\textsc{hs}} \coloneqq \{(\alpha,c_{\textsc{hs}}(\alpha)) \mid \alpha \in A_{\textsc{hs}}\}$.

\begin{table}
 \small
 \centering
 {\sc Vertex Cover}
 \medskip

 \begin{tabular}{c|c|c|c|c|c|c|c|c|c|c|}
	$(\alpha,c)$ & $1.01$ & $1.02$ & $1.03$ & $1.04$ & $1.05$ & $1.06$ & $1.07$ & $1.08$ & $1.09$ & $1.1$\\
	\hline
	$(\beta,c_{\textsc{vc}}(\beta))$ & $1.2038$ & $1.1955$ & $1.183$ & $1.1697$ & $1.158$ & $1.1475$ & $1.138$ & $1.1294$ & $1.1214$ & $1.114$\\
	\hline
	$\CL_{\textsc{vc}}$ & $1.1891$ & $1.1752$ & $1.1649$ & $1.1566$ & $1.1496$ & $1.1433$ & $1.1358$ & $1.1275$ & $1.1197$ & $1.1125$\\
\end{tabular}

 \medskip
 \medskip
 {\sc $3$-Hitting Set}
 \medskip

 \begin{tabular}{c|c|c|c|c|c|c|c|c|c|c|}
	$(\alpha,c)$ & $1.02$ & $1.04$ & $1.06$ & $1.08$ & $1.1$ & $1.12$ & $1.14$ & $1.16$ & $1.18$ & $1.2$\\
	\hline
	$(\beta,c_{\textsc{hs}}(\beta))$ & $1.472$ & $1.441$ & $1.4157$ & $1.3863$ & $1.3543$ & $1.3262$ & $1.3013$ & $1.2787$ & $1.2584$ & $1.2399$\\
	\hline
	$\CL_{\textsc{hs}}$ & $1.4489$ & $1.4083$ & $1.3775$ & $1.3527$ & $1.3319$ & $1.314$ & $1.2984$ & $1.2783$ & $1.2579$ & $1.2393$\\
\end{tabular}

 \caption{Running times for {\sc Vertex Cover} and {\sc$3$-Hitting Set}. An entry in row $(\alpha,c)$ and column $\beta$ is $\bestbound(\{(\alpha,c)\}, \beta)$. The middle row in each table is the result from \cite{EsmerKMNS22}, the last row is the result attained in this paper}
 \label{tab:runtimes_vc_3hs_small}
\end{table}

For the \textsc{Vertex Cover} problem, it can be observed that $\bestbound(\CL_{\textsc{vc}},\beta) < \bestbound(\beta,c_{\textsc{vc}}(\beta),\beta)$ for all $\beta \in (1,2) \cap A_{\textsc{vc}}$ (by evaluating both functions up to a sufficiently large precision).
The most significant improvements occur for small values of $\beta$ (see \Cref{tab:runtimes_vc_3hs_small}), and it seems that this improvement can be mostly attributed to the possibility of using the exact fpt algorithm for \textsc{Vertex Cover} \cite{ChenKX10} as a subroutine.
For larger approximation ratios, we only obtain small improvements as can be observed from \Cref{tab:runtimes_vc}.

\begin{table}
 \small
 \centering
 {\sc Vertex Cover}
 \medskip

 \footnotesize
 \begin{tabular}{c|c|c|c|c|c|c|c|c|}
	$(\alpha,c)$ & $1.2$ & $1.3$ & $1.4$ & $1.5$ & $1.6$ & $1.7$ & $1.8$ & $1.9$\\
	\hline
	$(\beta,c_{\textsc{vc}}(\beta))$ & $1.063058$ & $1.036524$ & $1.020288$ & $1.0098549$ & $1.0043411$ & $1.0015504$ & $1.00039597$ & $1.000042813$\\
	\hline
	$\CL_{\textsc{vc}}$              & $1.061819$ & $1.035901$ & $1.019999$ & $1.0097939$ & $1.0042837$ & $1.0015355$ & $1.00039185$ & $1.000042504$\\
\end{tabular}

 \caption{Running times for {\sc Vertex Cover}. An entry in row $(\alpha,c)$ and column $\beta$ is $\bestbound(\{(\alpha,c)\}, \beta)$. The middle row in each table is the result from \cite{EsmerKMNS22}, the last row is the result attained in this paper}
 \label{tab:runtimes_vc}
\end{table}

Generally speaking, it is also noteworthy that, even if the algorithm has access to $|A_{\textsc{vc}}| = 101$ many different oracles, only $2$-$3$ oracles corresponding to tuples from the specification list $\CL_{\textsc{vc}}$ are actually used, and the corresponding approximation ratios $\alpha$ are close to $\beta$.
For example,
\begin{align*}
 \bestbound(\CL_{\textsc{vc}},1.5) &= \bestbound(\{(1.49,c_{\textsc{vc}}(1.49)),(1.5,c_{\textsc{vc}}(1.5))\},1.5)\\
                                   &< \min\{\bestbound(1.49,c_{\textsc{vc}}(1.49),1.5),\bestbound(1.5,c_{\textsc{vc}}(1.5),1.5)\}\\
                                   &= \bestbound(1.49,c_{\textsc{vc}}(1.49),1.5) \approx 1.0098063.
\end{align*}
Similar observations can be made for {\sc $3$-Hitting Set}.

\section{Approximate Monotone Local Search}
\label{sec:analysis}
In this section, we analyse Algorithm \ref{algo:final} and describe how it can be derandomized.
More precisely, we prove \Cref{lem:amls_upper_bound_via_f,lem:deter_upper_bound_via_f}

\subsection{Correctness and Basic Analysis}

We first analyse the randomized algorithm $\amlsalgo_{\CL,\beta}$ for the $\LSUB$ problem. This algorithm uses algorithm $\sample$ as a subroutine.
See Section~\ref{sec:computational_model} for the description of these algorithms.
Also recall the definitions of $\p(n,k,t,x)$ and $M^*_{\alpha, \beta}$ from Equations~\eqref{eq:hyper} and~\eqref{eq:Mstar_def}, respectively.

\begin{lemma}[Correctness]
 \label{lem:correctness}
 For every specification list $\CL$ and $\beta\geq 1$,
 $\amlsalgo_{\CL,\beta}$ (Algorithm~\ref{algo:final}) is a randomized $\beta$-approximation algorithm for $\LSUB$.
\end{lemma}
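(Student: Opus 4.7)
The plan is to verify in turn the two conditions that the definition of a randomized $\beta$-approximation imposes on $\amlsalgo_{\CL,\beta}$: that it always returns an element of $\CF$, and that with probability at least $1/2$ this element has cardinality at most $\beta\cdot|\OPT|$. I will treat the first as essentially a book-keeping step: every set inserted into $\sol$ by an invocation of \Cref{algo:intermediary} has the form $X\cup Y$ with $Y$ returned by an extension oracle, so property~1 of \Cref{def:random_extension_oracle} puts it in $\CF$; and since the outer loop of \Cref{algo:final} always executes iteration $k=0$ (for which the feasible $t$ is $0$) and calls $\sample$ at least once, $\sol$ is non-empty and the returned minimum-size member is well-defined and deterministically lies in $\CF$.

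For the approximation guarantee I will set $k^\star:=|\OPT|$ and first discard the case $k^\star>n/\beta$, in which every element of $\CF$ trivially has cardinality at most $n<\beta k^\star$. Otherwise the loop reaches iteration $k=k^\star$; I fix the $(\alpha,c)\in\CL$ and $t\in[M^*_{\alpha,\beta}\cdot k^\star,\,\beta k^\star]\cap\NN$ chosen in Line~\ref{amls:select_t} together with $x:=\lceil(1-\beta/\alpha)k^\star+t/\alpha\rceil$ and $\ell:=k^\star-x$. A short calculation using the endpoints of the $t$-interval confirms $\ell\ge 0$ and $\p(n,k^\star,t,x)>0$, so the oracle is called with admissible parameters. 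The key step is then the analysis of a single run of $\sample$: the event $E_1:=\{|X\cap\OPT|\ge x\}$ has probability $\p(n,k^\star,t,x)$ by~\eqref{eq:hyper}, and on $E_1$ the set $\OPT\setminus X$ has size at most $\ell$ while $X\cup(\OPT\setminus X)\supseteq\OPT\in\CF$, so by monotonicity of $\CF$ it is an $\ell$-extension of $X$. Property~2 of \Cref{def:random_extension_oracle}, applied to the \emph{independently drawn} bit-string supplied to the oracle, then yields an event $E_2$ of conditional probability at least $1/2$ on which $Y$ is an $\alpha\ell$-extension; substituting $x\ge(1-\beta/\alpha)k^\star+t/\alpha$ into $\alpha\ell=\alpha(k^\star-x)$ shows that on $E_1\cap E_2$ we have $|X\cup Y|\le t+\alpha\ell\le\beta k^\star$.

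Finally I will amplify: since iteration $k=k^\star$ performs $N:=2\lceil\p(n,k^\star,t,x)^{-1}\rceil$ invocations of $\sample$ with mutually independent random bits, the probability that all of them fail is at most $\bigl(1-\tfrac{1}{2}\p(n,k^\star,t,x)\bigr)^N\le e^{-1}<\tfrac{1}{2}$, so with probability strictly greater than $1/2$ the minimum-size element of $\sol$ has cardinality at most $\beta k^\star$. The one genuinely delicate point is to keep the two sources of randomness—the uniform sample $X$ and the oracle's bit-string—logically separate so that the $1/2$ guarantee from property~2 multiplies cleanly with $\p(n,k^\star,t,x)$; the admissibility of $x$ and $\ell$ and the routine bound $(1-p/2)^{2/p}\le e^{-1}$ are otherwise immediate.
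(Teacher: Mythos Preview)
Your proposal is correct and follows essentially the same approach as the paper's proof: both analyze a single run of $\sample$ at $k=|\OPT|$ by conditioning on the event $|X\cap\OPT|\ge x$, invoke the oracle guarantee to bound $|Y|$, and then amplify via the $(1-p/2)^{2/p}\le e^{-1}$ bound. Your write-up is slightly more careful than the paper's in making explicit that $\sol$ is non-empty, that $\ell\ge 0$ and $\p>0$ so the oracle call and repetition count are well-defined, and that the sample $X$ and the oracle's bit-string are independent; these are exactly the right points to spell out.
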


\begin{proof}
 Let $U$ be a finite set system and $\F$ be a monotone set system of  $U$.
 Also, for every $(\alpha,c)\in \CL$ let $\oracle_{\alpha,c}$ be an $\alpha$-extension oracle of $U$ and $\CF$.
 Let $\F$ be the implicit monotone subset family associated with $\CL$.
 Also, let $\OPT= \argmin_{S\in \CF} |S|$ be a minimum size solution of the $\LSUB$ instance $U$ and $\CF$.
 Consider an execution of $\sample(U,k,t,\alpha,\beta,\oracle_{\alpha,c})$ (Algorithm~\ref{algo:intermediary}) in which $k = |\OPT|$, $M^*_{\alpha,\beta} \cdot k \leq t \leq \beta \cdot k$ and $(\alpha,c) \in \CL$.
 
 If the algorithm selects a set $X$ in Step~\ref{int:select} such that
 \[|\OPT \cap X| \geq \left( 1 - \frac{\beta}{\alpha} \right)\cdot k + \frac{t}{\alpha}\]
 then
 \[|\OPT \setminus X| \leq k - \left\lceil\left( 1 - \frac{\beta}{\alpha} \right)\cdot k + \frac{t}{\alpha}\right\rceil.\]
 Moreover, $(\OPT \setminus X) \cup X \in \F$ since $(\OPT \setminus X) \cup X = \OPT \cup X \supseteq \OPT$ as $\F$ is monotone by assumption.
 
 Since $\oracle_{\alpha,c}$ is an $\alpha$-extension oracle for $(U,\F)$, given the input $\left( U, X, k - \left\lceil\left( 1 - \frac{\beta}{\alpha} \right)\cdot k + \frac{t}{\alpha}\right\rceil \right)$ it returns a set $Y$ such that $X\cup Y\in \CF$ and with probability at least $\frac{1}{2}$ it holds that
 \[|Y| \leq \alpha \cdot \left(k - \left\lceil\left( 1 - \frac{\beta}{\alpha} \right)\cdot k + \frac{t}{\alpha}\right\rceil \right) \leq \alpha \cdot \left(k - \left( 1 - \frac{\beta}{\alpha} \right)\cdot k - \frac{t}{\alpha} \right) = \beta k - t.\]

 Let $Z \coloneqq X \cup Y$ as in Step~\ref{int:T} of Algorithm~\ref{algo:intermediary}.
 Then $Z \in \F$ and $|Z| = |X| + |Y| \leq t + \beta k - t = \beta k$.

 It follows that
 \begin{equation}
  \label{eq:sample_prob}
  \begin{aligned}
   \Pr&\left(\sample(U, |\OPT|,t,\alpha,\beta,\oracle_{\alpha,c}) \textnormal{ returns a set of size at most } \beta \cdot |\OPT|\right)\\
      &\geq \frac{1}{2}\cdot  \Pr\left(|X \cap \OPT| \geq \left( 1 - \frac{\beta}{\alpha} \right)\cdot k + \frac{t}{\alpha} \right) = \frac{1}{2} \cdot \p\left(n,|\OPT|,t,\left( 1 - \frac{\beta}{\alpha} \right)\cdot k + \frac{t}{\alpha}\right)
  \end{aligned}
 \end{equation}
 where $\p$ is the function defined in Equation~\eqref{eq:hyper}.
 
 Now, consider the execution of Algorithm~\ref{algo:final} with $U$ as its input and let $S$ be the set returned by Algorithm~\ref{algo:final}.
 It is easy to see that $S \in \F$ since Algorithm~\ref{algo:intermediary} always returns a set from $\F$.
 If $|\OPT| \geq \frac{n}{\beta}$ then $|S| \leq |U| \leq \beta \cdot |\OPT|$ and the algorithm returns an $\beta$-approximate solution as desired.
 So we may assume that $|\OPT| < \frac{n}{\beta}$.
 Consider the iteration of the for-loop in Step~\ref{amls:loop} of Algorithm~\ref{algo:final} in which $k = |\OPT|$.
 Using Equation \eqref{eq:sample_prob}, at least one of the calls to Algorithm~\ref{algo:intermediary} in this iteration returns a set of size at most $\beta \cdot |\OPT|$ with probability at least
 \[1 - \left(1 - \frac{1}{2}\cdot \p\left(n,k,t,\left(1 - \frac{\beta}{\alpha} \right)\cdot k + \frac{t}{\alpha}\right)\right)^{2 / \p\left(n,k,t,\left( 1 - \frac{\beta}{\alpha} \right)\cdot k + \frac{t}{\alpha}\right)}\geq 1- \exp\left(-1\right)>\frac{1}{2}.\]
 
 So the minimum cardinality set in $\sol$ (at the end of Algorithm \ref{algo:final}) has size at most $\beta \cdot |\OPT|$ with probability at least $\frac{1}{2}$.
 Hence, the set $S$ returned by the algorithm satisfies $|S| \leq \beta \cdot |\OPT|$ with probability at least $\frac{1}{2}$.
\end{proof}

Recall the definition of $f_{\CL,\beta}(n)$ from Equation~\ref{eq:fdef_intro}.

\begin{lemma}[Running time]
 \label{lem:runtime_f}
 $\amlsalgo_{\CL,\beta}$ (Algorithm~\ref{algo:final}) has cost $f_{\CL,\beta}(n) \cdot n^{\OO(1)}$.
\end{lemma}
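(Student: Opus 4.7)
The proof is essentially a cost accounting of \Cref{algo:final}, combining the cost of each invocation of $\sample$ with the number of invocations in each iteration of the for-loop.

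First I would analyze the cost of a single call to $\sample(U, k, t, \alpha, \beta, \oracle_{\alpha,c})$. The only non-polynomial contribution comes from the oracle query in Step~\ref{int:Aext}, which is a query to $\oracle_{\alpha,c}$ with second argument
\[\ell \coloneqq k - \left\lceil \left(1-\tfrac{\beta}{\alpha}\right)k + \tfrac{t}{\alpha}\right\rceil \leq \frac{\beta k - t}{\alpha}.\]
By the definition of the cost of an oracle query, this contributes at most $c^{(\beta k - t)/\alpha}$ to the cost. All remaining operations (sampling $X$, forming $X \cup Y$) can be implemented in $n^{\OO(1)}$ time, so a single execution of $\sample$ has cost $c^{(\beta k - t)/\alpha} \cdot n^{\OO(1)}$.

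Next I would account for a single iteration $k$ of the for-loop in Line~\ref{amls:loop}. Let $(\alpha^*, c^*)$ and $t^*$ be the minimizers chosen in Line~\ref{amls:select_t}, and set $x^* \coloneqq (1-\beta/\alpha^*)k + t^*/\alpha^*$. The algorithm invokes $\sample$ at most $2 \lceil \p(n,k,t^*,x^*)^{-1}\rceil + 1$ times, so by the previous paragraph the total cost of this iteration is at most
\[n^{\OO(1)} \cdot \frac{(c^*)^{(\beta k - t^*)/\alpha^*}}{\p(n,k,t^*,x^*)} = n^{\OO(1)} \cdot \min_{(\alpha,c)\in\CL}\ \min_{t \in [M^*_{\alpha,\beta} k,\beta k]\cap \NN} \frac{\exp\!\left(\tfrac{\beta k - t}{\alpha} \ln c\right)}{\p\!\left(n,k,t,(1-\tfrac{\beta}{\alpha})k + \tfrac{t}{\alpha}\right)},\]
by the choice of $(\alpha^*, c^*, t^*)$. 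One technical point is that Line~\ref{amls:select_t} must actually find this minimizer: since $|\CL| = \OO(1)$ and $t$ ranges over at most $n+1$ integer values, and each value of $\p(n,k,t,x)$ can be evaluated in polynomial time by its defining sum~\eqref{eq:hyper}, the minimization itself adds only a polynomial factor.

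Finally, the for-loop in Line~\ref{amls:loop} is executed at most $\lceil n/\beta \rceil + 1$ times. Summing the per-iteration cost over $k \in [0, n/\beta] \cap \NN$ and bounding the sum by $(n+1)$ times its maximum term yields
\[\cost_{\amlsalgo_{\CL,\beta}}(n) \leq n^{\OO(1)} \cdot \max_{k \in [0,n/\beta]\cap\NN}\ \min_{(\alpha,c)\in \CL}\ \min_{t \in [M^*_{\alpha,\beta} k,\beta k]\cap \NN} \frac{\exp\!\left(\tfrac{\beta k - t}{\alpha} \ln c\right)}{\p\!\left(n,k,t,(1-\tfrac{\beta}{\alpha})k + \tfrac{t}{\alpha}\right)} = f_{\CL,\beta}(n) \cdot n^{\OO(1)},\]
as required. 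The only mild obstacle is checking that the ceiling in $\ell$ does not inflate the cost: since $\lceil \cdot \rceil$ only decreases $\ell$, the bound $c^\ell \leq c^{(\beta k - t)/\alpha}$ is automatic when $c \geq 1$, and this is guaranteed by the definition of a specification list.
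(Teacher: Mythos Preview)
Your proof is correct and follows essentially the same approach as the paper's: bound the cost of a single $\sample$ call by $c^{(\beta k-t)/\alpha}\cdot n^{\OO(1)}$, multiply by the number of repetitions in Line~\ref{amls:call_sample}, use the minimizing choice of $(\alpha,c,t)$ to identify the per-iteration cost with the inner minimum, and then take the maximum over $k$ and absorb the loop count into the polynomial factor. Your version is in fact slightly more careful than the paper's, since you explicitly note that the minimization in Line~\ref{amls:select_t} can be carried out in polynomial time and that the ceiling only helps.
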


\begin{proof}
 First consider Algorithm \ref{algo:intermediary}.
 In Line \ref{int:Aext}, the algorithm calls $\oracle_{\alpha,c}$ with parameter
 \[k - \left\lceil\left( 1 - \frac{\beta}{\alpha} \right)\cdot k + \frac{t}{\alpha}\right\rceil \leq k - \left( 1 - \frac{\beta}{\alpha} \right)\cdot k - \frac{t}{\alpha} = \frac{\beta k - t}{\alpha}.\]
 So this step incurs a cost of at most   $c^{\frac{\beta k - t}{\alpha}}= \exp\left( \frac{\beta k -t} {\alpha }\cdot \ln c \right)$.
 This means that Step \ref{amls:call_sample} of \Cref{algo:final} incurs a total cost of
 \[2\cdot \frac{\exp\left( \frac{\beta k -t} {\alpha }\cdot \ln c \right)} {\p\left( n,k, t, (1 - \frac{\beta}{\alpha}) \cdot k +\frac{t}{\alpha} \right)} .\]
 Since $t, \alpha,c$ are chosen to minimize this cost, we obtain that one iteration of the for-loop takes incurs a cost of 
 \[\min_{(\alpha,c) \in \CL}
 \min_{~t\in\left[M^*_{\alpha, \beta}k,\beta k\right] \cap \NN~} 2\cdot\frac{\exp\left( \frac{\beta k -t} {\alpha }\cdot \ln c \right)} {\p\left( n,k, t, (1 - \frac{\beta}{\alpha}) \cdot k +\frac{t}{\alpha} \right)}.\]
 As a result, every single iteration costs at most
 \[\max_{~k\in \left[0,\frac{n}{\beta}\right] \cap \NN~ } 
 \min_{(\alpha,c) \in \CL}
 \min_{~t\in\left[M_{\alpha, \beta}k,\beta k\right] \cap \NN~} 2\cdot \frac{\exp\left( \frac{\beta k -t} {\alpha }\cdot \ln c \right)} {\p\left( n,k, t, (1 - \frac{\beta}{\alpha}) \cdot k +\frac{t}{\alpha} \right)}.\]
 Since there are at most $n$ iterations of the for-loop, the entire algorithm has cost  $f_{\CL,\beta}(n) \cdot n^{\OO(1)}$, as desired.
\end{proof}

Now, \Cref{lem:amls_upper_bound_via_f} immediately follows from \Cref{lem:correctness,lem:runtime_f}.

\subsection{Derandomization}

Next, we prove \Cref{lem:deter_upper_bound_via_f}, i.e., we argue how to derandomize $\amlsalgo_{\CL,\beta}$.
Towards this end, the key notion is that of a set-intersection-family.

\begin{definition}
 Let $U$ be a universe of size $n$ and let $p,q,r \geq 1$ such that $n \geq p \geq r$ and $n - p + r \geq q \geq r$.
 A family $\CC \subseteq \binom{U}{q}$ is a \emph{$(n,p,q,r)$-set-intersection-family} if for every $T \in \binom{U}{p}$ there is some $X \in \CC$ such that $|T \cap X| \geq r$.
\end{definition}

The basic idea of the derandomization is, instead of repeatedly sampling a random set $X$ in Algorithm \ref{algo:intermediary}, to compute a suitable set-intersection-family $\CC$ and iterate over all its elements $X$.
Towards this, let us define
\[\kappa(n,p,q,r) \coloneqq \frac{\binom{n}{q}}{\binom{p}{r} \cdot \binom{n - p}{q - r}}.\]
The following theorem computes the desired set-intersection-family of small size.

\begin{theorem}[{\cite[Theorem 4.2]{EsmerKMNS22}}]
\label{thm:family}
 There is an algorithm that, given a set $U$ of size $n$ and numbers $p,q,r \geq 1$ such that $n \geq p \geq r$ and $n - p + r \geq q \geq r$,
 computes an $(n,p,q,r)$-set-intersection-family of size $\kappa(n,p,q,r)\cdot2^{o(n)}$ in time $\kappa(n,p,q,r)\cdot2^{o(n)}$. 
\end{theorem}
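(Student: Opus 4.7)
The plan is to prove the theorem in two stages: first establish existence of a small family by the probabilistic method, then give an efficient deterministic construction via a splitter combined with local enumeration.

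For existence, fix $T \in \binom{U}{p}$. A uniformly random $X \in \binom{U}{q}$ satisfies $|X \cap T| = r$ with probability exactly $\binom{p}{r}\binom{n-p}{q-r}/\binom{n}{q} = 1/\kappa(n,p,q,r)$, so $\Pr[|X \cap T| \geq r] \geq 1/\kappa(n,p,q,r)$. Taking $m = \lceil \kappa(n,p,q,r) \cdot (p \ln n + 1) \rceil$ independent samples, the expected number of $T \in \binom{U}{p}$ left uncovered is at most $\binom{n}{p}\bigl(1 - 1/\kappa(n,p,q,r)\bigr)^m < 1$. Hence a valid family of size $O(\kappa(n,p,q,r) \cdot n)$ exists, which already fits well inside the size budget of $\kappa(n,p,q,r) \cdot 2^{o(n)}$.

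For the deterministic construction, a naive method of conditional expectations would require $\Omega\bigl(\binom{n}{p}\bigr)$ work per step and thus cost $\kappa(n,p,q,r) \cdot 2^{\Theta(n)}$ in total, which is too slow. The plan is instead to invoke a splitter in the spirit of Naor--Schulman--Srinivasan: for a sub-exponential block size $b = o(n)$, build, in time $2^{o(n)}$, a family $\mathcal{H}$ of $2^{o(n)}$ partitions of $U$ into $n/b$ blocks of size $b$ with the property that for every $T \in \binom{U}{p}$ at least one partition distributes $T$ across the blocks with a prescribed ``balanced'' profile (roughly $pb/n$ elements per block, after rounding). For each such partition, enumerate within each block all subsets of size approximately $qb/n$, and form $X$ as the union of one local subset per block. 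A suitable choice of profile guarantees that whenever a partition in $\mathcal{H}$ is balanced on $T$, the enumeration produces some $X$ with $|X \cap T| \geq r$. The total family size and construction time are bounded by $|\mathcal{H}| \cdot \prod_{\text{blocks}} \binom{b}{qb/n}$, which by Stirling's approximation matches $\kappa(n,p,q,r)$ up to a $2^{o(n)}$ factor once $b$ is chosen so that block-wise entropy contributions aggregate to the full-universe entropy expression $\ln \kappa(n,p,q,r) = n \bigl( \HH(q/n) - (p/n)\HH(r/p) - (1-p/n)\HH((q-r)/(n-p)) \bigr) + o(n)$.

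The main obstacle is the calibration of the block size $b$, the splitter parameters, and the rounded profile so that (i) $|\mathcal{H}| = 2^{o(n)}$ and $\mathcal{H}$ is constructible in time $2^{o(n)}$; (ii) the per-block enumeration produces, when multiplied over the $n/b$ blocks and over $\mathcal{H}$, at most $\kappa(n,p,q,r) \cdot 2^{o(n)}$ total sets; and (iii) the profile is genuinely hit by some partition for every $T$. The delicate part is (ii): the near-multiplicativity of binomial coefficients under block partitions must be controlled precisely, so that integer rounding of the profile and of $b$ contributes only $2^{o(n)}$ slack. This asymptotic bookkeeping, together with importing a suitable off-the-shelf splitter, is where I expect the bulk of the technical work to lie, and it is the step that the cited proof in \cite{EsmerKMNS22} presumably carries out in detail.
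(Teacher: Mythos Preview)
The paper does not prove this theorem; it is quoted verbatim as \cite[Theorem 4.2]{EsmerKMNS22} and used as a black box in the derandomization of Algorithm~\ref{algo:final}. So there is no ``paper's own proof'' to compare against.

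Your sketch is nonetheless a fair outline of how the result is actually obtained in \cite{EsmerKMNS22} (which in turn adapts the construction from \cite{FominGLS19}): a probabilistic existence argument followed by a deterministic splitter-based construction in the Naor--Schulman--Srinivasan style, with block-wise enumeration and entropy bookkeeping to keep the overhead at $2^{o(n)}$. You correctly identify the delicate part as the calibration of block sizes and rounded profiles so that the product of per-block binomials matches $\kappa(n,p,q,r)$ up to subexponential slack. Since you explicitly defer that step to the cited reference, your proposal is really a pointer to the same proof rather than an independent one.
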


With the last theorem in hand, we are ready to prove \Cref{lem:deter_upper_bound_via_f}.
The updated deterministic algorithm is given in Algorithm \ref{algo:det}.
Observe that it receives deterministic extension oracles.

\begin{algorithm}
	\begin{algorithmic}[1]
		\Input A universe $U$ and a deterministic extension oracle $\oracle_{\alpha,c}$ for every $(\alpha,c) \in \CL$
		\State $\sol\leftarrow\emptyset$.
		\For{$k$ from $0$ to $\frac{n}{\beta}$}
		\State Find $(\alpha,c)\in \CL$ and $t\in\left[M^*_{\alpha, \beta}k,\beta k\right]\cap \NN$ which minimize
		$\left(\frac{c^{\frac{\beta k - t}{\alpha}}}{\p\left(n,k,t,\left( 1-\frac{\beta}{\alpha}\right)\cdot k+\frac{t}{\alpha}\right)}\right)$.
		\State Set $x \coloneqq (1 - \frac{\beta}{\alpha} )\cdot k + \frac{t}{\alpha}$.
		\State Find $y \in \{\lceil x\rceil,\dots,\min\{t,k\}\}$ for which $\kappa(n,k,t,y)$ is minimized.
		\State Compute a $(n,k,t,y)$-set-intersection-family $\CC$.
		\For{$X \in \CC$}
		\State $Y \gets \oracle_{\alpha,c}\left( U, X, k - \left\lceil x\right\rceil \right) $.
		\State $\sol \leftarrow \sol \cup \{X \cup Y\}$.
		\EndFor
		\EndFor
		\State {\bf Return} a minimum-sized set in $\sol$.
	\end{algorithmic}
	\caption{$\detamlsalgo_{\CL,\beta}$}\label{algo:det}
	\label{algo:derandomized}
\end{algorithm}

\begin{proof}[Proof of \Cref{lem:deter_upper_bound_via_f}]
 Let $U$ be a finite set system of size $n$ and $\F$ be a monotone set system of $U$.
 Also, for every $(\alpha,c)\in \CL$ let $\oracle_{\alpha,c}$ be a deterministic $\alpha$-extension oracle of $U$ and $\CF$.
 Let $\F$ be the implicit monotone subset family associated with $\CL$.
 Also, let $\OPT= \argmin_{S\in \CF} \abs{S}$ be a minimum size solution of the $\LSUB$ instance $U$ and $\CF$.

 Consider Algorithm \ref{algo:det}.
 First, observe that $\emptyset \neq \sol \subseteq \F$ by \Cref{def:random_extension_oracle}.
 If $|\OPT| \geq \frac{n}{\beta}$ then every set in $\sol$ is a valid $\beta$-approximation.
 So suppose $|\OPT| \leq \frac{n}{\beta}$ and consider the iteration in which $k = |\OPT|$.

 By definition of a set-intersection-family, there is some $X \in \CC$ such that $|X| = t$ and
 \[|\OPT \cap X| \geq y \geq \lceil x\rceil.\]
 Then
 \[|\OPT \setminus X| \leq k - \lceil x\rceil.\]
 Moreover, $(\OPT \setminus X) \cup X \in \F$ since $(\OPT \setminus X) \cup X = \OPT \cup X \supseteq \OPT$ as $\F$ is monotone by assumption.

 Since $\oracle_{\alpha,c}$ is an $\alpha$-extension oracle for $(U,\F)$, given the input $\left( U, X, k - \lceil x\rceil\right)$ it returns a set $Y$ such that $X\cup Y\in \CF$ and
 \[|Y| \leq \alpha \cdot \left(k - \lceil x\rceil \right) \leq \alpha \cdot \left(k - \left( 1 - \frac{\beta}{\alpha} \right)\cdot k - \frac{t}{\alpha} \right) = \beta k - t.\]
 So $|X \cup Y| \leq t + \beta k - t = \beta k$ which means that $\sol$ contains a solution set of size at $\beta \cdot |\OPT|$ as desired.

 It remains to analyse the cost of Algorithm \ref{algo:det}.
 Suppose $k \in \{0,\dots,\lfloor\frac{n}{\beta}\rfloor\}$.
 The algorithm computes a number $y \in \{\lceil x,\dots,\min\{t,k\}\rceil\}$ for which $\kappa(n,k,t,y)$ is minimized, i.e., $1/\kappa(n,k,t,y)$ is maximized.
 By \Cref{thm:family} we get that
 \[|\CC| = \kappa(n,k,t,y)\cdot 2^{o(n)} \leq n \cdot \frac{1}{\p\left(n,k,t,x\right)} \cdot 2^{o(n)}.\]
 Also note that the family $\CC$ can be computed within the same time bound.

 It follows that the execution of the inner for-loop requires cost
 \[\frac{\exp\left( \frac{\beta k -t} {\alpha }\cdot \ln c \right)} {\p\left( n,k, t, (1 - \frac{\beta}{\alpha}) \cdot k +\frac{t}{\alpha} \right)} \cdot 2^{o(n)}.\]
 Since $t, \alpha,c$ are chosen to minimize this cost, we obtain that one iteration of the outer for-loop incurs a cost of
 \[\min_{(\alpha,c) \in \CL}
 \min_{~t\in\left[M^*_{\alpha, \beta}k,\beta k\right] \cap \NN~} \frac{\exp\left( \frac{\beta k -t} {\alpha }\cdot \ln c \right)} {\p\left( n,k, t, (1 - \frac{\beta}{\alpha}) \cdot k +\frac{t}{\alpha} \right)} \cdot 2^{o(n)}.\]
 Note that all other steps before the computation of the set-intersection family can be done using polynomially many computation steps.
 As a result, every single iteration costs at most
 \[\max_{~k\in \left[0,\frac{n}{\beta}\right] \cap \NN~ }
 \min_{(\alpha,c) \in \CL}
 \min_{~t\in\left[M_{\alpha, \beta}k,\beta k\right] \cap \NN~} \frac{\exp\left( \frac{\beta k -t} {\alpha }\cdot \ln c \right)} {\p\left( n,k, t, (1 - \frac{\beta}{\alpha}) \cdot k +\frac{t}{\alpha} \right)} \cdot 2^{o(n)}.\]
 Since there are at most $n$ iterations of the outer for-loop, the entire algorithm has cost $f_{\CL,\beta}(n) \cdot 2^{o(n)}$ as desired.
\end{proof}

\section{Lower Bounds}
\label{sec:lower_bounds}
In this section we prove \Cref{thm:minimization_lower_bound,lem:exact_lower_bound_via_f}.
We also argue how to derive \Cref{lem:exact_lower_bound} from \Cref{lem:exact_lower_bound_via_f}.

We begin with the proof of \Cref{lem:exact_lower_bound_via_f} as it is technically simpler.
We actually prove the following slightly stronger statement.
Recall the definition of $f_{\CL,\beta}(n)$ \eqref{eq:fdef_intro}.

\begin{lemma}
	\label{lem:exact_lower_bound_aux}
	Let $c>1$ and let $\A$ be an algorithm for $\cDEC$.
	Then $\cost_\A(n) \geq \frac{1}{2} \cdot f_{\{(1,c)\},1}(n)$ for every $n \geq 1$.
\end{lemma}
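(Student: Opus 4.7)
The plan is to construct adversarial instances and run a coupling-plus-counting argument in the spirit of \Cref{thm:minimization_lower_bound}. I would fix an integer $k^* \in \{0, 1, \ldots, n\}$ attaining the outer maximum in $f^* := f_{\{(1,c)\},1}(n)$, so that by the identity $\p(n, k^*, t, t) = \binom{n-t}{k^*-t}/\binom{n}{k^*}$ immediate from \eqref{eq:hyper},
\begin{equation*}
c^{k^*-t} \;\geq\; f^* \cdot \binom{n-t}{k^*-t}\big/\binom{n}{k^*} \qquad \text{for every } t \in [0, k^*] \cap \NN.
\end{equation*}
Working over the universe $U = [n]$, I would then consider the ``empty'' instance $I_\emptyset$ with $\CF = \emptyset$ equipped with the all-$\no$ deterministic oracle, together with an instance $I_R$ for each $R \in \binom{U}{k^*}$ having $\CF = \{R\}$ and the deterministic oracle $\oracle_R$ that returns $\yes$ on $(X, \ell)$ iff $X \subseteq R$ and $|X| + \ell \geq k^*$. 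Both are valid exact extension oracles in the sense of the paragraph preceding \Cref{lem:exact_lower_bound_via_f}.

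The heart of the argument is a coupling: for any fixed random seed $\rho$ of $\A$, the query sequence $Q_0(\rho)$ produced on $I_\emptyset$ is determined entirely by $\rho$, because $\oracle_\emptyset$ always answers $\no$. Running $\A$ on $I_R$ with the same $\rho$ yields an execution that coincides with the one on $I_\emptyset$ up to and including the first query $(X, \ell) \in Q_0(\rho)$ with $X \subseteq R$ and $|X| + \ell \geq k^*$; if no such query exists in $Q_0(\rho)$, the two executions, and thus $\A$'s outputs, are literally identical. Consequently, the $1/2$-correctness of $\A$ on $I_\emptyset$ and on each $I_R$, together with the implication ``identical execution $\Rightarrow$ identical output'', forces $Q_0(\rho)$ to contain a \emph{distinguishing query} for $R$ with sufficient probability over $\rho$, for each $R$.

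The counting step then converts this probabilistic requirement into a cost bound. Swapping the sum over $R \in \binom{U}{k^*}$ with the sum over queries $(X, \ell) \in Q_0(\rho)$ satisfying $|X| \leq k^*$ and $|X| + \ell \geq k^*$, and using both the elementary identity $|\{R \in \binom{U}{k^*} : X \subseteq R\}| = \binom{n-|X|}{k^*-|X|}$ and the pointwise inequality $c^\ell \geq c^{k^*-|X|} \geq f^*\binom{n-|X|}{k^*-|X|}/\binom{n}{k^*}$ from the displayed inequality above, I would get
\begin{equation*}
\sum_{R \in \binom{U}{k^*}} \mathbf{1}\!\bigl[Q_0(\rho) \text{ contains a distinguishing query for } R\bigr] \;\leq\; \frac{\binom{n}{k^*}}{f^*}\cdot \sum_{(X, \ell) \in Q_0(\rho)} c^{\ell}.
\end{equation*}
Averaging over $\rho$, the factor $\binom{n}{k^*}$ cancels against the sum over $R$ once the probabilistic lower bound from the coupling is substituted, yielding $\mathbb{E}_\rho[\text{cost of the execution on } I_\emptyset] \geq f^*/2$. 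Since $\cost_\A(n)$ upper bounds the cost of every execution, the lemma follows.

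The main obstacle is extracting the precise constant $1/2$ from the $1/2$-correctness assumption. Naively intersecting the seed events $\{\A(I_\emptyset, \rho) = \no\}$ and $\{\A(I_R, \rho) = \yes\}$ may give an empty set when both have probability exactly $1/2$, so the argument cannot proceed seed-by-seed; instead one must reason at the level of probabilities through the bound $\Pr_\rho[\text{executions differ on } I_\emptyset, I_R] \geq \Pr_\rho[\A(I_R) = \yes] - \Pr_\rho[\A(I_\emptyset) = \yes]$ that comes from the coupling, and then aggregate across all $R$ simultaneously via the double-counting inequality above so that the cancellation against $\binom{n}{k^*}$ delivers the constant $1/2$.
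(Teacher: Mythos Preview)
Your approach is the paper's: the same adversarial pair $\CF=\emptyset$ versus $\CF=\{R\}$, the same coupling through the all-$\no$ transcript $Q_0(\rho)$, and the same double-counting against $c^{k^*-t}\ge f^*\binom{n-t}{k^*-t}/\binom{n}{k^*}$ (the paper packages this via a ``coverage'' set and passes to the seed $b^*$ maximizing it, while you average over seeds---either works). The worry in your obstacle paragraph is misplaced in the paper's model: the proof takes the algorithm to be \emph{always} correct on the $\no$ instance (see the line ``it has to return $\no$, because otherwise it violates the correctness requirement'' in the proof), so $\Pr_\rho[\A(I_\emptyset,\rho)=\yes]=0$ and your coupling bound already gives $\Pr_\rho[\text{distinguishing for }R]\ge 1/2$ for every $R$; if instead you genuinely allow two-sided $1/2$-error on $I_\emptyset$, then the difference $\Pr[\A(I_R)=\yes]-\Pr[\A(I_\emptyset)=\yes]$ can be zero and aggregating over $R$ does not recover the constant.
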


\begin{proof}
	Let $n \in \NN$.
	We assume $n$ is fixed throughout this proof.
	By \eqref{eq:fdef_intro} it holds that
	\begin{equation}
		\label{eq:f_simplified_emls}
		f_{\{(1,c)\},1}(n) = \max_{~k\in \left[0,n\right] \cap \NN~ }
		\min_{~t\in\left[0,k\right] \cap \NN~}
		\frac{\exp\left(  (k -t)\cdot \ln c \right)} {\p\left( n,k, t, t \right)} =\max_{~k\in \left[0,n\right] \cap \NN~ }
		\min_{~t\in\left[0,k\right] \cap \NN~} c^{k-t} \cdot \frac{ \binom{n}{k}}{\binom{n-t}{k-t}}.
	\end{equation}
	The last equality uses $\p\left(n,k,t,t\right) = \frac{\binom{n-t}{k-t}}{\binom{n}{k}}$ by \eqref{eq:hyper}.

	For all $0 \leq k \leq n$ and $0 \leq t \leq k$ we define
	$$
	\begin{aligned}
		&G(k,t) &\coloneqq& ~\frac{c^{k-t} \cdot \binom{n}{k}}{ \binom{n-t}{k-t}},\\
		&t^*(k) &\coloneqq& ~\argmin_{t\in [0,k]\cap \mathbb{N}} G(k,t),\\
		&k^*    &\coloneqq& ~\argmax_{k\in [0, n]\cap \mathbb{N}} G(k,t^*(k)).
	\end{aligned}
	$$
	By \eqref{eq:f_simplified_emls} it follows that $f_{\{(1,c)\},1}(n) = G(k^*,t^*(k^*))$.
	Furthermore, it holds that
	\begin{equation}
		\label{eq:exact_lb_tstar}
		t^*(k) = \argmin_{t\in [0,k]\cap \mathbb{N}} \frac{c^{k-t} \cdot \binom{n}{k}}{ \binom{n-t}{k-t}} = \argmin_{t\in [0,k]\cap \mathbb{N}} \frac{c^{k-t} }{ \binom{n-t}{k-t}}.
	\end{equation}

	We set $U \coloneqq [n]$.
	Our lower bound is based on the difficulty that algorithms have to distinguish between the set-systems $\F=\emptyset$ and $\F=\{R\}$ where $R$ is a uniformly sampled random subset of $U$ of size $k^*$.

	For a set system $\F$ of $U$ we define an oracle $\oracle_{\F}$ by setting
	$$
		\oracle_{\F}(X,\ell) \coloneqq
		\begin{cases}
			\yes & \textnormal {if there is an $\ell$-extension of $X$ with respect to $U$ and $\F$,}\\
			\no  & \textnormal {otherwise.}
		\end{cases}
	$$
	Clearly, $\oracle_{\F}$ is a exact extension oracle for $\F$.
	Note that $\oracle_{\emptyset}$ always returns $\no$.

	We assume the algorithm $\A$ gets a string of bits $b \in \{0,1\}^{q(n)}$ as its source of randomness, where $q$ is an arbitrary function.
	This means $\A$ is deterministic given the input set $U$, the oracle $\oracle$ and the random bits $b$.
	Let us denote by $\A(U,\oracle,b) \in \{\yes,\no\}$ the output of the algorithm $\A$.

	Let $Q(b) \subseteq U \times \mathbb{N}$ be the set of oracle queries the algorithm $\A$ makes on input $U$ with oracle $\oracle_{\emptyset}$ and random bits~$b$.
	Equivalently, $Q(b)$ is the set of queries $\A$ makes given the universe $U$ in case the oracle always returns $\no$ for an answer.
	Observe that, in general, if all the responses to the queries the algorithm makes are $\no$, then it has to return $\no$, because otherwise it violates the correctness requirement in case its given the oracle  $\oracle_{\emptyset}$ for the set system $\emptyset$.

	We define the \emph{coverage} of an oracle query $(X,\ell)$ by
	$$
	\begin{aligned}
		\coverage(X,\ell) ~&\coloneqq~ \Big\{S\subseteq U \Bigmid |S|=k^*,~X\subseteq S,~|S\setminus X|\leq \ell\Big\}\\
		&=~ \Big\{ S\subseteq U \Bigmid |S|=k^* \textnormal{ and $X$ has an $\ell$-extension w.r.t.\ the set system $\{S\}$}\Big\}\\
		&=~\Big\{ S\subseteq U \Bigmid \oracle_{\{S\}}(X,\ell)=\yes\Big\}.
	\end{aligned}
	$$
	Given a set $W\subseteq 2^{U}\times \mathbb{N}$ of queries we define $\coverage(W) \coloneqq \bigcup_{(X,\ell)\in W} \coverage(X,\ell)$.

	\begin{claim}
		\label{claim:exact_lb_cost_to_coverage}
		Let $b \in \{0,1\}^{q(n)}$. Then
		\[\cost_{\A} (n)\geq \frac{\abs{ \coverage(Q(b))} }{\binom{n}{k^*}} \cdot f_{\{(1,c)\},1}(n).\]
	\end{claim}

	\begin{claimproof}
		Consider the execution of $\A$ on input $U$ using the oracle $\oracle_{\emptyset}$ and random bits $b$.
		By definition, the cost of the execution is $\sum_{(X,\ell)\in Q(b)} c^\ell$ and thus, $\cost_\A(n) \geq \sum_{(X,\ell)\in Q(b)} c^\ell$.
		Therefore,
		$$
		\begin{aligned}
			\abs{ \coverage(Q(b))}&\leq
			\sum_{(X,\ell)\in Q(b)} \abs{\coverage(X,\ell)} \\
			&= \sum_{(X,\ell)\in Q(b)\textnormal{ s.t.\ } k^*-\ell\leq |X|\leq k^* }  \binom{n-|X|}{ k^*-|X|}\\
			&= \sum_{(X,\ell)\in Q(b)\textnormal{ s.t.\ } k^*-\ell\leq |X|\leq k^* }  c^{k^*-|X|} \cdot \frac{ \binom{n-|X|}{ k^*-|X|}}{ c^{k^*-|X|} }\\
			&\leq \sum_{(X,\ell)\in Q(b)\textnormal{ s.t.\ } k^*-\ell\leq |X|\leq k^* }  c^{k^*-|X|} \cdot \frac{ \binom{n-t^*(k^*)}{ k^*-t^*(k^*)}}{ c^{k^*-t^*(k^*)} }\\
			&= \frac{\binom{n}{k^*}}{G(k^*,t^*(k^*))} \cdot
			\sum_{(X,\ell)\in Q(b)\textnormal{ s.t.\ } k^*-\ell\leq |X|\leq k^*  }   c^{k^*-|X|} \\
			&\leq \frac{\binom{n}{k^*}}{G(k^*,t^*(k^*))} \cdot
			\sum_{(X,\ell)\in Q(b) }  c^{\ell}\\
			&\leq \frac{\binom{n}{k^*}}{G(k^*,t^*(k^*))} \cdot \cost_{\A} (n).
		\end{aligned}
		$$
		The second inequality follows from \eqref{eq:exact_lb_tstar}.
		Since $f_{\{(1,c)\},1}(n) = G(k^*, t^*(k^*))$, the assertion of the claim follows.
	\end{claimproof}

	Now let $b^* \in \{0,1\}^{q(n)}$ be the bit-string for which $\abs{\coverage(Q(b^*))}$ is maximal.
	In light of the last claim, in order to lower bound the cost of $\A$, it suffices to lower bound the cardinality of $\coverage(Q(b^*))$.
	We use the correctness properties of $\A$ to attain such a lower bound.

	\begin{claim}
		\label{claim:exact_lb_cover}
		It holds that
		\[\abs{\coverage(Q(b^*))} \geq \frac{1}{2} \cdot \binom{n}{k^*}.\]
	\end{claim}

	\begin{claimproof}
		Consider the execution of $\A$ on input $U$ using the oracle $\oracle_{\{S\}}$, where $S\subseteq U$ such that $|S|=k^*$, and a bit-string~$b$.
		If $S \not\in \coverage(Q(b))$ then the set of oracle queries the algorithm makes is exactly $Q(b)$ and all the queries return $\no$.
		So the algorithm also has to return $\no$.
		It follows that
		\[\A(U,\oracle_{\{S\}},b) = \yes \implies S \in \coverage(Q(b)).\]

		We define two independent random variable.
		Let $R \subseteq U$ be a uniformly random subset of $U$ of size $k^*$.
		Also, we define $r\in \{0,1\}^{q(n)}$ to be a uniformly random string of bits of length $q(n)$.
		Then
		\begin{equation}\label{eq:lower_exact_first}
		\begin{aligned}
			\Pr\left(\A\left(U,\oracle_{\{S\}},r\right) = \yes\right) &\leq~\Pr(R \in \coverage(Q(r))) \\
			&=~ \sum_{b\in \{0,1\}^{q(n)}} \Pr(r=b) \cdot \Pr(R \in \coverage(Q(r)) \mid r=b) \\
			&=~ \sum_{b\in \{0,1\}^{q(n)}} \Pr(r=b) \cdot \Pr(R \in \coverage(Q(b))) \\
			&=~ \sum_{b\in \{0,1\}^{q(n)}} \Pr(r=b) \cdot \frac{\abs{\coverage(Q(b))}}{\binom{n}{k^*}} \\
			&\leq~ \sum_{b\in \{0,1\}^{q(n)}} \Pr(r=b) \cdot \frac{\abs{\coverage(Q(b^*))}}{\binom{n}{k^*}}\\
			&=~ \frac{\abs{\coverage(Q(b^*))}}{\binom{n}{k^*}}.
		\end{aligned}
		\end{equation}
		The second equality holds since $r$ is independent of $R$.
		Furthermore, as $\A$ returns $\yes$ with probability at least $\frac{1}{2}$ for $\yes$ instances,
		\begin{equation}\label{eq:lower_exact_second}
		\begin{aligned}
			\Pr(\A\left(U,\oracle_{\{R\}},r\right) = \yes)
			&= \sum_{S\subseteq U} \Pr(R=S) \cdot \Pr(\A\left(U,\oracle_{\{R\}},r\right)) = \yes \mid R=S) \\
			&\geq \sum_{S\subseteq U} \Pr(S=R) \cdot \frac{1}{2} = \frac{1}{2}.
		\end{aligned}
		\end{equation}
		The assertion of the claim now follows by combining \eqref{eq:lower_exact_first} and \eqref{eq:lower_exact_second}.
	\end{claimproof}

	Combining \Cref{claim:exact_lb_cost_to_coverage,claim:exact_lb_cover} we get $\cost_{\A} (n) \geq \frac{1}{2} \cdot f_{\{(1,c)\},1}(n)$, which completes the proof.
\end{proof}

Note that \Cref{lem:exact_lower_bound_via_f} immediately follows from \Cref{lem:exact_lower_bound_aux}.
Next, we prove \Cref{lem:exact_lower_bound} using \Cref{lem:exact_lower_bound_via_f}.

\begin{proof}[Proof of \Cref{lem:exact_lower_bound}]
	Let $c>1$ and let $\A$ be a randomized algorithm for $\cDEC$.
	By \Cref{lem:exact_lower_bound_via_f} we have
	\begin{equation}
		\label{eq:exact_lower_bound_to_amlsbound}
		\cost_\A(n) \geq n^{-\OO(1)} \cdot f_{\{(1,c)\},1}(n) \geq n^{-\OO(1)} \cdot \left(\amlsbound(\{(1,c)\},1)\right)^n,
	\end{equation}
	where second equality follows from \Cref{lem:f_to_amls}.

	Observe that $2-\frac{1}{c} \in (1,{c+1})$ and
	\[\D{1}{\frac{(2-\frac{1}{c}) - 1}{c-1} } =  \D{1}{\frac{1-\frac{1}{c}}{c-1}} =\D{1}{\frac{1}{c}} = \ln c.\]
	Therefore, by \Cref{lem:coincide_with_esa}, it holds that $\amlsbound(\{(1,c)\},1) = 2-\frac{1}{c}$.
	In combination with \eqref{eq:exact_lower_bound_to_amlsbound} it follows that $\cost_\A(n) \geq n^{-\OO(1)} \cdot \left(2-\frac{1}{c}\right)^n$.
\end{proof} 

The proof of \Cref{thm:minimization_lower_bound} follows the same principles as the proof of \Cref{lem:exact_lower_bound_via_f}.
It defines a \emph{coverage} for each query, shows the cost of the algorithm is at least the cardinality of the coverage of all queries, and then provides a lower bound on the cardinality of the coverage.
The proof is slightly more complicated than the proof of \Cref{lem:exact_lower_bound_via_f} due to the involvement of multiple oracles and since the oracles only provide approximations.
As before, we actually prove a slightly stronger statement.

\begin{lemma}
	\label{lem:minimization_lower_bound_aux}
	Let $\beta \geq 1$.
	Also let $\CL$ be a specification list and $\A$ be a randomized $\beta$-approximation algorithm for $\LSUB$.
	Then
	\[\cost_\A(n+1) \geq \frac{f_{\CL, \beta}(n)}{2 \cdot (n+1) \cdot \max_{(\alpha,c)\in \CL } c}\]
	for every $n \geq 1$.
\end{lemma}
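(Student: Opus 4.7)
The plan is to mirror the argument of \Cref{lem:exact_lower_bound_aux}, adapting it to the approximate multi-oracle setting by hiding a random size-$k^*$ solution inside a universe whose only other feasible sets are too large to serve as $\beta$-approximations. Let $k^*$ attain the outer maximum in \eqref{eq:fdef_intro}; in particular $k^* \in [0, n/\beta] \cap \NN$ and $\lceil \beta k^* \rceil + 1 \leq n+1$. I would take the universe $U = [n+1]$, the baseline family $\CF_{\emptyset} \coloneqq \{S \subseteq U : |S| \geq \lceil \beta k^*\rceil + 1\}$, and for each $R \in \binom{U}{k^*}$ the augmented monotone family $\CF_R \coloneqq \CF_{\emptyset} \cup \{S \subseteq U : R \subseteq S\}$, so that $\min_{S \in \CF_R}|S| = k^*$ and every $S \in \CF_R$ of size at most $\beta k^*$ must contain $R$. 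For each $(\alpha, c) \in \CL$ I construct a deterministic $\alpha$-extension oracle $\oracle_{\alpha, c, R}$ that answers a query $(X, \ell)$ with $Y = R \setminus X$ whenever the \emph{revealing condition} $|X| + \alpha \ell \leq \beta k^*$ and $|R \setminus X| \leq \ell$ holds, and otherwise with a canonical $R$-independent set $Y_0$ extending $X$ inside $\CF_{\emptyset}$; a short case analysis verifies \Cref{def:random_extension_oracle}.

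The structural heart of the argument is that, to output a $\beta$-approximation for $\CF_R$, $\A$ must issue at least one revealing query. Indeed, $\oracle_{\alpha, c, R}$ and $\oracle_{\alpha, c, \emptyset}$ agree on every non-revealing query, so an induction on the queries shows that if $\A$'s execution on $\CF_R$ with random bits $r$ makes no revealing query then its trace (and hence its output) is identical to the execution on $\CF_{\emptyset}$, so the output lies in $\CF_{\emptyset}$ and has size $\geq \lceil\beta k^*\rceil + 1 > \beta k^*$. Setting
\[
\coverage_\alpha(X, \ell) \coloneqq \bigl\{R' \in \tbinom{U}{k^*} \colon |X| + \alpha \ell \leq \beta k^* \text{ and } |R' \setminus X| \leq \ell\bigr\},
\]
letting $Q_{\alpha, c}(r)$ denote the queries $\A$ issues to the $(\alpha,c)$-oracle when run on the $\CF_\emptyset$-instance with bits $r$, and $\Cov(r) \coloneqq \bigcup_{(\alpha,c) \in \CL}\bigcup_{(X,\ell) \in Q_{\alpha,c}(r)} \coverage_\alpha(X,\ell)$, the observation above yields the implication ``$\A$ succeeds on $\CF_R$ with bits $r$'' $\Longrightarrow$ $R \in \Cov(r)$. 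Picking $R$ uniform in $\binom{U}{k^*}$ independently of $r$ and averaging the $\beta$-approximation guarantee exactly as in the proof of \Cref{lem:exact_lower_bound_aux} then produces random bits $r^*$ with $|\Cov(r^*)| \geq \binom{n+1}{k^*}/2$.

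The last step is to convert this into a lower bound on $\cost_\A(n+1) \geq \sum_{(\alpha,c) \in \CL} \sum_{(X, \ell) \in Q_{\alpha,c}(r^*)} c^\ell$. For any contributing query $(X, \ell)$ with $t \coloneqq |X|$ one has $|\coverage_\alpha(X, \ell)| = \binom{n+1}{k^*} \cdot \p(n+1, k^*, t, k^* - \ell)$. Combining monotonicity of $\p$ (decreasing in the first argument, increasing in the third) with the auxiliary choice $t' \coloneqq \max\bigl(\lceil M^*_{\alpha, \beta} k^* \rceil,\, \lfloor \beta k^* - \alpha \ell \rfloor\bigr) \in [M^*_{\alpha,\beta}k^*, \beta k^*] \cap \NN$, with the defining estimate of $f_{\CL,\beta}(n)$ at $(k^*, \alpha, c, t')$, and with $(\beta k^* - t')/\alpha - \ell \leq 1$ in the non-clamped regime, I would chain
\[
\p(n+1, k^*, t, k^* - \ell) \;\leq\; \frac{c^{(\beta k^* - t')/\alpha}}{f_{\CL,\beta}(n)} \;\leq\; \frac{\bigl(\max_{(\alpha',c') \in \CL} c'\bigr) \cdot c^\ell}{f_{\CL,\beta}(n)}.
\]
Summing over all queries and invoking $|\Cov(r^*)| \geq \binom{n+1}{k^*}/2$ then rearranges to the claimed bound. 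The hard part will be executing this chain of $\p$-inequalities uniformly in $(X, \ell)$: the smooth regime $t \leq \lfloor \beta k^* - \alpha \ell \rfloor$ is handled directly by $f_{\CL,\beta}(n)$, but the corner case $\alpha > \beta$ with $\lfloor \beta k^* - \alpha \ell \rfloor < \lceil M^*_{\alpha,\beta} k^* \rceil$ requires a separate, coarser estimate. It is precisely this edge-case rounding, together with the small mismatch between $\p(n+1, \cdot)$ and $\p(n, \cdot)$ and the ratio $\binom{n+1}{k^*}/\binom{n}{k^*} \leq n + 1$, that forces the extra $(n+1)$ factor in the denominator of the final bound.
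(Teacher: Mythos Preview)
Your approach is essentially the same as the paper's: the same hidden-random-solution construction on $[n+1]$, the same per-query coverage sets, the same indistinguishability argument, and the same averaging to get $|\Cov(r^*)|\ge\tfrac12\binom{n+1}{k^*}$. Two remarks that would streamline your write-up.

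First, the ``corner case'' you flag---clamping $t'$ up to $\lceil M^*_{\alpha,\beta}k^*\rceil$ when $\alpha>\beta$---never actually occurs. Whenever $\coverage_\alpha(X,\ell)\neq\emptyset$, the two revealing conditions together already force $\lfloor\beta k^*-\alpha\ell\rfloor\ge M^*_{\alpha,\beta}k^*$: from $|R'\setminus X|\le\ell$ you get $|X|\ge k^*-\ell$, and combining this with $|X|\le\beta k^*-\alpha\ell$ via the identity $|X|=\tfrac{\alpha}{\alpha-1}|X|-\tfrac{1}{\alpha-1}|X|$ gives $|X|\ge\tfrac{\alpha-\beta}{\alpha-1}k^*$. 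This is the paper's \Cref{claim:coverage_basic}, and it means $t'=\lfloor\beta k^*-\alpha\ell\rfloor$ is always a legal choice in the inner minimum of~\eqref{eq:fdef_intro}; no separate coarse estimate is needed.

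Second, your monotonicity route (``$\p$ decreasing in $n$'') is in fact cleaner than what the paper does. The paper instead expands $|\coverage_j(X,\ell)|$ as a sum of products of binomials over $[n+1]$ and uses $\binom{m+1}{r}\le(m+1)\binom{m}{r}$ to pass to $[n]$, which is where its $(n+1)$ factor enters. With your approach the $\binom{n+1}{k^*}$ appearing in both the coverage upper bound and the $|\Cov(r^*)|$ lower bound simply cancels, and together with $c^{(\beta k^*-t')/\alpha}\le c^{\ell+1/\alpha}\le(\max_{(\alpha',c')}c')\,c^\ell$ you would actually obtain the stated inequality without the $(n+1)$---so the sources you list for that factor are not really needed.
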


\begin{proof}
	Let $n \geq 1$.
	We assume $n$ is fixed for the remainder of the proof.
	Also suppose that $\CL = \{(\alpha_1,c_1),\ldots,(\alpha_s,c_s)\}$ and define $U \coloneqq [n+1]$.
	For every $j\in [s]$ we define the functions
	\begin{alignat}{3}
		&G_j(k,t) &&\coloneqq ~\frac{\left(c_j\right)^{\frac{\beta k - t}{\alpha_j}}}
		{\p\left( n,k,t,k - \frac{\beta k  - t}{\alpha_j} \right)} \label{eq:lb_Gj_def}, \\
		\nonumber\\
		&t_j^*(k) &&\coloneqq \argmin_{t\in \left[M^*_{\alpha_j, \beta} \cdot k ,  ~\beta \cdot k\right] \cap \mathbb{N}} G_j(k,t),
		\label{eq:lb_tj_def}
	\end{alignat}
	where $M^*_{\alpha_j,\beta}$ is as defined in \eqref{eq:Mstar_def}.
	Furthermore, we define
	$$j^*(k) = \argmin_{j\in [s]} G_j(k,t^*_j(k))$$
	and
	$$k^* = \argmax_{k\in [0, \frac{n}{\beta}]\cap \mathbb{N}} G_{j^*(k)}(k,t^*_{j^*(k)}(k)).$$
	With a slight abuse of notation we write $j^* \coloneqq j^*(k^*)$ and $t^* \coloneqq t^*_{j^*}(k^*)$.
	By \eqref{eq:fdef_intro} we have
	$$f_{\CL,\beta}(n) =G_{j^*}(k^*,t^*).$$

	We define the set system $\Fadv \coloneqq \{ S\subseteq U \mid |S| \geq \left\lfloor \beta \cdot k^{*} \right\rfloor + 1\}$.
	Since $\beta \cdot k^* \leq n$ and $U = [n+1]$ it holds that $\Fadv \neq \emptyset$.
	For every $T \subseteq U$ we define the set system  $\F_T = \{ S\subseteq U \mid T \subseteq S \} \cup \Fadv$, i.e., $\F_T$ is the set system containing all supersets of $T$ and all sets of cardinality at least $\left\lfloor \beta \cdot k^{*} \right\rfloor + 1$.
	Clearly, $\Fadv$ and $\F_{T}$ are monotone set systems of $U$ for every $T \subseteq U$.
	Our lower bound is based on the fact that $\A$ requires queries of high total cost to distinguish between $\Fadv$ and $\F_{T}$ when $T$ is a random set.

	For every $X \subseteq U$ we fix an arbitrary set $Q_X \subseteq U\setminus X$ such that $\abs{Q_X}= \max\left\{\floor{\beta \cdot k^*} +1 - \abs{X},~0\right\}$.
	Consequently, it holds that $X\cup Q_X \in \Fadv$.
	We define an extension oracle $\oracle_{\adv}$ via $\oracle_{\adv}(X,\ell) \coloneqq Q_X$ for all $X \subseteq U$ and $\ell \geq 0$.
	\begin{claim}
		For every $j\in [s]$ it holds that $\oracle_{\adv}$ is an $\alpha_j$-extension oracle for $U$ and $\Fadv$.
	\end{claim}
	\begin{claimproof}
		Let $X\subseteq U$ and $\ell\in \mathbb{N}$.
		It holds that $X\cup \oracle_{\adv}(X) =X\cup Q_X \in \Fadv$.
		Furthermore, if there is an $\ell$-extension $S$ of $X$, then $\abs{X}\geq \floor{\beta k^*}+1-\ell$.
		Hence, $\abs{Q_X} \leq \floor{\beta \cdot k^*} +1-\abs{X} \leq \ell \leq \alpha_j \cdot\ell$.
		That is, $Q_X$ is an $(\alpha_j\cdot \ell)$ extension of $X$.
		So $\oracle_{\adv}$ is an $\alpha_j$-extension oracle.
	\end{claimproof}

	We define the {\em coverage} of an oracle query $(X,\ell)$ to the $j$-th oracle (i.e., to the $\alpha_j$-extension oracle) by
	\begin{equation}
		\label{eq:cover_def}
		\coverage_j(X,\ell) \coloneqq
		\begin{cases}
			\emptyset & |X|+\alpha_j \cdot \ell \geq \left\lfloor \beta \cdot k^{*} \right\rfloor + 1,\\
			\{S\subseteq U \mid |S|=k^*, |S\setminus X| \leq \ell\} &\textnormal{otherwise.}
		\end{cases}
	\end{equation}
	Intuitively speaking, $\coverage_j(X,\ell)$ contains all subsets $\subseteq U$ of cardinality $k^*$ such that a deterministic $\alpha_j$-extension oracle for $\F_S$ cannot return $\oracle_\adv(X,\ell)$ to the query $(X,\ell)$.
	Given a set $W\subseteq 2^{U}\times \mathbb{N}$ of queries we define $\coverage_j(W) \coloneqq \bigcup_{(X,\ell)\in W} \coverage_j(X,\ell)$.

	\begin{claim}
		\label{claim:coverage_basic}
		Let $j\in [s]$ and suppose $(X,\ell)\in 2^U \times \NN$ such that $\coverage_j(X,\ell) \neq \emptyset$.
		Then
		$$n-\ell \geq \floor{\beta\cdot k^* - \alpha_j \cdot \ell} \geq \abs{X} \geq M^*_{\alpha_j,\beta} \cdot k^*.$$
	\end{claim}
	\begin{claimproof}
		Since  $\coverage_j(X,\ell) \neq \emptyset$ it holds that $|X|+\alpha_j \cdot \ell <  \floor{ \beta \cdot k^{*}}+ 1$.
		Therefore
		$$|X| < \floor{ \beta \cdot k^{*}}+ 1 - \alpha_j \cdot \ell \leq \beta \cdot k^* -\alpha_j\cdot \ell \leq n-\ell,$$
		which also implies $|X| \leq \floor{ \beta \cdot k^* -\alpha_j\cdot \ell}$.
		If $\alpha_j \leq \beta$ then $M^*_{\alpha_j,\beta} = 0$, and the trivial inequality $\abs{X} \geq 0$ completes the proof of the claim.
		So we assume that $\alpha_j > \beta \geq 1$ for the remainder of the proof.

		Using $\coverage_j(X,\ell) \neq \emptyset$ once more, by \eqref{eq:cover_def} there is some $S\subseteq U$ such that $\abs{S}=k^*$ and $\abs{S\setminus X} \leq \ell$.
		So $\abs{X} \geq k^*-\ell$ and we get that
		\begin{align*}
			\floor{ \beta \cdot k^* -\alpha_j\cdot \ell} &\geq~ \abs{X}\\
			&\geq~ \frac{\alpha_j}{\alpha_j -1}\cdot \abs{X} - \frac{1}{\alpha_j-1}\cdot \abs{X} \\
			&\geq~  \frac{\alpha_j}{\alpha_j -1} \cdot \left( k^* -  \ell\right)  - \frac{1}{\alpha_j-1} \cdot \left( \beta \cdot k^* -\alpha_j\cdot \ell\right) \\
			&=~ \frac{\alpha_j-\beta}{\alpha_j-1}\cdot k^*\\
			&=~ M_{\alpha_j,\beta}\cdot k^*.
		\end{align*}
	\end{claimproof}

	We assume the input for $\A$ consists of the set $U$, a collection of $s$ extension oracles $\oracle_1,\ldots,\oracle_{s}$, where $\oracle_j$ is an $\alpha_j$-extension oracle, and a bit-string $b\in \{0,1\}^{q(n+1)}$ where $q$ is an arbitrary function.
	The bit-string $b$ serves as the source of randomness, and we assume the algorithm is deterministic given $U$, the oracles $\oracle_1,\ldots,\oracle_{s}$ and $b$.
	We write $\A(U,\oracle_1,\ldots, \oracle_s, b) \subseteq U$ to denote the output of $\A$ given $U$, the oracles $\oracle_1,\ldots,\oracle_{s}$ and $b$.

	For every $b\in \{0,1\}^{q(n+1)}$ let $Q_j(b)\subseteq 2^U \times \NN$ be the set of queries $\A$ makes to the $j$-th oracle given the input $U$, $\oracle_{\adv}$ as the $j$-th oracle for every $j\in [s]$, and $b$.
	\begin{claim}
		\label{claim:cost_to_coverage}
		Let $b\in \{0,1\}^{q(n+1)}$.
		Then
		$$\cost_\A(n+1) \geq f_{\CL,\beta}(n) \cdot \frac{\sum_{j\in [s]} \abs {\coverage_j(Q_j(b))}}{(n+1) \cdot \binom{n}{k^*} \cdot \max_{j \in [s]} c_j}.$$
	\end{claim}
	\begin{claimproof}
		For every $j\in [s]$ it holds that
		\begin{equation}
		\label{eq:coverage_one}
		\begin{aligned}
			\big| & \coverage_j(Q_j(b)) \big| \leq \sum_{(X,\ell) \in Q_j(b)} |\coverage_j(X,\ell)|  \\
			&= \sum_{\substack{(X,\ell) \in Q_j(b) \text{ s.t.} \\ \coverage_j(X,\ell) \neq \emptyset}}\big|  \coverage_j(X,\ell) \big|\\
			&= \sum_{\substack{(X,\ell) \in Q_j(b) \text{ s.t.} \\ \coverage_j(X,\ell) \neq \emptyset }}~\sum_{y= k^* - \ell}^{k^*} \binom{|X|}{y} \cdot \binom{n+1-|X|}{k^*-y}\\
			&\leq (n+1)\cdot \sum_{\substack{(X,\ell) \in Q_j(b) \text{ s.t.} \\ \coverage_j(X,\ell) \neq \emptyset }}~\sum_{y= k^* - \ell}^{k^*} \binom{|X|}{y} \cdot \binom{n-|X|}{k^*-y}.\\
		\end{aligned}
		\end{equation}
		The second equality follows from a simple counting argument and the definition of $\coverage_j$ in \eqref{eq:cover_def}.
		The second inequality holds since $\binom{m+1}{r}\leq (m+1)\cdot \binom{m}{r}$ if $m\geq r\geq 0$.
		Observe that $n-|X| \geq \ell \geq k^* - y$ for every $(X,\ell) \in Q_j(b)$ such that $\coverage_j(X,\ell) \neq \emptyset$, and every $k^*\leq y \leq k^*-\ell$ using \Cref{claim:coverage_basic}.

		Plugging the formula for $\p$ from~\eqref{eq:hyper} into \eqref{eq:coverage_one} we obtain
		\begin{equation}
		\label{eq:coverage_two}
		\begin{aligned}
			\big| & \coverage_j(Q_j(b)) \big| \leq (n+1)\cdot\binom{n}{k^*} \sum_{\substack{(X,\ell) \in Q_j(b) \text{ s.t.} \\ \coverage_j(X,\ell) \neq \emptyset}} \p(n, k^*,|X|, k^*-\ell)\\
			&\leq ~(n+1)\cdot \binom{n}{k^*} \sum_{\substack{(X,\ell) \in Q_j(b) \text{ s.t.} \\ \coverage_j(X,\ell) \neq \emptyset}} \p\left(n,  k^*, \floor{\beta k^{*} - \alpha_j \ell }, k^*-\ell\right) \\
			&\leq ~(n+1)\cdot \binom{n}{k^*} \sum_{\substack{(X,\ell) \in Q_j(b) \text{ s.t.} \\ \coverage_j(X,\ell) \neq \emptyset}} \p\left(n,  k^*, \floor{\beta k^{*} - \alpha_j \ell }, k^*- \frac{\beta k^* - \floor{\beta k^{*} - \alpha_j \ell }}{\alpha_j}\right) \\
			&= ~(n+1)\cdot \binom{n}{k^*} \sum_{\substack{(X,\ell) \in Q_j(b) \text{ s.t.} \\ \coverage_j(X,\ell) \neq \emptyset}} \left( c_j\right)^{\frac{\beta k^* - \floor{\beta k^{*} - \alpha_j \ell }}{\alpha_j}} \cdot \frac{ 1 }{  G_j\left(k^*, \floor{\beta k^{*} - \alpha_j \ell } \right)}.
		\end{aligned}
		\end{equation}
		The second inequality holds since $\abs{X} \geq \floor{ \beta \cdot k^* -\alpha_j\ell}$ by \Cref{claim:coverage_basic}.
		The third inequality follows from $-\ell \geq -\frac{\beta k^* - \floor{\beta k^* -\alpha_j\cdot \ell }}{\alpha_j}$.
		The last equality follows from the definition of $G_j$ in \eqref{eq:lb_Gj_def}.

		By \Cref{claim:coverage_basic} it holds that $M^*_{\alpha_j,\beta}\cdot k^*\leq \floor{\beta \cdot k^*-\alpha_j\cdot \ell} \leq \beta\cdot k^*$ for every $(X,\ell)\in 2^U\times \NN$ for which $\coverage_j(X,\ell) \neq \emptyset$.
		Hence,
		\begin{equation}
			\label{eq:Gj_to_f}
			G_j(k^*, \floor{\beta \cdot k^*-\alpha_j\cdot \ell}) ~\geq~ G_j(k^*, t^*_j(k^*)) ~\geq~ G_{j^*}(k^*, t^*) ~=~ f_{\CL,\beta}(n)
		\end{equation}
		for every $(X,\ell) \in 2^U\times \NN$ such that $\coverage_j(X,\ell) \neq \emptyset$.
		Combining \eqref{eq:coverage_two} with \eqref{eq:Gj_to_f} we have
		$$
		\begin{aligned}
				\big| \coverage_j(Q_j(b)) \big|
				&\leq ~(n+1) \cdot \binom{n}{k^*} \sum_{\substack{(X,\ell) \in Q_j(b) \text{ s.t.} \\ \coverage_j(X,\ell) \neq \emptyset}} \left( c_j\right)^{\frac{\beta k^* - \floor{\beta k^{*} - \alpha_j \ell }}{\alpha_j}} \cdot \frac{ 1 }{ f_{\CL,\beta}(n)} \\
				&\leq ~(n+1) \cdot \binom{n}{k^*} \sum_{(X,\ell) \in Q_j(b)} \left(c_j\right)^{\ell+1} \cdot \frac{1}{f_{\CL,\beta}(n)}.
		\end{aligned}
		$$
		Since the last inequality holds for every $j\in [s]$ we get
		$$\sum_{j\in [s] } \abs {\coverage_j(Q_j(b))} \leq ~ (n+1)\cdot  \binom{n}{k^*}\cdot \left( \max_{j\in [s]} c_j \right)\cdot \sum_{j\in [s]}\sum_{(X,\ell) \in Q_j(b) } \left( c_j\right)^{\ell} \cdot \frac{ 1 }{ f_{\CL,\beta}(n)}$$
		and thus,
		$$\cost_\A(n+1) ~\geq ~\sum_{j\in [s]}\sum_{(X,\ell) \in Q_j(b) } \left( c_j\right)^{\ell}~ \geq ~ f_{\CL, \beta} (n )\cdot \frac{\sum_{j\in [s]} \abs {\coverage_j(Q_j(b))} }{ \left( \max_{j\in [s]} c_j \right)\cdot (n+1)\cdot \binom{n}{k*}}.\qedhere$$
	\end{claimproof}

	By \Cref{claim:cost_to_coverage}, in order to lower bound the cost of $\A$, we only need to provide a lower bound on~$\sum_{j\in [s]} \abs{\coverage_j(Q_j(b))}$ for some $b \in \{0,1\}^{q(n+1)}$.
	For every $T\subseteq U$ such that $\abs{T}=k^*$ and every $j \in [s]$ we define an $\alpha_j$-extension oracle $\oracle_{T,j}$ for $U$ and $\F_{T}$ by
	$$\oracle_{T,j}(X,\ell) \coloneqq
	\begin{cases}
		T \setminus X & T \in \coverage_j(X,\ell),\\
		\oracle_{\adv}(X,\ell) & \textnormal{otherwise.}
	\end{cases}$$

	\begin{claim}
		\label{claim:Tj_oracle_correct}
		For every $T\subseteq U$ such that $\abs{T} = k^*$ and $j \in [s]$ it holds that $\oracle_{T,j}$ is an $\alpha_j$-extension oracle for $U$ and the set system $\F_T$.
	\end{claim}
	\begin{claimproof}
		Let $(X,\ell) \in 2^U \times \NN$.
		If $T\in \coverage_j(X,\ell)$ then $X\cup \oracle_{T,j}(X,\ell) = X \cup (T\setminus X) = X\cup T\in \F_T$.
		Otherwise $X\cup \oracle_{T,j}(X,\ell) = X\cup \oracle_{\adv}(X,\ell) \in \Fadv \subseteq \F_T$.
		That is, $X \cup \oracle_{T,j}(X,\ell) \in \F_T$ in all cases.

		Suppose $X$ has an $\ell$-extension $S$ with respect to $U$ and $\F_T$.
		To complete the proof we need to show that  $\abs{\oracle_{T,j}(X,\ell)} \leq \alpha_j\cdot \ell$.
		We distinguish the following two cases.
		\begin{itemize}
			\item If $\abs{X}+\alpha_j\cdot \ell \geq \floor{\beta k^*}+1$ then $\coverage_j(X,\ell) =\emptyset$ by \eqref{eq:cover_def}, and thus
			$$\abs{\oracle_{T,j}(X,\ell)} ~=~ \abs{\oracle_{\adv}(X,\ell)} ~=~ \max\left\{ \floor{\beta\cdot k^*} +1-\abs{X}, 0 \right\} ~\leq~ \max\{\alpha_j\cdot \ell,0\} ~\leq~ \alpha_j\cdot \ell.$$
			\item Otherwise $\abs{X}+\alpha_j\cdot \ell < \floor{\beta k^*}+1$ and we have that $\abs{S\cup X} \leq \abs{X} + \ell < \floor{\beta k^*} +1$.
			This means $S \cup X \notin \Fadv$.
			Since $S\cup X \in \F_T$, we conclude that $T \subseteq S\cup X$.
			So $T\setminus X\subseteq S$ and $\abs{T\setminus X} \leq \abs{S}\leq \ell$.
			Since $\abs{T}=k^*$ we conclude that $T \in \coverage_j(X,\ell)$.
			It follows that $\oracle_{T,j}(X,\ell)=T\setminus X$, and
			$$\abs{\oracle_{T,j}(X,\ell)} = \abs{T\setminus X} \leq \abs{S} \leq \ell \leq \alpha_j \cdot \ell.\qedhere$$
		\end{itemize}
	\end{claimproof}

	Now, let $b^*\in \{0,1\}^{q(n+1)}$ be the bit-string for which $\sum_{j\in [s]}\abs{\coverage_j(Q_j(b^*))}$ is maximal.
	\begin{claim}
		\label{claim:approximate_coverage_lb}
		It holds that
		\begin{equation*}
			\sum_{j\in [s]}\abs{\coverage_j(Q_j(b^*))} \geq \frac{1}{2} \cdot \binom{n+1}{k^*}.
		\end{equation*}
	\end{claim}
	\begin{claimproof}
		Consider the execution of $\A$ with the universe $U$, the oracles $\oracle_{T,1},\ldots, \oracle_{T,s}$ and the bit-string $b$, where $T \subseteq U$ and $|T| =k^*$.
		Unless $T \in \bigcup_{j\in [s]} \coverage_j(Q_j(b))$ the execution is identical to the execution of $\A$ with the universe $U$, the oracles $\oracle_{\adv},\ldots,\oracle_{\adv}$ and $b$.
		Hence, $\A$ has to return a set $S \in \Fadv$ (otherwise it violates the correctness requirement for the latter execution), and thus $|S| \geq \floor{\beta k^*}+1$.
		It follows that
		\begin{equation}
			\label{eq:lb_success_cond}
			\abs{\A(U,\oracle_{T,1},\ldots, \oracle_{T,s},b)} \leq  \beta k^* \implies T \in \bigcup_{j\in [s]}\coverage_j(Q_j(b)).
		\end{equation}

		We define two independent random variables.
		Let $ R\subseteq U$ be a uniformly distributed random set of cardinality $k^*$, and let $r \in \{0,1\}^{q(n+1)}$ be a uniformly distributed bit-string.
		Using \eqref{eq:lb_success_cond} we get
		\begin{equation}
		\label{eq:lb_prob_upper}
		\begin{aligned}
			\Pr&\left(\abs{\A(U,\oracle_{R,1},\ldots, \oracle_{R,s},r)}\leq  \beta k^*\right) ~\leq~ \Pr\left(R \in \bigcup_{j\in [s]}\coverage_j(Q_j(r))\right) \\
			&=~ \sum_{b\in \{0,1\}^{q(n+1)}} \Pr(r=b) \cdot \Pr\left(R \in \bigcup_{j\in [s]}\coverage_j(Q_j(r)) ~\middle|~b=r \right)\\
			&=~ \sum_{b\in \{0,1\}^{q(n+1)}} \Pr(r=b) \cdot \Pr\left(R\in \bigcup_{j\in [s]}\coverage_j(Q_j(b))\right)\\
			&\leq~ \sum_{b\in \{0,1\}^{q(n+1)}} \Pr(r=b) \cdot \frac{\sum_{j\in [s]} \abs{\coverage_j(Q_j(b))}}{\binom{n+1}{k^*}}\\
			&\leq~ \sum_{b\in \{0,1\}^{q(n+1)}} \Pr(r=b) \cdot \frac{\sum_{j\in [s]} \abs{\coverage_j(Q_j(b^*))}}{\binom{n+1}{k^*}}\\
			&=~ \frac{\sum_{j\in [s]} \abs{\coverage_j(Q_j(b^*))}}{\binom{n+1}{k^*}}.
		\end{aligned}
		\end{equation}
		The second equality holds since $r$ is independent of $R$.
		Furthermore,
		\begin{equation}
		\label{eq:lb_prob_lower}
		\begin{aligned}
			\Pr&\left(\abs{\A(U,\oracle_{R,1},\ldots, \oracle_{R,s},r)}\leq  \beta k^*\right)  \\
			&= \sum_{\substack{T\subseteq U \textnormal{ s.t.\ } |T|=k^*}} \Pr(R=T)\cdot\Pr\left(\abs{\A(U,\oracle_{R,1},\ldots, \oracle_{R,s},r)}\leq \beta k^*~\middle|~R=T\right)\\
			&= \sum_{\substack{T\subseteq U\textnormal{ s.t.\ } |T|=k^*}}\Pr(R=T) \cdot\Pr\left(\abs{\A(U,\oracle_{T,1},\ldots, \oracle_{T,s},r)}\leq \beta k^*\right)\\*
			&\geq \sum_{\substack{T\subseteq U\textnormal{ s.t.\ } |T|=k^*}}\Pr(R=T)\cdot \frac{1}{2} ~\geq~ \frac{1}{2}
		\end{aligned}
		\end{equation}
		By \eqref{eq:lb_prob_upper} and \eqref{eq:lb_prob_lower} it holds that $\frac{\sum_{j\in [s]} \abs{\coverage_j(Q_j(b^*))}}{\binom{n+1}{k^*}} \geq \frac{1}{2}$ and the claim immediately follows.
	\end{claimproof}

	By \Cref{claim:cost_to_coverage,claim:approximate_coverage_lb} it holds that
	\begin{align*}
		\cost_\A(n+1)
		&\geq ~f_{\CL,\beta}(n) \cdot \frac{\sum_{j\in [s]} \abs {\coverage_j(Q_j(b^*))} }{(n+1) \cdot \binom{n}{k*} \cdot \max_{j\in [s]} c_j} \\
		&\geq ~f_{\CL,\beta}(n) \cdot \frac{\frac{1}{2} \cdot \binom{n+1}{k^*}}{(n+1) \cdot \binom{n}{k^*} \cdot \max_{j\in [s]} c_j}\\
		&\geq ~\frac{f_{\CL,\beta}(n)}{2 \cdot (n+1)\cdot \max_{j\in [s]} c_j}.
	\end{align*}
\end{proof}

\begin{proof}[Proof of \Cref{thm:minimization_lower_bound}]
	Let $\A$ be a $\beta$-approximation algorithm for $\LSUB$.
	By \Cref{lem:minimization_lower_bound_aux,lem:f_to_amls} it holds that
	\begin{align*}
		\cost_\A(n+1) &\geq \frac{f_{\CL,\beta}(n)}{2 \cdot (n+1) \cdot \max_{(\alpha,c) \in \CL} c} \geq n^{-\OO(1)} \cdot (\amlsbound(\CL,\beta))^n \geq n^{-\OO(1)} \cdot (\amlsbound(\CL,\beta))^{n+1}\\
		&\geq n^{-\OO(1)} \cdot f_{\CL,\beta}(n+1)
	\end{align*}
	for every $n\geq 1$.
\end{proof}

\section{From Discrete to Continuous Optimization}
\label{sec:f_to_amls}
\Cref{lem:f_to_amls} shows that$f_{\CL,\beta} (n)\approx \left(\amlsbound(\CL,\beta)\right)^n$ up to polynomial factors.
While $f_{\CL,\beta}(n)$ \eqref{eq:fdef_intro} is defined via maximum and minimum operations over a discrete set of values, the value of $\amlsbound(\CL,\beta)$~\eqref{eq:amls_def} is the outcome of continuous maximization and minimization.
The proof utilizes basic estimation of binomial coefficient using entropy and bounded-difference properties of the entropy function.

The value of $\tau$ in the definition of $\amlsbound$ \eqref{eq:amls_def} corresponds to $\frac{t}{n}$ in the formula of $f$ \eqref{eq:fdef_intro}.
We note that the range of $\frac{t}{n}$ in \eqref{eq:fdef_intro} may differ from the range of $\tau$ in \eqref{eq:amls_def}.
Part of the proof is dedicated for showing this difference is insignificant.

We first prove that $f_{\CL,\beta}(n)\lesssim \left( \amlsbound(\CL,\beta)\right)^n$ in \Cref{lem:f_atmost_amls}, and subsequently show that $f_{\CL,\beta}(n)\gtrsim \left( \amlsbound(\CL,\beta)\right)^n$ in \Cref{lem:f_atleast_amls}.
The proof of \Cref{lem:f_atmost_amls} is technically easier.
This stems from the fact that restricting the range of $t$, extending the range of $k$ and lower-bounding the value of $\p$ (as it appears in \eqref{eq:fdef_intro}) are trivial in this direction of the inequality, but not in the other.
We also note that special cases of the inequality $f_{\CL,\beta}(n)\lesssim \left( \amlsbound(\CL,\beta)\right)^n$ for $\CL=\{(\beta,c)\}$ implicitly appear in previous works on (Approximate) Monotone Local Search \cite{EsmerKMNS22,FominGLS19} as part of the analysis of the algorithm.
The opposite direction, $f_{\CL,\beta}(n)\gtrsim \left( \amlsbound(\CL,\beta)\right)^n$,
is central for the correctness of the lower bounds in \Cref{thm:minimization_lower_bound,lem:exact_lower_bound}, but has no algorithmic implications.
As such, this direction of the inequality was irrelevant to the previous works which only provided algorithmic results.

Recall $\entropy\left( x\right)= -x\ln (x)-(1-x)\ln(1-x)$.
With slight abuse of notation we define $0 \cdot \entropy\left(\frac{a}{0}\right) = 0$.
Our proofs utilize the following bound on binomial coefficients (see, e.g.,~\cite[Example~11.1.3]{CoverT06}):
\begin{equation}\label{eq:binom}
	\frac{1}{n+1} \cdot \exp\left( n\cdot \entropy\left(\frac{k}{n}\right)\right)\leq \binom{n}{k} \leq \exp\left( n\cdot \entropy\left(\frac{k}{n}\right)\right)
\end{equation}
for all $n,k\in \mathbb{N}$ such that $0 \leq k \leq n$.
Furthermore, we utilize the following technical lemma which follows from~\cite{EsmerKMNS22}. 

\begin{lemma}
	\label{lem:perturb}
	For all $0 \leq b \leq a \leq n$, $d>1$ and $\varepsilon,\delta \in [-d,d]$ such that $0\leq b+\delta \leq a+\varepsilon$, we have
	\begin{align*}
		\abs{a\cdot \entropy\left( \frac{b}{a} \right) - (a + \varepsilon) \cdot \entropy\left( \frac{b + \delta}{a + \varepsilon} \right) } = \OO(d\cdot \log(n)).
	\end{align*}
\end{lemma}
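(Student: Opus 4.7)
\medskip
\noindent\textbf{Proof plan for Lemma~\ref{lem:perturb}.}
The plan is to reduce the perturbation bound on $a\cdot \entropy(b/a)$ to three separate perturbation bounds on the function $\phi(x) \coloneqq x \ln x$ (with the convention $\phi(0) = 0$). The starting point is the algebraic identity
\begin{equation*}
a \cdot \entropy\!\left(\tfrac{b}{a}\right) \;=\; a\ln a \;-\; b\ln b \;-\; (a-b)\ln(a-b) \;=\; \phi(a) - \phi(b) - \phi(a-b),
\end{equation*}
valid for all $0 \leq b \leq a$ with the above convention at $0$. Applying the same identity to the perturbed quantity $(a+\varepsilon)\cdot \entropy\bigl(\tfrac{b+\delta}{a+\varepsilon}\bigr)$ and taking the difference, the quantity to bound splits into
\begin{equation*}
\bigl|\phi(a) - \phi(a+\varepsilon)\bigr| + \bigl|\phi(b) - \phi(b+\delta)\bigr| + \bigl|\phi(a-b) - \phi(a+\varepsilon-b-\delta)\bigr|.
\end{equation*}
Each of the three arguments lies in $[0,n]$ (using $n \geq a$ and the hypothesis $0 \leq b+\delta \leq a+\varepsilon$), and each is perturbed by an additive amount bounded in absolute value by $2d$.

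Thus, the entire proof reduces to the single claim: for every $x \in [0,n]$ and every $\eta \in \mathbb{R}$ with $|\eta| \leq 2d$ and $x+\eta \in [0,n]$, we have $|\phi(x+\eta) - \phi(x)| = \OO(d \log n)$. I would prove this by a case split at threshold $2d$. If $\min(x, x+\eta) \geq 2d$, then $\phi$ is differentiable on the interval between $x$ and $x+\eta$, and the mean value theorem with $\phi'(y) = \ln y + 1$ gives $|\phi(x+\eta) - \phi(x)| \leq |\eta| \cdot (\ln n + 1) = \OO(d \log n)$. Otherwise, both $x$ and $x+\eta$ lie in $[0, 4d]$, and I would use the crude bound $|\phi(y)| \leq \max(1/e, \, 4d \ln(4d)) = \OO(d\log n)$ (valid since $d > 1$ and $d \leq n$; if $d > n$ the statement is trivial as $\phi$ is bounded by $n \ln n$ on $[0,n]$) applied separately to $\phi(x)$ and $\phi(x+\eta)$.

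Combining the three perturbation bounds via the triangle inequality yields the desired $\OO(d \log n)$ bound on the full expression. The only genuine subtlety, and the main obstacle, is the non-Lipschitz behavior of $\phi$ near $0$: the derivative $\ln y + 1$ blows up as $y \downarrow 0$, so one cannot apply the mean value theorem uniformly and must dispose of the small-argument regime by a direct boundedness estimate, which is where the threshold $2d$ and the assumption $d > 1$ enter.
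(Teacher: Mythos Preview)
The paper does not give a proof of this lemma; it simply cites it as following from \cite{EsmerKMNS22}. Your self-contained argument via the decomposition $a\,\entropy(b/a)=\phi(a)-\phi(b)-\phi(a-b)$ with $\phi(x)=x\ln x$, followed by a case split near versus away from zero, is correct and is the standard way to establish such a bound.

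One small slip worth patching: you assert that the perturbed arguments lie in $[0,n]$, but $a+\varepsilon$ can be as large as $n+d$ (take $a=n$, $\varepsilon=d$), and likewise $b+\delta$ and $(a+\varepsilon)-(b+\delta)$ are only guaranteed to lie in $[0,n+d]$. This does not affect the conclusion --- after your reduction to $d\le n$, the mean-value step yields $|\eta|\cdot(\ln(n+2d)+1)=\OO(d\log n)$ just the same --- but the range in your auxiliary claim should be $[0,n+2d]$ rather than $[0,n]$.
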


For every $\alpha,\beta\geq 1$ we define $\x(k,t) = \left(1-\frac{\beta}{\alpha}\right) \cdot k + \frac{t}{\alpha}$.
We utilize the following technical lemmas as part of the proofs of \Cref{lem:f_atleast_amls,lem:f_atmost_amls}.

\begin{lemma}
	\label{lem:ky_vs_nt}
	Let $\alpha,\beta\geq 1$, $n\in \NN$, $k\in \left[0,\frac{n}{\beta}\right]$, $t \geq 0$ and $y \geq \x(k,t)$.
	Then $k-y\leq n-t$.
\end{lemma}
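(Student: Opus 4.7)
The claim is just a linear inequality in disguise, and the plan is to unwind the definition of $\x(k,t)$ and reduce the conclusion to an elementary comparison. First I would apply the hypothesis $y \geq \x(k,t) = (1 - \tfrac{\beta}{\alpha})k + \tfrac{t}{\alpha}$ to bound
\[
k - y \;\leq\; k - \left(1 - \tfrac{\beta}{\alpha}\right) k - \tfrac{t}{\alpha} \;=\; \tfrac{\beta k - t}{\alpha}.
\]
Hence it is sufficient to show the numerical inequality $\tfrac{\beta k - t}{\alpha} \leq n - t$, which after clearing $\alpha$ is equivalent to $\beta k + (\alpha - 1)t \leq \alpha n$.

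To verify this last inequality, I would combine two bounds. The hypothesis $k \in [0, n/\beta]$ gives $\beta k \leq n$. Since $t$ is a subset size in the ambient universe $[n]$ of the surrounding applications, we may use $t \leq n$; together with $\alpha \geq 1$ this yields $(\alpha - 1)t \leq (\alpha - 1)n$. Adding the two bounds gives $\beta k + (\alpha - 1)t \leq n + (\alpha - 1)n = \alpha n$, as required. Chaining this back with the previous display produces $k - y \leq n - t$.

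\textbf{Expected obstacle.} The main thing to watch is the case $\alpha < \beta$, where $\x(k,t)$ may become negative and so the hypothesis $y \geq \x(k,t)$ looks weak at first; nevertheless, the algebraic chain above is valid regardless of the sign of $\x(k,t)$ since $y \geq \x(k,t)$ is used only once, as a linear lower bound on $y$. The only place where a size constraint on $t$ actually matters is in the step $(\alpha - 1)t \leq (\alpha - 1)n$, so the subtle point is really just noticing that $t \leq n$ is implicit from the context of the lemma (where $t$ ranges over subset cardinalities of the ground set $[n]$ in the applications in \Cref{lem:f_atleast_amls,lem:f_atmost_amls}).
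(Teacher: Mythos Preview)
Your proof is correct and follows essentially the same chain as the paper: use $y \ge \x(k,t)$ to obtain $k-y \le (\beta k - t)/\alpha$, then bound this by $n-t$. The paper writes the second step as $n-t \ge \beta k - t \ge (\beta k - t)/\alpha$, while you rearrange it as $\beta k + (\alpha-1)t \le \alpha n$; these are equivalent computations.

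One remark is worth making. You correctly flag that an upper bound on $t$ must be imported from context, and you use $t \le n$. The paper's chain is in the same position: the step $\beta k - t \ge (\beta k - t)/\alpha$ requires $\beta k - t \ge 0$, i.e.\ $t \le \beta k$, which is also not part of the stated hypotheses but holds in every application (where $t \in [M^*_{\alpha,\beta}k,\beta k]$). Your imported hypothesis $t \le n$ is strictly weaker than the paper's implicit $t \le \beta k$, so your argument is marginally more general, but the content is the same.
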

\begin{proof}
	We have
	\[n-t\geq \beta k -t \geq \frac{\beta k -t}{\alpha}= k  -k +\frac{\beta k}{\alpha} - \frac{t}{\alpha} = k-\x(k,t) \geq k-y.\qedhere\]
\end{proof}

\begin{lemma}
	\label{lem:x_vs_minkt}
	Let $\alpha,\beta\geq 1$, $n\in \NN$,  $k\in \left[0,\frac{n}{\beta}\right]$, and $t\in \left[M^*_{\alpha,\beta}k, \beta k\right]$.
	Then $\x(k,t) \leq \min\{k,t\}$.
\end{lemma}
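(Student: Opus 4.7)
The plan is to verify the two inequalities $\x(k,t) \leq k$ and $\x(k,t) \leq t$ separately by straightforward algebraic manipulation, using the constraint $t \leq \beta k$ for the first and a case distinction on whether $\alpha \leq \beta$ or $\alpha > \beta$ for the second. There is no real obstacle here; the whole point of the definition of $M^*_{\alpha,\beta}$ in \eqref{eq:Mstar_def} is precisely to ensure this inequality, so the proof amounts to unpacking the definitions.

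For the first inequality, I would rewrite
\[\x(k,t) = \left(1-\frac{\beta}{\alpha}\right)k + \frac{t}{\alpha} = k + \frac{t-\beta k}{\alpha},\]
and since $t \leq \beta k$ by assumption and $\alpha \geq 1$, the second term is non-positive, giving $\x(k,t) \leq k$.

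For the second inequality $\x(k,t) \leq t$, rearranging gives the equivalent condition
\[\left(1-\frac{\beta}{\alpha}\right)k \leq \left(1-\frac{1}{\alpha}\right)t = \frac{\alpha-1}{\alpha}\,t.\]
If $\alpha \leq \beta$, then $1-\frac{\beta}{\alpha} \leq 0$ so the left-hand side is non-positive, while the right-hand side is non-negative (as $\alpha \geq 1$ and $t \geq 0$ since $t \geq M^*_{\alpha,\beta}k = 0$ in this case), and the inequality is immediate. If $\alpha > \beta$, then $\alpha > 1$ and multiplying both sides of the required inequality by $\frac{\alpha}{\alpha-1}>0$ shows it is equivalent to $\frac{\alpha-\beta}{\alpha-1}k \leq t$, which is exactly the lower-bound assumption $t \geq M^*_{\alpha,\beta}k$ from \eqref{eq:Mstar_def}. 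Combining the two inequalities yields $\x(k,t)\leq \min\{k,t\}$, completing the proof.
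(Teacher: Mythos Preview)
Your proof is correct and follows essentially the same approach as the paper: both split into the two inequalities $\x(k,t)\le k$ and $\x(k,t)\le t$, handle the first via $t\le\beta k$, and handle the second by a case distinction on $\alpha\le\beta$ versus $\alpha>\beta$, the latter reducing exactly to the lower bound $t\ge M^*_{\alpha,\beta}k$. Your rewriting $\x(k,t)=k+\frac{t-\beta k}{\alpha}$ for the first inequality is in fact slightly cleaner than the paper's version.
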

\begin{proof}
	Since $t \geq 0$ we have
	\[\x(k,t) = \left(1-\frac{\beta}{\alpha} \right) k +\frac{t}{\alpha} \leq  \left(1-\frac{\beta}{\alpha} \right) k \leq k.\]
	For the second part, we consider the following two cases.
	\begin{itemize}
		\item If $\beta \geq \alpha $ then
		\[
			\x(k,t) = \left(1-\frac{\beta}{\alpha} \right) k +\frac{t}{\alpha}\leq \frac{t}{\alpha}\leq t.
		\]
		\item Otherwise $\beta < \alpha$ and we have $t\geq M^*_{\alpha,\beta}\cdot k =\frac{\alpha-\beta}{ \alpha-1} \cdot k$.
		Thus,
		\[
		\x(k,t) = \left( 1-\frac{\beta}{\alpha}\right)k + \frac{t}{\alpha} = \frac{\alpha-1}{\alpha} \cdot \frac{\alpha-\beta}{\alpha-1} \cdot k + \frac{t}{\alpha}\leq \frac{\alpha-1}{\alpha }\cdot t + \frac{t}{\alpha}=t.
		\]
	\end{itemize}
	In both cases $\x(k,t) \leq t$ which completes the proof.
\end{proof}

\begin{lemma}
	\label{lem:x_vs_kt}
	Let $\alpha,\beta\geq 1$, $n\in \NN$, $k\in \left[0,\frac{n}{\beta}\right]$, and $t\in \left[M^*_{\alpha,\beta} k, \beta k\right]$.
	Then $\x(k,t) \geq \frac{kt}{n}$ if and only if $t \geq M_{\alpha,\beta}\left(\frac{k}{n}\right) \cdot n$.
\end{lemma}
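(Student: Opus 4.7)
The plan is to rewrite the inequality $\x(k,t) \geq \frac{kt}{n}$ in a form that separates $t$ from the constants, and then do a case analysis on the sign of $\alpha-\beta$ in lockstep with the case split in the definition \eqref{eq:Mdef} of $M_{\alpha,\beta}$. Concretely, plugging in $\x(k,t) = \bigl(1 - \tfrac{\beta}{\alpha}\bigr)k + \tfrac{t}{\alpha}$, multiplying through by $\alpha n$, and rearranging, the inequality $\x(k,t)\geq \tfrac{kt}{n}$ is equivalent to
\[
t \cdot (n - \alpha k) \;\geq\; (\beta - \alpha)\,k\,n.
\tag{$\star$}
\]
Note that $k \leq n/\beta$ implies $\alpha k \leq \tfrac{\alpha}{\beta} n$, so the sign of $n - \alpha k$ depends on whether $\alpha \leq \beta$ or $\alpha > \beta$.

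First I would handle the case $\alpha < \beta$. Then $k \leq n/\beta < n/\alpha$ gives $n - \alpha k > 0$, and $(\beta - \alpha) k n > 0$. Dividing ($\star$) by $n - \alpha k$ yields
\[
t \;\geq\; \frac{(\beta - \alpha)\,k\,n}{n - \alpha k} \;=\; \frac{\beta - \alpha}{1 - \alpha \cdot \frac{k}{n}} \cdot \frac{k}{n} \cdot n \;=\; M_{\alpha,\beta}\!\left(\tfrac{k}{n}\right) \cdot n,
\]
which is precisely the claimed equivalence in this case. The case $\alpha = \beta$ is immediate: inequality ($\star$) reduces to $t(n - \alpha k) \geq 0$, which holds since $k \leq n/\beta = n/\alpha$ and $t \geq 0$, and simultaneously $M_{\alpha,\alpha}(k/n)\cdot n = 0$, so the condition $t \geq M_{\alpha,\beta}(k/n)\cdot n$ also always holds in the given range.

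The remaining case $\alpha > \beta$ is the one requiring care, because here $n - \alpha k$ can take either sign and dividing through would flip the inequality in part of the range. The key observation is that in this case $M_{\alpha,\beta}(k/n)\cdot n = \tfrac{\alpha-\beta}{\alpha-1}\,k = M^*_{\alpha,\beta}\cdot k$, which is precisely the lower endpoint of the assumed interval for $t$. Hence the right-hand side ``$t \geq M_{\alpha,\beta}(k/n)\cdot n$'' is automatically true, and it suffices to show that ($\star$) holds for every $t \in [M^*_{\alpha,\beta}k,\beta k]$. Since the map $t \mapsto t(n - \alpha k) - (\beta-\alpha)kn$ is affine in $t$, it suffices to verify nonnegativity at the two endpoints. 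At $t = \beta k$, ($\star$) becomes $\beta k(n - \alpha k) \geq (\beta-\alpha)kn$, which simplifies to $\alpha k(n - \beta k) \geq 0$, true by $k \leq n/\beta$. At $t = \tfrac{\alpha-\beta}{\alpha-1}k$, a direct substitution and simplification reduces ($\star$) to $\tfrac{(\alpha-\beta)k}{\alpha-1}\cdot \alpha \cdot (n - k) \geq 0$, which is true since $k \leq n/\beta \leq n$ and $\alpha > \beta \geq 1$.

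The main obstacle will be the $\alpha>\beta$ case: naive division by $n-\alpha k$ is invalid when $\alpha k > n$, so one cannot mimic the clean reduction used for $\alpha<\beta$. The cleanest way around this, as outlined above, is to exploit the identity $M_{\alpha,\beta}(k/n)\cdot n = M^*_{\alpha,\beta}\cdot k$ in this regime to see that the ``iff'' degenerates into ``both sides always hold'', and then verify the left-hand side by linearity of ($\star$) in $t$ together with the two endpoint checks. Assembling the three cases gives the lemma.
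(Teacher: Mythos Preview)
Your proof is correct and follows essentially the same strategy as the paper: rewrite the inequality, case-split on the sign of $\alpha-\beta$, divide through when $\alpha<\beta$, and in the $\alpha\geq\beta$ regime observe that $M_{\alpha,\beta}(k/n)\cdot n = M^*_{\alpha,\beta}\cdot k$ so both sides of the equivalence hold automatically. The only minor difference is in how you verify $\x(k,t)\geq kt/n$ for $\alpha\geq\beta$: the paper uses a short direct chain of inequalities (adding and subtracting $kt/n$, then bounding $-k/n$ by $-1/\beta$ and $t$ by $\beta k$), whereas you exploit affinity in $t$ and check the two endpoints --- both are routine and equally valid.
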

\begin{proof}
	We consider the following two cases.
	\begin{itemize}
		\item If $\alpha <\beta$ it holds that
		\[\x(k,t) \geq \frac{kt}{n} \quad\iff\quad \left( 1-\frac{\beta}{\alpha}\right)\cdot k+\frac{t}{\alpha} \geq \frac{kt}{n} \quad\iff\quad \left( 1-\frac{\beta}{\alpha}\right)\cdot k \geq t\cdot \left(\frac{k}{n}-\frac{1}{\alpha}\right).\]
		Since $k\leq \frac{n}{\beta}$ we conclude that $\frac{k}{n}-\frac{1}{\alpha}<0$.
		So
		\[\x(k,t) \geq \frac{kt}{n} \quad\iff\quad t\geq \frac{ \left(1-\frac{\beta}{\alpha}\right)}{\frac{k}{n}-\frac{1}{\alpha}} \cdot k = \frac{\beta -\alpha}{1-\alpha\cdot \frac{k}{n}}\cdot \frac{k}{n}\cdot n = M_{\alpha,\beta}\left( \frac{k}{n}\right)\cdot n.\]
		\item Otherwise $\alpha \geq \beta$ and we have $M_{\alpha,\beta}\left( \frac{k}{n}\right) \cdot n = M^*_{\alpha,\beta} \cdot k$.
		Thus, we need to prove that $\x(k,t)\geq \frac{kt}{n}$ holds unconditionally.
		Indeed,
		\[
		\begin{aligned}
			\x(k,t) ~&=~ \left(1-\frac{\beta}{\alpha}\right)\cdot k +\frac{t}{\alpha} \\
			&=~ \left(1-\frac{\beta}{\alpha}\right)\cdot k +t \left(-\frac{k}{n}+ \frac{1}{\alpha} \right)+\frac{kt}{n}\\
			&\geq ~\left(1-\frac{\beta}{\alpha}\right)\cdot k +t \left(-\frac{1}{\beta}+ \frac{1}{\alpha} \right)+\frac{kt}{n}\\
			&\geq ~\left(1-\frac{\beta}{\alpha}\right)\cdot k +\beta \cdot k  \left(-\frac{1}{\beta}+ \frac{1}{\alpha} \right)+\frac{kt}{n} ~=~ \frac{kt}{n},
		\end{aligned}
		\]
		where the first inequality follows from $k\leq \frac{n}{\beta}$, and the second inequality holds since $t\leq \beta k$ and $-\frac{1}{\beta}+\frac{1}{\alpha} \leq 0$.\qedhere
	\end{itemize}
\end{proof}

Finally, we use the following relation between $M^*_{\alpha,\beta}$ and $M_{\alpha,\beta}$. 

\begin{lemma}
	\label{lem:Mstar_vs_nostar}
	Let $\alpha,\beta\geq 1$, $n\in \NN$, $k\in \left[0,\frac{n}{\beta}\right]$.
	Then $M^*_{\alpha,\beta}\cdot k \leq M_{\alpha,\beta}\left( \frac{k}{n}\right) \cdot n\leq \beta k$.
\end{lemma}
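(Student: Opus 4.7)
The plan is a straightforward case analysis on the relative order of $\alpha$ and $\beta$, since both $M^*_{\alpha,\beta}$ and $M_{\alpha,\beta}(\kappa)$ are defined piecewise according to this trichotomy.

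First I would handle the easy case $\alpha=\beta$, where both $M^*_{\alpha,\beta}\cdot k$ and $M_{\alpha,\beta}(k/n)\cdot n$ vanish and the inequality $0\leq 0\leq \beta k$ is immediate. Next I would treat $\alpha>\beta$, where the two quantities $M^*_{\alpha,\beta}\cdot k$ and $M_{\alpha,\beta}(k/n)\cdot n$ are in fact \emph{equal} (both simplify to $\frac{\alpha-\beta}{\alpha-1}\cdot k$ since the $n$'s cancel), so the lower bound is trivial and the upper bound $\frac{\alpha-\beta}{\alpha-1}\leq \beta$ reduces, after clearing denominators, to $1\leq \beta$.

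The main case is $\alpha<\beta$, where $M^*_{\alpha,\beta}=0$ so the lower bound reduces to checking that $M_{\alpha,\beta}(k/n)\cdot n\geq 0$. Here I would first observe that $k\leq n/\beta<n/\alpha$ yields $1-\alpha k/n>0$, which makes the denominator in $M_{\alpha,\beta}(k/n)=\frac{\beta-\alpha}{1-\alpha k/n}\cdot\frac{k}{n}$ positive, so the expression is nonnegative. For the upper bound, rewriting the target as $\frac{(\beta-\alpha)k}{1-\alpha k/n}\leq \beta k$ and using the positivity of the denominator, I would multiply through to reduce to $\beta-\alpha\leq \beta - \alpha\beta k/n$, i.e.\ $\beta k/n\leq 1$, which is precisely the hypothesis $k\leq n/\beta$.

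I do not anticipate a serious obstacle: each case either trivially collapses or reduces to the constraint $k\leq n/\beta$, which is built into the hypothesis. The only small care needed is in the $\alpha<\beta$ case, where one must verify that the denominator $1-\alpha k/n$ is strictly positive before manipulating the inequality, and this follows from $k\leq n/\beta<n/\alpha$.
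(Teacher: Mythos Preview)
Your proposal is correct and follows essentially the same approach as the paper: a case analysis on the trichotomy $\alpha<\beta$, $\alpha=\beta$, $\alpha>\beta$, with the key observation that for $\alpha\geq\beta$ the two quantities $M^*_{\alpha,\beta}\cdot k$ and $M_{\alpha,\beta}(k/n)\cdot n$ coincide, while for $\alpha<\beta$ the upper bound reduces to the hypothesis $k\leq n/\beta$. The only organizational difference is that the paper splits the argument by which inequality is being proved (first the lower bound across all cases, then the upper bound), whereas you split by case and prove both bounds together; the individual steps are the same.
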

\begin{proof}
	We first show that $M_{\alpha,\beta}^*\cdot k \leq M_{\alpha,\beta}\left(\frac{k}{n}\right)\cdot n$.
	If $\alpha\leq \beta$ it holds that $M^*_{\alpha,\beta }=0$.
	Since $M_{\alpha,\beta}\left(\frac{k}{n}\right)\geq 0$ it follows that $M^*_{\alpha,\beta }\cdot k = 0 \leq M_{\alpha,\beta}\left(\frac{k}{n}\right) \cdot n$.
	Otherwise $\alpha>\beta$ and we have
	\[M_{\alpha,\beta}\left(\frac{k}{n}\right)\cdot n =\frac{\alpha-\beta}{\alpha-1} \cdot \frac{k}{n}\cdot n= M^*_{\alpha,\beta} \cdot k,\]
	where the last equality follows from the definition of $M^*_{\alpha,\beta}$ \eqref{eq:Mstar_def}.

	Next, we show that $M_{\alpha,\beta}\left(\frac{k}{n}\right)\cdot n\leq \beta k$.
	If $\alpha<\beta$ it holds that
	\[M_{\alpha,\beta}\left(\frac{k}{n}\right)\cdot n = \frac{\beta-\alpha}{1-\alpha\cdot\frac{k}{n}} \cdot \frac{k}{n} \cdot n\leq\frac{\beta-\alpha}{1-\alpha\cdot\frac{1}{\beta}} \cdot k = \beta \cdot k,\]
	where the inequality follows from $k\leq \frac{n}{\beta}$.
	Otherwise $\beta<\alpha$ and we have
	\[
	M_{\alpha,\beta}\left(\frac{k}{n}\right)\cdot n = \frac{\alpha-\beta}{\alpha-1}\cdot \frac{k}{n}\cdot n \leq\frac{\alpha-1}{\alpha-1}\cdot \frac{k}{n} \cdot n = k \leq \beta k.
	\]
	Finally, if $\alpha=\beta$, we have $M_{\alpha,\beta}\left(\frac{k}{n}\right)\cdot n =0\cdot n  \leq \beta k$.
\end{proof}

The next lemma show the second inequality of \Cref{lem:f_to_amls}.

\begin{lemma}
	\label{lem:f_atmost_amls}
	For every $\beta \geq 1$ and specification list $\CL$ it holds that
	\[f_{\CL,\beta}(n)\leq n^{\OO(1)}\cdot \left(\amlsbound(\CL,\beta)\right)^n.\]
\end{lemma}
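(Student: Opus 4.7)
The strategy is to transfer the discrete optimization defining $f_{\CL,\beta}(n)$ to the continuous one defining $\amlsbound(\CL,\beta)$ by identifying $\kappa = k/n$ and $\tau = t/n$, using the standard entropy bounds for binomial coefficients in \eqref{eq:binom} together with the perturbation estimate of \Cref{lem:perturb} to absorb the rounding errors of order $\OO(\log n)$ in the exponent.

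Fix an arbitrary $k \in [0, n/\beta] \cap \NN$ and set $\kappa = k/n$. By the definition of $\amlsbound(\CL, \beta)$ in \eqref{eq:amls_def}, choose $(\alpha^*, c^*) \in \CL$ and $\tau^* \in [M_{\alpha^*, \beta}(\kappa), \beta\kappa]$ attaining the inner minimum at $\kappa$, so that $g_{\alpha^*, \beta, c^*}(\kappa, \tau^*) \leq \ln\amlsbound(\CL, \beta)$. \Cref{lem:Mstar_vs_nostar} guarantees $n\tau^* \in [M^*_{\alpha^*,\beta} k, \beta k]$, and the interval $[M^*_{\alpha^*,\beta} k, \beta k]$ always contains the integer $k$ (because $M^*_{\alpha,\beta} \leq 1 \leq \beta$), so I can pick an integer $t^* \in [M^*_{\alpha^*, \beta} k, \beta k]$ with $|t^* - n\tau^*| \leq 1$: round $n\tau^*$ to $\lfloor n\tau^* \rfloor$ or $\lceil n\tau^* \rceil$ whenever one of them is inside the interval, and fall back to $t^* = k$ otherwise. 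Using $(\alpha^*, c^*, t^*)$ as a witness for the inner minima of $f_{\CL, \beta}(n)$, the task reduces to bounding $\exp(\tfrac{\beta k - t^*}{\alpha^*} \ln c^*) / \p(n, k, t^*, \x(k, t^*))$ by $n^{\OO(1)} \cdot \exp(n \cdot g_{\alpha^*, \beta, c^*}(\kappa, \tau^*))$.

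For the core estimate, I will lower bound $\p(n, k, t^*, \x(k, t^*))$ by the single summand in \eqref{eq:hyper} for $y^* = \lceil \x(k, t^*)\rceil$; this index lies in the summation range by \Cref{lem:x_vs_minkt,lem:ky_vs_nt}. Applying \eqref{eq:binom} to $\binom{t^*}{y^*}$ and $\binom{n - t^*}{k - y^*}$ from below and to $\binom{n}{k}$ from above, and using the algebraic identities $\x(k, t^*)/t^* = \gamma_{\alpha^*, \beta}(\kappa, t^*/n)$ and $(k - \x(k, t^*))/(n - t^*) = \delta_{\alpha^*, \beta}(\kappa, t^*/n)$ together with \Cref{lem:perturb} to absorb $|y^* - \x(k, t^*)| < 1$, yields $\p(n, k, t^*, \x(k, t^*)) \geq n^{-\OO(1)} \exp(n \cdot [(t^*/n)\, \entropy(\gamma_{\alpha^*,\beta}(\kappa, t^*/n)) + (1 - t^*/n)\, \entropy(\delta_{\alpha^*,\beta}(\kappa, t^*/n)) - \entropy(\kappa)])$. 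Combined with $\exp(\tfrac{\beta k - t^*}{\alpha^*} \ln c^*) = \exp(n \cdot \tfrac{\beta\kappa - t^*/n}{\alpha^*} \ln c^*)$, this gives the ratio $\leq n^{\OO(1)} \exp(n \cdot g_{\alpha^*, \beta, c^*}(\kappa, t^*/n))$. A final application of \Cref{lem:perturb} to the entropy terms of $g_{\alpha^*,\beta,c^*}$ handles the additional rounding $|t^*/n - \tau^*| \leq 1/n$, bringing in at most an extra $\OO(\log n)$ in the exponent. Taking the $\max$ over $k$ completes the proof. The main technical obstacle is the mismatch between the discrete range $[M^*_{\alpha,\beta} k, \beta k]$ used in $f_{\CL,\beta}$ and the continuous range $[M_{\alpha,\beta}(\kappa), \beta\kappa]$ used in $\amlsbound$, which is resolved by \Cref{lem:Mstar_vs_nostar}; secondary care is needed near the boundaries where $\tau, \gamma, \delta$ approach $0$ or $1$, where the convention $0 \ln 0 = 0$ must be invoked so that \Cref{lem:perturb} remains applicable.
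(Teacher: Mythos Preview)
Your proposal is correct and uses the same ingredients as the paper—the entropy bound \eqref{eq:binom}, the perturbation \Cref{lem:perturb}, and \Cref{lem:ky_vs_nt,lem:x_vs_minkt,lem:x_vs_kt,lem:Mstar_vs_nostar}—but the order of operations is reversed: the paper bounds the discrete quantity for \emph{every} feasible integer $t$ and only afterwards passes to the continuous minimum, whereas you first pick the continuous minimizer $\tau^*$ and then round it to a discrete witness $t^*$. Both directions are standard; your version is slightly more direct, while the paper's version avoids having to argue that the rounded $t^*$ lands in the feasible interval.

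Two small points in your write-up deserve tightening. First, the fallback ``$t^*=k$'' is never actually triggered: since $n\tau^*\in[M^*_{\alpha^*,\beta}k,\beta k]$ and this interval contains the integer $k$, one of $\lfloor n\tau^*\rfloor$, $\lceil n\tau^*\rceil$ always lies in the interval; presenting it as a genuine fallback is misleading because it would violate $|t^*-n\tau^*|\le 1$ and break the subsequent perturbation step. Second, when $\alpha^*<\beta$ the value $\x(k,t^*)$ can become (barely) negative after rounding, so the single summand must be taken at $y^*=\max\{\lceil\x(k,t^*)\rceil,0\}$ rather than $\lceil\x(k,t^*)\rceil$, exactly as the paper does; otherwise the chosen summand is zero. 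Neither point affects the substance of the argument.
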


\begin{proof}
	Let $n\geq 1$.
	For every $k \in \left[0,\frac{n}{\beta}\right] \cap \NN$ and $(\alpha,c)\in \CL$ it holds that
	\begin{equation}
		\label{eq:atmost_first}
		\begin{aligned}
			\min_{~t\in \left[M^*_{\alpha,\beta}\cdot k , \beta k\right]\cap \NN~} &\frac{\exp\left( \frac{\beta k -t}{\alpha} \cdot c\right)}{\p(n,k,t,\x(k,t))} ~=~	
			\min_{~t\in \left[M^*_{\alpha,\beta}\cdot k , \beta k\right]\cap \NN~} \frac{\exp\left( \frac{\beta k -t}{\alpha} \cdot c\right)}{\sum_{y=\lceil\x(k,t)\rceil}^{\min\{k,t\}} \frac{\binom{t}{y}\cdot \binom{n-t}{k-y}}{\binom{n}{k}}}\\
			&\leq~	\min_{~t\in \left[M^*_{\alpha,\beta}\cdot k , \beta k\right]\cap \NN~} \frac{\exp\left( \frac{\beta k -t}{\alpha} \cdot c\right)} {\left(\frac{\binom{t}{\max\{\lceil\x(k,t)\rceil,0\}}\cdot \binom{n-t}{k-\max\{\lceil\x(k,t)\rceil,0\}}}{\binom{n}{k}}\right)}\\
			&\leq~(n+1)^2 \cdot \min_{~t\in \left[M^*_{\alpha,\beta}\cdot k , \beta k\right]\cap \NN~} \exp\Biggr( \frac{\beta k - t}{\alpha} \cdot c- t\cdot \entropy\left( \frac{\max\{\lceil\x(k,t)\rceil,0\}}{t}\right)\\
			&\quad\quad\quad\quad - (n-t) \cdot \entropy\left(\frac{k-\max\{\lceil\x(k,t)\rceil,0\}}{n-t} \right)+n\cdot \entropy\left(\frac{k}{n}\right) \Biggr)
		\end{aligned}
	\end{equation}
	The first equality follows from the definition of $\p$ \eqref{eq:hyper}.
	The first inequality follows from selecting $y = \max\{\lceil\x(k,t)\rceil,0\}$.
	Note that the resulting expression is well defined by \Cref{lem:ky_vs_nt,lem:x_vs_minkt}.
	The last inequality follows from \eqref{eq:binom}.
	
	We can use \Cref{lem:perturb} to avoid the rounding of the values of $\x$ as well as extending the range of $t$ in \eqref{eq:atmost_first}.
	That is, for every $k\in \left[ 0,\frac{n}{\beta}\right]\cap \NN$ and $(\alpha,c)\in \CL$ it holds that
	\begin{equation}
		\label{eq:atmost_second}
		\begin{aligned}
			\min_{~t\in \left[M^*_{\alpha,\beta}\cdot k , \beta k\right]\cap \NN~} &\frac{\exp\left( \frac{\beta k -t}{\alpha} \cdot c\right)}{\p(n,k,t,\x(k,t))} \\
			&\leq ~n^{\OO(1)}\cdot \min_{~t\in \left[M^*_{\alpha,\beta}\cdot k , \beta k\right]\cap \NN~} \exp\Biggr( \frac{\beta k -t}{\alpha} \cdot c- t\cdot \entropy\left( \frac{\max\{{\x(k,t)},~0\} }{t }\right)\\
			&\quad\quad\quad\quad-(n-t) \cdot \entropy \left(\frac{k-\max\{{\x(k,t)},~0\}}{n-t} \right)+n\cdot \entropy\left(\frac{k}{n}\right) \Biggr)\\
			&\leq~n^{\OO(1)} \cdot \min_{~t\in \left[M^*_{\alpha,\beta}\cdot k , \beta k\right]~} \exp\Biggr( \frac{\beta k -t}{\alpha} \cdot c- t\cdot \entropy\left( \frac{\max\{{\x(k,t)},~0\} }{t }\right)\\
			&\quad\quad\quad\quad-(n-t) \cdot \entropy \left(\frac{k-\max\{{\x(k,t)},~0\}}{n-t} \right)+n\cdot \entropy\left(\frac{k}{n}\right) \Biggr)\\
			&\leq~n^{\OO(1)} \cdot \min_{~t\in \left[M_{\alpha,\beta}\left(\frac{k}{n}\right)\cdot n , \beta k\right]~} \exp\Biggr( \frac{\beta k -t}{\alpha} \cdot c- t\cdot \entropy\left( \frac{\max\{{\x(k,t)},~0\} }{t }\right)\\
			&\quad\quad\quad\quad-(n-t) \cdot \entropy \left(\frac{k-\max\{{\x(k,t)},~0\}}{n-t} \right)+n\cdot \entropy\left(\frac{k}{n}\right) \Biggr)\\
			&\leq~n^{\OO(1)} \cdot \min_{~t\in \left[M_{\alpha,\beta}\left(\frac{k}{n}\right)\cdot n , \beta k\right]~} \exp\Biggr( \frac{\beta k -t}{\alpha} \cdot c- t\cdot \entropy\left(\frac{ \x(k,t) }{t }\right)\\
			&\quad\quad\quad\quad-(n-t) \cdot \entropy \left(\frac{k-\x(k,t)}{n-t} \right)+n\cdot \entropy\left(\frac{k}{n}\right) \Biggr)
		\end{aligned}
	\end{equation}
	Observe that in the third expression the range of $t$ is not restricted to integers.
	The range of $t$ was further changed in the forth expression using \Cref{lem:Mstar_vs_nostar}.
	The last inequality follow from \Cref{lem:x_vs_kt} (trivially, $\frac{kt}{n}\geq 0$).  
	
	Observe that  
	\begin{equation}
		\label{eq:x_to_gamma}
		\frac{\x(k,t)}{t} =\left( 1-\frac{\beta}{\alpha}\right)\frac{k}{t} +\frac{1}{\alpha} = \gamma_{\alpha,\beta}\left( \frac{k}{n}, \frac{t}{n}\right)
	\end{equation}
	unless $t=0$, and 
	\begin{equation}
		\label{eq:x_to_delta}
		\frac{k-\x(k,t)}{n-t} = 
		\frac{k-\left( 1-\frac{\beta}{\alpha}\right)k-\frac{t}{\alpha}}{n-t} = 
		\frac{\frac{\beta}{\alpha}\cdot k-\frac{t}{\alpha}}{n-t} =
		\delta_{\alpha,\beta}\left(\frac{k}{n},\frac{t}{n}\right)
	\end{equation}
	unless $t=1$.
	
	Using \eqref{eq:x_to_gamma} and \eqref{eq:x_to_delta} we can simplify the expression in \eqref{eq:atmost_second} and obtain
	\begin{equation}
		\label{eq:atmost_third}
		\begin{aligned}
			\min_{~t\in \left[M^*_{\alpha,\beta}\cdot k , \beta k\right]\cap \NN~} &\frac{\exp\left( \frac{\beta k -t}{\alpha} \cdot c\right)}{\p(n,k,t,\x(k,t))} \\
			&\leq~n^{\OO(1)} \cdot \min_{~t\in \left[M_{\alpha,\beta}\left(\frac{k}{n}\right)\cdot n , \beta k\right]~} \exp\Biggr( \frac{\beta \frac{k}{n} -\frac{t}{n}}{\alpha} \cdot c- \frac{t}{n}\cdot \entropy\left(\gamma_{\alpha,\beta}\left(\frac{k}{n},\frac{t}{n}\right) \right)\\
			&\quad\quad\quad\quad-\left(1-\frac{t}{n}\right) \cdot \entropy \left(\delta_{\alpha,\beta}\left(\frac{k}{n},\frac{t}{n} \right)\right) + \entropy\left(\frac{k}{n}\right) \Biggr) ^n\\
			&=~n^{\OO(1)} \cdot \min_{~\tau\in \left[M_{\alpha,\beta}\left(\frac{k}{n}\right) , \beta \frac{k}{n}\right]~}  \left( g_{\alpha,\beta, c } \left(\frac{k}{n},\tau\right)\right)^n.	
		\end{aligned}
	\end{equation}
	By incorporating \eqref{eq:atmost_third} into the formula of $f_{\CL,\beta}$ \eqref{eq:fdef_intro} we get 
	\begin{equation*}
		\begin{aligned}
			f_{\CL,\beta}(n) &=~ \max_{~k\in \left[0,\frac{n}{\beta}\right] \cap \NN~}
				\min_{~(\alpha,c)\in \CL~}
				\min_{~t\in\left[M^*_{\alpha, \beta}\cdot k,\beta\cdot  k\right] \cap \NN~}
				\frac{\exp\left( \frac{\beta k -t} {\alpha }\cdot \ln c \right)} {\p\left( n,k, t, (1 - \frac{\beta}{\alpha}) \cdot k +\frac{t}{\alpha} \right)}\\
			&\leq~n^{\OO(1)} \cdot  \max_{~k\in \left[0,\frac{n}{\beta}\right] \cap \NN~ } \min_{(\alpha,c )\in \CL}
				\min_{~\tau\in \left[M_{\alpha,\beta}\left(\frac{k}{n}\right), \beta \frac{k}{n}\right]~} \exp\Biggr(g_{\alpha,\beta,c}\left(\frac{k}{n},\tau\right) \Biggr)^n\\
			&\leq~n^{\OO(1)} \cdot  \max_{~k\in \left[0,\frac{n}{\beta}\right] ~ } \min_{(\alpha,c )\in \CL}
				\min_{~\tau\in \left[M_{\alpha,\beta}\left(\frac{k}{n}\right), \beta \frac{k}{n}\right]~} \exp\Biggr(g_{\alpha,\beta,c}\left(\frac{k}{n},\tau\right) \Biggr)^n\\
			&=~n^{\OO(1)} \cdot  \max_{~\kappa\in \left[0,\frac{1}{\beta}\right] ~ } \min_{(\alpha,c )\in \CL}
				\min_{~\tau\in \left[M_{\alpha,\beta}\left(\kappa \right), \beta \kappa\right]~} \exp\Biggr(g_{\alpha,\beta,c}\left(\kappa,\tau\right) \Biggr)^n\\
			&=~n^{\OO(1)} \cdot \left( \amlsbound(\CL,\beta)\right)^n.
		\end{aligned}
	\end{equation*}
	The second inequality simply extended the range of values $k$ can takes, and the second equality substituted $k$ with $\kappa\cdot n$.
\end{proof}

\begin{lemma}
	\label{lem:f_atleast_amls}
	For every $\beta \geq 1$ and specification list $\CL$ it holds that 
	\[f_{\CL,\beta}(n)\geq n^{\OO(1)}\cdot \left(\amlsbound(\CL,\beta)\right)^n.\]
\end{lemma}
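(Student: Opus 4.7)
The plan is to prove the reverse direction of \Cref{lem:f_atmost_amls} by picking a maximizer $\kappa^\star \in [0, 1/\beta]$ of the outer maximum in \eqref{eq:amls_def}, setting $k \coloneqq \lfloor \kappa^\star n \rfloor$ and $\kappa \coloneqq k/n$ (so $|\kappa - \kappa^\star| \le 1/n$), and then establishing that for every $(\alpha,c) \in \CL$ and every $t \in [M^*_{\alpha,\beta} k, \beta k] \cap \NN$,
\[
\frac{\exp\!\left(\tfrac{\beta k - t}{\alpha}\ln c\right)}{\p\!\left(n,k,t,\x(k,t)\right)} \;\ge\; n^{-\OO(1)} \cdot \exp\!\left(n \cdot \min_{\tau \in [M_{\alpha,\beta}(\kappa),\,\beta\kappa]} g_{\alpha,\beta,c}(\kappa, \tau)\right).
\]
Once this is shown, taking the minimum over $(\alpha,c) \in \CL$ and using continuity of $\kappa \mapsto \min_\tau g_{\alpha,\beta,c}(\kappa,\tau)$ to pass from $\kappa$ to $\kappa^\star$ yields the claim. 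The argument naturally splits depending on whether $t$ lies in the ``entropic'' range $t \ge M_{\alpha,\beta}(\kappa) n$ or in the ``trivial'' range $t < M_{\alpha,\beta}(\kappa) n$; by \Cref{lem:Mstar_vs_nostar} the latter range only arises when $\alpha < \beta$ (where $M^*_{\alpha,\beta} = 0$).

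In the entropic range, \Cref{lem:x_vs_kt} gives $\x(k,t) \ge kt/n$, so $\lceil\x(k,t)\rceil$ is at or just above the mode of the hypergeometric distribution in \eqref{eq:hyper}; hence its terms are monotonically decreasing in $y \geq \lceil\x(k,t)\rceil$ (up to an $O(1)$ shift absorbed by \Cref{lem:perturb}), giving the single-term upper bound $\p \le n^{\OO(1)} \cdot \binom{t}{\lceil\x\rceil}\binom{n-t}{k-\lceil\x\rceil}/\binom{n}{k}$. Applying the binomial bounds \eqref{eq:binom} (a lower bound on $\binom{n}{k}$ and upper bounds on the other two factors), removing the rounding of $\x$ by \Cref{lem:perturb}, and invoking the identities \eqref{eq:x_to_gamma}, \eqref{eq:x_to_delta} used in the proof of \Cref{lem:f_atmost_amls}, the ratio becomes at least $n^{-\OO(1)} \cdot \exp(n\, g_{\alpha,\beta,c}(\kappa, t/n))$, which is in turn at least $n^{-\OO(1)} \cdot \exp(n \min_\tau g_{\alpha,\beta,c}(\kappa,\tau))$.

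In the trivial range, the bound $\p \le 1$ yields $\frac{\exp((\beta k - t)\ln c/\alpha)}{\p} \ge \exp((\beta k - t)\ln c/\alpha) \ge \exp((\beta k - M_{\alpha,\beta}(\kappa) n)\ln c/\alpha)$, where the last inequality uses $\ln c \ge 0$ and monotonicity in $t$. A direct computation verifies that at $\tau = M_{\alpha,\beta}(\kappa)$ with $\alpha < \beta$, one has $\gamma_{\alpha,\beta}(\kappa,\tau) = \delta_{\alpha,\beta}(\kappa,\tau) = \kappa$, so the two entropy contributions cancel against $\entropy(\kappa)$ in \eqref{eq:g_def} and $g_{\alpha,\beta,c}(\kappa, M_{\alpha,\beta}(\kappa)) = \frac{\beta\kappa - M_{\alpha,\beta}(\kappa)}{\alpha}\ln c$. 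Hence the lower bound equals $\exp(n\, g_{\alpha,\beta,c}(\kappa, M_{\alpha,\beta}(\kappa))) \ge \exp(n \min_\tau g_{\alpha,\beta,c}(\kappa,\tau))$. This is precisely the reason that restricting $\tau$ to $[M_{\alpha,\beta}(\kappa), \beta\kappa]$ in \eqref{eq:amls_def} incurs no loss: on the ``extended'' interval $[0,\beta\kappa]$ the relevant lower bound on the integrand is monotone and pinned at the boundary value.

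Combining the two cases gives the displayed bound for every admissible $t$, and passing from $\kappa$ to $\kappa^\star$ via continuity introduces an additive error $O(1/n)$ in the exponent that is absorbed by an $n^{\OO(1)}$ factor. The main obstacle is the continuity/perturbation bookkeeping: both $g_{\alpha,\beta,c}(\cdot, \cdot)$ and the endpoints $M_{\alpha,\beta}(\cdot)$, $\beta(\cdot)$ of the $\tau$-interval depend on $\kappa$, and the piecewise definitions of $\gamma_{\alpha,\beta}$ and $\delta_{\alpha,\beta}$ at $\tau \in \{0,1\}$ together with the corner cases $\kappa^\star \in \{0, 1/\beta\}$ or $c = 1$ must be handled explicitly; all are resolvable using \Cref{lem:perturb} and the closed-form expressions of $\gamma_{\alpha,\beta}$, $\delta_{\alpha,\beta}$, and $M_{\alpha,\beta}$.
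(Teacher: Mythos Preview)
Your proposal is correct and follows essentially the same approach as the paper: both bound $\p$ from above by a single dominant term (you via unimodality of the hypergeometric past the mode $kt/n$, the paper via convexity of the function $h_{k,t}(y)$ in Claim~6.8), handle the range $t < M_{\alpha,\beta}(\kappa)n$ via the trivial bound $\p\le 1$ and the identity $g_{\alpha,\beta,c}(\kappa,M_{\alpha,\beta}(\kappa)) = \tfrac{\beta\kappa - M_{\alpha,\beta}(\kappa)}{\alpha}\ln c$ (the paper's Claim~6.9), and absorb the discrete-to-continuous passage in $\kappa$ by a perturbation argument (the paper's Claims~6.10--6.11). The only organizational difference is that you fix the optimal $\kappa^\star$ up front while the paper bounds uniformly in $k$ before taking the maximum, which is immaterial.
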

\begin{proof}
	Let $n\in \NN$.
	For every $k\in \left[0,\frac{n}{\beta}\right]\cap \NN$ and $(\alpha,c)\in \CL$ it holds that
	\begin{equation}
		\label{eq:atleast_first}
		\begin{aligned}
			\min_{~t\in \left[M^*_{\alpha,\beta}\cdot k , \beta k\right]\cap \NN~} &\frac{\exp\left( \frac{\beta k -t}{\alpha} \cdot \ln c\right)}{\p(n,k,t,\x(k,t))} ~=~
			\min_{~t\in \left[M^*_{\alpha,\beta}\cdot k , \beta k\right]\cap \NN~} \frac{\exp\left( \frac{\beta k -t}{\alpha} \cdot \ln c\right)}{\sum_{y=\lceil\x(k,t)\rceil}^{\min\{k,t\}} \frac{\binom{t}{y}\cdot \binom{n-t}{k-y}}{\binom{n}{k}}}\\
			&\geq~\min_{t\in \left[M^*_{\alpha,\beta}\cdot k , \beta k\right]\cap\NN} \frac{\exp\left( \frac{\beta k -t}{\alpha} \cdot \ln c\right)}{n\cdot \max_{y \in \left[\max\{\lceil\x(k,t)\rceil,0\},~\min\{k,t\}\right]\cap \NN} \frac{\binom{t}{y}\cdot \binom{n-t}{k-y}}{\binom{n}{k}}}\\
			&=~\frac{1}{n}\cdot \min_{~t\in \left[M^*_{\alpha,\beta}\cdot k , \beta k\right]\cap\NN~}\min_{~y\in \left[\max\{\lceil\x(k,t)\rceil,0\},~\min\{k,t\}\right]\cap \NN~} \frac{\exp\left( \frac{\beta k -t}{\alpha} \cdot \ln c\right)}{  \frac{\binom{t}{y}\cdot \binom{n-t}{k-y}}{\binom{n}{k}}}.
		\end{aligned}
	\end{equation}
	As in the proof of  \Cref{lem:f_atmost_amls}, we  use \eqref{eq:binom} to estimate the binomial coefficients in \eqref{eq:atleast_first}.
	Thus, for all $k\in \left[0,\frac{n}{\beta}\right]\cap \NN$ and $(\alpha,c)\in \CL$ we have
	\begin{equation}
		\label{eq:atleast_second}
		\begin{aligned}
			\min_{~t\in \left[M^*_{\alpha,\beta}\cdot k , \beta k\right]\cap \NN~} &\frac{\exp\left( \frac{\beta k -t}{\alpha} \cdot \ln c\right)}{\p(n,k,t,\x(k,t))} \\
			&\geq ~n^{-\OO(1)}\cdot \min_{~t\in \left[M^*_{\alpha,\beta}\cdot k , \beta k\right]\cap\NN~}\min_{~y\in \left[\max\{\lceil\x(k,t)\rceil,0\},~\min\{k,t\}\right]\cap \NN~}\\
			&\quad\quad\quad \exp\Biggr( \frac{\beta k -t}{\alpha} \cdot\ln c -t\cdot \entropy\left( \frac{y}{t}\right) - (n-t)\cdot \entropy\left( \frac{k-y}{n-t} \right)+ n\cdot \entropy\left(\frac{k}{n}\right)  \Biggr)\\
			&\geq ~n^{-\OO(1)}\cdot \min_{~t\in \left[M^*_{\alpha,\beta}\cdot k , \beta k\right]\cap\NN~}\min_{~y\in \left[ \max\{{\x(k,t)},0\},~\min\{k,t\}\right]}\\
			&\quad\quad\quad \exp\Biggr( \frac{\beta k -t}{\alpha} \cdot \ln c
			-t\cdot \entropy\left( \frac{y}{t}\right) 
			-(n-t)\cdot\entropy\left( \frac{k-y}{n-t} \right)+ n\cdot \entropy\left(\frac{k}{n}\right)  \Biggr).
		\end{aligned}
	\end{equation}
	For every $k\in \left[0,\frac{n}{\beta}\right]$ and $t\in \left[M^*_{\alpha,\beta}\cdot k , \beta k\right]$ we define 
	\begin{equation}
		\label{eq:atleast_hdef}
		h_{k,t}(y) = -t\cdot \entropy\left( \frac{y}{t}\right) - (n-t)\cdot\entropy\left( \frac{k-y}{n-t} \right)+ n\cdot \entropy\left(\frac{k}{n}\right).
	\end{equation}
	We can use $h_{k,t}$ to rewrite \eqref{eq:atleast_second} as 
	\begin{equation}
		\label{eq:atleast_third}
		\begin{aligned}
			&\min_{~t\in \left[M^*_{\alpha,\beta}\cdot k , \beta k\right]\cap \NN~} \frac{\exp\left( \frac{\beta k -t}{\alpha} \cdot \ln c\right)}{\p(n,k,t,\x(k,t))} \\
			&\geq ~n^{-\OO(1)} \cdot \min_{~t\in \left[M^*_{\alpha,\beta}\cdot k , \beta k\right]\cap\NN~}\min_{~y\in \left[\max\{\x(k,t),0\},~\min\{k,t\}\right]}
			\exp\Biggr( \frac{\beta k -t}{\alpha} \cdot \ln c + h_{k,t}(y) \Biggr)
		\end{aligned}
	\end{equation}
	for all $k\in \left[0,\frac{n}{\beta}\right]\cap \NN$ and $(\alpha,c)\in \CL$.
	\begin{claim}
	\label{claim:atleast_h}
		For all $k\in \left[0,\frac{n}{\beta}\right]\cap \NN$,  $(\alpha,c)\in \CL$ and  $t\in \left[M^*_{\alpha,\beta}\cdot k , \beta k\right]$ it holds that
		\[\min_{~y\in \left[ \max\{\x(k,t),0\},~\min\{k,t\}\right]~} h_{k,t}(y) = 
			\begin{cases}
				h_{k,t}(\x(k,t)) & \textnormal{if } \x(k,t) \geq \frac{kt}{n},\\ 
				0 & \textnormal{otherwise.}
			\end{cases}
		\]
		Furthermore, $h_{k,t}\left(\frac{kt}{n}\right) = 0$.
	\end{claim}
	\begin{claimproof}
		We have $\entropy'(x) = \ln\frac{1-x}{x}$ where $\entropy'$ is the first derivative of $\entropy$.
		So the first derivative of $h_{k,t}$ is
		\begin{align*}
		h'_{k,t}(y) &= -t\cdot \frac{1}{t}\cdot \ln \left( \frac{1-\frac{y}{t}}{\frac{y}{t}}\right)-(n-t)\cdot \frac{-1}{n-t} \cdot \ln\left( \frac{1-\frac{k-y}{n-t}}{\frac{k-y}{n-t}}\right)\\
		&= - \ln \left( \frac{t}{y}-1\right) + \ln \left(\frac{n-t}{k-y} -1 \right).
		\end{align*}
		It can  be easily observed that $h'_{k,t}$ is a monotonically increasing function and thus, $h_{k,t}$ is convex.
		Furthermore,
		\[h'_{k,t}\left(\frac{kt}{n}\right) = - \ln \left( t\cdot \frac{n}{kt}-1\right) + \ln \left(\frac{n-t}{k-\frac{kt}{n}} -1 \right) = -\ln\left(\frac{n}{k} -1\right) + \ln\left(\frac{n}{k} -1\right) =0\]
		and 
		\begin{align*}
			h_{k,t}\left(\frac{kt}{n}\right) &= -t\cdot \entropy\left( \frac{\frac{kt}n}{t}\right) - (n-t)\cdot\entropy\left( \frac{k-\frac{kt}{n}}{n-t} \right)+ n\cdot \entropy\left(\frac{k}{n}\right)\\
			&=-t\cdot \entropy\left( \frac{k}{n}\right) - (n-t)\cdot\entropy\left( \frac{k}{n} \right)+ n\cdot \entropy\left(\frac{k}{n}\right)=0.
		\end{align*}
		So overall, $h_{k,t}(y)$ is a convex function with a global minimum of value $0$ at $y = \frac{kt}{n}$.
		Since $k,t\leq n$ it also holds that $\frac{kt}{n}\leq \min\{k,t\}$.
		Hence, 
		\[\min_{~y\in \left[ \max\{\x(k,t),0\},~\min\{k,t\}\right]~} h_{k,t}(y) =
			\begin{cases}
				h_{k,t}(\x(k,t)) & \textnormal{if } \x(k,t) \geq \frac{kt}{n},\\ 
				0 & \textnormal{otherwise.}
			\end{cases}
		\]
	\end{claimproof}
	By \Cref{claim:atleast_h} and \eqref{eq:atleast_third}, for all $k\in \left[0,\frac{n}{\beta}\right]\cap \NN$ and $(\alpha,c)\in \CL$, we have 
	\begin{equation}
		\label{eq:atleast_forth}
		\begin{aligned}
			&\min_{~t\in \left[M^*_{\alpha,\beta}\cdot k , \beta k\right]\cap \NN~} \frac{\exp\left( \frac{\beta k -t}{\alpha} \cdot \ln c\right)}{\p(n,k,t,\x(k,t))} \\
			&\geq ~n^{-\OO(1)}\cdot \min_{~t\in \left[M^*_{\alpha,\beta}\cdot k , \beta k\right]\cap\NN~}
				\exp\Biggr( \frac{\beta k -t}{\alpha} \cdot \ln c + 
				\begin{cases}
					h_{k,t}(\x(k,t)) & \textnormal{ if } \x(k,t) \geq \frac{kt}{n} \\ 
					0 & \textnormal{ otherwise}
				\end{cases} \Biggr) \\
			&\geq~ n^{-\OO(1)}\cdot \min_{~t\in \left[M^*_{\alpha,\beta}\cdot k , \beta k\right]~}
				\exp\Biggr( \frac{\beta k -t}{\alpha} \cdot \ln c + 
				\begin{cases}
					h_{k,t}(\x(k,t))& \textnormal{if } \x(k,t) \geq \frac{kt}{n} \\ 
					0 & \textnormal{otherwise}
				\end{cases} \Biggr).\\
		\end{aligned}
	\end{equation}
	Observe the range of $t$ in the last  expression is not restricted to integral values.

	\begin{claim} 
		\label{claim:atleast_elimination}
		For all $k\in \left[ 0,\frac{n}{\beta}\right]\cap \NN$ and $(\alpha,c)\in\CL$ it holds that
		\begin{align*}
			&\min_{~t\in \left[M^*_{\alpha,\beta}\cdot k , \beta k\right]~}
				\exp\Biggr( \frac{\beta k -t}{\alpha} \cdot \ln c + 
				\begin{cases}
					h_{k,t}(\x(k,t))& \textnormal{ if } \x(k,t) \geq \frac{kt}{n} \\ 
					0 & \textnormal{ otherwise}
				\end{cases} \Biggr)\\
			=~&\min_{~t\in \left[M_{\alpha,\beta}\left(\frac{k}{n}\right)\cdot n , \beta k\right]~}
				\exp\Biggr( \frac{\beta k -t}{\alpha} \cdot \ln c + h_{k,t}(\x(k,t)) \Biggr).
		\end{align*}
	\end{claim}
	\begin{claimproof}
		First suppose $\alpha \geq \beta$. 
		Then $M^*_{\alpha,\beta} \cdot k = M_{\alpha,\beta}\left(\frac{k}{n}\right)\cdot n$.
		Furthermore, by \Cref{lem:x_vs_kt}, it holds that $\x(k,t)\geq \frac{kt}{n}$ for all $t\in \left[ M^*_{\alpha,\beta}\cdot k ,\beta k\right]$, and the statement of the claim immediately follows. 
	
		We are left to handle the case $\alpha< \beta$. By \Cref{lem:x_vs_kt} we have
		\begin{align*}
			&\min_{~t\in \left[M^*_{\alpha,\beta}\cdot k, M_{\alpha,\beta}\left(\frac{k}{n}\right)\cdot n\right]~}
				\exp\left( \frac{\beta k -t}{\alpha} \cdot \ln c + 
				\begin{cases}
					h_{k,t}(\x(k,t))& \textnormal{if } \x(k,t) \geq \frac{kt}{n} \\ 
					0 & \textnormal{otherwise}
				\end{cases} \right)\\
			=~&\min_{~t\in \left[M^*_{\alpha,\beta}\cdot k, M_{\alpha,\beta}\left(\frac{k}{n}\right)\cdot n\right]~}
				\exp\left( \frac{\beta k -t}{\alpha} \cdot \ln c + 
				\begin{cases}
					h_{k,t}(\x(k,t))& \textnormal{if } t=M_{\alpha,\beta}\left(\frac{k}{n}\right)\cdot n \\ 
					0 & \textnormal{otherwise}
				\end{cases} \right)\\
			=~&\min_{~t\in \left[M^*_{\alpha,\beta}\cdot k, M_{\alpha,\beta}\left(\frac{k}{n}\right)\cdot n\right]~}
				\exp\left( \frac{\beta k -t}{\alpha} \cdot \ln c \right)\\
			=~&\exp\left( \frac{\beta k -t}{\alpha} \cdot \ln c \right) \Biggr|_{t=M_{\alpha,\beta}\left(\frac{k}{n}\right)\cdot n }\\
			=~&\exp\left( \frac{\beta k -t}{\alpha} \cdot \ln c+h_{k,t}(\x(k,t))\right) \Biggr|_{t=M_{\alpha,\beta}\left(\frac{k}{n}\right)\cdot n },
		\end{align*}
		where the second and forth equalities follow from  $\x\left(k,t\right) = \frac{kt}{n}$ for $t= M_{\alpha,\beta}\left( \frac{k}{n}\right)\cdot n$ and $h_{k,t}\left( \frac{kt}{n}\right)=0$ (\Cref{claim:atleast_h}).
		The statement of the claim follows from the last series of equalities. 
	\end{claimproof}

	By~\eqref{eq:atleast_forth} and \Cref{claim:atleast_elimination} it holds that 
	\begin{equation}
		\label{eq:atmost_fifth}
		\begin{aligned}
			&\min_{~t\in \left[M^*_{\alpha,\beta}\cdot k , \beta k\right]\cap \NN~} \frac{\exp\left( \frac{\beta k -t}{\alpha} \cdot \ln c\right)}{\p(n,k,t,\x(k,t))} \\
			&\geq~ n^{-\OO(1)}\cdot \min_{~t\in \left[M_{\alpha,\beta}\left( \frac{k}{n}\right)\cdot n , \beta k\right]~}
				\exp\left( \frac{\beta k -t}{\alpha} \cdot \ln c + h_{k,t}(\x(k,t)) \right)\\
		\end{aligned}
	\end{equation}
	for all $k \in \left[ 0,\frac{n}{\beta}\right]\cap\NN$ and $(\alpha,c)\in \CL$.
	By the definition of $f_{\CL,\beta}$ \eqref{eq:fdef_intro} we have
	\begin{equation}
		\label{eq:atleast_sixth}
		\begin{aligned}
			f_{\CL,\beta}(n) &=~ \max_{~k\in \left[0,\frac{n}{\beta}\right] \cap \NN~ }
				\min_{~(\alpha,c)\in \CL~}
				\min_{~t\in\left[M^*_{\alpha, \beta}\cdot k,\beta\cdot  k\right] \cap \NN~}
				\frac{\exp\left( \frac{\beta k -t} {\alpha }\cdot \ln c \right)} {\p\left( n,k, t, (1 - \frac{\beta}{\alpha}) \cdot k +\frac{t}{\alpha} \right)}\\
			&\geq~n^{-\OO(1)} \max_{~k\in \left[0,\frac{n}{\beta}\right] \cap \NN}	\min_{~(\alpha,c)\in \CL~}\min_{~t\in \left[M_{\alpha,\beta}\left( \frac{k}{n}\right)\cdot n , \beta k\right]~}
				\exp\left( \frac{\beta k -t}{\alpha} \cdot \ln c + h_{k,t}(\x(k,t)) \right)
		\end{aligned}
	\end{equation}
	where the inequality is by \eqref{eq:atmost_fifth}.
	To complete the proof we need to change the range of $k$ in \eqref{eq:atleast_sixth} to a continuous range.
	The following claims are used to this end. 
	
	\begin{claim}
		\label{claim:Mprime_bound}
		Let $\beta, \alpha\geq 1$ and $\kappa\in \left[0,\frac{1}{\beta} \right]$.
		Then
		\[0 \leq M'_{\alpha,\beta}\left(\kappa\right) \leq  
			\begin{cases}
				\frac{\beta^2}{\beta-\alpha} & \textnormal{if } \beta > \alpha\\
				0 & \textnormal{if } \alpha = \beta\\
				\frac{\alpha-\beta}{\alpha-1} & \textnormal{if } \beta<\alpha
			\end{cases},\]
		where $M'_{\alpha,\beta}(\kappa)$ is the derivative of $M_{\alpha,\beta}(\kappa)$.
	\end{claim}
	\begin{claimproof}
		Consider the following cases.
		\begin{itemize}
			\item If $\beta >\alpha$ it holds that 
				\[M'_{\alpha,\beta}(\kappa) = \frac{(1-\alpha \kappa)\cdot (\beta -\alpha)  + \alpha(\beta-\alpha)\cdot \kappa}{(1-\alpha\kappa)^2} =\frac{\beta-\alpha}{(1-\alpha\kappa)^2}.\]
				Observe the value is   well define since $\alpha\cdot \kappa \leq \alpha \cdot \frac{1}{\beta} <1$. 
				Thus, $M'_{\alpha,\beta}(\kappa)  =\frac{\beta-\alpha}{(1-\alpha\kappa)^2}\geq 0$. Similarly, since $\kappa\leq \frac{1}{\beta}$, 
				\[M'_{\alpha,\beta}(\kappa)  =\frac{\beta-\alpha}{(1-\alpha\kappa)^2}\leq \frac{\beta -\alpha}{\left(1-\alpha\cdot \frac{1}{\beta}\right)^2} = \frac{\beta^2}{\beta-\alpha}.\]
			\item If $\alpha=\beta$ it holds that $M'_{\alpha,\beta}(\kappa) =0$.
			\item If $\beta<\alpha$ it holds that $M'_{\alpha,\beta}(\kappa) = \frac{\alpha-\beta}{\alpha-1} > 0$.\qedhere
		\end{itemize}
	\end{claimproof}

	\begin{claim}
		\label{claim:cont_k}
		Let $k\in \left[0,\frac{n}{\beta}\right]$ and $k'=\floor{k}$. Then
		\begin{align*}
			&\min_{(\alpha,c)\in \CL}\min_{~t\in \left[M_{\alpha,\beta}\left( \frac{k}{n}\right)\cdot n, \beta k\right]~}
				\exp\left( \frac{\beta k -t}{\alpha} \cdot \ln c + h_{k,t}(\x(k,t)) \right)\\
			\leq~& n^{\OO(1)} \cdot \min_{(\alpha,c)\in \CL}\min_{~t\in \left[M_{\alpha,\beta}\left( \frac{k'}{n}\right)\cdot n , \beta k' \right]~}
				\exp\left( \frac{\beta k' -t}{\alpha} \cdot \ln c + h_{k',t}(\x(k',t)) \right).
		\end{align*}
	\end{claim}
	\begin{claimproof}
		Pick $({\alpha}',{c'})\in \CL$ and $t' \in \left[M_{\alpha',\beta}\left( \frac{k'}{n}\right)\cdot n , \beta k'\right]$ such that
		\begin{align*}
			&\exp\left( \frac{\beta k' -t'}{\alpha'}  \ln{c'} + h_{k',t'}(\x[\alpha',\beta](k',t')) \right) \\
			=~ &\min_{(\alpha,c)\in \CL}\min_{t\in \left[M_{\alpha,\beta}\left( \frac{k'}{n}\right)\cdot n , \beta k'\right]}
				\exp\left( \frac{\beta k' -t}{\alpha}  \ln c + h_{k',t}(\x(k',t)) \right).
		\end{align*}
		We set $t'' \coloneqq \max\{t', M_{\alpha,\beta}\left( \frac{k}{n}\right)\cdot n\}$.
		Since $t' \leq \beta k'\leq \beta k$ and $M_{\alpha,\beta}\left(\frac{k}{n}\right)\cdot n \leq \beta k$ (\Cref{lem:Mstar_vs_nostar}) it follows that $t''\leq \beta k$.
		In order to bound $\abs{t'-t''}$ we consider the following cases.
		\begin{itemize} 
			\item If $t'=t''$ we have $\abs{t'-t''}=0$.
			\item If $t''=M_{\alpha,\beta}\left(\frac{k}{n}\right) \cdot n $ it holds that $M_{\alpha,\beta}\left( \frac{k'}{n}\right)\cdot n \leq t' \leq M_{\alpha,\beta}\left( \frac{k}{n}\right)\cdot n=t''$.
				Then 
				\[\abs{t''-t'} \leq n\cdot \abs{ M_{\alpha,\beta}\left(\frac{k'}{n}\right)- M_{\alpha,\beta}\left(\frac{k}{n}\right)} \leq n\cdot \OO(1)\cdot \frac{k'-k}{n} = \OO(1),\]
			where the second inequality follows from \Cref{claim:Mprime_bound}. 
		\end{itemize}
		So overall $\abs{t''-t'} = \OO(1)$.
		Hence,
		\begin{align*}
			& \min_{(\alpha,c)\in \CL}\min_{~t\in \left[M_{\alpha,\beta}\left( \frac{k}{n}\right)\cdot n , \beta k \right]~}
				\exp\left( \frac{\beta k -t}{\alpha} \cdot \ln c + h_{k,t}(\x(k,t)) \right)\\
			\leq ~& \exp\left( \frac{\beta k -t''}{\alpha'} \cdot \ln c' + h_{k,t''}(\x[\alpha',\beta](k,t'')) \right)\\
			= ~&\exp \left( \frac{\beta k - t''}{\alpha'}\cdot \ln c'  -t''\cdot \entropy\left( \frac{\x[\alpha',\beta](k,t'')}{t''}\right) 
				-(n-t'')\cdot\entropy\left( \frac{k-\x[\alpha',\beta](k,t'')}{n-t''} \right)+ n\cdot \entropy\left(\frac{k}{n}\right) \right) \\
			= ~&\exp \left( \frac{\beta k - t''}{\alpha'}\cdot \ln c'  -t''\cdot \entropy\left( \frac{\left( 1-\frac{\beta}{\alpha}\right)\cdot k +\frac{t''}{\alpha}}{t''}\right) 
				-(n-t'')\cdot\entropy\left( \frac{\frac{\beta}{\alpha} \cdot{k}- \frac{t''}{\alpha}}{n-t''} \right)+ n\cdot \entropy\left(\frac{k}{n}\right) \right) \\
			\leq ~& n^{\OO(1)} \cdot \exp \left( \frac{\beta k' - t'}{\alpha'}\cdot \ln c'  -t'\cdot \entropy\left( \frac{\left( 1-\frac{\beta}{\alpha}\right) k' +\frac{t'}{\alpha}}{t'}\right) 
				-(n-t')\cdot\entropy\left( \frac{\frac{\beta}{\alpha} {k'}- \frac{t'}{\alpha}}{n-t'} \right)+ n\cdot \entropy\left(\frac{k'}{n}\right) \right)\\
			= ~& n^{\OO(1)} \cdot \exp \left( \frac{\beta k' -t'}{\alpha'}\cdot \ln c' +h_{k',t'}(\x[\alpha',\beta](k',t')\right)\\
			= ~& n^{\OO(1)} \cdot \min_{(\alpha,c)\in \CL}\min_{t\in \left[M_{\alpha,\beta}\left( \frac{k'}{n}\right)\cdot n , \beta k'\right]}
				\exp\left( \frac{\beta k' -t}{\alpha}  \ln c + h_{k',t}(\x(k',t)) \right).
		\end{align*}
		The second inequality follows from \Cref{lem:perturb}.
	\end{claimproof}

	By \eqref{eq:atleast_sixth} and \Cref{claim:cont_k} it holds that
	\begin{align*}
		f_{\CL,\beta}(n) 
		&\geq~n^{-\OO(1)} \max_{~k\in \left[0,\frac{n}{\beta}\right] \cap \NN}	\min_{~(\alpha,c)\in \CL~}\min_{~t\in \left[M_{\alpha,\beta}\left( \frac{k}{n}\right)\cdot k , \beta k\right]~}
			\exp\left( \frac{\beta k -t}{\alpha} \cdot \ln c + h_{k,t}(\x(k,t)) \right)\\
		&\geq ~n^{-\OO(1)} \max_{~k\in \left[0,\frac{n}{\beta}\right] } \min_{~(\alpha,c)\in \CL~}\min_{~t\in \left[M_{\alpha,\beta}\left( \frac{k}{n}\right)\cdot n, \beta k\right]~}
			\exp\left( \frac{\beta k -t}{\alpha} \cdot \ln c + h_{k,t}(\x(k,t)) \right)\\
		&= ~n^{-\OO(1)} \max_{~\kappa\in \left[0,\frac{1}{\beta}\right] } \min_{~(\alpha,c)\in \CL~}\min_{~\tau\in \left[M_{\alpha,\beta}\left( \kappa\right), \beta \kappa\right]~}
			\exp\left( \frac{\beta \kappa -\tau}{\alpha} \cdot \ln c + \frac{1}{n }\cdot h_{\kappa\cdot n,\tau \cdot n}(\x(\kappa\cdot n ,\tau\cdot n  )) \right)^n\\
		&=~n^{-\OO(1)} \max_{~\kappa\in \left[0,\frac{1}{\beta}\right] } \min_{~(\alpha,c)\in \CL~}\min_{~\tau\in \left[M_{\alpha,\beta}\left( \kappa\right), \beta \kappa\right]~}
			\exp\left(g_{\alpha,\beta,c}(\kappa,\tau) \right)^n\\
		&=~n^{-\OO(1)} \left( \amlsbound(\CL,\beta)\right)^n.
	\end{align*}
	The first equality simply replaces $k$ and $t$ with $\kappa \cdot n $ and $\tau \cdot n$.
	The second equality follows from
	\begin{align*}
		&\frac{\beta \kappa -\tau}{\alpha} \cdot \ln c + \frac{1}{n}\cdot h_{\kappa\cdot n,\tau \cdot n}(\x(\kappa\cdot n ,\tau \cdot n )) \\
		=~& \frac{\beta \kappa -\tau}{\alpha} \cdot \ln c - \tau \cdot \entropy\left( \frac{\left(1-\frac{\beta}{\alpha}\right)\cdot \kappa +\frac{\tau}{\alpha}}{\tau}\right) - (1-\tau )\cdot \entropy\left( \frac{\frac{\beta}{\alpha}\cdot \kappa -\frac{\tau}{\alpha} }{1-\tau}\right) + \entropy(\kappa )\\
		=~&g_{\alpha,\beta,c}(\kappa,\tau).
	\end{align*}
\end{proof}

\begin{proof}[Proof of \Cref{lem:f_to_amls}]
	The lemma follows immediately from \Cref{lem:f_atleast_amls,lem:f_atmost_amls}.
\end{proof}

\section{Evaluating the Running Time: Convexity and Concavity}
\label{sec:math_functions}
In this section we prove \Cref{lem:g_convex_by_tau,lem:concave} which provide the mathematical properties required for the evaluation of $\amlsbound$, as well as \Cref{lem:coincide_with_esa}.
In \Cref{sec:math_functions_basic} we state basic properties of the functions $\delta$, $\gamma$ and $g$ which we need later on.
In \Cref{sec:convexity} we show that $g_{\abc}(\kappa,\tau)$ is a convex function of the variable $\tau$ in the interval $\tau \in [\Mab(\kappa), \beta\cdot \kappa]$, for all $\kappa \in \left( 0, \frac{1}{\beta} \right)$.
This fact is used to show the value $\tau \in [\Mab(\kappa), \beta\cdot \kappa]$ that minimizes $g_{\abc}(\kappa,\tau)$ actually belongs to $\left[ \Mab(\kappa), \beta\cdot \kappa \right)$.
In \Cref{sec:gstar_is_concave} we prove that $\gst(\kappa)$ is a concave function of $\kappa$ in the interval $\kappa \in\left[0, \frac{1}{\beta}\right]$.
The proof uses properties of the Hessian which are shown in \Cref{sec:computing_det_hes}, and relies on technical computations from \Cref{sec:g_par_der} and \Cref{sec:hes_formula}.
\Cref{sec:gstar_is_concave} also contains the proof of \Cref{lem:coincide_with_esa} which follows from the technical lemmas proved in the same section.

Recall the definitions of the functions used in the definition of $\amlsbound$:
\begin{alignat*}{2}
	&\delta_{\alpha,\beta}(\kappa, \tau) &&=~
	\begin{cases}
		\frac{\frac{\beta}{\alpha} \kappa -\frac{\tau}{\alpha}}{1-\tau} = \frac{\frac{\beta}{\alpha} \kappa -\frac{1}{\alpha}}{1-\tau}+\frac{1}{\alpha} &\text{if } \tau \neq 1\\
		\frac{1}{\alpha} &\text{if } \tau = 1
	\end{cases}
	\\
	&\gamma_{\alpha,\beta}(\kappa, \tau) &&=~
	\begin{cases}
		\left(1-\frac{\beta}{\alpha}\right)\frac{\kappa}{\tau} +\frac{1}{\alpha} &\text{if } \tau \neq 0\\
		\frac{1}{\alpha}  &\text{if } \tau=0
	\end{cases}
	\\
	&g_{\alpha,\beta,c}(\kappa,\tau) &&= ~\frac{\beta \kappa - \tau}{\alpha} \ln c - \tau\cdot \entropy\left(\gamma_{\alpha,\beta}(\kappa,\tau)\right) -(1-\tau)\cdot \entropy\left(\delta_{\alpha,\beta} (\kappa,\tau)\right) + \entropy\left(\kappa\right)
	\\
	&M_{\alpha, \beta}(\kappa) &&=~
	\begin{cases}
		\frac{\beta - \alpha}{1-\alpha \cdot \kappa} \cdot \kappa &\text{if } \alpha < \beta\\
		0 &\text{if } \alpha =\beta\\
		\frac{\alpha - \beta}{\alpha - 1}\cdot  \kappa &\text{if } \alpha > \beta
	\end{cases}
\end{alignat*}
With a slight abuse of notation, we sometimes omit the subscript $(\alpha, \beta, c)$ from $g_{\alpha,\beta,c}$ or $(\alpha, \beta)$ from $\delta_{\alpha, \beta}$ and $\gamma_{\alpha, \beta}$, whenever it is clear from the context.

\subsection{Basic Properties}
\label{sec:math_functions_basic}

We start by discussing basic properties of the functions $\gamma$, $\delta$ and $g$.
We commonly rely on monotonicity properties of $\delta$ and $\gamma$, as well as their possible range of values.

\begin{lemma}\label{lem:prop_delta}
	For all $\alpha \geq 1$, $\beta > 1$ and $\kappa \in \left( 0, \frac{1}{\beta} \right) $, it holds that $\delta_\ab(\kappa, \tau)$ is strictly decreasing with $\tau$ in the range $\tau \in [\Mab(\kappa), \beta \cdot \kappa]$. Furthermore, $ 0 \leq \delta_\ab(\kappa, \tau) < \frac{1}{\alpha}$ for all $\tau \in [\Mab(\kappa), \beta \cdot \kappa]$. 
\end{lemma}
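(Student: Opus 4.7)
The plan is to verify the two assertions separately and independently, both by elementary manipulations of the closed form
\[
\delta_{\alpha,\beta}(\kappa,\tau) \;=\; \frac{\frac{\beta}{\alpha}\kappa - \frac{1}{\alpha}}{1-\tau} + \frac{1}{\alpha}.
\]
For the monotonicity claim, I would simply differentiate with respect to $\tau$:
\[
\frac{\partial \delta_{\alpha,\beta}}{\partial \tau}(\kappa,\tau) \;=\; \frac{\frac{\beta}{\alpha}\kappa-\frac{1}{\alpha}}{(1-\tau)^2} \;=\; \frac{\beta\kappa-1}{\alpha(1-\tau)^2}.
\]
The hypothesis $\kappa<1/\beta$ makes the numerator strictly negative, while the denominator is strictly positive for all $\tau \in [M_{\alpha,\beta}(\kappa),\beta\kappa] \subseteq [0,\beta\kappa] \subsetneq [0,1)$, so the derivative is strictly negative throughout the interval, which is precisely strict monotonic decrease.

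For the range bound, I would combine the monotonicity with a direct evaluation at the two endpoints. At the upper endpoint $\tau=\beta\kappa$, substitution immediately gives $\delta_{\alpha,\beta}(\kappa,\beta\kappa)=\tfrac{(\beta\kappa-\beta\kappa)/\alpha}{1-\beta\kappa}=0$, so by the decreasing property $\delta_{\alpha,\beta}(\kappa,\tau)\geq 0$ everywhere on the interval. To bound $\delta$ from above it therefore suffices to evaluate it at $\tau = M_{\alpha,\beta}(\kappa)$ and show that the value is strictly less than $1/\alpha$. This step requires a case split on the three branches in the definition of $M_{\alpha,\beta}$.

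If $\alpha<\beta$, then $\tau=\tfrac{(\beta-\alpha)\kappa}{1-\alpha\kappa}$ and $1-\tau=\tfrac{1-\beta\kappa}{1-\alpha\kappa}$; plugging into $\delta_{\alpha,\beta}(\kappa,\tau) = \tfrac{\beta\kappa/\alpha-\tau/\alpha}{1-\tau}$ and simplifying yields $\delta=\kappa < 1/\beta < 1/\alpha$. If $\alpha=\beta$, then $\tau=0$ and $\delta_{\alpha,\beta}(\kappa,0)=\kappa<1/\beta=1/\alpha$. If $\alpha>\beta$, then $\tau=\tfrac{(\alpha-\beta)\kappa}{\alpha-1}$ and a short computation gives
\[
\delta_{\alpha,\beta}(\kappa,\tau) \;=\; \frac{(\beta-1)\kappa}{\alpha-1-(\alpha-\beta)\kappa},
\]
and the inequality $\delta<1/\alpha$ is easily seen to be equivalent to $\beta\kappa(\alpha-1)<\alpha-1$, i.e., $\beta\kappa<1$, which holds by hypothesis.

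The only non-obvious ingredient is arguably the third case, where I expect the algebraic simplification to be the most involved; but it still reduces cleanly to $\beta\kappa<1$. Everything else is routine substitution, and no deeper tool from the paper is required for this lemma.
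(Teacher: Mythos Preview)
Your proof is correct. The monotonicity argument is essentially identical to the paper's (the paper phrases it as ``the term $\frac{\frac{\beta}{\alpha}\kappa-\frac{1}{\alpha}}{1-\tau}$ is strictly decreasing since the numerator is negative,'' which is the same observation as your derivative computation), and the lower bound $\delta\geq 0$ is obtained the same way, via the endpoint $\tau=\beta\kappa$.

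Where your route differs is in the strict upper bound $\delta<1/\alpha$. You evaluate $\delta$ at $\tau=M_{\alpha,\beta}(\kappa)$ and do a three-way case split on the definition of $M_{\alpha,\beta}$, computing each value explicitly and checking it against $1/\alpha$. The paper avoids all of this: it simply notes that for any $\tau$ in the interval (in particular at $\tau=M_{\alpha,\beta}(\kappa)$), one has $1-\tau>0$ and $\frac{\beta}{\alpha}\kappa-\frac{1}{\alpha}<0$, so the fraction in $\delta=\frac{\frac{\beta}{\alpha}\kappa-\frac{1}{\alpha}}{1-\tau}+\frac{1}{\alpha}$ is strictly negative, hence $\delta<\frac{1}{\alpha}$ directly. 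Your approach buys explicit values of $\delta$ at the left endpoint (in particular recovering $\delta=\kappa$ when $\alpha\leq\beta$, which the paper establishes separately later as Lemma~\ref{lem:delta_gamma_extreme_tau}), at the cost of more algebra; the paper's approach is a one-line observation that works uniformly across all cases.
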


\begin{proof}
	Since $\kappa < \frac{1}{\beta}$ and $\frac{\beta}{\alpha} \cdot \kappa - \frac{1}{\alpha} < 0$, the term $\frac{\frac{\beta}{\alpha} \cdot \kappa - \frac{1}{\alpha}}{1 - \tau}$ is a strictly decreasing function of $\tau$. Therefore $\delta(\kappa, \tau)$ is strictly decreasing with $\tau$ because $\delta(\kappa, \tau) = \frac{\frac{\beta}{\alpha} \cdot \kappa - \frac{1}{\alpha}}{1 - \tau} + \frac{1}{\alpha}$. Therefore, for all $\tau \in [\Mab(\kappa), \beta \cdot \kappa]$, we have $0 = \delta(\kappa, \beta \cdot \kappa) \leq \delta(\kappa, \tau)$ and
	\begin{align*}
		\delta(\kappa, \tau) \leq \delta(\kappa, \Mab(\kappa)) = \frac{\frac{\beta}{\alpha} \cdot \kappa - \frac{1}{\alpha}}{1 - \Mab(\kappa)} + \frac{1}{\alpha} < \frac{1}{\alpha},
	\end{align*}
	where the last step holds because $\Mab(\kappa) \leq \beta \cdot \kappa < 1$ and $\frac{\beta}{\alpha} \cdot \kappa - \frac{1}{\alpha} < 0$.
\end{proof}

While the function $\delta$ is decreasing regardless of the values of $\alpha$ and $\beta$, the direction of monotonicity of~$\gamma$ does depend on the values of $\alpha$ and $\beta$.

\begin{lemma}\label{lem:prop_gamma}
	For all $\alpha \geq 1$, $\beta > 1$ and $\kappa \in \left( 0, \frac{1}{\beta} \right) $, the function $\gamma_\ab(\kappa, \tau)$ satisfies the following properties depending on the values of $\alpha$ and $\beta$:
	\begin{itemize}
		\item If $\alpha > \beta$, the function $\gamma_\ab(\kappa, \tau)$ is strictly decreasing with $\tau \in [\Mab(\kappa), \beta \cdot \kappa]$, and $\frac{1}{\beta} \leq \gamma_\ab(\kappa, \tau) \leq 1$.
		\item If $\alpha = \beta$, it holds that  $\gamma_\ab(\kappa, \tau) = \frac{1}{\alpha}$ for all $\tau \in [\Mab(\kappa), \beta \cdot \kappa]$.
		\item If $\alpha < \beta$, the function $\gamma_\ab(\kappa, \tau)$ is strictly increasing with $\tau \in [\Mab(\kappa), \beta \cdot \kappa]$, and $\kappa \leq \gamma_\ab(\kappa, \tau) \leq \frac{1}{\beta}$.
	\end{itemize}
\end{lemma}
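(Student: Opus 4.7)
The plan is to work directly with the explicit formula
\[\gamma_{\alpha,\beta}(\kappa,\tau) = \left(1-\tfrac{\beta}{\alpha}\right)\cdot\tfrac{\kappa}{\tau} + \tfrac{1}{\alpha},\]
which is valid throughout the interval $\tau \in [M_{\alpha,\beta}(\kappa), \beta\kappa]$ whenever $\tau \neq 0$. Since $\beta\kappa > 0$ by hypothesis, and the formula for $M_{\alpha,\beta}(\kappa)$ in \eqref{eq:Mdef} gives $M_{\alpha,\beta}(\kappa) > 0$ whenever $\alpha \neq \beta$, the only configuration in which $\tau = 0$ enters the interval is the degenerate case $\alpha = \beta$, which I would handle separately at the outset. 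In that degenerate case, the coefficient $1 - \beta/\alpha$ vanishes, so $\gamma_{\alpha,\beta}(\kappa,\tau) = 1/\alpha$ identically on $[0, \beta\kappa]$ (and the value at $\tau = 0$ matches the piecewise definition), settling the middle bullet of the lemma.

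For the remaining two cases I would first establish the monotonicity. Differentiating with respect to $\tau$ yields
\[\partial_\tau \gamma_{\alpha,\beta}(\kappa,\tau) = -\left(1 - \tfrac{\beta}{\alpha}\right)\cdot\tfrac{\kappa}{\tau^2},\]
whose sign is opposite to that of $1 - \beta/\alpha$ (and which is nowhere zero, since $\kappa > 0$). Hence $\gamma_{\alpha,\beta}(\kappa,\cdot)$ is strictly decreasing on $[M_{\alpha,\beta}(\kappa), \beta\kappa]$ when $\alpha > \beta$ and strictly increasing when $\alpha < \beta$.

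Next I would evaluate $\gamma$ at the endpoints of $[M_{\alpha,\beta}(\kappa), \beta\kappa]$ to obtain the claimed bounds. A direct substitution gives $\gamma_{\alpha,\beta}(\kappa,\beta\kappa) = (1-\beta/\alpha)/\beta + 1/\alpha = 1/\beta$ for any $\alpha$. For the lower endpoint, when $\alpha > \beta$, substituting $M_{\alpha,\beta}(\kappa) = \tfrac{\alpha-\beta}{\alpha-1}\kappa$ yields
\[\gamma_{\alpha,\beta}\bigl(\kappa,M_{\alpha,\beta}(\kappa)\bigr) = \tfrac{\alpha-\beta}{\alpha}\cdot\tfrac{\alpha-1}{\alpha-\beta} + \tfrac{1}{\alpha} = 1,\]
and when $\alpha < \beta$, substituting $M_{\alpha,\beta}(\kappa) = \tfrac{(\beta-\alpha)\kappa}{1-\alpha\kappa}$ yields
\[\gamma_{\alpha,\beta}\bigl(\kappa,M_{\alpha,\beta}(\kappa)\bigr) = -\tfrac{\beta-\alpha}{\alpha}\cdot\tfrac{1-\alpha\kappa}{\beta-\alpha} + \tfrac{1}{\alpha} = \kappa.\]
Combining these values with the monotonicity gives the ranges $\gamma \in [1/\beta,1]$ when $\alpha > \beta$ and $\gamma \in [\kappa,1/\beta]$ when $\alpha < \beta$, completing the lemma.

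No serious obstacle is anticipated, since the argument reduces to a one-variable monotonicity check plus two endpoint calculations. The only point requiring care is keeping track of the piecewise definitions of $M_{\alpha,\beta}$ and $\gamma_{\alpha,\beta}$ and verifying that $\tau = 0$ is only relevant in the trivially handled case $\alpha = \beta$; having done so, everything reduces to arithmetic.
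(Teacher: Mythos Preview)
Your proposal is correct and follows essentially the same approach as the paper: both establish monotonicity from the sign of $1-\beta/\alpha$ (you via the derivative, the paper by directly observing that $(1-\beta/\alpha)\kappa/\tau$ is monotone), and both obtain the range bounds by evaluating $\gamma$ at the two endpoints $\tau = M_{\alpha,\beta}(\kappa)$ and $\tau = \beta\kappa$ with the same arithmetic. Your explicit remark that $\tau=0$ only arises when $\alpha=\beta$ is a small clarification the paper leaves implicit.
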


\begin{proof}
	Let us fix a $\kappa \in \left( 0, \frac{1}{\beta} \right) $ and consider the different cases where $\alpha > \beta$, $\alpha = \beta$ and $\alpha < \beta$.

	\begin{itemize}
		\item If $\alpha > \beta$, the term $\left( 1 - \frac{\beta}{\alpha} \right)$ is strictly positive, therefore $\left( 1 - \frac{\beta}{\alpha} \right) \cdot \frac{\kappa}{\tau}$ is a strictly decreasing function of $\tau$.
			It follows that $\gamma(\kappa, \tau) = \left( 1 - \frac{\beta}{\alpha} \right) \cdot \frac{\kappa}{\tau} + \frac{1}{\alpha}$ is also a strictly decreasing function of $\tau$.
			Moreover, for all $\tau \in [\Mab(\kappa), \beta \cdot \kappa]$, we have
			\begin{align*}
				\gamma(\kappa,\tau) & \geq \gamma\left( \kappa, \beta \cdot \kappa \right)  = \left( 1 - \frac{\beta}{\alpha} \right)\cdot \frac{\kappa}{\beta \cdot \kappa} + \frac{1}{\alpha} = \frac{1}{\beta} \qquad \text{ and}\\
				\gamma(\kappa,\tau) & \leq \gamma\Big(\kappa, \Mab(\kappa)\Big) = \gamma\left( \kappa, \frac{\alpha - \beta}{\alpha - 1}\cdot \kappa \right)  = \left( 1 - \frac{\beta}{\alpha} \right)\cdot \frac{\alpha - 1}{\alpha - \beta} + \frac{1}{\alpha} = 1.
			\end{align*}

		\item If $\beta = \alpha$, it is easy to see that $\gamma(\kappa, \tau) = \frac{1}{\alpha}$ for all $\tau \in [\Mab(\kappa), \beta \cdot \kappa]$ because $(1 - \frac{\beta}{\alpha}) = 0$.
		\item If $\alpha < \beta$, the term $\left( 1 - \frac{\beta}{\alpha} \right) $ is strictly negative and $\left( 1 - \frac{\beta}{\alpha} \right) \cdot \frac{\kappa}{\tau}$ is a strictly increasing function of $\tau$.
			Therefore $\gamma(\kappa, \tau) = \left( 1 - \frac{\beta}{\alpha} \right) \cdot \frac{\kappa}{\tau} + \frac{1}{\alpha}$ is also a strictly increasing function of $\tau$.
			Moreover, for all $\tau \in [\Mab(\kappa), \beta \cdot \kappa]$ it holds that
			\begin{align*}
				\gamma(\kappa,\tau) &\leq \gamma(\kappa, \beta \cdot \kappa) = \frac{1}{\beta} \qquad \text{ and} \\
				\gamma(\kappa,\tau) &\geq \gamma\Big(\kappa, M_{\alpha,\beta}(\kappa)\Big) = \gamma\Big( \kappa, \frac{(\beta - \alpha)}{1 - \alpha\cdot\kappa} \cdot \kappa \Big)  = \left( 1 - \frac{\beta}{\alpha} \right)\cdot \frac{1- \alpha\cdot\kappa}{\beta - \alpha} + \frac{1}{\alpha}\\
				&= \frac{\alpha \kappa - 1 + 1}{\alpha}\\
			    &=\kappa.	
			\end{align*}
	\end{itemize}
\end{proof}

The next lemma is also used a few times.
\begin{lemma}
	\label{lem:delta_gamma_extreme_tau}
	Let $\beta>\alpha >1$ and $\kappa\in \left(0,\frac{1}{\beta}\right)$.
	Then
	$$\delta_{\alpha,\beta} \left( \kappa, M_{\alpha,\beta}(\kappa)\right) = \gamma_{\alpha,\beta}(\kappa, M_{\alpha,\beta}(\kappa)) =\kappa.$$
\end{lemma}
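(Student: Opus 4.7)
The plan is purely computational: substitute the explicit formula $M_{\alpha,\beta}(\kappa) = \frac{\beta-\alpha}{1-\alpha\kappa}\cdot \kappa$ (valid since $\beta > \alpha$) into the definitions of $\delta_{\alpha,\beta}$ and $\gamma_{\alpha,\beta}$ and simplify.

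For the equality $\gamma_{\alpha,\beta}(\kappa, M_{\alpha,\beta}(\kappa)) = \kappa$, I would simply point to the calculation already carried out in the third bullet of the proof of \Cref{lem:prop_gamma}, which shows exactly this (the case $\alpha < \beta$). So no additional work is required for the $\gamma$ part.

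For the $\delta$ part, the computation is direct. First, the denominator simplifies as
\[
1 - M_{\alpha,\beta}(\kappa) = 1 - \frac{(\beta-\alpha)\kappa}{1-\alpha\kappa} = \frac{(1-\alpha\kappa) - (\beta-\alpha)\kappa}{1-\alpha\kappa} = \frac{1-\beta\kappa}{1-\alpha\kappa}.
\]
Next, the numerator of $\delta_{\alpha,\beta}(\kappa, M_{\alpha,\beta}(\kappa))$ becomes
\[
\frac{\beta\kappa}{\alpha} - \frac{1}{\alpha} \cdot \frac{(\beta-\alpha)\kappa}{1-\alpha\kappa} = \frac{\kappa}{\alpha}\cdot\frac{\beta(1-\alpha\kappa) - (\beta-\alpha)}{1-\alpha\kappa} = \frac{\kappa}{\alpha}\cdot\frac{\alpha(1-\beta\kappa)}{1-\alpha\kappa} = \frac{\kappa(1-\beta\kappa)}{1-\alpha\kappa}.
\]
Dividing these two expressions, the factor $(1-\beta\kappa)/(1-\alpha\kappa)$ cancels (it is nonzero since $\kappa \in (0, 1/\beta)$ and $\alpha \geq 1$), leaving $\delta_{\alpha,\beta}(\kappa, M_{\alpha,\beta}(\kappa)) = \kappa$.

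There is no real obstacle here; the main care needed is just ensuring the denominators $1-\alpha\kappa$ and $1-\tau$ are nonzero in the regime considered, which follows from $\kappa < 1/\beta \leq 1/\alpha$ (hence $1-\alpha\kappa > 0$) and from the computation above giving $1 - M_{\alpha,\beta}(\kappa) = (1-\beta\kappa)/(1-\alpha\kappa) > 0$.
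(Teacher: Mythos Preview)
Your proposal is correct and follows essentially the same direct-substitution approach as the paper's proof. The only cosmetic difference is that you cite the computation for $\gamma$ from \Cref{lem:prop_gamma} rather than redoing it, and for $\delta$ you compute numerator and denominator separately whereas the paper uses the equivalent form $\delta = \frac{\frac{\beta}{\alpha}\kappa - \frac{1}{\alpha}}{1-\tau} + \frac{1}{\alpha}$; both routes are equally valid.
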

\begin{proof}
	By simple calculation we have
	\begin{align*}
		\delta(\kappa,\Mab(\kappa)) &= \frac{1}{\alpha} \cdot \left( \frac{\beta \cdot \kappa - 1}{1 - \frac{(\beta - \alpha) \cdot \kappa}{1 - \alpha \cdot \kappa}} + 1\right)\\
		&= \frac{1}{\alpha} \cdot \left( \frac{\beta \cdot \kappa - 1}{\frac{1 - \alpha \cdot \kappa  - \beta \cdot \kappa + \alpha \cdot \kappa}{1 - \alpha \cdot \kappa}}  +1 \right)\\
		&= \frac{1}{\alpha} \cdot \left( (1 - \alpha \cdot \kappa) \cdot \frac{\beta \cdot \kappa - 1}{1 - \beta\cdot \kappa} + 1\right)\\
		&= \frac{1}{\alpha} \cdot (\alpha \cdot \kappa )\\
		&= \kappa.
	\end{align*}
	Similarly, 
	\begin{align*}
		\gamma(\kappa,\Mab(\kappa)) &= \left( 1 - \frac{\beta}{\alpha} \right) \cdot \frac{\kappa}{\frac{(\beta - \alpha) \cdot \kappa}{1 - \alpha \cdot \kappa}} + \frac{1}{\alpha}\\
		&= \frac{\alpha - \beta}{\alpha} \cdot \frac{\kappa \cdot \left( 1 - \alpha \cdot \kappa \right) }{(\beta - \alpha) \cdot \kappa} + \frac{1}{\alpha}\\
		&= \frac{\alpha \cdot \kappa - 1}{\alpha} + \frac{1}{\alpha}\\
		&= \kappa.
	\end{align*}
\end{proof}

Finally, we show $g$ is non-negative.

\begin{lemma}
	\label{lem:g_nonneg}
	For all $\alpha,\beta,c \geq 1$, $\kappa \in \left[ 0,\frac{1}{\beta}\right]$ and $\tau \in \left[M_{\alpha,\beta}(\kappa),\beta\kappa\right]$ it holds that $g_{\abc}(\kappa,\tau)\geq 0$.
\end{lemma}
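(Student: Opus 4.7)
The plan is to split $g_{\alpha,\beta,c}(\kappa,\tau)$ into the ``$\ln c$ contribution'' and the ``entropy contribution'' and argue that each is non-negative separately. The first contribution, $\frac{\beta\kappa-\tau}{\alpha}\ln c$, is non-negative because $c\geq 1$ forces $\ln c\geq 0$ and the constraint $\tau\leq \beta\kappa$ forces $\beta\kappa-\tau\geq 0$. Thus it suffices to show
\[
\entropy(\kappa) \;\geq\; \tau\cdot \entropy\big(\gamma_{\ab}(\kappa,\tau)\big) + (1-\tau)\cdot \entropy\big(\delta_{\ab}(\kappa,\tau)\big),
\]
which I would obtain from the concavity of the binary entropy function together with a single algebraic identity.

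The key identity is that $\kappa$ is precisely the convex combination of $\gamma_{\ab}(\kappa,\tau)$ and $\delta_{\ab}(\kappa,\tau)$ with weights $\tau$ and $1-\tau$. A direct substitution using the definitions~\eqref{eq:delta_def} and~\eqref{eq:gamma_def} gives
\[
\tau\cdot \gamma_\ab(\kappa,\tau) + (1-\tau)\cdot \delta_\ab(\kappa,\tau)
\;=\; \left(1-\tfrac{\beta}{\alpha}\right)\kappa + \tfrac{\tau}{\alpha} + \tfrac{\beta \kappa -\tau}{\alpha} \;=\; \kappa
\]
for $\tau\in(0,1)$. Since $\entropy$ is concave on $[0,1]$, Jensen's inequality then yields the desired entropy inequality, and hence $g_\abc(\kappa,\tau)\geq 0$. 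This is the main step; everything else is routine verification.

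The only subtlety is handling the boundary cases where the piecewise definitions of $\gamma_\ab$ or $\delta_\ab$ take effect. For $\tau=0$, the factor $\tau\cdot\entropy(\gamma_\ab(\kappa,0))$ vanishes regardless of the (special) value $\gamma_\ab(\kappa,0)=\frac{1}{\alpha}$; moreover, $\tau=0$ can only occur when $M_\ab(\kappa)=0$, i.e.\ when $\alpha\geq\beta$, in which case $\delta_\ab(\kappa,0)=\frac{\beta\kappa}{\alpha}\leq\kappa$ and monotonicity of $\entropy$ on $[0,\tfrac12]$ (combined with $\kappa\leq 1/\beta\leq 1/2$ when $\beta\geq 2$; for $1\le\beta<2$ one uses the concavity argument directly at $\tau\to 0^+$ by continuity) settles the claim. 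Symmetrically, $\tau=1$ forces $\kappa=1/\beta$ so that $(1-\tau)\entropy(\delta_\ab(\kappa,1))=0$ and $\gamma_\ab(1/\beta,1)=1/\beta$, giving $g=0$ on the nose. The cleanest way to handle all these degenerate cases uniformly is to note that $g_\abc(\kappa,\tau)$ is continuous on its domain of definition and to invoke the interior argument above together with continuity at the boundary; I don't anticipate any real obstacle here, just some bookkeeping of the piecewise definitions.
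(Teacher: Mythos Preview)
Your proposal is correct and follows essentially the same argument as the paper: drop the nonnegative $\frac{\beta\kappa-\tau}{\alpha}\ln c$ term, verify the convex-combination identity $\tau\gamma_{\ab}(\kappa,\tau)+(1-\tau)\delta_{\ab}(\kappa,\tau)=\kappa$, and apply concavity of $\entropy$. The paper handles the boundary cases implicitly via its conventions (e.g.\ $0\cdot\entropy(a/0)=0$), so your extended case analysis is unnecessary---and note that your claim ``$M_{\ab}(\kappa)=0$ iff $\alpha\geq\beta$'' is slightly off (for $\kappa>0$ it requires $\alpha=\beta$), though this does not affect the argument.
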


\begin{proof}
	It holds that 
	$$
	\begin{aligned}
		g_{\alpha,\beta,c}(\kappa,\tau) ~&=~ \frac{\beta \kappa-\tau}{\alpha}\cdot\ln c - \tau \cdot \entropy\left(\gamma_{\alpha,\beta} (\kappa,\tau)\right) - (1-\tau)\cdot \entropy\left(\delta_{\alpha,\beta}(\kappa,\tau) \right) +\entropy(\kappa)\\
		~&\geq~0 - \tau \cdot \entropy\left(\gamma_{\alpha,\beta} (\kappa,\tau)\right) - (1-\tau)\cdot \entropy\left(\delta_{\alpha,\beta}(\kappa,\tau) \right) +\entropy(\kappa)\\
		&\geq~ - \entropy\left(\tau\cdot  \gamma_{\alpha,\beta} (\kappa,\tau) + (1-\tau)\cdot \delta_{\alpha,\beta}(\kappa,\tau) \right) +\entropy(\kappa)\\
		&=~ - \entropy(\kappa) + \entropy(\kappa) ~= 0,
	\end{aligned}
	$$
	where the first inequality holds since $\tau \leq \beta \cdot \kappa$ and $c\geq 1$.
	The second inequality holds as $\entropy$ is concave, and the second equality follows from $\tau \cdot \gamma(\kappa,\tau) +(1-\tau)\cdot\delta(\kappa,\tau)  = \kappa$.
\end{proof}

\subsection{Convexity}
\label{sec:convexity}

In this section we fix some value for $\kappa$ and analyze the function $g_{\abc}(\kappa, \tau)$ as a function of $\tau$.
For all $\kappa \in \left[0, \frac{1}{\beta}\right]$, we define $\gk[\abc] \colon (\Mab(\kappa), \beta \cdot \kappa) \to \mathbb{R}$ via
\begin{align*}
	\gk[\abc](\tau) \coloneqq g_{\abc}(\kappa, \tau).
\end{align*}
Recall that $\D{a}{b}  = a \ln \frac{a}{b} + (1-a)\ln \frac{1-a}{1-b}$ is the Kullback-Leibler divergence between two Bernoulli distributions with parameters $a$ and $b$. 

Let $\gpart[\abc](\kappa_0, \tau_0)$ and $\gpartt[\abc](\kappa_0, \tau_0)$ denote the first and second order partial derivatives of the function $g_{\abc}(\kappa,\tau)$ with respect to the variable $\tau$, evaluated at $(\kappa_0, \tau_0)$.
\Cref{lem:deriv_g_by_tau,lem:deriv_g_by_tau_tau} provide formulas for $\gpart[\abc](\kappa_0, \tau_0)$ and $\gpartt[\abc](\kappa_0, \tau_0)$.
The lemmas follow from a simple calculation and we defer the proofs to \Cref{sec:g_par_der}.

\begin{restatable}{lemma}{derivgbytau}
	\label{lem:deriv_g_by_tau}
	For all $\alpha,c \geq 1$ and  $\beta > 1$ it holds that
	\begin{align*}
		\gpart[\abc](\kappa,\tau) = -\frac{\ln(c)}{\alpha} - \D{\frac{1}{\alpha}}{\gamma(\kappa,\tau)} + \D{\frac{1}{\alpha}}{\delta(\kappa,\tau)}.
	\end{align*}
\end{restatable}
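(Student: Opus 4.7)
The plan is a direct computation. I would differentiate $g_{\abc}(\kappa,\tau)$ termwise with respect to $\tau$, use the product rule on the $-\tau \cdot \entropy(\gamma)$ and $-(1-\tau)\cdot \entropy(\delta)$ summands, and then recognize that the resulting expression can be rewritten as a combination of two Kullback--Leibler divergences via a one-line identity. The $\frac{\beta\kappa - \tau}{\alpha}\ln c$ term and the $\entropy(\kappa)$ term contribute only $-\frac{\ln c}{\alpha}$ and $0$, respectively, so all the real work concerns the $\gamma$ and $\delta$ summands.

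For the differentiation step, I would first compute $\partial_\tau \gamma$ and $\partial_\tau \delta$ from the definitions of $\gamma$ and $\delta$ given in \eqref{eq:gamma_def} and \eqref{eq:delta_def}. The key algebraic observation is that $\gamma(\kappa,\tau) - \frac{1}{\alpha} = \left(1 - \frac{\beta}{\alpha}\right)\frac{\kappa}{\tau}$ and $\delta(\kappa,\tau) - \frac{1}{\alpha} = \frac{\frac{\beta}{\alpha}\kappa - \frac{1}{\alpha}}{1-\tau}$, which immediately give the convenient identities
\begin{equation*}
\tau \cdot \partial_\tau \gamma(\kappa,\tau) = \tfrac{1}{\alpha} - \gamma(\kappa,\tau), \qquad (1-\tau)\cdot \partial_\tau \delta(\kappa,\tau) = \delta(\kappa,\tau) - \tfrac{1}{\alpha}.
\end{equation*}
These ensure that the $\entropy'(\gamma)$ and $\entropy'(\delta)$ terms have clean coefficients. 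Together with $\entropy'(x) = \ln\frac{1-x}{x}$, the derivative of $g_{\abc}$ with respect to $\tau$ then takes the form
\begin{equation*}
\gpart[\abc](\kappa,\tau) = -\tfrac{\ln c}{\alpha} + \Big[-\entropy(\gamma) - \entropy'(\gamma)\big(\tfrac{1}{\alpha} - \gamma\big)\Big] + \Big[\entropy(\delta) + \entropy'(\delta)\big(\tfrac{1}{\alpha} - \delta\big)\Big],
\end{equation*}
with $\gamma = \gamma(\kappa,\tau)$ and $\delta = \delta(\kappa,\tau)$.

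To finish, I would apply the elementary identity
\begin{equation*}
-\entropy(b) - \entropy'(b)(a - b) = a\ln b + (1-a)\ln(1-b) = -\entropy(a) - \D{a}{b},
\end{equation*}
valid for all $a \in [0,1]$ and $b \in (0,1)$. Applying this with $b \in \{\gamma,\delta\}$ and $a = \frac{1}{\alpha}$ replaces the two bracketed expressions above by $-\entropy(\frac{1}{\alpha}) - \D{\frac{1}{\alpha}}{\gamma}$ and $\entropy(\frac{1}{\alpha}) + \D{\frac{1}{\alpha}}{\delta}$, so the $\entropy(\frac{1}{\alpha})$ terms cancel and the claimed formula falls out.

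There is no real obstacle here; the proof is routine calculus. The only item requiring mild attention is the small algebraic simplification $\tau \cdot \partial_\tau \gamma = \frac{1}{\alpha} - \gamma$ (and its $\delta$ analogue), which is what makes the KL divergences appear so cleanly, and a brief check that the cases $\tau = 0$ and $\tau = 1$ are outside the interior of the domain and need not be handled (since we differentiate on the interior of $[M_{\alpha,\beta}(\kappa), \beta\kappa]$, where $0 < \tau < 1$ whenever $0 < \kappa < \frac{1}{\beta}$).
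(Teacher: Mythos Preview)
Your proposal is correct and follows essentially the same route as the paper's proof: differentiate termwise, use the identities $\tau\,\partial_\tau\gamma = \tfrac{1}{\alpha}-\gamma$ and $(1-\tau)\,\partial_\tau\delta = \delta-\tfrac{1}{\alpha}$ (which the paper records as \eqref{eq:gammapart} and \eqref{eq:deltapart}), and then apply the entropy--divergence identity (your $-\entropy(b)-\entropy'(b)(a-b)=-\entropy(a)-\D{a}{b}$ is exactly the paper's \Cref{lem:H_to_D_eq} with the roles of $a$ and $b$ swapped) so that the two $\entropy(\tfrac{1}{\alpha})$ terms cancel.
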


We use the following functions to simplify the formula for $\gpartt[\abc](\kappa,\tau)$:
\begin{align}
	\label{eq:Gamma_def}
	\Gamma_{\alpha,\beta}(\kappa,\tau) \,&= \, \frac{1}{\tau} \cdot \frac{1}{\gamma_{\alpha,\beta} (\kappa,\tau) \cdot \left( 1- \gamma_{\alpha,\beta}(\kappa, \tau)\right)} \\ 
	\label{eq:Delta_def}
	\Delta_{\alpha,\beta}(\kappa,\tau) \,&= \, \frac{1}{(1-\tau)} \cdot \frac{1}{\delta_{\alpha,\beta} (\kappa,\tau) \cdot \left( 1- \delta_{\alpha,\beta}(\kappa, \tau)\right)} 
\end{align}

\begin{restatable}{lemma}{derivgbytautau}
	\label{lem:deriv_g_by_tau_tau}
	For all $\alpha,c \geq 1$ and  $\beta > 1$, the second order partial derivative of $g(\kappa,\tau)$ by $\tau$, i.e., $\gpartt[\abc](\kappa,\tau)$ is given by
	\begin{align*}
		\gpartt[\abc](\kappa,\tau) &=
		\left( \gamma(\kappa,\tau) - \frac{1}{\alpha} \right)^{2} \cdot \Gamma_{\alpha,\beta}(\kappa,\tau) +\left( \delta(\kappa,\tau) - \frac{1}{\alpha} \right)^{2} \cdot \Delta_{\alpha,\beta}(\kappa,\tau).
	\end{align*}
\end{restatable}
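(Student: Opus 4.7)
The plan is to differentiate the closed-form expression for $\gpart[\abc](\kappa,\tau)$ provided by \Cref{lem:deriv_g_by_tau} with respect to $\tau$, treating $\kappa$ as a constant. Since $\ln(c)/\alpha$ is independent of $\tau$, it suffices to compute
\[\gpartt[\abc](\kappa,\tau) = -\frac{\partial}{\partial \tau} \D{\frac{1}{\alpha}}{\gamma_{\alpha,\beta}(\kappa,\tau)} + \frac{\partial}{\partial \tau} \D{\frac{1}{\alpha}}{\delta_{\alpha,\beta}(\kappa,\tau)}.\]
For any $a\in (0,1)$ and differentiable $b(\tau)\in (0,1)$, a direct calculation yields $\frac{\partial}{\partial \tau}\D{a}{b(\tau)} = \frac{b(\tau) - a}{b(\tau)(1-b(\tau))}\cdot b'(\tau)$, so I would first record this identity and then apply the chain rule.

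The second ingredient is to express $\frac{\partial \gamma}{\partial \tau}$ and $\frac{\partial \delta}{\partial \tau}$ in terms of $\gamma-\frac{1}{\alpha}$ and $\delta-\frac{1}{\alpha}$, respectively. From $\gamma_{\alpha,\beta}(\kappa,\tau) = (1-\frac{\beta}{\alpha})\frac{\kappa}{\tau} + \frac{1}{\alpha}$ we immediately get $\frac{\partial \gamma}{\partial \tau} = -\frac{1}{\tau}\bigl(\gamma_{\alpha,\beta}(\kappa,\tau) - \frac{1}{\alpha}\bigr)$, and from $\delta_{\alpha,\beta}(\kappa,\tau) = \frac{\frac{\beta}{\alpha}\kappa - \frac{1}{\alpha}}{1-\tau} + \frac{1}{\alpha}$ we similarly obtain $\frac{\partial \delta}{\partial \tau} = \frac{1}{1-\tau}\bigl(\delta_{\alpha,\beta}(\kappa,\tau) - \frac{1}{\alpha}\bigr)$. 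These factorizations are the crux of the proof, as they are what allows the two terms to combine into perfect squares.

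Substituting these derivatives back into the chain rule expression gives
\[\frac{\partial}{\partial \tau} \D{\frac{1}{\alpha}}{\gamma(\kappa,\tau)} = -\frac{\bigl(\gamma(\kappa,\tau) - \frac{1}{\alpha}\bigr)^2}{\tau\cdot\gamma(\kappa,\tau)\bigl(1-\gamma(\kappa,\tau)\bigr)} = -\bigl(\gamma(\kappa,\tau) - \tfrac{1}{\alpha}\bigr)^2 \cdot \Gamma_{\alpha,\beta}(\kappa,\tau),\]
and analogously
\[\frac{\partial}{\partial \tau} \D{\frac{1}{\alpha}}{\delta(\kappa,\tau)} = \bigl(\delta(\kappa,\tau) - \tfrac{1}{\alpha}\bigr)^2 \cdot \Delta_{\alpha,\beta}(\kappa,\tau),\]
using the definitions~\eqref{eq:Gamma_def} and~\eqref{eq:Delta_def}. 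Plugging both into the expression for $\gpartt[\abc](\kappa,\tau)$ and flipping the sign of the $\Gamma$-term yields exactly the claimed identity.

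There is no real obstacle here — the proof is a mechanical chain-rule computation — but one must be careful to notice the two sign flips that ultimately make both contributions nonnegative: one from $\frac{\partial \gamma}{\partial \tau}$ being negative in $\tau$ paired with the leading $-$ sign in front of the $\gamma$-divergence in $\gpart[\abc]$, and one from the squared factor absorbing the sign in the $\delta$-term.
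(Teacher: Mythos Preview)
Your proof is correct and follows essentially the same approach as the paper: both differentiate the expression from \Cref{lem:deriv_g_by_tau}, apply the chain rule via $\frac{\partial}{\partial b}\D{a}{b} = \frac{b-a}{b(1-b)}$, and use the key factorizations $\frac{\partial \gamma}{\partial \tau} = -\frac{1}{\tau}(\gamma - \frac{1}{\alpha})$ and $\frac{\partial \delta}{\partial \tau} = \frac{1}{1-\tau}(\delta - \frac{1}{\alpha})$ to obtain the squared factors. The only cosmetic difference is that the paper cites these derivative identities from its list of recorded formulas \eqref{eq:KL_by_b}, \eqref{eq:gammapart}, \eqref{eq:deltapart}, whereas you re-derive them inline.
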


We sometimes omit the subscript $\abc$ from $\gk[\abc]$, $\gpart[\abc]$ and $\gpartt[\abc]$ unless it causes confusion.

\begin{lemma}
	\label{lem:gk_convex_in_tau}
	Let $\alpha,c \geq 1$, $\beta > 1$ and $0 < \kappa < \frac{1}{\beta}$.
	The function $\gk[\abc](\tau)$ is strictly convex in the open interval $\left( M_{\alpha,\beta}(\kappa), \beta \cdot \kappa  \right)$.
	In particular, the second order partial derivative of $g_{\abc}(\kappa,\tau)$ with respect to $\tau$ is strictly positive, i.e., $\gpartt(\kappa,\tau) > 0$ for all $\tau \in \left(M_{\alpha,\beta}(\kappa), \beta \cdot \kappa\right)$.
\end{lemma}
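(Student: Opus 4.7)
The plan is to invoke Lemma~\ref{lem:deriv_g_by_tau_tau} to reduce the problem to a positivity statement about an explicit formula, and then use the monotonicity/range lemmas for $\gamma$ and $\delta$ to show that this formula is strictly positive on the open interval.

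First, by Lemma~\ref{lem:deriv_g_by_tau_tau}, we have
\[
\gpartt[\abc](\kappa,\tau) \;=\; \left(\gamma(\kappa,\tau) - \tfrac{1}{\alpha}\right)^{2}\!\cdot \Gamma_{\alpha,\beta}(\kappa,\tau) \;+\; \left(\delta(\kappa,\tau) - \tfrac{1}{\alpha}\right)^{2}\!\cdot \Delta_{\alpha,\beta}(\kappa,\tau).
\]
Since each summand is the product of a nonnegative square and a function of the form $\bigl[\,t\cdot x(1-x)\bigr]^{-1}$, it suffices to verify (i) that $\Gamma_{\alpha,\beta}(\kappa,\tau)$ and $\Delta_{\alpha,\beta}(\kappa,\tau)$ are well defined and strictly positive on the open interval and (ii) that at least one of the squares is nonzero there.

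For (i), I would note that on the open interval $\tau \in (M_{\alpha,\beta}(\kappa),\beta\kappa)$ we have $\tau>0$ (since $M_{\alpha,\beta}(\kappa)\geq 0$) and $\tau<1$ (since $\beta\kappa<1$). By Lemma~\ref{lem:prop_delta}, $\delta$ is strictly decreasing in $\tau$ with values in $[0,\tfrac{1}{\alpha})$; using the strict inequality from the endpoint values $\delta(\kappa,\beta\kappa)=0$ and $\delta(\kappa,M_{\alpha,\beta}(\kappa))<\tfrac{1}{\alpha}\leq 1$, we get $\delta(\kappa,\tau)\in(0,1)$ on the open interval. Similarly, Lemma~\ref{lem:prop_gamma} with its three cases shows $\gamma(\kappa,\tau)\in(0,1)$ on the open interval (the only delicate case is $\alpha>\beta$, where $\gamma=1$ only at the left endpoint $\tau=M_{\alpha,\beta}(\kappa)$, which is excluded). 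Consequently both $\tau\,\gamma(1-\gamma)$ and $(1-\tau)\delta(1-\delta)$ are strictly positive, so $\Gamma_{\alpha,\beta},\Delta_{\alpha,\beta}>0$.

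For (ii), the key observation is that the hypothesis $\kappa<\tfrac{1}{\beta}$ immediately yields $\delta(\kappa,\tau)\neq \tfrac{1}{\alpha}$: solving $\tfrac{\beta\kappa-\tau}{\alpha(1-\tau)}=\tfrac{1}{\alpha}$ forces $\beta\kappa=1$, contradicting $\kappa<\tfrac{1}{\beta}$. Hence $(\delta(\kappa,\tau)-\tfrac{1}{\alpha})^{2}>0$, and combined with $\Delta_{\alpha,\beta}(\kappa,\tau)>0$ from (i), the second summand alone is strictly positive; the first summand is nonnegative. Therefore $\gpartt[\abc](\kappa,\tau)>0$ on the open interval, and strict convexity of $\gk[\abc]$ on $(M_{\alpha,\beta}(\kappa),\beta\kappa)$ follows by the standard calculus criterion.

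There is no real obstacle here beyond careful bookkeeping of the endpoint and sign conditions across the three cases $\alpha\lessgtr\beta$; the heart of the argument is the single algebraic observation that $\kappa<\tfrac{1}{\beta}$ rules out $\delta=\tfrac{1}{\alpha}$, so the $\delta$-term alone already forces strict positivity regardless of what happens with $\gamma$ (which would degenerate to $\tfrac{1}{\alpha}$ identically in the case $\alpha=\beta$).
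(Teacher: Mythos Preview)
Your proposal is correct and follows essentially the same approach as the paper: invoke the explicit formula for $\gpartt$ from Lemma~\ref{lem:deriv_g_by_tau_tau}, use Lemmas~\ref{lem:prop_delta} and~\ref{lem:prop_gamma} to place $\gamma,\delta\in(0,1)$ on the open interval (so $\Gamma,\Delta>0$), and observe that the $\delta$-term is strictly positive. The only cosmetic difference is that the paper simply cites the strict inequality $\delta<\tfrac{1}{\alpha}$ already stated in Lemma~\ref{lem:prop_delta}, whereas you re-derive $\delta\neq\tfrac{1}{\alpha}$ via the direct algebraic observation that equality would force $\beta\kappa=1$; these are equivalent.
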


\begin{proof}
	Let $\alpha,c\geq 1$, $\beta > 1$, by \Cref{lem:deriv_g_by_tau_tau} and \eqref{eq:Delta_def} and \eqref{eq:Gamma_def} we have
	\begin{align*}
		\gpartt(\kappa,\tau) &= \frac{1}{\tau} \cdot\frac{\left( \gamma(\kappa,\tau) - \frac{1}{\alpha} \right)^{2}}{\gamma(\kappa,\tau) \cdot \Bigl(1 - \gamma(\kappa,\tau)\Bigr)} + \frac{1}{1-\tau}\cdot \frac{\left( \delta(\kappa,\tau) - \frac{1}{\alpha} \right)^{2}}{\delta(\kappa,\tau) \cdot \Bigl(1 - \delta(\kappa,\tau)\Bigr)}.
	\end{align*}
	By \Cref{lem:prop_delta,lem:prop_gamma}, we have that $0 < \delta(\kappa,\tau) < \frac{1}{\alpha} < 1$ and $0 < \gamma(\kappa,\tau) < 1$, since $\beta > 1$ and $\tau \in \left(M_{\alpha,\beta}(\kappa), \beta \cdot \kappa\right)\subseteq (0,1)$.
	Therefore, it follows that $\gpartt(\kappa,\tau) > 0$.
	Thus $\gk(\tau)$ is a strictly convex function.
\end{proof}

Since $\gk[\abc](\tau)$ is a continuous function of the variable $\tau$, it attains its minimum over the closed interval $[\Mab(\kappa), \beta\cdot \kappa]$.
We show that up to some corner cases, the optimal value of $\tau$ lies within the open interval $\left(\Mab(\kappa), \beta\cdot\kappa\right)$.
We use the following definition to easily exclude the corner cases.

\begin{definition}
	\label{def:simple}
	We say $\alpha,\beta,c\geq 1$ are \emph{simple} if $\beta>1$ and none of the following conditions hold:
	\begin{itemize}
		\item  $\alpha =\beta$, or
		\item  $c=1$  and  $\alpha<\beta$.
	\end{itemize}
\end{definition}

\begin{lemma}
	\label{lem:minimizer_tau}
	For all $\alpha \geq 1$, $c \geq 1$, $\beta > 1$ and $\kappa \in \left( 0, \frac{1}{\beta} \right) $, there is a unique value of $\tau \in [\Mab(\kappa), \beta \cdot \kappa]$, denoted by $\taust[\abc](\kappa)$, that minimizes $g_{\abc}(\kappa,\tau)$.
	Furthermore, it holds that $\taust[\abc](\kappa)<\beta \kappa$, and if $\abc$ are simple then $M_{\alpha,\beta}(\kappa)<\taust[\abc](\kappa)<\beta \kappa$ and $\gpart\left( \kappa, \taust[\abc](\kappa) \right) = 0$.
\end{lemma}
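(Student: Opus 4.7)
The plan is to combine the strict convexity of $\gk[\abc]$ established in \Cref{lem:gk_convex_in_tau} with a boundary analysis of the derivative formula from \Cref{lem:deriv_g_by_tau}, namely
\[
\gpart[\abc](\kappa,\tau) = -\frac{\ln c}{\alpha} - \D{\tfrac{1}{\alpha}}{\gamma_{\alpha,\beta}(\kappa,\tau)} + \D{\tfrac{1}{\alpha}}{\delta_{\alpha,\beta}(\kappa,\tau)}.
\]
Observe first that $g_{\abc}(\kappa,\cdot)$ extends continuously to the closed interval $\left[M_{\alpha,\beta}(\kappa),\beta\kappa\right]$ under the convention $0\cdot\entropy(\cdot)=0$, so by compactness it attains a minimum, and strict convexity on the open interval (\Cref{lem:gk_convex_in_tau}) forces this minimum to be unique. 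This gives the existence and uniqueness of $\taust[\abc](\kappa)$ without further work.

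To prove the strict inequality $\taust[\abc](\kappa)<\beta\kappa$, I would analyze the right endpoint. As $\tau\to(\beta\kappa)^{-}$, one has $\delta_{\alpha,\beta}(\kappa,\tau)\to 0$ while $\gamma_{\alpha,\beta}(\kappa,\tau)\to 1/\beta\in(0,1)$, so $\D{1/\alpha}{\gamma}$ stays bounded while $\D{1/\alpha}{\delta}\to+\infty$ (since $1/\alpha>0$). Hence $\gpart[\abc](\kappa,\tau)\to+\infty$, and by convexity $\gpart[\abc](\kappa,\tau)>0$ for all $\tau$ in some left neighborhood of $\beta\kappa$. Thus $\gk[\abc]$ is strictly increasing near the right endpoint, so the minimum must lie strictly to the left of $\beta\kappa$. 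Notice this argument makes no use of simplicity, so the inequality $\taust[\abc](\kappa)<\beta\kappa$ holds unconditionally.

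For the simple case, I need the symmetric statement at the left endpoint, namely that $\gpart[\abc]$ is strictly negative just to the right of $M_{\alpha,\beta}(\kappa)$. I would split into the two sub-cases allowed by simplicity:
\begin{itemize}
\item If $\alpha>\beta$, then a direct substitution using $M_{\alpha,\beta}(\kappa)=\frac{\alpha-\beta}{\alpha-1}\kappa$ gives $\gamma_{\alpha,\beta}(\kappa,M_{\alpha,\beta}(\kappa))=1$ (this also follows from the endpoint computation in the proof of \Cref{lem:prop_gamma}); since $1/\alpha<1$, we have $\D{1/\alpha}{\gamma}\to+\infty$ as $\tau\to M_{\alpha,\beta}(\kappa)^{+}$. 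Meanwhile $\delta_{\alpha,\beta}(\kappa,M_{\alpha,\beta}(\kappa))\in(0,1/\alpha)$ by \Cref{lem:prop_delta} (strict positivity follows from $\beta>1$), so $\D{1/\alpha}{\delta}$ stays finite. Hence $\gpart[\abc]\to-\infty$ at the left endpoint.
\item If $\alpha<\beta$ and $c>1$, \Cref{lem:delta_gamma_extreme_tau} gives $\gamma_{\alpha,\beta}(\kappa,M_{\alpha,\beta}(\kappa))=\delta_{\alpha,\beta}(\kappa,M_{\alpha,\beta}(\kappa))=\kappa$, so the two Kullback--Leibler terms cancel exactly and $\gpart[\abc](\kappa,M_{\alpha,\beta}(\kappa))=-\ln c/\alpha<0$.
\end{itemize}
In both sub-cases, $\gpart[\abc]$ is strictly negative near $M_{\alpha,\beta}(\kappa)$ and strictly positive near $\beta\kappa$; since $\gpart[\abc](\kappa,\cdot)$ is continuous and strictly increasing on the open interval (the latter by strict convexity), the intermediate value theorem yields a unique $\taust[\abc](\kappa)\in(M_{\alpha,\beta}(\kappa),\beta\kappa)$ with $\gpart[\abc](\kappa,\taust[\abc](\kappa))=0$, which is the unique minimizer.

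The main obstacle will be the careful bookkeeping of limits of the Kullback--Leibler terms where $\gamma$ or $\delta$ hit $0$ or $1$, in particular verifying that the convention $0\cdot\entropy(\cdot)=0$ is compatible with the continuity of $\gk[\abc]$ at the endpoints and that the one-sided limits of $\gpart[\abc]$ behave as asserted. A secondary subtlety is the case $\alpha>\beta$, where the clean identity of \Cref{lem:delta_gamma_extreme_tau} is unavailable and one must instead exploit $\gamma=1$ to obtain the needed divergence; verifying that $\delta(\kappa,M_{\alpha,\beta}(\kappa))$ stays safely inside $(0,1/\alpha)$ requires a short computation using $\beta>1$.
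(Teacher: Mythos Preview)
Your proposal is correct and follows essentially the same approach as the paper: strict convexity for uniqueness, the blow-up $\D{1/\alpha}{\delta}\to+\infty$ as $\tau\to\beta\kappa$ for the right-endpoint inequality, and the same two-case split at the left endpoint ($\gamma\to 1$ when $\alpha>\beta$; the cancellation $\gamma=\delta=\kappa$ via \Cref{lem:delta_gamma_extreme_tau} when $\alpha<\beta$ and $c>1$), followed by the intermediate value theorem. The paper's write-up matches yours almost step for step, including the justification that $\delta(\kappa,M_{\alpha,\beta}(\kappa))>0$ when $\alpha>\beta$ (the paper argues this from $M_{\alpha,\beta}(\kappa)>0$ together with strict monotonicity of $\delta$).
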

	
\begin{proof}
	Since the function $\gk(\tau)$ is strictly convex by \Cref{lem:gk_convex_in_tau}, the value
	$$\taust[\abc] (\kappa)=\argmin_{\tau \in  [\Mab(\kappa), \beta \cdot \kappa]}  g_{\abc}(\kappa,\tau)$$
	is uniquely defined.
	Also note that $\gpart(\kappa,\tau)$ is an increasing function of $\tau$ by \Cref{lem:gk_convex_in_tau}.

	By \Cref{lem:deriv_g_by_tau} we have
	\begin{equation}
		\label{eq:deriv_tau_to_bk}
		\begin{aligned}
			\lim_{\tau \to \beta\cdot\kappa} \gpart(\kappa,\tau) &= \lim_{\tau \to \beta\cdot\kappa} \Biggl(-\frac{\ln(c)}{\alpha} - \D{\frac{1}{\alpha}}{\gamma(\kappa,\tau)} + \D{\frac{1}{\alpha}}{\delta(\kappa,\tau)}\Biggr)\\
			&= -\frac{\ln(c)}{\alpha} - \D{\frac{1}{\alpha}}{\frac{1}{\beta}} + \D{\frac{1}{\alpha}}{\lim_{\tau \to \beta\cdot\kappa} \delta(\kappa,\tau)} = \infty,
		\end{aligned}
	 \end{equation}
	which follows from the fact that $\D{\frac{1}{\alpha}}{x}$ is a continuous function of $x$ and $\lim_{\tau \to \beta\cdot\kappa} \delta(\kappa,\tau) = 0$.
	Thus, there is $\eps>0$ such that $\gpart(\kappa,\tau)>0$ for all $\tau\in \left(\beta \kappa- \eps, \beta\kappa\right)$, and hence $g(\kappa,\tau)$ is strictly increasing in $\left(\beta \kappa- \eps, \beta\kappa\right)$, and by continuity in $\left(\beta \kappa- \eps, \beta\kappa\right]$.
	This implies that $\taust[\abc](\kappa)< \beta \kappa$ by its definition.

	This above completes the proof for general values of $\abc$, and thus we can assume $\abc$ are simple from this point onward.
	Using \Cref{lem:deriv_g_by_tau} once  more we get
	\begin{equation}\label{eq:gpart_lim}
		\lim_{\tau \to \Mab(\kappa)} \gpart(\kappa,\tau) = -\frac{\ln(c)}{\alpha} -\lim_{\tau \to \Mab(\kappa)} \D{\frac{1}{\alpha}}{ \gamma(\kappa,\tau)} + \lim_{\tau \to \Mab(\kappa)}\D{\frac{1}{\alpha}}{\delta\left( \kappa, \tau \right) }.
	\end{equation}
	Consider the following cases.
	\begin{itemize}
		\item  If $\alpha > \beta > 1$ and $c \geq 1$, we have
		 	\begin{align*}
		 		\lim_{\tau \to \Mab(\kappa)} \gamma(\kappa,\tau) = \gamma(\kappa,\Mab(\kappa)) = \gamma\left( \kappa, \frac{\alpha - \beta}{\alpha - 1}\cdot \kappa \right) =1,
		 	\end{align*}
		 	therefore $\lim_{\tau \to \Mab(\kappa)} \D{\frac{1}{\alpha}}{ \gamma(\kappa,\tau)} = \infty$.
		 	Similarly, by \Cref{lem:prop_delta}, we have
		 	\begin{align*}
		 		\lim_{\tau \to \Mab(\kappa)} \delta(\kappa,\tau) = \delta\left( \kappa, \Mab(\kappa)  \right) < \frac{1}{\alpha} < 1,
		 	\end{align*}
		 	and 
		 	\begin{equation}
		 		\lim_{\tau \to \Mab(\kappa)} \delta(\kappa,\tau) = \delta\left( \kappa, \Mab(\kappa)  \right) > 0,
		 	\end{equation}
		 	since $\Mab(\kappa) > 0$.
		 	Therefore it holds that $\lim_{\tau \to \Mab(\kappa)} \D{\frac{1}{\alpha}}{ \delta(\kappa,\tau)} < \infty$ and by \eqref{eq:gpart_lim} we get
		 	\begin{align*}
		 		\lim_{\tau \to M_{\alpha,\beta}(\kappa)} \gpart(\kappa,\tau) &= \lim_{\tau \to \beta\cdot\kappa} \Biggl(-\frac{\ln(c)}{\alpha} - \D{\frac{1}{\alpha}}{\gamma(\kappa,\tau)} + \D{\frac{1}{\alpha}}{\delta(\kappa,\tau)}\Biggr) = -\infty.
		 	\end{align*}
		 	
		\item If $\beta > \alpha \geq 1$, $c > 1$, by \Cref{lem:delta_gamma_extreme_tau} we have
		 	\begin{align*}
		 		\lim_{\tau \to \Mab(\kappa)} \delta(\kappa,\tau) = \delta(\kappa,\Mab(\kappa)) = \kappa
		 	\end{align*}
		 	and
		 	\begin{align*}
		 		\lim_{\tau \to \Mab(\kappa)} \gamma(\kappa,\tau) = \gamma(\kappa,\Mab(\kappa)) = \kappa.
		 	\end{align*}
		 	Thus, by \eqref{eq:gpart_lim}, we have
		 	\begin{align*}
		 		\lim_{\tau \to \Mab(\kappa)} \gpart(\kappa,\tau) = -\frac{\ln(c)}{\alpha} - \D{\frac{1}{\alpha}}{\kappa} + \D{\frac{1}{\alpha}}{\kappa} = -\frac{\ln(c)}{\alpha} < 0,
		 	\end{align*}
		 	since $c > 1$.
	\end{itemize}
	So in both cases $\lim_{\tau \to \Mab(\kappa)} \gpart(\kappa,\tau)<0$.
	Thus, by \eqref{eq:deriv_tau_to_bk} there is $\tilde{\tau} \in \left( M_{\alpha,\beta}(\kappa),\beta \cdot \kappa\right)$ such that $\gpart(\kappa,\tilde{\tau})=0$, and since $g_{\abc}(\kappa,\tau)$ is convex as a function of $\tau$ this implies $\taust[\abc](\kappa)=\tilde{\tau}$.
	Hence $\gpart(\kappa,\taust[\abc](\kappa))=0$ and $\taust[\abc](\kappa)\in \left( M_{\alpha,\beta}(\kappa),\beta \cdot \kappa\right)$.
\end{proof}
	
\gconvexbytau*

\begin{proof}
	The first part of the claim follows from the fact that $\gk(\tau)$ is a continuous function on the closed interval $[\Mab(\kappa), \beta \cdot \kappa ]$ and convex on the open interval $\left( \Mab(\kappa), \beta \cdot \kappa  \right) $ by \Cref{lem:gk_convex_in_tau}.
	The second part of the claim simply follows from \Cref{lem:gk_convex_in_tau,lem:minimizer_tau}.
\end{proof}

\subsection{Concavity}
\label{sec:gstar_is_concave}

In this section we prove \Cref{lem:concave}, that is, we show $\gst[\abc](\kappa)$ is concave.
The proof relies on properties of the Hessian of $g_{\abc}$ when $\abc$ are simple (see \Cref{def:simple}).
The excluded corner cases, in which $\abc$ are not simple, are handled separately.

As in previous sections, we use $g$ and $g^*$ instead of $g_{\abc}$ and $g^*_{\abc}$ when the values of $\alpha$, $\beta$ and~$c$ are known by context.
Recall that the Hessian matrix of~$g$ at $(\kappa,\tau)$, denoted $H_{g}(\kappa,\tau)$, is defined by
\[
\begingroup
\renewcommand*{\arraystretch}{1.5}
H_g(\kappa,\tau)=
\begin{pmatrix}
	\frac{\partial^2 g(\kappa,\tau)}{\partial \kappa^2} & \frac{\partial^2 g(\kappa,\tau)}{\partial \kappa \partial \tau} \\
	\frac{\partial^2 g(\kappa,\tau)}{\partial \kappa \partial \tau} & \frac{\partial^2 g(\kappa,\tau)}{\partial \tau^2}
\end{pmatrix}
\endgroup.
\]
For every $(\kappa,\tau)$ in the domain of $g$, $\dhes{g}{\kappa,\tau}$ denotes the determinant of the Hessian of the function $g$ evaluated at $(\kappa,\tau)$.
Specifically, we have
\begin{equation*}
	\dhes{g}{\kappa,\tau} = \frac{\partial^2 g(\kappa,\tau)}{\partial \kappa^2} \cdot \frac{\partial^2 g(\kappa,\tau)}{\partial \tau^2}- \left(  \frac{\partial^2 g(\kappa,\tau)}{\partial \kappa \partial \tau}\right)^{2} .
\end{equation*}
Our proof is motivated by the second partial derivative test for multivariate functions, which uses the Hessian to classify critical points to maximum, minimum and saddle points.
Technically, we use the Hessian directly and do explicitly rely on the second derivative test.
 
Recall that $\taust[\abc](\kappa)$ is the unique value of the $\tau \in [\Mab(\kappa), \beta \cdot \kappa]$ that minimizes $\gk[\abc](\tau)$.

\begin{restatable}{lemma}{hessiannegative}
	\label{lem:hessian_negative}
	For all  simple $\abc\geq 1$ and $\kappa \in \left(0, \frac{1}{\beta}\right)$, the determinant of the Hessian of $g$ at $(\kappa, \tau^*(\kappa))$ is negative, i.e., $\abs{H_g(\kappa, \tau^*(\kappa) )} < 0$.
\end{restatable}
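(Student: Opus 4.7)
The plan is to compute the three second-order partial derivatives of $g = g_{\alpha,\beta,c}$ explicitly at an interior point $(\kappa,\tau)$ with $\tau \in (M_{\alpha,\beta}(\kappa), \beta\kappa)$, form the determinant, and show it is strictly negative. A first observation that simplifies matters considerably is that the term $\frac{\beta\kappa - \tau}{\alpha}\ln c$ appearing in the definition of $g$ is linear in $(\kappa,\tau)$ and therefore contributes nothing to any second-order partial derivative; in particular, the Hessian of $g$ is independent of $c$, and the statement reduces to a purely combinatorial inequality about $\gamma_{\alpha,\beta}$, $\delta_{\alpha,\beta}$, and the binary entropy function. The value of $c$ only determines, via $\gpart(\kappa,\tau^*(\kappa))=0$, the particular $\tau$ at which the Hessian is evaluated.

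Writing $u = \gamma - 1/\alpha$, $v = \delta - 1/\alpha$, $a = 1 - \beta/\alpha$, $b = \beta/\alpha$, and using $\Gamma,\Delta$ from \eqref{eq:Gamma_def}--\eqref{eq:Delta_def}, direct differentiation (analogous to \Cref{lem:deriv_g_by_tau_tau}, using $\partial\gamma/\partial\kappa=a/\tau$, $\partial\delta/\partial\kappa=b/(1-\tau)$, $\tau u = a\kappa$, and $(1-\tau)v = b\kappa - 1/\alpha$) gives
\[
  \tfrac{\partial^{2}g}{\partial\kappa^{2}} = a^{2}\Gamma + b^{2}\Delta - \tfrac{1}{\kappa(1-\kappa)},
  \quad
  \tfrac{\partial^{2}g}{\partial\kappa\,\partial\tau} = -au\,\Gamma + bv\,\Delta,
  \quad
  \tfrac{\partial^{2}g}{\partial\tau^{2}} = u^{2}\Gamma + v^{2}\Delta.
\]
Expanding $\abs{H_g} = g_{\kappa\kappa}g_{\tau\tau} - g_{\kappa\tau}^{2}$ and collecting terms yields the compact formula
\[
  \abs{H_{g}(\kappa,\tau)} \;=\; \Gamma\Delta\,(av+bu)^{2} \;-\; \frac{u^{2}\Gamma + v^{2}\Delta}{\kappa(1-\kappa)}.
\]
Using the identity $av + bu = a\delta/\tau = u\delta/\kappa$ (immediate from $\tau u = a\kappa$ and $(1-\tau)v = b\kappa - 1/\alpha$) and clearing denominators, the desired negativity reduces to the inequality
\[
  (av+bu)^{2}\,\kappa(1-\kappa) \;<\; u^{2}(1-\tau)\delta(1-\delta) + v^{2}\tau\gamma(1-\gamma).
\]

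The final step is to combine this target with the \emph{law of total variance}
\[
  \kappa(1-\kappa) \;=\; \tau\gamma(1-\gamma) + (1-\tau)\delta(1-\delta) + \tau(1-\tau)(\gamma-\delta)^{2},
\]
which follows directly from $\kappa = \tau\gamma + (1-\tau)\delta$. The boundary behavior is instructive: when $\beta>\alpha$, \Cref{lem:delta_gamma_extreme_tau} gives $\gamma=\delta=\kappa$ at $\tau=M_{\alpha,\beta}(\kappa)$, both sides of the displayed inequality agree, and $\abs{H_g}=0$; when $\alpha>\beta$, the same boundary gives $\gamma\to 1$, which makes $\Gamma$ diverge but drives $\abs{H_g}\to-\infty$; and at the opposite boundary $\tau\to\beta\kappa$, $\delta\to 0$ forces $\abs{H_g}\to-\infty$ in every case. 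The main obstacle is therefore the strict inequality in the interior for $\beta>\alpha$, which must be extracted from the strictly positive $\tau(1-\tau)(\gamma-\delta)^{2}$ correction in the total-variance identity: the simplicity hypothesis rules out $\alpha=\beta$, and hence $\gamma\neq\delta$ whenever $\tau>M_{\alpha,\beta}(\kappa)$, so this surplus is strictly positive in the open interval. Substituting $u=a\kappa/\tau$ and $v=(b\kappa-1/\alpha)/(1-\tau)$ then reduces the claim to a polynomial inequality in $\kappa,\tau,\alpha,\beta$ whose verification is tedious but routine. Finally, the proof should briefly explain why the non-simple cases genuinely fail: for $\alpha=\beta$ one has $u\equiv 0$ and $\abs{H_g}$ can take either sign, while for $\alpha<\beta$ with $c=1$, \Cref{lem:minimizer_tau} forces $\tau^*(\kappa)=M_{\alpha,\beta}(\kappa)$, placing us exactly at the boundary where $\abs{H_g}=0$.
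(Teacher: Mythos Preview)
Your derivative computations, the compact determinant formula $\abs{H_g} = \Gamma\Delta(av+bu)^{2} - (u^{2}\Gamma+v^{2}\Delta)/(\kappa(1-\kappa))$, the identity $av+bu=u\delta/\kappa$, and the total-variance decomposition are all correct and match (and in places streamline) the paper's calculations. The gap is in the final step, and it is a real one in the case $\alpha>\beta$.

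You are effectively trying to prove $\abs{H_g(\kappa,\tau)}<0$ for \emph{every} interior $\tau\in(M_{\alpha,\beta}(\kappa),\beta\kappa)$, and then dismiss what remains as a ``tedious but routine'' polynomial inequality. When $\beta>\alpha$ this is fine --- the paper itself establishes negativity throughout the interior in \Cref{lem:sign_A_delta_B_negative_1}. But when $\alpha>\beta$ the claim is simply false, as the paper notes explicitly after \Cref{lem:hess_neg_beta_>_alpha}. For a concrete witness take $\alpha=3$, $\beta=2$, $\kappa=0.49$, $\tau\approx 0.612$: then $\gamma=0.6$, $\delta\approx 0.316$, and your own formula gives $\abs{H_g}\approx 0.45>0$. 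So the polynomial inequality you are reducing to does not hold in this regime, and no amount of algebra can verify it.

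What saves \Cref{lem:hessian_negative} is that such bad $\tau$ are never equal to $\tau^{*}(\kappa)$. At the point just exhibited one has $\D{1/\alpha}{\delta}<\D{1/\alpha}{\gamma}$, which by \Cref{lem:deriv_g_by_tau} would force $c<1$. Your observation that the Hessian is independent of $c$ is correct but cuts against you: varying $c\ge 1$ moves $\tau^{*}(\kappa)$ only over the sub-interval where $\D{1/\alpha}{\delta}\ge\D{1/\alpha}{\gamma}$, not over all of $(M_{\alpha,\beta}(\kappa),\beta\kappa)$, and this KL constraint must be invoked somewhere. The paper does exactly that in \Cref{lem:critical_point_condition,lem:hb_bound_alpha_>_beta}; your outline contains no substitute for it.
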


The proof of \Cref{lem:hessian_negative} is given in \Cref{sec:computing_det_hes}.
We also use the next theorem from \cite{Oliveira13} (see also \cite{KrantzP02}) to show that $\tau^*_{\abc}$ is continuously differentiable and to calculate its derivative.

\begin{theorem}[Implicit Function Theorem for $\mathbb{R}^{2}$, {\cite[Theorem 4]{Oliveira13}}]
	\label{thm:implicit_function_theorem}
	Let $\G(x,y)$ be a real-valued continuously differentiable function defined in a neighbourhood of $\left( x_0, y_0 \right) \in \mathbb{R}^{2}$. Suppose that $\G(x,y)$ satisfies the two conditions
	\begin{align*}
		\G(x_0, y_0) &= 0,\\
		\frac{\partial \G(x,y)}{\partial y}\Bigg|_{(x,y) = (x_0, y_0)} &> 0.
	\end{align*}
	Then there exist open intervals $U\subseteq \mathbb{R}$ and $V\subseteq \mathbb{R}$, with $x_0 \in U, y_0 \in V$, and a  function $G \colon U \to V$ satisfying
	\begin{equation*}
		\G(x, G(x)) = 0, \qquad \text{ for all } x \in U.
	\end{equation*}
	Furthermore, this function $G$ is continuously differentiable with
	\begin{equation*}
		 G'(x_0) = \frac{\partial G(x)}{\partial x}\Bigg|_{x = x_0} = -\frac{\frac{\partial \G(x,y)}{\partial x}\Bigg|_{(x,y) = (x_0, y_0)}}{\frac{\partial \G(x,y)}{\partial y}\Bigg|_{(x,y) = (x_0, y_0)}}.
	\end{equation*}
\end{theorem}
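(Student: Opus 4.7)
The statement is the classical Implicit Function Theorem in $\mathbb{R}^2$, attributed to \cite{Oliveira13}. Since the authors cite it as an external result and use it as a tool, I would not reprove it in detail; instead, my plan is to outline the standard elementary proof that appeals only to the Intermediate Value Theorem together with first-order linear approximation.

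The plan is as follows. First, use continuity of $\partial_y \G$ at $(x_0, y_0)$ together with the hypothesis $\partial_y \G(x_0, y_0) > 0$ to choose a closed rectangle $R = [x_0 - a, x_0 + a] \times [y_0 - b, y_0 + b]$ on which $\partial_y \G$ is bounded below by some $m > 0$. On $R$ the map $y \mapsto \G(x, y)$ is strictly increasing for each fixed $x$. Since $\G(x_0, y_0) = 0$, this gives $\G(x_0, y_0 - b) < 0 < \G(x_0, y_0 + b)$. By continuity of $\G$ in $x$, there is $0 < a' \le a$ such that the same strict inequalities hold for all $x \in U \coloneqq (x_0 - a', x_0 + a')$. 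Applying the Intermediate Value Theorem on the interval $V \coloneqq (y_0 - b, y_0 + b)$ to $y \mapsto \G(x, y)$, and using strict monotonicity in $y$, yields a unique $G(x) \in V$ with $\G(x, G(x)) = 0$.

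Next, continuity of $G$ at any point $x_1 \in U$ follows by replaying the same argument with smaller $b$: for every $\varepsilon > 0$ the uniqueness of the zero in $V$ forces $|G(x) - G(x_1)| < \varepsilon$ whenever $x$ lies in a sufficiently small neighborhood of $x_1$. For differentiability, fix $x_1 \in U$, write $y_1 = G(x_1)$, and for small $h$ set $k \coloneqq G(x_1 + h) - G(x_1)$. Continuity gives $k \to 0$ as $h \to 0$. Total differentiability of $\G$ at $(x_1, y_1)$ combined with $\G(x_1 + h, y_1 + k) = \G(x_1, y_1) = 0$ yields
\[
0 = h \cdot \partial_x \G(x_1, y_1) + k \cdot \partial_y \G(x_1, y_1) + o(|h| + |k|).
\]
Solving for $k$ and using $\partial_y \G(x_1, y_1) \geq m > 0$ shows $|k| = O(|h|)$, so dividing by $h$ and letting $h \to 0$ gives the claimed formula $G'(x_1) = -\partial_x \G(x_1, y_1)/\partial_y \G(x_1, y_1)$. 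Continuity of $G'$ on $U$ then follows from continuity of the partial derivatives of $\G$ and of $G$ itself.

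The only nontrivial step is verifying $|k| = O(|h|)$ so that the $o$-term can be discarded after dividing by $h$; this is the standard delicate point and is handled cleanly by the lower bound $\partial_y \G \ge m > 0$ on $R$, which via the Mean Value Theorem applied to $y \mapsto \G(x_1 + h, y)$ gives $|\G(x_1 + h, y_1 + k) - \G(x_1 + h, y_1)| \ge m |k|$, allowing one to bound $|k|$ linearly in $|h|$.
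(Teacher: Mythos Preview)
Your proposal is correct: the paper does not prove this theorem at all but simply cites it from \cite{Oliveira13} as a known tool, and you correctly identify this. The proof outline you give is the standard elementary argument and is sound, including the delicate $|k| = O(|h|)$ step.
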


We use \Cref{thm:implicit_function_theorem} in the proof of the following lemma.

\begin{lemma}
	\label{lem:tstar_differentiable}
	 For every simple $\alpha,\beta,c\geq1 $,  the function $\taust[\abc](\kappa)$ is continuously differentiable on $\left(0,\frac{1}{\beta}\right)$. Moreover, for all $\kappa_0\in \left(0,\frac{1}{\beta}\right)$  it holds that
	\begin{equation*}
		\frac{\partial \taust[\abc](\kappa)}{\partial \kappa}\Bigg|_{\kappa = \kappa_0} = - \frac{\gparkt[\abc](\kappa_0, \tau^*_{\abc}(\kappa_0))}{\gpartt[\abc](\kappa_0, \tau^*_{\abc}(\kappa_0))}.
	\end{equation*}
\end{lemma}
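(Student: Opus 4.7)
The plan is to apply the Implicit Function Theorem (\Cref{thm:implicit_function_theorem}) to the equation $\gpart[\abc](\kappa,\tau) = 0$, which by \Cref{lem:minimizer_tau} uniquely determines $\taust[\abc](\kappa)$ in the interior of $\left[M_{\alpha,\beta}(\kappa), \beta\kappa\right]$ whenever $\abc$ are simple. Concretely, fix $\kappa_0 \in \left(0, \frac{1}{\beta}\right)$ and set $\G(\kappa, \tau) \coloneqq \gpart[\abc](\kappa, \tau)$ (using the closed-form expression of \Cref{lem:deriv_g_by_tau}). By \Cref{lem:minimizer_tau}, $\G(\kappa_0, \taust[\abc](\kappa_0)) = 0$, and by \Cref{lem:gk_convex_in_tau} the partial derivative $\frac{\partial \G}{\partial \tau}(\kappa_0, \taust[\abc](\kappa_0)) = \gpartt[\abc](\kappa_0, \taust[\abc](\kappa_0)) > 0$.

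Before invoking \Cref{thm:implicit_function_theorem}, I would check that $\G$ is continuously differentiable in an open neighbourhood of $(\kappa_0, \taust[\abc](\kappa_0))$. Since $\taust[\abc](\kappa_0) \in (M_{\alpha,\beta}(\kappa_0), \beta \kappa_0)$ by \Cref{lem:minimizer_tau}, \Cref{lem:prop_delta,lem:prop_gamma} give $\delta(\kappa_0, \taust[\abc](\kappa_0)) \in (0, 1)$ and $\gamma(\kappa_0, \taust[\abc](\kappa_0)) \in (0, 1)$. By continuity, the same inclusions hold throughout an open neighbourhood of $(\kappa_0, \taust[\abc](\kappa_0))$. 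On such a neighbourhood, $\delta$ and $\gamma$ are rational (hence smooth) functions of $(\kappa, \tau)$ taking values in $(0, 1)$, and $\D{\frac{1}{\alpha}}{\cdot}$ is smooth on $(0, 1)$, so $\G$ is continuously differentiable there.

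\Cref{thm:implicit_function_theorem} then yields open intervals $U \ni \kappa_0$ and $V \ni \taust[\abc](\kappa_0)$ together with a continuously differentiable function $G \colon U \to V$ satisfying $\gpart[\abc](\kappa, G(\kappa)) = 0$ for all $\kappa \in U$, and
\[
 G'(\kappa_0) = -\frac{\gparkt[\abc](\kappa_0, \taust[\abc](\kappa_0))}{\gpartt[\abc](\kappa_0, \taust[\abc](\kappa_0))}.
\]
To conclude that $G$ coincides with $\taust[\abc]$ on $U$, I would shrink $U$ (and hence $V$) further, if necessary, so that for every $\kappa \in U$ we have $V \subseteq (M_{\alpha,\beta}(\kappa), \beta \kappa)$; this is possible because $M_{\alpha,\beta}$ is continuous and $\taust[\abc](\kappa_0)$ lies strictly between $M_{\alpha,\beta}(\kappa_0)$ and $\beta \kappa_0$. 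Then by the uniqueness statement of \Cref{lem:minimizer_tau}, $G(\kappa) = \taust[\abc](\kappa)$ on $U$, proving that $\taust[\abc]$ is continuously differentiable at $\kappa_0$ with the claimed derivative.

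The main obstacle is really just the bookkeeping: ensuring that the point $(\kappa_0, \taust[\abc](\kappa_0))$ lies in the interior of the region where $\G$ is smooth (so that an IFT-style argument applies) and that the implicit function produced by the theorem can be identified with $\taust[\abc]$. Both follow from the interior location of $\taust[\abc](\kappa_0)$ guaranteed by \Cref{lem:minimizer_tau} together with the uniqueness of the critical point in that interior, so no serious analytic work is required beyond these observations.
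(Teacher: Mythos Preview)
Your proposal is correct and follows essentially the same approach as the paper: apply the Implicit Function Theorem to $\gpart[\abc](\kappa,\tau)=0$ at $(\kappa_0,\taust[\abc](\kappa_0))$, using \Cref{lem:minimizer_tau} for the interior location and vanishing of $\gpart$, and \Cref{lem:gk_convex_in_tau} for the positivity of $\gpartt$, then identify the implicit function with $\taust[\abc]$ via uniqueness of the interior critical point. Your justification of the smoothness of $\G$ via \Cref{lem:prop_delta,lem:prop_gamma} is slightly more explicit than the paper's (which simply cites \Cref{lem:deriv_g_by_tau_tau}), but the argument is otherwise the same.
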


\begin{proof}
	Let $\kappa_0 \in \left( 0, \frac{1}{\beta} \right)$ and $\tau_0 = \taust(\kappa_0)$.
	Consider a function $\G$ defined on a neighborhood $E$ of $\left( \kappa_0 ,\tau_0 \right)$ by
	\begin{equation*}
		\G\left( {\kappa}, {\tau} \right) = \gpart[\abc]\left( {\kappa}, {\tau} \right),
	\end{equation*}
	for every $\left(\kappa, \tau\right) \in E$.
	\Cref{lem:deriv_g_by_tau_tau} implies that the partial derivative $\frac{\partial g(\kappa,\tau)}{\partial \tau}$ of $g(\kappa,\tau)$ is continuously differentiable.
	By \Cref{lem:minimizer_tau} we have that $\G\left( \kappa_0, \tau_0 \right) = 0$.
	Furthermore, by \Cref{lem:gk_convex_in_tau} we also have
	\begin{equation*}
		\frac{\partial \G}{\partial \tau} (\kappa_0, \tau_0) = \frac{\partial^{2} g(\kappa,\tau)}{\partial \tau^2}\Bigg|_{(\kappa,\tau) = (\kappa_0, \tau_0)} > 0,
	\end{equation*}
	therefore \Cref{thm:implicit_function_theorem} implies that exists open intervals $U,V$ with $\kappa_0 \in U, \tau_0 \in V$ and a  continuously differentiable function $G \colon U \to V$ such that
	\begin{equation*}
		\G \bigl({\kappa}, G(\overline{\kappa}) \bigr)  = \frac{\partial g(\kappa,\tau)}{\partial \tau}\Bigg|_{(\kappa,\tau) = (\overline{\kappa}, G(\overline{\kappa}))} = 0 \qquad \text {for all } \overline{\kappa} \in U .
	\end{equation*}
	By \Cref{lem:minimizer_tau} it holds that $\tau_0\in\left( M_{\alpha,\beta}(\kappa_0), \beta \kappa_0\right)$.
	So there is an environment $U'\subseteq U$ of $\kappa_0$ such that $G({\kappa}) \in \left(M_{\alpha,\beta}({\kappa}), \beta {\kappa}\right)$ for all ${\kappa}\in U'$.
	By \Cref{lem:gk_convex_in_tau} it also holds that $\gk$ is strictly convex for every $\kappa \in U'$.
	Thus, by the definition of $\taust$, we have $G(\kappa) = \taust(\kappa)$ for every $\kappa \in U'$.

	This implies that~$\taust(\kappa)$ is continuously differentiable in a neighborhood of $\kappa_0$.
	Since this holds for all $\kappa_0 \in \left( 0, \frac{1}{\beta} \right) $, it follows that $\taust(\kappa)$ is continuously differentiable on $\left(0, \frac{1}{\beta}\right)$.
	Moreover, \Cref{thm:implicit_function_theorem} further implies
	\begin{equation*}
		G'(\kappa_0) = \frac{\partial \taust(k)}{\partial \kappa}\Bigg|_{\kappa = \kappa_0} = - \frac{\gparkt[\abc](\kappa_0, \tau_0)}{\gpartt[\abc](\kappa_0, \tau_0)}.\qedhere
	\end{equation*}
\end{proof}

\begin{lemma}
	\label{lem:gstar_second_deriv_negative}
	For every simple $\alpha,\beta,c\geq 1$ and $\kappa\in \left(0,\frac{1}{\beta}\right)$ It holds that
	\begin{equation*}
		\frac{\partial^2 \gst[\abc](\kappa)}{\partial \kappa^{2}}\Bigg|_{\kappa = \kappa_0} < 0.
	\end{equation*}
\end{lemma}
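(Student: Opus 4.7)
The plan is to express $\frac{\partial^2 \gst(\kappa)}{\partial \kappa^2}$ in terms of the entries of the Hessian of $g$ evaluated at $(\kappa, \taust(\kappa))$ and then invoke \Cref{lem:hessian_negative}. By \Cref{lem:tstar_differentiable}, $\taust[\abc]$ is continuously differentiable on $\left(0,\frac{1}{\beta}\right)$ under the simplicity hypothesis, so the composition $\gst(\kappa)=g(\kappa,\taust(\kappa))$ may be differentiated twice via the chain rule.

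First I would apply the chain rule once to obtain
\begin{equation*}
\frac{\partial \gst(\kappa)}{\partial \kappa}
= \frac{\partial g}{\partial \kappa}\bigl(\kappa,\taust(\kappa)\bigr)
+ \gpart\bigl(\kappa,\taust(\kappa)\bigr)\cdot \frac{\partial \taust(\kappa)}{\partial \kappa}.
\end{equation*}
By \Cref{lem:minimizer_tau}, the second term vanishes since $\gpart(\kappa,\taust(\kappa))=0$ in the simple case. Differentiating again with the chain rule and using that $\gpart(\kappa,\taust(\kappa))\equiv 0$ annihilates the $\gpartt \cdot (\partial \taust/\partial \kappa)$ contribution that would otherwise multiply $\gpart$, I obtain
\begin{equation*}
\frac{\partial^2 \gst(\kappa)}{\partial \kappa^2}
= \frac{\partial^2 g}{\partial \kappa^2}\bigl(\kappa,\taust(\kappa)\bigr)
+ \gparkt\bigl(\kappa,\taust(\kappa)\bigr)\cdot \frac{\partial \taust(\kappa)}{\partial \kappa}.
\end{equation*}
Next, substitute the formula $\frac{\partial \taust(\kappa)}{\partial \kappa} = -\gparkt(\kappa,\taust(\kappa))/\gpartt(\kappa,\taust(\kappa))$ from \Cref{lem:tstar_differentiable} to rewrite
\begin{equation*}
\frac{\partial^2 \gst(\kappa)}{\partial \kappa^2}
= \frac{1}{\gpartt(\kappa,\taust(\kappa))}\left[
\frac{\partial^2 g}{\partial \kappa^2}\bigl(\kappa,\taust(\kappa)\bigr)\cdot \gpartt\bigl(\kappa,\taust(\kappa)\bigr)
- \bigl(\gparkt(\kappa,\taust(\kappa))\bigr)^2
\right]
= \frac{\dhes{g}{\kappa,\taust(\kappa)}}{\gpartt(\kappa,\taust(\kappa))}.
\end{equation*}

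Finally, \Cref{lem:gk_convex_in_tau} gives $\gpartt(\kappa,\taust(\kappa))>0$ (note $\taust(\kappa)\in(M_{\alpha,\beta}(\kappa),\beta\kappa)$ by \Cref{lem:minimizer_tau}), and \Cref{lem:hessian_negative} gives $\dhes{g}{\kappa,\taust(\kappa)}<0$. Combining these two inequalities yields $\frac{\partial^2 \gst(\kappa)}{\partial \kappa^2}<0$, as required. The only nontrivial ingredient is \Cref{lem:hessian_negative}, which will be established separately in \Cref{sec:computing_det_hes}; the present argument is a purely formal manipulation via the chain rule and the first-order optimality condition for $\taust(\kappa)$.
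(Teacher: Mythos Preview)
Your proposal is correct and follows essentially the same approach as the paper: differentiate $\gst(\kappa)=g(\kappa,\taust(\kappa))$ via the chain rule, use the first-order condition $\gpart(\kappa,\taust(\kappa))=0$ from \Cref{lem:minimizer_tau} to simplify, substitute the derivative of $\taust$ from \Cref{lem:tstar_differentiable}, and recognize the resulting expression as $\dhes{g}{\kappa,\taust(\kappa)}/\gpartt(\kappa,\taust(\kappa))$, which is negative by \Cref{lem:hessian_negative} and \Cref{lem:gk_convex_in_tau}. The paper's argument is identical in structure and in the lemmas invoked.
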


\begin{proof}
	Let $\kappa_0\in \left(0,\frac{1}{\beta}\right)$ and $\tau_0=\tau^*_{\abc}(\kappa_0)$. 
	Using the chain rule for differentiation we get
	\begin{equation}\label{eq:gstar_deriv}
		\begin{aligned}
		\frac{\partial \gst(\kappa)}{\partial \kappa}\Bigg|_{\kappa = \kappa_0} &~=~ \frac{\partial g(\kappa,\tau)}{\partial \kappa}\Bigg|_{(\kappa,\tau) = (\kappa_0, \tau_0)} + \frac{\partial g(\kappa,\tau)}{\partial \tau}\Bigg|_{(\kappa,\tau) = (\kappa_0, \tau_0)} \cdot \frac{\partial \taust(\kappa)}{\partial \kappa}\Bigg|_{\kappa = \kappa_0} \\
		&~=~ \frac{\partial g(\kappa,\tau)}{\partial \kappa}\Bigg|_{(\kappa,\tau) = (\kappa_0, \tau_0)} \\
		&~=~\gpark (\kappa_0,\tau^*(\kappa_0))
		\end{aligned}
	\end{equation}
	where the second equality follows from $\frac{\partial g(\kappa,\tau)}{\partial \tau}\Bigg|_{(\kappa,\tau) = (\kappa_0, \tau_0)} = 0$ by \Cref{lem:minimizer_tau}. By \eqref{eq:gstar_deriv} and using \Cref{lem:tstar_differentiable} we get
	\begin{align*}
		\frac{\partial^2 \gst(\kappa)}{\partial^2 \kappa}\Bigg|_{\kappa = \kappa_0} &=~
		\gparkk(\kappa_0, \tau_0) + \frac{\partial \tau^*(\kappa) }{\partial \kappa} \bigg|_{\kappa=\kappa_0} \cdot \gparkt(\kappa_0,\tau_0)\\
		&=\gparkk(\kappa_0, \tau_0)  - \frac{\gparkt(\kappa_0, \tau_0)}{\gpartt(\kappa_0, \tau_0)} \cdot \gparkt(\kappa_0,\tau_0)
		\\
		&=  ~\frac{1}{\gpartt(\kappa_0, \tau_0)} \cdot \Biggl(\gparkk(\kappa_0, \tau_0) \cdot \gpartt(\kappa_0, \tau_0) - \Bigl(\gparkt(\kappa_0, \tau_0)\Bigr)^2 \Biggr)\\
		&=~ \frac{1}{\gpartt(\kappa_0, \tau_0)} \cdot \dhes{g}{\kappa_0, \tau_0}\\
		 &<0,
	\end{align*}
	where the last inequality follows from \Cref{lem:hessian_negative,lem:gk_convex_in_tau}.
\end{proof}

It can also be easily shown using standard calculus arguments that $\gk[\abc](\kappa)$ is continuous in the closed interval $\left[ 0,\frac{1}{\beta}\right]$ (see, e.g., \cite{Still18}).
Thus, the following corollary is an immediate consequence of \Cref{lem:gstar_second_deriv_negative}.

\begin{corollary}
	\label{cor:simple_concave}
	For every simple $\alpha,\beta,c\geq 1$ it holds that $g^*_{\abc}$ is concave on the interval $\left[0,\frac{1}{\beta}\right]$.
\end{corollary}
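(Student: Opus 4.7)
The plan is to derive \Cref{lem:concave} by combining the already-established \Cref{cor:simple_concave} (which covers all \emph{simple} parameter triples) with a continuity argument handling the two non-simple regimes. Recalling \Cref{def:simple}, under the standing assumption $\beta > 1$ the triples not covered are exactly: \textbf{(a)} $\alpha = \beta$, and \textbf{(b)} $c = 1$ with $\alpha < \beta$. Both will be treated by the elementary fact that a pointwise limit of concave functions on a common interval is itself concave.

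For regime (a), I fix $c \geq 1$ and pick $\alpha_n \searrow \beta$ with $\alpha_n > \beta$; each triple $(\alpha_n, \beta, c)$ is simple, so \Cref{cor:simple_concave} yields concavity of $g^*_{\alpha_n, \beta, c}$ on $\left[0, 1/\beta\right]$. For regime (b), I fix $\alpha < \beta$ and pick $c_n \searrow 1$ with $c_n > 1$; each $(\alpha, \beta, c_n)$ is simple, so $g^*_{\alpha, \beta, c_n}$ is concave on $\left[0, 1/\beta\right]$. It then suffices to establish the pointwise convergences $g^*_{\alpha_n, \beta, c}(\kappa) \to g^*_{\beta, \beta, c}(\kappa)$ and $g^*_{\alpha, \beta, c_n}(\kappa) \to g^*_{\alpha, \beta, 1}(\kappa)$ at every fixed $\kappa \in \left[0, 1/\beta\right]$.

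Pointwise convergence will be derived via a standard compactness argument resting on two ingredients: (i) joint continuity of $g_{\alpha, \beta, c}(\kappa, \tau)$ in $(\alpha, c, \tau)$ on the relevant closed domain --- the piecewise conventions $\gamma(\kappa, 0) = 1/\alpha$ and $\delta(\kappa, 1) = 1/\alpha$, combined with $\entropy(0) = \entropy(1) = 0$, guarantee continuity up to and including the boundary points $\tau = 0$, $\tau = 1$, and $\tau = \beta\kappa$ (where $\delta = 0$); and (ii) continuous Hausdorff variation of the feasible interval $\left[M^*_{\alpha,\beta}\kappa, \beta\kappa\right]$: in regime (a), $M^*_{\alpha_n,\beta} = (\alpha_n - \beta)/(\alpha_n - 1) \searrow 0 = M^*_{\beta, \beta}$, while in regime (b) the interval is independent of $c$. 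The minimizers $\tau_n^*$ of $g_{\alpha_n, \beta, c_n}(\kappa, \cdot)$ lie in a shrinking family of compact intervals; any subsequential limit $\tau^*$ is feasible for the target problem, and joint continuity forces it to attain the minimum of $g_{\alpha, \beta, c}(\kappa, \cdot)$, giving convergence of the minimum values.

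The main anticipated obstacle is verifying the joint continuity of $g$ uniformly near the boundary of the feasible region --- in particular at $\tau = 0$ and $\tau = \beta\kappa$, where $\gamma(\kappa, \tau)$ and $\delta(\kappa, \tau)$ approach endpoints of $[0,1]$. This check reduces to direct algebraic inspection using the piecewise definitions of $\gamma$ and $\delta$, and the fact that $\entropy$ is continuous on $[0,1]$. Once continuity is confirmed, concavity transfers from the approximating sequences to the limit by the pointwise-limit principle, completing the proof of \Cref{lem:concave}.
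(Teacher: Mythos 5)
There is a genuine gap, and it is structural: your proposal does not prove the statement at hand. The statement to be proved is \Cref{cor:simple_concave} itself, i.e.\ concavity of $g^*_{\abc}$ on $\left[0,\tfrac{1}{\beta}\right]$ for \emph{simple} triples $\alpha,\beta,c$. Your argument opens by declaring \Cref{cor:simple_concave} ``already-established'' and then uses it, together with a limiting argument, to deduce \Cref{lem:concave} for the two non-simple regimes ($\alpha=\beta$, and $c=1$ with $\alpha<\beta$). So with respect to the target statement the proposal is circular: the only input carrying the concavity information for simple parameters is the very corollary you were asked to prove, and nothing in your text supplies it. Note also that your approximation scheme cannot be turned around to yield the corollary, since simple triples cannot be approximated by non-simple ones in any useful way (the non-simple set is a lower-dimensional slice on which $g^*$ degenerates, e.g.\ $g^*_{\alpha,\beta,1}\equiv 0$ for $\alpha<\beta$), so concavity cannot be transferred in that direction by a pointwise-limit argument.

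What is actually needed, and what the paper does, is an analysis of $g^*_{\abc}(\kappa)=g_{\abc}(\kappa,\tau^*_{\abc}(\kappa))$ on the open interval: for simple parameters the minimizer $\tau^*_{\abc}(\kappa)$ is an interior critical point (\Cref{lem:minimizer_tau}), it is continuously differentiable by the implicit function theorem applied to $\partial g/\partial\tau=0$ (\Cref{lem:tstar_differentiable}), and then
\[
\frac{\partial^2 g^*_{\abc}(\kappa)}{\partial\kappa^2}
=\frac{\dhes{g}{\kappa,\tau^*(\kappa)}}{\;\partial^2 g/\partial\tau^2\,(\kappa,\tau^*(\kappa))\;}<0,
\]
using strict convexity in $\tau$ (\Cref{lem:gk_convex_in_tau}) and the negativity of the Hessian determinant at the critical point (\Cref{lem:hessian_negative}); concavity on the closed interval then follows from continuity of $g^*$ up to the endpoints. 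None of these ingredients (interiority and differentiability of $\tau^*$, the Hessian computation, the sign argument) appears in your proposal, so the statement remains unproved. As a side remark, your limiting strategy is a reasonable \emph{alternative} route for the non-simple corner cases of \Cref{lem:concave} (the paper instead uses explicit formulas, e.g.\ \Cref{lem:gst_alpha_eq_beta_formula}), but that is a different statement from the one under review.
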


To complete the proof \Cref{lem:concave} we need to handle the corner cases excluded from \Cref{cor:simple_concave}.

\begin{lemma}\label{lem:gst_conc_first_corner}
	For all $\beta > \alpha \geq 1$ and $c = 1$, it holds that $\gst[\abc](\kappa)$ is concave in the interval $\left[0, \frac{1}{\beta}\right]$.
\end{lemma}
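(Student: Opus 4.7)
Proof proposal for \Cref{lem:gst_conc_first_corner}.

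The plan is to show that in this corner case the function $\gst[\abc]$ is in fact identically zero on $\left[0,\tfrac{1}{\beta}\right]$, from which concavity follows trivially. First I will identify the minimizer $\taust[\abc](\kappa)$ explicitly. Since $c=1$, the term $\frac{\beta\kappa-\tau}{\alpha}\ln c$ in $g_{\abc}(\kappa,\tau)$ vanishes, and by \Cref{lem:deriv_g_by_tau} we get
\begin{equation*}
  \gpart[\abc](\kappa,\tau) \;=\; -\D{\tfrac{1}{\alpha}}{\gamma_{\alpha,\beta}(\kappa,\tau)} + \D{\tfrac{1}{\alpha}}{\delta_{\alpha,\beta}(\kappa,\tau)}.
\end{equation*}
I will verify that $\gamma_{\alpha,\beta}(\kappa,M_{\alpha,\beta}(\kappa)) = \delta_{\alpha,\beta}(\kappa,M_{\alpha,\beta}(\kappa)) = \kappa$ for all $\beta>\alpha\ge 1$ and $\kappa\in\left(0,\tfrac{1}{\beta}\right)$. \Cref{lem:delta_gamma_extreme_tau} proves this for $\alpha>1$; for $\alpha=1$ one just plugs $\tau = M_{1,\beta}(\kappa) = \frac{(\beta-1)\kappa}{1-\kappa}$ into the definitions of $\gamma_{1,\beta}$ and $\delta_{1,\beta}$, which gives $\kappa$ by a one-line calculation (using $1-\tau=\frac{1-\beta\kappa}{1-\kappa}$).

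Given this, $\gpart[\abc](\kappa,M_{\alpha,\beta}(\kappa)) = 0$. By \Cref{lem:gk_convex_in_tau} (whose only hypothesis is $\beta>1$, satisfied here), the function $\gk[\abc]$ is strictly convex on $\bigl(M_{\alpha,\beta}(\kappa),\beta\kappa\bigr)$, so $\gpart[\abc](\kappa,\tau)$ is strictly increasing in $\tau$ and therefore strictly positive on $\bigl(M_{\alpha,\beta}(\kappa),\beta\kappa\bigr)$. Hence $\gk[\abc]$ is strictly increasing on $\bigl[M_{\alpha,\beta}(\kappa),\beta\kappa\bigr]$, and the unique minimizer is $\taust[\abc](\kappa)=M_{\alpha,\beta}(\kappa)$.

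Now I compute $\gst[\abc](\kappa) = g_{\abc}(\kappa,M_{\alpha,\beta}(\kappa))$. Using $\ln c=0$ and $\gamma(\kappa,M)=\delta(\kappa,M)=\kappa$, we have
\begin{equation*}
  \gst[\abc](\kappa) \;=\; -M_{\alpha,\beta}(\kappa)\cdot\entropy(\kappa) - \bigl(1-M_{\alpha,\beta}(\kappa)\bigr)\cdot\entropy(\kappa) + \entropy(\kappa) \;=\; 0
\end{equation*}
for every $\kappa\in\left(0,\tfrac{1}{\beta}\right)$. The boundary values $\kappa=0$ and $\kappa=\tfrac{1}{\beta}$ force $\tau$ into a single point (respectively $0$ and $1$), and direct substitution into the definition of $g_{\abc}$ also gives $0$ there. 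Thus $\gst[\abc]\equiv 0$ on the closed interval $\left[0,\tfrac{1}{\beta}\right]$, which is concave (indeed both concave and convex).

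The only step that requires care is the monotonicity argument showing $\taust[\abc](\kappa) = M_{\alpha,\beta}(\kappa)$ rather than some interior point; this is where the corner case $c=1$ deviates from the simple case handled in \Cref{lem:minimizer_tau}, and it is what makes the analytic machinery of \Cref{sec:computing_det_hes} unavailable. Fortunately, once the value $\taust[\abc](\kappa) = M_{\alpha,\beta}(\kappa)$ is identified, the collapse $\gst[\abc]\equiv 0$ is immediate, so there is no genuine obstacle.
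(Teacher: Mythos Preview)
Your proof is correct and reaches the same conclusion as the paper: $\gst[\abc]\equiv 0$ on $\left[0,\tfrac{1}{\beta}\right]$, hence trivially concave. Both arguments rest on the identity $\gamma(\kappa,M_{\alpha,\beta}(\kappa))=\delta(\kappa,M_{\alpha,\beta}(\kappa))=\kappa$, which makes $g_{\abc}(\kappa,M_{\alpha,\beta}(\kappa))=0$.

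The technical route differs slightly. You explicitly identify the minimizer $\taust[\abc](\kappa)=M_{\alpha,\beta}(\kappa)$ by computing $\gpart[\abc](\kappa,M_{\alpha,\beta}(\kappa))=0$ and invoking strict convexity to conclude the derivative is positive on the open interval. The paper instead uses a sandwich: $\gst[\abc](\kappa)\le g_{\abc}(\kappa,M_{\alpha,\beta}(\kappa))=0$ by definition of the minimum, and $\gst[\abc](\kappa)\ge 0$ by \Cref{lem:g_nonneg}. The paper's version is shorter and avoids any derivative analysis, but yours has the merit of actually locating the minimizer, which is informative. You also correctly flagged and patched the $\alpha=1$ case that \Cref{lem:delta_gamma_extreme_tau} (as stated with $\alpha>1$) does not literally cover; the paper's own proof glosses over this.
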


\begin{proof}
	For every $\kappa \in \left[ 0, \frac{1}{\beta} \right]$, by the definition of $\taust$, it holds that
	\begin{align*}
		g_{\alpha, \beta ,1}\Bigl( \kappa, \taust[\alpha, \beta, 1](\kappa) \Bigr) &\leq g\left( \kappa, \Mab(\kappa) \right) \\
		&= \frac{\beta \cdot \kappa - \Mab(\kappa)}{\alpha} \cdot \ln(1) - \Mab(\kappa) \cdot \entropy \Bigl(\gamma\left( \kappa, \Mab(\kappa) \right) \Bigr) \\ &\quad- \left( 1 - \Mab(\kappa)  \right) \cdot \entropy \Bigl(\delta\left( \kappa, \Mab(\kappa) \right) \Bigr) + \entropy(\kappa)\\
		&= \entropy(\kappa) \cdot \biggl( -\Mab(\kappa) - \Bigl(1 - \Mab(\kappa)\Bigr) + 1 \biggr) \\
		&= 0,
	\end{align*}
	where the second equality follows from \Cref{lem:delta_gamma_extreme_tau}.
	By \Cref{lem:g_nonneg} we also have $g_{\alpha, \beta ,1}\Bigl( \kappa, \taust[\alpha, \beta, 1](\kappa) \Bigr) \geq 0$.
	So $\gst[\alpha, \beta ,1]\Bigl( \kappa) = g_{\alpha, \beta ,1}\Bigl( \kappa, \taust[\alpha, \beta, 1](\kappa) \Bigr)=0$.
	Thus, the function is trivially concave.
\end{proof}
	
Another easy to handle corner case occurs when $\alpha = \beta$ and $c=1$.

\begin{lemma}
	\label{lem:a_is_b_and_c_is_one}
	Let $\alpha > 1$ and $c=1$, then  $g^*_{\alpha,\alpha,c}(\kappa)=0$ for all $\kappa \in \left[0,\frac{1}{\beta}\right]$.
\end{lemma}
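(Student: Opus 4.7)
The plan is to evaluate $g_{\alpha,\alpha,1}$ at the boundary point $\tau=0$, observe that this value is $0$, and then combine with the general nonnegativity of $g$ (\Cref{lem:g_nonneg}) to conclude that $g^*_{\alpha,\alpha,1}(\kappa)=0$.

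First I would check that $\tau=0$ is a feasible value for the minimization defining $g^*_{\alpha,\alpha,1}(\kappa)$. Since $\alpha=\beta$, definition \eqref{eq:Mdef} gives $M_{\alpha,\alpha}(\kappa)=0$, so the constraint becomes $\tau\in[0,\alpha\kappa]$, and $\tau=0$ indeed lies in this interval for every $\kappa\in\left[0,\tfrac{1}{\alpha}\right]$. Next I would substitute $\tau=0$ and $c=1$ into the definition \eqref{eq:g_def} of $g_{\alpha,\alpha,1}$. Using the conventions $\gamma_{\alpha,\alpha}(\kappa,0)=\tfrac{1}{\alpha}$ (by the $\tau=0$ case of \eqref{eq:gamma_def}), $\delta_{\alpha,\alpha}(\kappa,0)=\kappa$ (from \eqref{eq:delta_def} with $\beta=\alpha$), and $\ln c=\ln 1=0$, the formula collapses to
\[
g_{\alpha,\alpha,1}(\kappa,0)\;=\;0 \;-\; 0\cdot\entropy\!\left(\tfrac{1}{\alpha}\right) \;-\; 1\cdot\entropy(\kappa) \;+\; \entropy(\kappa) \;=\; 0.
\]
Consequently $g^*_{\alpha,\alpha,1}(\kappa)\le g_{\alpha,\alpha,1}(\kappa,0)=0$.

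For the matching lower bound, I would invoke \Cref{lem:g_nonneg}, which asserts $g_{\alpha,\beta,c}(\kappa,\tau)\ge 0$ for all admissible $\kappa$ and $\tau$; applied with $\beta=\alpha$ and $c=1$, this gives $g^*_{\alpha,\alpha,1}(\kappa)\ge 0$. Combining the two inequalities yields $g^*_{\alpha,\alpha,1}(\kappa)=0$ for every $\kappa\in\left[0,\tfrac{1}{\alpha}\right]$, completing the proof. There is no real obstacle in this argument: the whole content is the trivial observation that when $c=1$ the $\frac{\beta\kappa-\tau}{\alpha}\ln c$ term vanishes and, at the corner $\tau=0$, the two entropy contributions cancel against $\entropy(\kappa)$, so the minimum is attained at the boundary with value $0$.
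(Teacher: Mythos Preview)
Your proof is correct and follows essentially the same approach as the paper: evaluate $g_{\alpha,\alpha,1}(\kappa,0)=0$ at the feasible boundary $\tau=0$ (using $M_{\alpha,\alpha}(\kappa)=0$) and combine with the nonnegativity from \Cref{lem:g_nonneg} to conclude $g^*_{\alpha,\alpha,1}(\kappa)=0$.
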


\begin{proof}
	Let $\kappa\in\left[0,\frac{1}{\beta}\right]$.
	By \Cref{lem:g_nonneg}, for every $\tau \in \left[M_{\alpha,\alpha}(\kappa),\alpha\cdot \kappa \right] = [0,\alpha\kappa]$, it holds that
	$g_{\alpha,\alpha,c}(\kappa,\tau) \geq 0$.
	Furthermore,
	$$
	\begin{aligned}
		g_{\alpha,\alpha,c}(\kappa,0) ~&=~\left(\kappa-\frac{0}{\alpha}\right)\cdot\ln c - 0 \cdot \entropy\left(\gamma_{\alpha,\alpha} (\kappa,0)\right) - (1-0)\cdot \entropy\left(\delta_{\alpha,\alpha}(\kappa,0) \right) +\entropy(\kappa) \\
		~&=~ - \left(\kappa-\frac{0}{\alpha}\right)\cdot0 -\entropy(\kappa) +\entropy(\kappa )~=~0.
	\end{aligned}
	$$
	So we have
	$$g^*_{\alpha,\alpha,c}(\kappa) = \min_{\tau \in \left[ 0,\alpha\cdot \kappa\right]}g_{\alpha,\alpha,c}(\kappa,\tau) = 0.$$
\end{proof}

We are left to handle the case in which $\alpha = \beta>1$ and $c>1$.
The analysis for this case is based on ideas from \cite{EsmerKMNS22}.
The analysis is also used as part of the proof of \Cref{lem:coincide_with_esa}.
We first provide an explicit formula for $g^*(\kappa)$ in this case.

\begin{lemma}
	\label{lem:gst_alpha_eq_beta_formula}
	Let $\alpha > 1$ and $c > 1$.
	Then
	$$g^*_{\alpha,\alpha,c}(\kappa) = \kappa\cdot \ln c -
		\begin{cases}
			0 & \kappa < \delta^* \\
			\D{\kappa}{\delta^*} &\kappa \geq \delta^*
		\end{cases}~.
	$$
	for all $\kappa\in \left(0,\frac{1}{\alpha}\right)$ where $\delta^* \in \left(0,\frac{1}{\alpha}\right)$ is the unique value which satisfies $\D{\frac{1}{\alpha} }{\delta^*}  = \frac{\ln c}{\alpha}$. 
\end{lemma}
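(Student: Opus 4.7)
The plan is to exploit the dramatic simplifications available when $\alpha=\beta$ to compute $g^*_{\alpha,\alpha,c}(\kappa)$ explicitly by locating the minimizer in $\tau$ and collapsing the result via a cross-entropy identity. Specializing the definitions, I observe that $M_{\alpha,\alpha}(\kappa)=0$ (so the admissible range is $\tau\in[0,\alpha\kappa]$), that $\gamma_{\alpha,\alpha}(\kappa,\tau)\equiv\tfrac{1}{\alpha}$ (since $1-\beta/\alpha=0$), and that $\delta_{\alpha,\alpha}(\kappa,\tau)=\frac{\kappa-\tau/\alpha}{1-\tau}$ is continuous and strictly decreasing on $[0,\alpha\kappa]$, with $\delta(\kappa,0)=\kappa$ and $\delta(\kappa,\alpha\kappa)=0$ (cf.\ \Cref{lem:prop_delta}). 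Applying \Cref{lem:deriv_g_by_tau} and using $\D{1/\alpha}{1/\alpha}=0$, I get
\begin{equation*}
\frac{\partial g_{\alpha,\alpha,c}}{\partial \tau}(\kappa,\tau) \;=\; -\frac{\ln c}{\alpha} + \D{\tfrac{1}{\alpha}}{\delta(\kappa,\tau)}.
\end{equation*}
Since $\D{1/\alpha}{x}$ is strictly decreasing on $(0,1/\alpha)$, goes from $+\infty$ at $x\to 0$ to $0$ at $x=1/\alpha$, and $\tfrac{\ln c}{\alpha}>0$, the value $\delta^*\in(0,1/\alpha)$ is uniquely determined by the intermediate value theorem, so the statement is well-posed.

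Next, I case-split on the sign of the above derivative at $\tau=0$. Its value there is $-\tfrac{\ln c}{\alpha}+\D{1/\alpha}{\kappa}$, which by the monotonicity of $\D{1/\alpha}{\cdot}$ is nonnegative iff $\kappa\le\delta^*$. If $\kappa<\delta^*$, the derivative is positive on all of $(0,\alpha\kappa)$ (using that $\delta$ is decreasing and $g_{\alpha,\alpha,c}(\kappa,\cdot)$ is convex by \Cref{lem:g_convex_by_tau}); hence the minimum is at $\tau^*(\kappa)=0$, and a direct substitution gives $g_{\alpha,\alpha,c}(\kappa,0)=\kappa\ln c - 0\cdot\entropy(1/\alpha) - 1\cdot\entropy(\kappa) + \entropy(\kappa)=\kappa\ln c$, as required. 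If $\kappa\ge\delta^*$, the derivative is $\le 0$ at $\tau=0$ and tends to $+\infty$ as $\tau\to\alpha\kappa$ (since $\delta\to 0$ and $\D{1/\alpha}{0}=\infty$), so by convexity there is a unique interior minimizer $\tau^*(\kappa)\in[0,\alpha\kappa)$, and it satisfies $\D{1/\alpha}{\delta(\kappa,\tau^*(\kappa))}=\tfrac{\ln c}{\alpha}$, forcing $\delta(\kappa,\tau^*(\kappa))=\delta^*$.

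Finally, the main computation is to evaluate $g_{\alpha,\alpha,c}(\kappa,\tau^*)$ in case two. Writing $\tau^*=\tau^*(\kappa)$, the identity $\delta(\kappa,\tau^*)=\delta^*$ rearranges to the mixture identity $\kappa=\tfrac{\tau^*}{\alpha}+(1-\tau^*)\delta^*$. Substituting $\ln c=\alpha\,\D{1/\alpha}{\delta^*}$ into $g_{\alpha,\alpha,c}(\kappa,\tau^*)$, the coefficients of $\tau^*$ and $1-\tau^*$ become exactly the cross-entropies of $1/\alpha$ and $\delta^*$ against $\delta^*$, namely $\tau^*\bigl[\D{1/\alpha}{\delta^*}+\entropy(1/\alpha)\bigr]+(1-\tau^*)\entropy(\delta^*)=-\bigl[\tfrac{\tau^*}{\alpha}+(1-\tau^*)\delta^*\bigr]\ln\delta^* - \bigl[\tfrac{\tau^*(\alpha-1)}{\alpha}+(1-\tau^*)(1-\delta^*)\bigr]\ln(1-\delta^*)$. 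By the mixture identity this collapses to $-\kappa\ln\delta^*-(1-\kappa)\ln(1-\delta^*)=\D{\kappa}{\delta^*}+\entropy(\kappa)$, which cancels the $+\entropy(\kappa)$ term in $g$ and leaves $g_{\alpha,\alpha,c}(\kappa,\tau^*)=\kappa\ln c - \D{\kappa}{\delta^*}$, as claimed. The only potential subtlety is the boundary $\kappa=\delta^*$, where both formulas agree since $\D{\delta^*}{\delta^*}=0$; I expect no real obstacle, since the cross-entropy collapse is routine once the mixture identity is in hand.
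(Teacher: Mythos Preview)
Your proof is correct and follows essentially the same route as the paper: both identify that the interior critical point of $g_{\alpha,\alpha,c}(\kappa,\cdot)$ is characterized by $\delta(\kappa,\tau)=\delta^*$, case-split on whether this point lies in $[0,\alpha\kappa]$ (equivalently, on whether $\kappa\gtrless\delta^*$), and compute directly at $\tau=0$ in the easy case. The only notable difference is in the algebra for the case $\kappa\ge\delta^*$: the paper invokes the identity $\entropy(a)-(a-b)\ln\frac{1-a}{a}=\D{b}{a}+\entropy(b)$ (their \Cref{lem:H_to_D_eq}) twice, whereas you package the same computation as a linearity-of-cross-entropy argument via the mixture identity $\kappa=\tfrac{\tau^*}{\alpha}+(1-\tau^*)\delta^*$. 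Both are the same calculation; your packaging is arguably a bit more transparent.
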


In the proof of \Cref{lem:gst_alpha_eq_beta_formula} we use the following identity.

\begin{lemma}
	\label{lem:H_to_D_eq}
	For all $a,b\in (0,1)$ it holds that 
	$$\entropy(a)-(a-b)\cdot \ln\left( \frac{1-a}{a}\right) = \D{b}{a} +\entropy(b).$$
\end{lemma}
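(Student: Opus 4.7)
The plan is to verify this identity by direct algebraic expansion of both sides, using only the definitions $\entropy(x) = -x\ln x - (1-x)\ln(1-x)$ and $\D{a}{b} = a\ln\frac{a}{b} + (1-a)\ln\frac{1-a}{1-b}$. Since $a, b \in (0,1)$, every logarithm appearing is well-defined, and the identity reduces to matching coefficients of $\ln a$ and $\ln(1-a)$ on the two sides.

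First, I would simplify the right-hand side. Expanding $\D{b}{a} = b\ln b - b\ln a + (1-b)\ln(1-b) - (1-b)\ln(1-a)$ and observing that $\entropy(b) = -b\ln b - (1-b)\ln(1-b)$, the $b\ln b$ and $(1-b)\ln(1-b)$ terms cancel, leaving
\[
\D{b}{a} + \entropy(b) = -b\ln a - (1-b)\ln(1-a).
\]

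Next, I would expand the left-hand side. Writing $(a-b)\ln\frac{1-a}{a} = (a-b)\ln(1-a) - (a-b)\ln a$ and combining with $\entropy(a) = -a\ln a - (1-a)\ln(1-a)$, the coefficient of $\ln a$ becomes $-a + (a-b) = -b$, and the coefficient of $\ln(1-a)$ becomes $-(1-a) - (a-b) = -(1-b)$. Hence the left-hand side also equals $-b\ln a - (1-b)\ln(1-a)$, matching the right-hand side and completing the proof.

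There is no real obstacle here: the statement is a routine algebraic identity, and the only care needed is careful bookkeeping of signs when distributing $(a-b)\ln\frac{1-a}{a}$.
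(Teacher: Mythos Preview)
Your proof is correct and takes essentially the same approach as the paper: both arguments reduce to showing that each side equals the common expression $-b\ln a - (1-b)\ln(1-a)$ via direct expansion of $\entropy$ and $\mathsf{D}$. The paper proceeds in one chain from the left-hand side through this intermediate form to the right-hand side, while you expand the two sides separately and match them, but the algebra is identical.
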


\begin{proof}
	By expanding the term $\entropy(a)$ we get
	\begin{align*}
		\entropy(a) -(a-b)\cdot \ln\left( \frac{1-a}{a}\right) ~&=~ -a \ln a -(1-a) \ln (1-a) -(a-b) \ln(1-a) + (a-b) \ln a\\
		&=~ -b\ln a -(1-b)\ln(1-a)\\
		&=~ b\ln \left(\frac{b}{a}\right) +(1-b)\ln \left(\frac{1-b}{1-a}\right) -b\ln b -(1-b)\ln(1-b) \\
		&=~ \D{b}{a} +\entropy(b).
	\end{align*}
\end{proof}

\begin{proof}[Proof of \Cref{lem:gst_alpha_eq_beta_formula}]
	Define $\tilde{\tau}(\kappa) =\frac{\kappa-\delta^*}{\frac{1}{\alpha}-\delta^*}$.
	It can be easily verified that $\delta_{\alpha,\alpha}(\kappa,\tilde{\tau}(\kappa) ) =\delta^*$ for all $\kappa\in \left[0,\frac{1}{\alpha}\right]$.
	By \Cref{lem:deriv_g_by_tau} we have
	$$
	\begin{aligned}
		\gpart[\alpha,\alpha,c](\kappa,\tilde{\tau}(\kappa)) &=~ -\frac{\ln(c)}{\alpha} - \D{\frac{1}{\alpha}}{\gamma_{\alpha,\alpha}(\kappa,\tilde{\tau}(\kappa))} + \D{\frac{1}{\alpha}}{\delta_{\alpha,\alpha}(\kappa,\tilde{\tau}(\kappa))}\\
		&=~-\frac{\ln(c)}{\alpha} - \D{\frac{1}{\alpha}}{\frac{1}{\alpha}} + \D{\frac{1}{\alpha}}{\delta^*}\\
		&=~-\frac{\ln(c)}{\alpha} +-\frac{\ln(c)}{\alpha} ~=~0,
	\end{aligned}
	$$
	where the third equality follows from the definition of $\delta^*$.
	By \Cref{lem:g_convex_by_tau} it holds that $g_{\alpha,\alpha,c}(\kappa,\tau)$ is convex as a function of $\tau$, and we can conclude that the function has a global minimum at $\tilde{\tau}(\kappa)$.

	It also holds that
	$$\tilde{\tau}(\kappa ) =\frac{\kappa-\delta^*}{\frac{1}{\alpha} -\delta^*} =\frac{ \alpha\cdot \kappa\cdot \left(\frac{1}{\alpha}-\delta^*\right) +\alpha\kappa \delta^* -\delta^*}{\frac{1}{\alpha} -\delta^*} \leq \alpha \kappa$$
	for all $\kappa\leq \frac{1}{\alpha}$.

	Thus, for every $\delta^*\leq\kappa \leq\frac{1}{\beta}$ it holds that $\tilde{\tau}(\kappa) \geq 0$, and hence,
	\begin{equation*}
		\begin{aligned}
			g^*_{\alpha,\alpha,c}(\kappa)~&=~\min_{0\leq \tau \leq \alpha\kappa} g_{\alpha,\alpha,c}(\kappa,\tau)\\
			&=~ g_{\alpha,\alpha,c}(\kappa,\tilde{\tau}(\kappa))\\
			&=~ \left( \kappa-\frac{\tilde{\tau}(\kappa)}{\alpha}\right)\cdot \ln c -\tilde{\tau}(\kappa)\cdot \entropy\left(\frac{1}{\alpha}\right)-(1-\tilde{\tau}(\kappa) ) \cdot \entropy\left( \delta_{\alpha,\alpha}(\kappa,\tilde{\tau}(\kappa))\right) +\entropy(\kappa)\\
			&=~ \kappa \cdot \ln c - \tilde{\tau}(\kappa) \cdot \D{\frac{1}{\alpha}}{\delta^*}  -\tilde{\tau}(\kappa)\cdot \entropy\left(\frac{1}{\alpha}\right)-(1-\tilde{\tau}(\kappa) ) \cdot \entropy(\delta^*) +\entropy(\kappa)\\
			&=~ \kappa \cdot \ln c - \tilde{\tau}(\kappa )\cdot \left( \entropy(\delta^*)  -\left(\delta^* -\frac{1}{\alpha}\right) \cdot \ln \left(\frac{1-\delta^*}{\delta^*}\right)\right) -(1-\tilde{\tau}(\kappa) ) \cdot \entropy(\delta^*) +\entropy(\kappa)\\
			&=~ \kappa \cdot \ln c + \tilde{\tau}(\kappa )\cdot \left(\delta^* -\frac{1}{\alpha}\right) \cdot \ln \left(\frac{1-\delta^*}{\delta^*}\right) - \entropy(\delta^*) +\entropy(\kappa)\\
		\end{aligned}
	\end{equation*}
	The forth equality holds as $\frac{\ln c}{\alpha} = \D{\frac{1}{\alpha}}{\delta^*}$, and the fifth equality follows from \Cref{lem:H_to_D_eq}. By the definition of $\tilde{\tau}(\kappa)$ we have $\tilde{\tau}(\kappa) \left(\delta -\frac{1}{\alpha}\right) = (\delta^*-\kappa)$, thus for every $\delta^*\leq\kappa<\frac{1}{\beta}$ we have
	\begin{equation}
		\label{eq:aisb_greater}
		\begin{aligned}
			g^*_{\alpha,\alpha,c}(\kappa)~
			&=\kappa\cdot \ln(c) + \tilde{\tau}(\kappa )\cdot \left(\delta^* -\frac{1}{\alpha}\right) \cdot \ln \left(\frac{1-\delta^*}{\delta^*}\right) - \entropy(\delta^*) +\entropy(\kappa)\\
			&= \kappa\cdot \ln(c) + \left(\delta^* -\kappa^*\right) \cdot \ln \left(\frac{1-\delta^*}{\delta^*}\right) - \entropy(\delta^*) +\entropy(\kappa)\\
			&=\kappa \cdot \ln(c)  -\D{\kappa}{\delta^*},
		\end{aligned}
	\end{equation}
	where the last equality follows from \Cref{lem:H_to_D_eq}.

	Also, for every $0\leq \kappa< \delta^*$ it holds that $\tilde{\tau}(\kappa)<0$.
	Hence, since $g_{\alpha,\alpha,c}(\kappa, \tau)$ is convex as a function of $\tau$ (\Cref{lem:gk_convex_in_tau}), it holds that
	\begin{equation}
		\label{eq:aisb_smaller}
		\begin{aligned}
			g^*_{\alpha,\alpha,c}(\kappa)~&=~\min_{0\leq \tau \leq \alpha\kappa} g_{\alpha,\alpha,c}(\kappa,\tau)\\
			&=~ g_{\alpha,\alpha,c}(\kappa,0)\\
			&=~ \left( \kappa-\frac{0}{\alpha}\right)\cdot \ln c -0\cdot \entropy\left(\frac{1}{\alpha} \right) - (1-0) \cdot \entropy\left( \delta_{\alpha,\alpha}(\kappa,0)\right) + \entropy(\kappa)\\
			&=~ \kappa\cdot \ln c -   \entropy\left(\kappa\right) + \entropy(\kappa)\\
			&=~ \kappa\cdot \ln c.
		\end{aligned}
	\end{equation}
	By \eqref{eq:aisb_greater} and \eqref{eq:aisb_smaller} it holds that
	$$g^*_{\alpha,\alpha,c}(\kappa) = \kappa\cdot \ln c -
	\begin{cases}
		0 & \kappa < \delta^* \\
		\D{\kappa}{\delta^*} &\kappa \geq\delta^*
	\end{cases}~.$$
\end{proof}

Observe the function
$$\zeta(x) =
\begin{cases}
	0 & x < \delta^* \\
	\D{x}{\delta^*} &x \geq \delta^*
\end{cases}
$$ is convex.
Thus the following is a corollary of \Cref{lem:gst_alpha_eq_beta_formula} and the continuity of $g^*_{\alpha,\alpha,c}(\kappa)$.

\begin{corollary}
	\label{cor:alpha_eq_beta_conv}
	For every $\alpha>1$ and $c>1$ it holds that $g^*_{\alpha,\alpha,c}(\kappa)$ is concave on $\left[0,\frac{1}{\alpha} \right]$.
\end{corollary}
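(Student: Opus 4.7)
The plan is to derive the corollary directly from the explicit formula established in Lemma~\ref{lem:gst_alpha_eq_beta_formula}, which asserts that
\[
g^*_{\alpha,\alpha,c}(\kappa) = \kappa \cdot \ln c - \zeta(\kappa) \qquad \text{for every } \kappa \in \left(0, \tfrac{1}{\alpha}\right),
\]
where $\zeta(x) = 0$ for $x \in [0,\delta^*)$ and $\zeta(x) = \D{x}{\delta^*}$ for $x \in [\delta^*, 1/\alpha]$. Since $\kappa \mapsto \kappa \ln c$ is linear (hence concave), it suffices to argue that $\zeta$ is convex on $[0, 1/\alpha]$; then on the open interval $(0,1/\alpha)$, the function $g^*_{\alpha,\alpha,c}$ is a sum of a concave and a concave function.

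To verify convexity of $\zeta$, I would observe that $\zeta$ is identically zero on $[0, \delta^*]$, while on $[\delta^*, 1/\alpha]$ the Kullback--Leibler divergence $x \mapsto \D{x}{\delta^*}$ is smooth with second derivative $\frac{1}{x(1-x)}>0$, hence strictly convex. A standard gluing argument then upgrades piecewise convexity to convexity on the full interval, provided the left and right one-sided derivatives at the join $\kappa=\delta^*$ agree in the correct direction. Since $\D{x}{\delta^*}$ attains its global minimum at $x=\delta^*$, its derivative there vanishes, so both one-sided derivatives of $\zeta$ at $\delta^*$ equal $0$; hence $\zeta$ is even $C^1$ at $\delta^*$, and therefore convex on all of $[0, 1/\alpha]$.

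Finally, to upgrade concavity from the open interval $(0, 1/\alpha)$ to the closed interval $[0, 1/\alpha]$, I would appeal to the continuity of $g^*_{\alpha,\alpha,c}$, which is a standard consequence of being the pointwise minimum, over a compact parameter set, of a jointly continuous function (as referenced in the discussion right before Corollary~\ref{cor:simple_concave}). A function which is continuous on a closed interval and concave on its interior is concave on the entire interval, completing the argument. The main ``obstacle'' here is merely the bookkeeping at the join $\kappa=\delta^*$ and the boundary extension; neither step involves any genuine difficulty, which is why this is stated as a corollary rather than a lemma.
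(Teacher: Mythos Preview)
Your proposal is correct and follows essentially the same approach as the paper: the paper simply observes that the function $\zeta$ is convex and invokes Lemma~\ref{lem:gst_alpha_eq_beta_formula} together with the continuity of $g^*_{\alpha,\alpha,c}$. You have merely filled in the details of why $\zeta$ is convex (the $C^1$ gluing at $\kappa=\delta^*$) and why concavity extends to the closed interval, which the paper leaves implicit.
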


We can now proceed to the proof of \Cref{lem:concave}.

\concave*

\begin{proof}
	The lemma follows immediately from \Cref{cor:simple_concave}, \Cref{lem:gst_conc_first_corner,lem:a_is_b_and_c_is_one}, and \Cref{cor:alpha_eq_beta_conv}.
\end{proof}

We also use the formula in \Cref{lem:gst_alpha_eq_beta_formula} to prove \Cref{lem:coincide_with_esa}.
\coincidewithesa*
\begin{proof}
	Let $\beta,c > 1$ and let $\delta^*\in \left(0,\frac{1}{\beta}\right)$ be the unique value such that $\D{\frac{1}{\beta}}{\delta^*} =\frac{\ln c}{\beta}$.
	Then by \Cref{lem:gst_alpha_eq_beta_formula} it holds that
	$$
		\max_{0\leq \kappa\leq \frac{1}{\beta} } g^*_{\beta,\beta,c} (\kappa)  = \max_{0\leq \kappa \leq \frac{1}{\beta}}  \left(\kappa\cdot \ln c -
		\begin{cases}
			0 & \kappa < \delta^* \\
			\D{\kappa}{\delta^*} &\kappa \geq \delta^*
		\end{cases}
		~\right) = \max_{\delta^*\leq \kappa \leq \frac{1}{\beta}} \left( \kappa\cdot \ln c - \D{\kappa}{\delta^*}\right).
	$$
	Define $h(\kappa) = \kappa \ln c - \D{\kappa}{\delta^*}$.
	Then, by \Cref{cor:amls_is_the_best} and \eqref{eq:amls_def}, we get
	\begin{equation}
		\label{eq:best_to_h}
		\bestbound(\beta,c,\beta) = \amlsbound(\beta,c,\beta) = \exp\left(\max_{0\leq \kappa\leq \frac{1}{\beta} } g^*_{\beta,\beta,c} (\kappa)  \right) = \exp\left(\max_{\delta^*\leq \kappa \leq \frac{1}{\beta} } h(\kappa)\right).
	\end{equation}

	Let $h'$ be the derivative of $h$, and observe that $\frac{\partial \D{a}{b}}{\partial a} = \ln  \left(\frac{a}{1-a}\cdot \frac{1-b}{b}\right)$.
	So
	\begin{equation}
		\label{eq:hprime_formula}
		h'(\kappa) = \ln c - \ln \left( \frac{\kappa}{1-\kappa} \cdot \frac{1-\delta^*}{\delta^*}\right).
	\end{equation}

	Since $h$ is concave it can be trivially deduced that $h'(\kappa)$ is increasing in $\kappa \in \left[\delta^*,\kappa^*\right]$.
	Furthermore,
	\begin{equation}
		\label{eq:hprime_positive}h'(\delta^*) = \ln(c) - \ln \left(\frac{\delta^*}{1-\delta^*}\cdot \frac{1-\delta^*}{\delta^*}\right) =  \ln (c) >1
	\end{equation}
	and
	\begin{equation}
		\label{eq:hprime_negative}
		\begin{aligned}
			h'\left( \frac{1}{\beta}\right) &= \ln(c) - \ln \left(\frac{\frac{1}{\beta}}{1-\frac{1}{\beta}}\cdot \frac{1-\delta^*}{\delta^*}\right) \\
			&= \beta \cdot \D{\frac{1}{\beta}}{\delta^*} -  \ln \left(\frac{\frac{1}{\beta}}{1-\frac{1}{\beta}}\cdot \frac{1-\delta^*}{\delta^*}\right)\\
			&= \beta \cdot \frac{1}{\beta} \cdot \ln \left(\frac{\left( \frac{1}{\beta}\right)}{\delta^*}\right) + \beta\cdot \left(1-\frac{1}{\beta}\right) \cdot \ln \left( \frac{1-\frac{1}{\beta}}{1-\delta^*}\right)-  \ln \left(\frac{\frac{1}{\beta}}{1-\frac{1}{\beta}}\cdot \frac{1-\delta^*}{\delta}\right)\\
			&= \beta \cdot \ln \left( \frac{1-\frac{1}{\beta}}{1-\delta^*}\right)<0,
		\end{aligned}
	\end{equation}
	where the second equality uses $\frac{\ln c}{\beta}= \D{\frac{1}{\beta}}{\delta^*}$ and the last inequality holds since $\delta<\frac{1}{\beta}$.
	By \eqref{eq:hprime_positive} and \eqref{eq:hprime_negative}, there is $k^* \in \left( \delta^*, \frac{1}{\beta}\right)$ such that $h'\left(\kappa^*\right) = 0$.
	Furthermore, by \eqref{eq:hprime_formula} and simple algebraic manipulation, we get $\kappa^* = \frac{c\cdot \delta^*}{1+\delta^*(c-1)}$.
	Therefore,
	\begin{equation}
		\label{eq:hmax}\max_{\delta^*\leq \kappa \leq \frac{1}{\beta} } h(\kappa) = h(\kappa^*).
	\end{equation}
	It also holds that
	$$
	\begin{aligned}
		\D{\kappa^*}{\delta^*} &= \kappa^* \cdot \ln\left( \frac{\kappa^*}{\delta^*}\right)+(1-\kappa^*)\cdot \ln\left( \frac{1-\kappa^*}{1-\delta^*}\right) \\
		&= \kappa^* \cdot \left(\frac{\kappa^*}{1-\kappa^*}\cdot \frac{1-\delta^*}{\delta^*} \right) + \ln\left( \frac{1-\kappa^*}{1-\delta^*}\right)\\
		&= \kappa^*\ln(c)  + \ln\left( \frac{1-\kappa^*}{1-\delta^*}\right),
 	\end{aligned}
	$$
	where the last equality follows from $h'(\kappa^*)=0$ and \eqref{eq:hprime_formula}. Thus,
	$$h(\kappa^*) = \kappa^* \ln(c) -\D{\kappa^*}{\delta^*} = - \ln\left( \frac{1-\kappa^*}{1-\delta^*}\right) =-\ln \left( \frac{1- \frac{c\cdot \delta^*}{1+\delta^*(c-1)} }{1-\delta^*}\right) = \ln\left( 1+(c-1)\delta^*\right).$$
	By the above equitation, \eqref{eq:best_to_h} and \eqref{eq:hmax} we have $\bestbound(\beta,c,\beta) = 1+(c-1)\cdot \delta^*$.
	Thus, $1\leq\bestbound(\beta,c,\beta)\leq 1+\frac{c-1}{\beta} $ and
	$$\D{ \frac{1}{\beta}}{\frac{\bestbound(\beta,c,\beta)-1}{c-1}} =\D{\frac{1}{\beta}}{\delta^*} = \frac{\ln (c)}{\beta}.$$
	So $\bestbound(\beta,c,\beta) =\esaamlsbound(\beta,c) $ by the definition of $\esaamlsbound$.
\end{proof}

\subsection{The Determinant of the Hessian is Negative}
\label{sec:computing_det_hes}

In this section we prove \Cref{lem:hessian_negative}, that is, we show the determinant of the Hessian of $g$ is negative.
To do so we first obtain an explicit formula for the Hessian.
Recall $\Gamma$ and $\Delta$ are defined in~\eqref{eq:Gamma_def} and \eqref{eq:Delta_def}.

\begin{restatable}{lemma}{hessformula}
	\label{lem:hessian_formula}
	Let $\alpha\geq 1$, $\beta>1$ such that $\alpha \neq \beta $, $\kappa \in \left(0, \frac{1}{\beta}\right)$ and $\tau \in \left( \Mab(\kappa), \beta \cdot \kappa \right)$. Then
	$$
	\begin{aligned}
		\dhes{g}{\kappa,\tau}&= \frac{\Gamma_{\alpha,\beta}(\kappa,\tau)\cdot \Delta_{\alpha,\beta}(\kappa,\tau)}{\alpha^2\cdot (1-\kappa)} \cdot \frac{ \left(1-\frac{\beta}{\alpha}\right)\cdot \left(\gamma_{\alpha,\beta}(\kappa,\tau) - \delta_{\alpha,\beta}(\kappa,\tau)\right)}{\gamma_{\alpha,\beta}(\kappa,\tau)-\frac{1}{\alpha}} \cdot\\
		&~~~~~~~~~~~~~ \left( A_{\alpha,\beta} \left(\gamma_{\alpha,\beta}(\kappa,\tau) \right)  + \delta_{\alpha,\beta}(\kappa,\tau)\cdot B_{\alpha,\beta} \left( \gamma_{\alpha,\beta}(\kappa,\tau) \right) \right),
	\end{aligned}
	$$
	where
	\begin{align*}
		A_{\alpha, \beta}(x) &\coloneqq  -2 + x \left(1+\alpha+\beta \right) -\alpha\cdot \beta \cdot x^2\qquad \text{ and}\\
		B_{\alpha, \beta}(x) &\coloneqq x \cdot \alpha  \cdot \left(\beta-2\right) + 1 + \alpha - \beta.
	\end{align*}
\end{restatable}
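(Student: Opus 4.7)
The plan is to prove \Cref{lem:hessian_formula} by a direct, carefully bookkept computation. The second partial derivative $\gpartt$ is already available from \Cref{lem:deriv_g_by_tau_tau}, so the first task is to compute $\gparkk$ and $\gparkt$ by differentiating the formula for $g$. Writing $p := 1 - \beta/\alpha$ and $q := \beta/\alpha$, one has $\partial_\kappa\gamma = p/\tau$, $\partial_\tau\gamma = -p\kappa/\tau^2$, $\partial_\kappa\delta = q/(1-\tau)$, and $\partial_\tau\delta = (\beta\kappa - 1)/(\alpha(1-\tau)^2)$. Using $\entropy'(x) = \ln\frac{1-x}{x}$ and $\entropy''(x) = -\frac{1}{x(1-x)}$ and applying the chain rule in the same spirit as the proof of \Cref{lem:deriv_g_by_tau_tau} yields
\[
\gparkk = p^2\,\Gamma_{\alpha,\beta} + q^2\,\Delta_{\alpha,\beta} - \frac{1}{\kappa(1-\kappa)}, \qquad
\gparkt = -p(\gamma - 1/\alpha)\,\Gamma_{\alpha,\beta} + q(\delta - 1/\alpha)\,\Delta_{\alpha,\beta}.
\]

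The second step is to form $\dhes{g}{\kappa,\tau} = \gparkk\gpartt - (\gparkt)^2$ and observe a large cancellation. Expanding, the $\Gamma^2$-term of $\gparkk\gpartt$ is $p^2(\gamma - 1/\alpha)^2\Gamma^2$, matching exactly the $\Gamma^2$-term of $(\gparkt)^2$; the $\Delta^2$-terms cancel in the same way. The remaining $\Gamma\Delta$ cross-terms complete to a perfect square, leaving
\[
\dhes{g}{\kappa,\tau} \;=\; \Bigl(q(\gamma - 1/\alpha) + p(\delta - 1/\alpha)\Bigr)^{\!2} \Gamma_{\alpha,\beta}\Delta_{\alpha,\beta} \;-\; \frac{(\gamma - 1/\alpha)^{2}\,\Gamma_{\alpha,\beta} + (\delta - 1/\alpha)^{2}\,\Delta_{\alpha,\beta}}{\kappa(1-\kappa)}.
\]

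The next step is to collapse this into the claimed product form using three algebraic identities: $(\gamma - 1/\alpha)\tau = p\kappa$, $\alpha(1-\tau)(\delta - 1/\alpha) = \beta\kappa - 1$, and (as a consequence) $\beta\kappa - \tau = \alpha(1-\tau)\delta$. A short calculation then gives $q(\gamma - 1/\alpha) + p(\delta - 1/\alpha) = p\delta/\tau$. Substituting this together with the expansions of $\Gamma_{\alpha,\beta}$, $\Delta_{\alpha,\beta}$ in terms of $\gamma, \delta, \tau$, I would bring both summands onto the common denominator $\tau(1-\tau)\gamma(1-\gamma)\delta(1-\delta)\cdot\kappa(1-\kappa)$, factor out $\Gamma_{\alpha,\beta}\Delta_{\alpha,\beta}\cdot p(\gamma-\delta)/\bigl(\alpha^{2}(1-\kappa)(\gamma - 1/\alpha)\bigr)$, and show that the residual rational expression in $\gamma,\delta,\kappa,\tau$ reduces --- after eliminating $\kappa$ and $\tau$ by the same three identities --- to a polynomial that is linear in $\delta$ and quadratic in $\gamma$. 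Matching its coefficients of $1$ and $\delta$ against powers of $\gamma$ identifies it with $A_{\alpha,\beta}(\gamma) + \delta B_{\alpha,\beta}(\gamma)$.

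The main obstacle is the final polynomial identification: verifying, with correct coefficients, that the residual collapses exactly to $A_{\alpha,\beta}(\gamma) + \delta B_{\alpha,\beta}(\gamma)$ with $A_{\alpha,\beta}(x) = -2 + (1+\alpha+\beta)x - \alpha\beta x^{2}$ and $B_{\alpha,\beta}(x) = \alpha(\beta-2)x + 1 + \alpha - \beta$. Since $(\gamma - \delta)/(\gamma - 1/\alpha)$ has already been extracted, no spurious singularity survives, and one expects extensive cancellation; nevertheless the coefficient comparison is bookkeeping-intensive and constitutes the bulk of the work. The hypothesis $\alpha \neq \beta$ is needed precisely because the formula contains the factor $p(\gamma - \delta)$: when $\alpha = \beta$ one has $p = 0$ and $\gamma \equiv 1/\alpha$, so the factorisation through $(\gamma - 1/\alpha)$ is degenerate and must be handled separately.
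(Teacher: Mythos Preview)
Your proposal is correct and follows essentially the same route as the paper: compute the three second partials, observe the $\Gamma^2$ and $\Delta^2$ cancellations to reduce to $\psi^2\Gamma\Delta - \gpartt/(\kappa(1-\kappa))$ with $\psi = q(\gamma-1/\alpha)+p(\delta-1/\alpha)$, and then use the identity $\kappa\psi = \delta(\gamma-1/\alpha)$ (equivalently your $\psi = p\delta/\tau$) together with $\tau/\kappa = p/(\gamma-1/\alpha)$ to factor out $\frac{\Gamma\Delta}{\alpha^2(1-\kappa)}\cdot\frac{p(\gamma-\delta)}{\gamma-1/\alpha}$ and identify the residual polynomial as $A_{\alpha,\beta}(\gamma)+\delta B_{\alpha,\beta}(\gamma)$. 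The paper carries out the last polynomial identification by an explicit term-by-term expansion (splitting into two pieces $\Pi_1$ and $\Pi_2$) rather than coefficient matching, but this is only a stylistic difference.
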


The formula in \Cref{lem:hessian_formula} is derived from a technical computation of $\dhes{g}{\kappa,\tau}$ followed by re-arrangement of the terms. 
We defer the proof of \Cref{lem:hessian_formula} to \Cref{sec:hes_formula}. We use the notation $A_{\alpha,\beta}$ and $B_{\alpha,\beta}$ (or just $A$ and $B$) to refer to the functions defined in \Cref{lem:hessian_formula}.

As before, we often omit the subscripts $\alpha,\beta$ from functions (e.g., $\delta(\kappa,\tau)$ instead of $\delta_{\alpha,\beta}(\kappa,\tau)$) when $\alpha,\beta$ are known by context.
Furthermore, we often also omit the function parameters $(\kappa,\tau)$  when known by context (e.g., $\delta$ instead of $\delta_{\alpha,\beta}(\kappa,\tau)$).
 
By \Cref{lem:prop_gamma,lem:prop_delta}, for every $\alpha\geq 1$, $\beta>1$ such that $\alpha \neq \beta $ $\kappa \in (0, \frac{1}{\beta})$ and $\tau \in \left( \Mab(\kappa), \beta \cdot \kappa \right)$ it holds that $\delta_{\alpha,\beta}(\kappa,\tau),\gamma_{\alpha,\beta}(\kappa,\tau) \in (0,1)$.
So $\Gamma_{\alpha,\beta}(\kappa,\tau),~\Delta_{\alpha,\beta}(\kappa,\tau) > 0$.
Also, by \Cref{lem:prop_gamma}, if $\beta<\alpha$ it holds that $\left(1-\frac{\beta}{\alpha}\right)>0$ and $\gamma - \frac{1}{\alpha} >0$, and if $\beta>\alpha$ it holds that $\left(1-\frac{\beta}{\alpha}\right)<0$ and $\gamma - \frac{1}{\alpha} <0$.
Hence,
\[\frac{\left(1-\frac{\beta}{\alpha}\right)}{\gamma_{\alpha,\beta}(\kappa,\tau)-\frac{1}{\alpha}}>0\]
in both cases.
Following the above argument and \Cref{lem:hessian_formula} we attain the next corollary.

\begin{corollary}
	\label{cor:hes_sign_first}
	Let $\alpha\geq 1$, $\beta>1$ such that $\alpha \neq \beta $, $\kappa \in (0, \frac{1}{\beta})$ and $\tau \in \left( \Mab(\kappa), \beta \cdot \kappa \right)$.
	Then $\dhes{g}{\kappa,\tau} < 0$ if and only if
	$$\left(A_{\alpha,\beta} \left(\gamma_{\alpha,\beta}(\kappa,\tau) \right)  + \delta_{\alpha,\beta}(\kappa,\tau)\cdot B_{\alpha,\beta} \left( \gamma_{\alpha,\beta}(\kappa,\tau) \right)\right) \cdot \left( \gamma_{\alpha,\beta}(\kappa,\tau) - \delta_{\alpha,\beta}(\kappa,\tau)\right)<0.$$
\end{corollary}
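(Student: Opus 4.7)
The plan is to invoke the explicit formula for $\dhes{g}{\kappa,\tau}$ given by \Cref{lem:hessian_formula}, identify those factors whose sign is fixed throughout the domain, and read off the sign of the determinant from what remains. First, I would decompose the right-hand side of the formula into the four factors
\[\frac{\Gamma_{\alpha,\beta}(\kappa,\tau)\cdot \Delta_{\alpha,\beta}(\kappa,\tau)}{\alpha^2\,(1-\kappa)},\qquad \frac{1-\tfrac{\beta}{\alpha}}{\gamma_{\alpha,\beta}(\kappa,\tau)-\tfrac{1}{\alpha}},\qquad \gamma_{\alpha,\beta}(\kappa,\tau)-\delta_{\alpha,\beta}(\kappa,\tau),\]
and
\[A_{\alpha,\beta}\bigl(\gamma_{\alpha,\beta}(\kappa,\tau)\bigr)+\delta_{\alpha,\beta}(\kappa,\tau)\cdot B_{\alpha,\beta}\bigl(\gamma_{\alpha,\beta}(\kappa,\tau)\bigr),\]
and then argue that the first two are strictly positive so that the sign of $\dhes{g}{\kappa,\tau}$ is exactly the sign of the product of the last two.

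For the first factor, \Cref{lem:prop_delta,lem:prop_gamma} (applied in the open intervals, where $\beta>1$ and $\tau\in(\Mab(\kappa),\beta\kappa)$) yield $\gamma_{\alpha,\beta}(\kappa,\tau),\delta_{\alpha,\beta}(\kappa,\tau)\in(0,1)$. Together with $\tau\in(0,1)$, the definitions \eqref{eq:Gamma_def} and \eqref{eq:Delta_def} imply $\Gamma_{\alpha,\beta}(\kappa,\tau)>0$ and $\Delta_{\alpha,\beta}(\kappa,\tau)>0$; since $\alpha\ge 1$ and $\kappa<1/\beta<1$, the whole fraction is strictly positive.

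For the second factor, I would do a short case split using the assumption $\alpha\neq\beta$. If $\alpha>\beta$, then $1-\beta/\alpha>0$, and \Cref{lem:prop_gamma} gives $\gamma_{\alpha,\beta}(\kappa,\tau)\ge 1/\beta>1/\alpha$, so numerator and denominator are both positive. If $\alpha<\beta$, then $1-\beta/\alpha<0$, and \Cref{lem:prop_gamma} gives $\gamma_{\alpha,\beta}(\kappa,\tau)\le 1/\beta<1/\alpha$, so numerator and denominator are both negative. In either case the quotient is strictly positive.

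Having shown the first two factors are strictly positive, $\dhes{g}{\kappa,\tau}$ inherits the sign of the product of the remaining two factors, which is exactly the stated equivalence. I do not expect any real obstacle at this stage: the actual computational burden is absorbed into \Cref{lem:hessian_formula}, whose proof is deferred to \Cref{sec:hes_formula} and amounts to a lengthy but mechanical simplification of the Hessian using the formulas from \Cref{lem:deriv_g_by_tau_tau} and the cross and $\kappa$-derivatives of $g_{\alpha,\beta,c}$.
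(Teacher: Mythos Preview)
Your proposal is correct and follows essentially the same argument as the paper: invoke \Cref{lem:hessian_formula}, use \Cref{lem:prop_delta,lem:prop_gamma} to show $\gamma,\delta\in(0,1)$ and hence $\Gamma,\Delta>0$, then do the same case split on $\alpha\lessgtr\beta$ to show the factor $(1-\beta/\alpha)/(\gamma-1/\alpha)$ is positive, leaving only the product of the last two factors to determine the sign.
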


The following lemma allows us to determine the sign of $\dhes{g}{\kappa,\tau}$ using an even simpler expression.

\begin{lemma}\label{lem:gamma_>_delta}
	Let $\alpha\geq 1$, $\beta>1$ such that $\alpha\neq \beta$, $\kappa \in \left( 0, \frac{1}{\beta} \right) $ and $\tau \in \left( \Mab(\kappa), \beta \cdot \kappa \right)$.
	Then $\gamma_\ab(\kappa,\tau) > \delta_\ab(\kappa,\tau)$.
\end{lemma}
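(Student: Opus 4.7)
The plan is to derive a single clean algebraic identity for $\gamma-\delta$ and then read off the sign separately in the two cases $\alpha>\beta$ and $\alpha<\beta$.

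First I would combine the definitions of $\gamma_{\alpha,\beta}$ and $\delta_{\alpha,\beta}$ over a common denominator. Using $\tau\in(M_{\alpha,\beta}(\kappa),\beta\kappa)\subseteq(0,1)$ so that both $\tau\neq 0$ and $\tau\neq 1$, a direct (and short) computation gives the identity
\[
\gamma_{\alpha,\beta}(\kappa,\tau)-\delta_{\alpha,\beta}(\kappa,\tau)
\;=\;\frac{(\alpha-\beta)\kappa+\tau(1-\alpha\kappa)}{\alpha\,\tau\,(1-\tau)}.
\]
Since $\alpha\geq 1$ and $\tau\in(0,1)$, the denominator is strictly positive, so it suffices to show that the numerator $N(\kappa,\tau):=(\alpha-\beta)\kappa+\tau(1-\alpha\kappa)$ is strictly positive.

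For the case $\alpha>\beta$, rather than analyzing $N$ directly, the cleanest route is to invoke the bounds already established in the previous subsection: \Cref{lem:prop_delta} gives $\delta_{\alpha,\beta}(\kappa,\tau)<\tfrac{1}{\alpha}$, while \Cref{lem:prop_gamma} gives $\gamma_{\alpha,\beta}(\kappa,\tau)\geq\tfrac{1}{\beta}$. Because $\alpha>\beta$ implies $\tfrac{1}{\alpha}<\tfrac{1}{\beta}$, we chain these to obtain
\[
\delta_{\alpha,\beta}(\kappa,\tau)<\tfrac{1}{\alpha}<\tfrac{1}{\beta}\leq\gamma_{\alpha,\beta}(\kappa,\tau),
\]
which is the desired strict inequality in this case.

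For the case $\alpha<\beta$, I would return to $N$. Here $\kappa<\tfrac{1}{\beta}<\tfrac{1}{\alpha}$ so $1-\alpha\kappa>0$, and the coefficient of $\tau$ in $N$ is strictly positive. Solving $N(\kappa,\tau)>0$ for $\tau$ therefore gives the explicit threshold $\tau>\tfrac{(\beta-\alpha)\kappa}{1-\alpha\kappa}$, which is \emph{exactly} the definition of $M_{\alpha,\beta}(\kappa)$ in the regime $\alpha<\beta$ (see~\eqref{eq:Mdef}). Since the hypothesis of the lemma is $\tau>M_{\alpha,\beta}(\kappa)$, this inequality holds and $N(\kappa,\tau)>0$, hence $\gamma_{\alpha,\beta}(\kappa,\tau)>\delta_{\alpha,\beta}(\kappa,\tau)$.

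The only mildly subtle point — and the closest thing to an obstacle — is recognizing that in the $\alpha<\beta$ case the lower endpoint $M_{\alpha,\beta}(\kappa)$ of the admissible interval for $\tau$ is precisely the zero of $N(\kappa,\,\cdot\,)$. This is not a coincidence: the definition of $M_{\alpha,\beta}$ in~\eqref{eq:Mdef} was chosen exactly so that the sign conditions arising in \Cref{lem:x_vs_kt} and related estimates work out, and the alignment appearing here is the same phenomenon. Once this match is observed, both cases collapse to one-line arguments.
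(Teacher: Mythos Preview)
Your proof is correct. The identity
\[
\gamma_{\alpha,\beta}(\kappa,\tau)-\delta_{\alpha,\beta}(\kappa,\tau)=\frac{(\alpha-\beta)\kappa+\tau(1-\alpha\kappa)}{\alpha\,\tau\,(1-\tau)}
\]
checks out, and in the case $\alpha<\beta$ the zero of the numerator in $\tau$ is indeed exactly $M_{\alpha,\beta}(\kappa)$, so the hypothesis $\tau>M_{\alpha,\beta}(\kappa)$ gives the strict inequality immediately.

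For the case $\alpha>\beta$ you take the same route as the paper, chaining $\delta<\tfrac1\alpha<\tfrac1\beta\le\gamma$ via \Cref{lem:prop_delta,lem:prop_gamma}. For the case $\alpha<\beta$ you diverge: the paper instead bounds $\gamma>\kappa$ (from \Cref{lem:prop_gamma}) and separately shows $\delta\le\kappa$ by computing $\delta(\kappa,M_{\alpha,\beta}(\kappa))=\kappa$ and invoking the monotonicity of $\delta$ in $\tau$ from \Cref{lem:prop_delta}. Your algebraic identity is the cleaner and more informative argument here, since it makes explicit that $\gamma=\delta$ \emph{exactly} at $\tau=M_{\alpha,\beta}(\kappa)$, which the paper's proof establishes only implicitly. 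The paper's route has the minor advantage of reusing lemmas already on the shelf; yours has the advantage of revealing why $M_{\alpha,\beta}$ takes the form it does in the $\alpha<\beta$ regime. Incidentally, your identity would also handle the $\alpha>\beta$ case directly (using $\tau<\beta\kappa$ when $1-\alpha\kappa<0$), so you could even unify both cases under one computation if you wished.
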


\begin{proof}
	Consider the following cases.
	\begin{itemize}
		\item If $\alpha > \beta > 1$, by \Cref{lem:prop_gamma,lem:prop_delta}, we immediately have that
			$\delta(\kappa,\tau) < \frac{1}{\alpha} < \frac{1}{\beta} \leq \gamma(\kappa,\tau)$.
		\item If $\beta > \alpha\geq 1$, then by \Cref{lem:prop_gamma} it holds that $\gamma> \kappa$.
			Furthermore, by \Cref{lem:prop_delta}
			\begin{align*}
				\delta(\kappa,\tau) ~&\leq~ \delta(\kappa,M_{\alpha,\beta} (\kappa,\tau))  ~=~ \frac{\frac{\beta}{\alpha} \kappa-\frac{1}{\alpha} }{1-M_{\alpha,\beta}(\kappa)} +\frac{1}{\alpha} ~=~ \frac{\frac{\beta}{\alpha} \kappa -\frac{1}{\alpha} }{1-\frac{\beta-\alpha}{1-\alpha\kappa}\cdot \kappa} +\frac{1}{\alpha}  ~=~\frac{\frac{\beta}{\alpha} \kappa -\frac{1}{\alpha} }{\frac{ 1-\alpha\kappa  -(\beta-\alpha)\kappa}{1-\alpha\kappa}} +\frac{1}{\alpha}\\
				&=~ \frac{\frac{1}{\alpha} \left( \beta \kappa -1\right)  }{{ 1-\beta \kappa }}\cdot (1-\alpha\kappa) +\frac{1}{\alpha} ~=~\kappa.
			\end{align*}
			Thus, $\delta \leq \kappa < \gamma$.\qedhere
	\end{itemize}
\end{proof}

By \Cref{cor:hes_sign_second} and \Cref{lem:gamma_>_delta} we obtain the following. 

\begin{corollary}
	\label{cor:hes_sign_second}
	Let $\alpha\geq 1$, $\beta>1$ such that $\alpha \neq \beta $, $\kappa \in \left(0, \frac{1}{\beta}\right)$ and $\tau \in \left( \Mab(\kappa), \beta \cdot \kappa \right)$.
	Then $\dhes{g}{\kappa,\tau} < 0$ if and only if $$A_{\alpha,\beta} \left(\gamma_{\alpha,\beta}(\kappa,\tau) \right)  + \delta_{\alpha,\beta}(\kappa,\tau)\cdot B_{\alpha,\beta} \left( \gamma_{\alpha,\beta}(\kappa,\tau) \right)<0.$$
\end{corollary}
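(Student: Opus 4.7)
The plan is straightforward since almost all of the work has already been done in the two immediately preceding results. My approach is to combine \Cref{cor:hes_sign_first} with \Cref{lem:gamma_>_delta} and read off the equivalence.

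First, I would invoke \Cref{cor:hes_sign_first}, which states that, under the hypotheses of the corollary (namely $\alpha \geq 1$, $\beta > 1$ with $\alpha \neq \beta$, $\kappa \in (0, 1/\beta)$, and $\tau \in (M_{\alpha,\beta}(\kappa), \beta\cdot\kappa)$), the sign of $\dhes{g}{\kappa,\tau}$ is determined by the sign of the product
\[
\Bigl(A_{\alpha,\beta}(\gamma_{\alpha,\beta}(\kappa,\tau)) + \delta_{\alpha,\beta}(\kappa,\tau)\cdot B_{\alpha,\beta}(\gamma_{\alpha,\beta}(\kappa,\tau))\Bigr) \cdot \Bigl(\gamma_{\alpha,\beta}(\kappa,\tau) - \delta_{\alpha,\beta}(\kappa,\tau)\Bigr);
\]
more precisely, $\dhes{g}{\kappa,\tau}<0$ iff this product is negative. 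This reduces the problem to controlling the sign of each of the two factors.

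Next, I would apply \Cref{lem:gamma_>_delta} to the second factor. That lemma is stated for exactly the same parameter regime and asserts the strict inequality $\gamma_{\alpha,\beta}(\kappa,\tau) > \delta_{\alpha,\beta}(\kappa,\tau)$. Consequently, the factor $\gamma_{\alpha,\beta}(\kappa,\tau) - \delta_{\alpha,\beta}(\kappa,\tau)$ is strictly positive.

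Combining these two observations, the product above is negative if and only if its first factor is negative, which gives precisely the stated equivalence: $\dhes{g}{\kappa,\tau}<0$ iff $A_{\alpha,\beta}(\gamma_{\alpha,\beta}(\kappa,\tau)) + \delta_{\alpha,\beta}(\kappa,\tau)\cdot B_{\alpha,\beta}(\gamma_{\alpha,\beta}(\kappa,\tau)) < 0$. Since both \Cref{cor:hes_sign_first} and \Cref{lem:gamma_>_delta} are already available, there is essentially no obstacle in this step; the corollary is a direct combinatorial consequence, and the proof should occupy only a couple of lines.
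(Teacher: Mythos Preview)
Your proposal is correct and matches the paper's approach exactly: the paper also derives \Cref{cor:hes_sign_second} directly from \Cref{cor:hes_sign_first} and \Cref{lem:gamma_>_delta}, without further argument. (Note that the paper's lead-in sentence contains a typo, citing \Cref{cor:hes_sign_second} instead of \Cref{cor:hes_sign_first}, but the intent is the same as yours.)
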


We proceed to analyze the functions $A_{\alpha,\beta}$ and $B_{\alpha,\beta}$ towards our goal of showing that $A_{\alpha,\beta}(\gamma) +\delta \cdot B_{\alpha,\beta}(\gamma)$ is negative.

\begin{lemma}\label{lem:sign_A_negative}
	For every $\alpha,\beta\geq 1$ such that $\alpha\neq \beta$, $\kappa \in \left( 0,\frac{1}{\beta} \right) $ and $\tau \in \Bigl( \Mab(\kappa), \beta\cdot \kappa \Bigr) $ it holds that  $A_{\alpha, \beta} \bigl( \gamma(\kappa,\tau) \bigr)<0$. 
\end{lemma}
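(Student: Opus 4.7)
The plan is to find an algebraic rewriting of $A_{\alpha,\beta}$ that makes its sign transparent on the entire range of $\gamma_{\alpha,\beta}(\kappa,\tau)$. A direct expansion verifies the identity
\[
A_{\alpha,\beta}(x) = -(1-\alpha x)(1-\beta x) - (1-x),
\]
since $-(1-\alpha x)(1-\beta x) = -1 + (\alpha+\beta)x - \alpha\beta x^2$ and adding $-(1-x) = -1 + x$ recovers $-2 + (1+\alpha+\beta)x - \alpha\beta x^2$. This is the crux of the proof: both summands on the right-hand side have signs that are trivial to control once we know which side of $1/\alpha$, $1/\beta$, and $1$ the argument lies on.

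The proof then splits into two cases according to whether $\alpha > \beta$ or $\alpha < \beta$ (recall $\alpha \neq \beta$ by hypothesis). In the case $\alpha > \beta > 1$, \Cref{lem:prop_gamma} yields $\gamma_{\alpha,\beta}(\kappa,\tau) \in (1/\beta, 1)$ for $\tau \in (M_{\alpha,\beta}(\kappa), \beta\kappa)$, and since $\alpha > \beta$ we have $1/\alpha < 1/\beta < \gamma_{\alpha,\beta}(\kappa,\tau) < 1$. Therefore both factors $1-\alpha \gamma_{\alpha,\beta}(\kappa,\tau)$ and $1-\beta \gamma_{\alpha,\beta}(\kappa,\tau)$ are strictly negative, so their product is strictly positive, while $1 - \gamma_{\alpha,\beta}(\kappa,\tau)$ is also strictly positive. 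Both summands on the right-hand side of the identity are therefore strictly negative, yielding $A_{\alpha,\beta}(\gamma_{\alpha,\beta}(\kappa,\tau)) < 0$.

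In the case $\beta > \alpha \geq 1$, \Cref{lem:prop_gamma} gives $\gamma_{\alpha,\beta}(\kappa,\tau) \in (\kappa, 1/\beta)$, and in particular $0 < \gamma_{\alpha,\beta}(\kappa,\tau) < 1/\beta \leq 1/\alpha$, so this time $1-\alpha \gamma_{\alpha,\beta}(\kappa,\tau) > 0$ and $1-\beta \gamma_{\alpha,\beta}(\kappa,\tau) > 0$, giving a strictly positive product, and $1-\gamma_{\alpha,\beta}(\kappa,\tau) > 0$ as well. Once more both summands on the right-hand side of the identity are strictly negative, so $A_{\alpha,\beta}(\gamma_{\alpha,\beta}(\kappa,\tau)) < 0$.

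The only genuine obstacle is spotting the identity above: without it, one is tempted to analyse $A_{\alpha,\beta}$ as a downward-opening parabola and consider subcases depending on where its critical point $(1+\alpha+\beta)/(2\alpha\beta)$ lies relative to the relevant interval, leading to a delicate verification that $(1+\alpha+\beta)^2 < 8\alpha\beta$ in certain regimes. The factoring bypasses all of this by pairing the natural candidate $-(1-\alpha x)(1-\beta x)$ (whose sign is controlled by $1/\alpha$ and $1/\beta$) with an ``extra'' correction $-(1-x)$ that is automatically non-positive for $x \leq 1$.
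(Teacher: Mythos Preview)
Your proof is correct and takes a genuinely different route from the paper. The paper analyses $A_{\alpha,\beta}$ as a downward-opening parabola: it locates the vertex at $x_0 = (1+\alpha+\beta)/(2\alpha\beta)$ and splits into the cases $x_0 \leq 1$ (bound by the maximum value) and $x_0 > 1$ (use monotonicity on $[0,1]$ and evaluate at $x=1$, where $A_{\alpha,\beta}(1) = -(\alpha-1)(\beta-1) \leq 0$). This establishes $A_{\alpha,\beta}(x) < 0$ for \emph{every} $x \in (0,1)$ and then invokes \Cref{lem:prop_gamma} only to place $\gamma$ in $(0,1)$.

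Your factorisation $A_{\alpha,\beta}(x) = -(1-\alpha x)(1-\beta x) - (1-x)$ is more direct for the statement actually needed: it exploits the finer information from \Cref{lem:prop_gamma} that $\gamma$ lies either above both of $1/\alpha,1/\beta$ (when $\alpha > \beta$) or below both (when $\alpha < \beta$), so the product $(1-\alpha\gamma)(1-\beta\gamma)$ is strictly positive in each case. The trade-off is that your identity does \emph{not} immediately give $A_{\alpha,\beta}(x) < 0$ on the full interval $(0,1)$---for $x$ strictly between $1/\alpha$ and $1/\beta$ the first summand becomes positive---but this is irrelevant here since $\gamma$ never lands there. Your approach avoids the vertex/discriminant bookkeeping entirely and is the cleaner argument for this lemma.
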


\begin{proof}
	We can re-write $A_{\alpha,\beta}$ (as defined in \Cref{lem:hessian_formula}) as
		\begin{equation*}
		A_{\alpha,\beta}(x)= -\alpha\cdot\beta\cdot\Bigl(x - \frac{1 + \alpha + \beta}{2\cdot \alpha\cdot \beta}\Bigr)^{2} - 2 + \frac{\left( \alpha + \beta + 1 \right)^{2}}{4\cdot \alpha\cdot \beta}.
	\end{equation*}
	Note that $A_{\alpha,\beta}(x)$ is a quadratic polynomial in $x$, which reaches its maximum value at $x_0 \coloneqq \frac{1 + \alpha + \beta}{2 \cdot \alpha \cdot \beta}$.
	Consider the following cases.
	\begin{itemize}
		\item If $x_0\leq 1$, then for every $x\in (0,1)$ we have
			$$A_{\alpha,\beta}(x) ~\leq~ A_{\alpha,\beta}(x_0)  ~=~ -\alpha\cdot \beta (x_0-x_0)^2 -2 + x_0^2 ~ =~ -2 + x_0^2 ~<~0,$$
			where the last inequality holds as $0\leq x_0\leq 1$.
		\item If $x_0>1$, then $A_{\alpha,\beta}(x)$ is monotonically increasing in $[0,1]$.
			Thus, for every $x\in (0,1)$, it holds that
			$$A_{\alpha,\beta} (x) ~<~ A_{\alpha,\beta}(1) ~=~ -2 + \left(1+\alpha+\beta \right) -\alpha\cdot \beta ~=~ -(\alpha-1)(\beta-1)~\leq 0.$$
	\end{itemize}
	So overall $A_{\alpha,\beta}(x)<0$ for all $x\in (0,1)$.
	By \Cref{lem:prop_gamma} it holds that $\gamma_{\alpha,\beta}(\kappa,\tau)\in (0,1)$.
	So $A_{\alpha,\beta}\left(\gamma_{\alpha,\beta} (\kappa,\tau)\right) < 0$.
\end{proof}

The tools attained so far suffice to show that $A(\gamma)+\delta \cdot B(\gamma)$ is negative in case $\alpha<\beta$.

\begin{lemma}\label{lem:sign_A_delta_B_negative_1}
	For all $\beta > \alpha \geq 1$, $\kappa \in \left( 0,\frac{1}{\beta} \right) $ and $\tau \in \Bigl( \Mab(\kappa), \beta\cdot \kappa \Bigr) $ it holds that
	\begin{align*}
		A\Bigl( \gamma(\kappa,\tau) \Bigr) + \delta(\kappa,\tau) \cdot B\Bigl( \gamma(\kappa,\tau) \Bigr) < 0.
	\end{align*}
\end{lemma}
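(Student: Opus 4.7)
The plan is to reduce the claim to an inequality which is affine in $\kappa$ for fixed $\gamma$, and then verify negativity at the two endpoints of the allowed range of $\kappa$. First, using the elementary identity $\tau\cdot \gamma_{\alpha,\beta}(\kappa,\tau) + (1-\tau)\cdot \delta_{\alpha,\beta}(\kappa,\tau) = \kappa$ (which holds by direct substitution of the definitions), combined with solving for $\tau$ from the formula for $\gamma$, one can express $\delta = \delta_{\alpha,\beta}(\kappa,\tau)$ solely in terms of $\kappa$ and $\gamma = \gamma_{\alpha,\beta}(\kappa,\tau)$ as
\[\delta \;=\; \frac{\kappa\,(1-\beta\gamma)}{D(\kappa,\gamma)}, \qquad\text{where}\qquad D(\kappa,\gamma) \;=\; (1-\alpha\gamma) - (\beta-\alpha)\kappa.\]
A short check using $\gamma \leq 1/\beta$ and $\kappa < 1/\beta$ shows $D > 0$ throughout our range. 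So it suffices to prove
\[L(\kappa,\gamma) \;\coloneqq\; A(\gamma)\cdot D(\kappa,\gamma) + \kappa\,(1-\beta\gamma)\cdot B(\gamma) \;<\; 0.\]

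Observe that $L$ is affine in $\kappa$ for fixed $\gamma$. By Lemma~\ref{lem:prop_gamma} in the case $\alpha < \beta$, we have $\gamma \in (\kappa, 1/\beta)$, equivalently $\kappa \in (0, \gamma)$. Since an affine function on an interval is negative on the whole interval if and only if it is negative at both endpoints, it suffices to verify $L(0, \gamma) < 0$ and $L(\gamma, \gamma) < 0$. The first is immediate: $L(0, \gamma) = (1-\alpha\gamma)\cdot A(\gamma)$ is negative, since $A(\gamma) < 0$ by Lemma~\ref{lem:sign_A_negative} and $1-\alpha\gamma > 0$ (because $\gamma < 1/\beta \leq 1/\alpha$).

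For the second endpoint, the coefficient of $A(\gamma)$ in $L(\gamma,\gamma)$ simplifies via
\[(1-\alpha\gamma) - (\beta-\alpha)\gamma \;=\; 1-\beta\gamma,\]
so $L(\gamma, \gamma) = (1-\beta\gamma)\cdot \bigl[A(\gamma) + \gamma\, B(\gamma)\bigr]$. The crucial observation is the algebraic identity
\[A(\gamma) + \gamma\, B(\gamma) \;=\; -2 + 2(1+\alpha)\gamma - 2\alpha\gamma^{2} \;=\; -2(1-\gamma)(1-\alpha\gamma),\]
obtained because the coefficients involving $\beta$ in the linear and quadratic terms cancel exactly when adding $\gamma \cdot B(\gamma)$ to $A(\gamma)$. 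Hence $L(\gamma, \gamma) = -2(1-\beta\gamma)(1-\gamma)(1-\alpha\gamma) < 0$, since all three factors are strictly positive on $\gamma \in (0, 1/\beta)$.

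The main delicate step is spotting the algebraic collapse $A(\gamma) + \gamma B(\gamma) = -2(1-\gamma)(1-\alpha\gamma)$, which is precisely what makes the second endpoint factor so cleanly; once this identity is in hand, everything else is a direct verification using the averaging identity relating $\gamma$ and $\delta$, Lemma~\ref{lem:sign_A_negative}, and the range information from Lemma~\ref{lem:prop_gamma}. Notably, this argument requires no case split on the sign of $B(\gamma)$ or on $\beta$ versus $2\alpha$, since the affine-in-$\kappa$ structure handles every subcase uniformly.
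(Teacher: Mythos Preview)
Your proof is correct and takes a genuinely different route from the paper's argument. The paper proceeds by a case split on the sign of $B(\gamma)$: if $B(\gamma)\le 0$ it uses $\delta\ge 0$ to conclude $A(\gamma)+\delta B(\gamma)\le A(\gamma)<0$; if $B(\gamma)>0$ it invokes the separately proved inequality $\delta<\gamma$ (Lemma~\ref{lem:gamma_>_delta}) to get $A(\gamma)+\delta B(\gamma)<A(\gamma)+\gamma B(\gamma)=-2(1-\gamma)(1-\alpha\gamma)\le 0$. You instead express $\delta$ explicitly in terms of $(\kappa,\gamma)$, clear the positive denominator $D$, and exploit the fact that the resulting expression $L(\kappa,\gamma)$ is affine in $\kappa$, reducing the claim to checking the two endpoints $\kappa=0$ and $\kappa=\gamma$ of the range allowed by Lemma~\ref{lem:prop_gamma}. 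Both arguments hinge on the same algebraic collapse $A(\gamma)+\gamma B(\gamma)=-2(1-\gamma)(1-\alpha\gamma)$, but your approach is more uniform: it dispenses with the case split on $\operatorname{sign}(B(\gamma))$ and does not need Lemma~\ref{lem:gamma_>_delta} as a separate input (you use $\kappa<\gamma$ rather than $\delta<\gamma$). The paper's argument is shorter once Lemma~\ref{lem:gamma_>_delta} is available, while yours is more self-contained.
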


\begin{proof}
	Consider the following cases.
	\begin{itemize}
		\item If $B(\gamma(\kappa,\tau ))\leq 0 $, then $\delta\geq 0$ by \Cref{lem:prop_delta}. So
			$$A\Bigl( \gamma(\kappa,\tau) \Bigr) + \delta(\kappa,\tau) \cdot B\Bigl( \gamma(\kappa,\tau) \Bigr) ~\leq~  A\Bigl( \gamma(\kappa,\tau) \Bigr)  <0,$$
			where the last inequality follows from \Cref{lem:sign_A_negative}.
		\item If $B(\gamma(\kappa,\tau ))> 0 $, we can use $\gamma_{\alpha,\beta}(\kappa,\tau) > \delta_{\alpha,\beta}(\kappa,\tau)$ (\Cref{lem:gamma_>_delta}) to obtain
			\begin{align*}
				A(\gamma) + \delta\cdot B(\gamma) &<  A(\gamma) + \gamma\cdot B(\gamma)\\
				&= -2+ \gamma \left(1+\beta+\alpha\right) -\beta\cdot \alpha \cdot\gamma^2 + \alpha\cdot \left(\beta - 2\right) \cdot\gamma^2 + \gamma \cdot \left( 1+\alpha-\beta\right) \\
				&= {\gamma^2 \cdot\left( -2\alpha \right) + \gamma \cdot \left( 2( \alpha + 1) \right) -2 }\\
				&= -2 \cdot {(1 - \gamma) \cdot (1 - \alpha \cdot \gamma)} ~\leq~ 0,
			\end{align*}
		where the last inequality follows from $\gamma \leq \frac{1}{\alpha}$ by \Cref{lem:prop_gamma}.\qedhere
	\end{itemize}
\end{proof}

\Cref{lem:sign_A_delta_B_negative_1} suffices to show \Cref{lem:hessian_negative} for the case $\beta>\alpha \geq 1$.

\begin{lemma}
	\label{lem:hess_neg_beta_>_alpha}
	For all $\beta > \alpha \geq 1$, $c>1$ and $\kappa \in \left( 0,\frac{1}{\beta} \right)$ it holds that $\dhes{g}{\kappa,\taust(\kappa)}<0$.
\end{lemma}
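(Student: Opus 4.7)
The plan is to show that this lemma is essentially a direct corollary of the tools assembled immediately above. First I will verify the structural hypotheses needed to apply those tools. Since we are given $\beta > \alpha \geq 1$ and $c > 1$, the triple $(\alpha,\beta,c)$ is simple in the sense of \Cref{def:simple}: we have $\beta > 1$, we do not have $\alpha = \beta$, and we do not have ($c = 1$ and $\alpha < \beta$). Hence \Cref{lem:minimizer_tau} applies and gives $\tau^*(\kappa) \in \bigl(M_{\alpha,\beta}(\kappa),\, \beta \cdot \kappa\bigr)$ for every $\kappa \in (0, 1/\beta)$.

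With this in hand, I would apply \Cref{cor:hes_sign_second} at the point $(\kappa, \tau^*(\kappa))$: since $\alpha \neq \beta$, this corollary reduces the desired inequality $\dhes{g}{\kappa, \tau^*(\kappa)} < 0$ to the single inequality
\[
A_{\alpha,\beta}\bigl(\gamma_{\alpha,\beta}(\kappa, \tau^*(\kappa))\bigr) + \delta_{\alpha,\beta}(\kappa, \tau^*(\kappa)) \cdot B_{\alpha,\beta}\bigl(\gamma_{\alpha,\beta}(\kappa, \tau^*(\kappa))\bigr) < 0.
\]
Finally, I would invoke \Cref{lem:sign_A_delta_B_negative_1}, which asserts exactly this inequality for every $\tau$ in the open interval $\bigl(M_{\alpha,\beta}(\kappa),\, \beta\cdot\kappa\bigr)$ whenever $\beta > \alpha \geq 1$. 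Since $\tau^*(\kappa)$ lies in this interval by the first paragraph, the conclusion follows.

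There is no real obstacle here: the lemma is designed to be the case-$\beta > \alpha$ specialisation of the general machinery, and both \Cref{cor:hes_sign_second} and \Cref{lem:sign_A_delta_B_negative_1} have already done the substantive work. The only care needed is to make sure each precondition is explicitly checked ($\alpha \neq \beta$ for the corollary, simplicity for \Cref{lem:minimizer_tau}, and $\tau^*(\kappa)$ belonging to the open interval for \Cref{lem:sign_A_delta_B_negative_1}), so that the chain of implications is clean. The harder case, in contrast, will be $\alpha > \beta$, where $\gamma$ and $\delta$ behave differently and \Cref{lem:sign_A_delta_B_negative_1} cannot be used as-is.
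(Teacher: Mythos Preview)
Your proposal is correct and follows essentially the same route as the paper: use \Cref{lem:minimizer_tau} (via simplicity) to place $\tau^*(\kappa)$ in the open interval, then combine \Cref{lem:sign_A_delta_B_negative_1} with \Cref{cor:hes_sign_second}. The only cosmetic difference is that the paper invokes \Cref{lem:sign_A_delta_B_negative_1} first and then \Cref{cor:hes_sign_second}, whereas you reduce via the corollary first and then apply the lemma; both orderings are fine.
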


\begin{proof}
	By \Cref{lem:minimizer_tau} it holds that $\taust(\kappa) \in \left(M_{\alpha,\beta}(\kappa) , \beta \cdot \kappa \right)$ and $\gpart(\kappa,\tau^*(\kappa)) = 0$.
	By \Cref{lem:sign_A_delta_B_negative_1} we have
	\begin{align*}
		A\Bigl( \gamma(\kappa,\taust(\kappa)) \Bigr) + \delta(\kappa,\taust(\kappa)) \cdot B\Bigl( \gamma(\kappa,\taust(\kappa)) \Bigr) < 0.
	\end{align*} 
	Thus, by \Cref{cor:hes_sign_second}, we get $\dhes{g}{\kappa,\taust(\kappa)}<0$.
\end{proof}

Note that \Cref{lem:sign_A_delta_B_negative_1} implies a stronger claim than the one stated in \Cref{lem:hess_neg_beta_>_alpha}: for all $\beta > \alpha \geq 1$ and $(\kappa,\tau)$  the determinant Hessian evaluated at $(\kappa,\tau)$ is negative.
This property, however, does not hold if $\alpha > \beta > 1$, i.e., in this case the determinant of the Hessian may be positive for some values of $(\kappa,\tau)$.
We use the following lemma to restrict the possible values $\kappa$ and $\tau$ may take.

\begin{lemma}\label{lem:critical_point_condition}
	For all $\alpha > \beta > 1$, $c\geq 1$, $\kappa \in (0, \frac{1}{\beta})$ and $\tau \in (\Mab(\kappa), \beta \cdot \kappa)$ such that $\gpart(\kappa,\tau) = 0$, it holds that 
	$\D{\frac{1}{\alpha}}{\gamma_{\alpha,\beta} (\kappa,\tau)} \leq \D{\frac{1}{\alpha}}{\delta_{\alpha,\beta} (\kappa,\tau)} $. 
\end{lemma}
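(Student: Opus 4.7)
The proof will be short and direct, leveraging Lemma~\ref{lem:deriv_g_by_tau}. The plan is to apply the explicit formula for the partial derivative $\gpart(\kappa,\tau)$ and then exploit the assumption $c \geq 1$.

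First, I would invoke Lemma~\ref{lem:deriv_g_by_tau}, which states that
\[
\gpart[\abc](\kappa,\tau) \,=\, -\frac{\ln c}{\alpha} \,-\, \D{\frac{1}{\alpha}}{\gamma_{\alpha,\beta}(\kappa,\tau)} \,+\, \D{\frac{1}{\alpha}}{\delta_{\alpha,\beta}(\kappa,\tau)}.
\]
Setting $\gpart(\kappa,\tau) = 0$ and rearranging yields
\[
\D{\frac{1}{\alpha}}{\delta_{\alpha,\beta}(\kappa,\tau)} \,-\, \D{\frac{1}{\alpha}}{\gamma_{\alpha,\beta}(\kappa,\tau)} \,=\, \frac{\ln c}{\alpha}.
\]
Since $c \geq 1$, the right-hand side is non-negative, which immediately gives the desired inequality $\D{\frac{1}{\alpha}}{\gamma_{\alpha,\beta}(\kappa,\tau)} \leq \D{\frac{1}{\alpha}}{\delta_{\alpha,\beta}(\kappa,\tau)}$.

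There is essentially no obstacle here; the lemma is a one-line consequence of the explicit derivative formula already established and the hypothesis $c \geq 1$. The only thing to verify (implicit in the setup, and ensured by $\kappa \in (0,\tfrac{1}{\beta})$ together with $\tau \in (M_{\alpha,\beta}(\kappa), \beta\kappa)$) is that both $\gamma_{\alpha,\beta}(\kappa,\tau)$ and $\delta_{\alpha,\beta}(\kappa,\tau)$ lie strictly in $(0,1)$, so that the KL-divergence terms are well-defined; this is guaranteed by Lemmas~\ref{lem:prop_delta} and~\ref{lem:prop_gamma}.
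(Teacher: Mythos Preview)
Your proposal is correct and essentially identical to the paper's own proof: both invoke Lemma~\ref{lem:deriv_g_by_tau}, set the derivative to zero, and use $c \geq 1$ to conclude. Your additional remark about well-definedness via Lemmas~\ref{lem:prop_delta} and~\ref{lem:prop_gamma} is a reasonable sanity check but is not needed for the argument itself.
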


\begin{proof}
	By \Cref{lem:deriv_g_by_tau}, the condition $\gpart(\kappa, \tau) = 0$ is equivalent to
	\begin{equation*}
		\label{eq:cricical_point_cond}
		-\D{\frac{1}{\alpha}}{\gamma(\kappa,\tau)} + \D{\frac{1}{\alpha}}{\delta(\kappa,\tau)} = \frac{\ln(c)}{\alpha} \geq 0
	\end{equation*}
	since $c \geq 1$.
	Therefore, $\D{\frac{1}{\alpha}}{\gamma_{\alpha,\beta} (\kappa,\tau)} \leq \D{\frac{1}{\alpha}}{\delta_{\alpha,\beta} (\kappa,\tau)}$.
\end{proof}

The next lemmas enables us to further simplify the criteria in \Cref{cor:hes_sign_second}.

\begin{lemma}\label{lem:sign_B_positive}
	Let $\alpha > \beta > 1$, $\kappa \in \left( 0,\frac{1}{\beta} \right) $ and $\tau \in \Bigl( \Mab(\kappa), \beta\cdot \kappa \Bigr) $, then
	$B_{\alpha,\beta}\bigl( \gamma_{\alpha,\beta}(\kappa,\tau)\bigr)>0$. 
\end{lemma}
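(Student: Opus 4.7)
The function $B_{\alpha,\beta}(x) = \alpha(\beta-2)\,x + (1 + \alpha - \beta)$ is affine in $x$, so to show that it is strictly positive on an interval it suffices to show that it is strictly positive at both endpoints. By Lemma~\ref{lem:prop_gamma}, since $\alpha > \beta$ and $\tau \in (M_{\alpha,\beta}(\kappa), \beta\kappa)$, we have $\gamma_{\alpha,\beta}(\kappa,\tau) \in \bigl[\tfrac{1}{\beta}, 1\bigr]$. Thus it is enough to check positivity of $B_{\alpha,\beta}$ at $x = \tfrac{1}{\beta}$ and at $x = 1$.

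The plan is to compute both endpoint values explicitly and factor them in a way that makes the sign transparent under the hypothesis $\alpha > \beta > 1$. First, I would evaluate
\begin{equation*}
 B_{\alpha,\beta}(1) \;=\; \alpha(\beta-2) + 1 + \alpha - \beta \;=\; \alpha\beta - \alpha - \beta + 1 \;=\; (\alpha-1)(\beta-1),
\end{equation*}
which is strictly positive since $\alpha > 1$ and $\beta > 1$. Next, I would evaluate
\begin{equation*}
 B_{\alpha,\beta}\!\left(\tfrac{1}{\beta}\right) \;=\; \frac{\alpha(\beta-2)}{\beta} + 1 + \alpha - \beta \;=\; \frac{\alpha(\beta-2) + \beta(1 + \alpha - \beta)}{\beta} \;=\; \frac{(\beta-1)(2\alpha - \beta)}{\beta},
\end{equation*}
which is strictly positive because $\beta > 1$ and, crucially, $2\alpha - \beta > \alpha > 0$ (using $\alpha > \beta$).

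Since $B_{\alpha,\beta}$ is affine in its argument and strictly positive at the two endpoints $\tfrac{1}{\beta}$ and $1$ of the interval containing $\gamma_{\alpha,\beta}(\kappa,\tau)$, it is strictly positive on the whole interval, which gives $B_{\alpha,\beta}\bigl(\gamma_{\alpha,\beta}(\kappa,\tau)\bigr) > 0$. There is no real obstacle here: the only thing to notice is that the two endpoint values factor nicely, and that the factor $2\alpha - \beta$ (which controls the endpoint $1/\beta$) is automatically positive under $\alpha > \beta$ — so no further case distinction on the sign of the slope $\alpha(\beta-2)$ is needed.
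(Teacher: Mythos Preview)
Your proof is correct and uses the same underlying ingredients as the paper: the affine structure of $B_{\alpha,\beta}$ together with the range $\gamma_{\alpha,\beta}(\kappa,\tau) \in [1/\beta,1]$ from Lemma~\ref{lem:prop_gamma}. The paper organizes the argument slightly differently, splitting into cases on the sign of the slope $\alpha(\beta-2)$: when $\beta \geq 2$ it uses only $\gamma \geq 0$ to get $B(\gamma) \geq 1+\alpha-\beta > 0$, and when $\beta < 2$ it uses $\gamma < 1$ to get $B(\gamma) > (\alpha-1)(\beta-1) \geq 0$. Your endpoint evaluation at $1/\beta$ and $1$ is a bit tidier since it sidesteps that case split; the factorization $B_{\alpha,\beta}(1/\beta) = (\beta-1)(2\alpha-\beta)/\beta$ does in one line what the paper's case analysis achieves.
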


\begin{proof}
	Consider the following cases.
	\begin{itemize}
		\item If $\beta \geq 2$, it holds that
			\begin{align*}
				B(\gamma) ~=~ \gamma \cdot \alpha \cdot (\beta - 2) + 1 + \alpha - \beta ~\geq ~0 \cdot \alpha (\beta - 2) +1 +\alpha - \beta~ =~ 1 + \alpha - \beta~>~ 0,
			\end{align*}
			where the first inequality follows from $\gamma(\kappa,\tau) \geq 0$ (\Cref{lem:prop_gamma}) and the last inequality uses~$\beta < \alpha$.
		\item If $\beta < 2$, since $\gamma <1$ (\Cref{lem:prop_gamma}) we have that
			\begin{align*}
				B(\gamma) ~=~ \gamma \cdot \left(\beta - 2\right) \cdot \alpha + 1 + \alpha- \beta
				~>~ 1 \cdot\alpha \left(\beta - 2\right)  + 1+\alpha-\beta
				~=~ (\beta - 1) \cdot (\alpha - 1) ~\geq~ 0.
			\end{align*}
			\qedhere
	\end{itemize}
\end{proof}

For every $\alpha>\beta >1$ we define $\hb_{\alpha, \beta}:(0,1)\rightarrow \mathbb{R}$ via
\begin{align}
	\label{eq:C_def}
	\hb_{\alpha,\beta}(x) \coloneqq -\frac{A_{\alpha, \beta}(x)}{B_{\alpha, \beta}(x)}
\end{align}
for all $x \in (0,1)$.
\Cref{lem:sign_B_positive,cor:hes_sign_second} imply the following.

\begin{corollary}
	\label{cor:hes_third_cond}
	Let  $\alpha > \beta > 1$, $\kappa \in \left( 0,\frac{1}{\beta} \right) $ and $\tau \in \Bigl( \Mab(\kappa), \beta\cdot \kappa \Bigr)$.
	Then $\dhes{g}{\kappa,\tau} <0$ if and only if $\hb(\gamma_{\alpha,\beta}(\kappa,\tau))>\delta_{\alpha,\beta}(\kappa,\tau)$.
\end{corollary}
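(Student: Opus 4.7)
The plan is to combine Corollary~\ref{cor:hes_sign_second} with Lemma~\ref{lem:sign_B_positive} via a simple algebraic rearrangement. By Corollary~\ref{cor:hes_sign_second}, under the given hypotheses on $\alpha,\beta,\kappa,\tau$, the sign condition $|H_g(\kappa,\tau)| < 0$ is equivalent to
\[
A_{\alpha,\beta}\bigl(\gamma_{\alpha,\beta}(\kappa,\tau)\bigr) + \delta_{\alpha,\beta}(\kappa,\tau)\cdot B_{\alpha,\beta}\bigl(\gamma_{\alpha,\beta}(\kappa,\tau)\bigr) < 0.
\]
Since we are in the regime $\alpha > \beta > 1$, Lemma~\ref{lem:sign_B_positive} guarantees that $B_{\alpha,\beta}(\gamma_{\alpha,\beta}(\kappa,\tau)) > 0$, so we can divide the inequality above by $B_{\alpha,\beta}(\gamma_{\alpha,\beta}(\kappa,\tau))$ without flipping its direction.

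Dividing and rearranging yields
\[
\delta_{\alpha,\beta}(\kappa,\tau) < -\frac{A_{\alpha,\beta}\bigl(\gamma_{\alpha,\beta}(\kappa,\tau)\bigr)}{B_{\alpha,\beta}\bigl(\gamma_{\alpha,\beta}(\kappa,\tau)\bigr)} = \hb_{\alpha,\beta}\bigl(\gamma_{\alpha,\beta}(\kappa,\tau)\bigr),
\]
using the definition of $\hb_{\alpha,\beta}$ from~\eqref{eq:C_def}. The only (minor) technical point to double-check is that $\gamma_{\alpha,\beta}(\kappa,\tau) \in (0,1)$ so that $\hb_{\alpha,\beta}(\gamma_{\alpha,\beta}(\kappa,\tau))$ is well-defined; this is immediate from Lemma~\ref{lem:prop_gamma}. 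Since every step in the chain is an equivalence, the two sign conditions are equivalent, proving the corollary.

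There is no real obstacle here — the corollary is essentially a one-line consequence of the two cited results, and the only content is the observation that positivity of $B$ allows us to move it to the denominator while preserving the direction of the inequality.
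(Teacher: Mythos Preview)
Your proof is correct and follows exactly the approach the paper takes: the paper simply states that the corollary follows from Lemma~\ref{lem:sign_B_positive} and Corollary~\ref{cor:hes_sign_second}, and you have correctly spelled out the one-line division argument connecting them. The extra check that $\gamma_{\alpha,\beta}(\kappa,\tau)\in(0,1)$ so that $\hb_{\alpha,\beta}$ is defined is a nice touch that the paper leaves implicit.
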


The following lemma utilizes \Cref{lem:critical_point_condition} to show that the condition in \Cref{cor:hes_third_cond} holds on critical points of $g_{\alpha,\beta,c}$.

\begin{lemma}\label{lem:hb_bound_alpha_>_beta}
	Let $\alpha > \beta > 1$, $c\geq 1$, $\kappa \in (0, \frac{1}{\beta})$ and $\tau \in (\Mab(\kappa), \beta \cdot \kappa)$, such that  $\gpart(\kappa,\tau) = 0$.
	Then $C_{\alpha, \beta}\Bigl(\gamma(\kappa,\tau)\Bigr) > \delta_{\alpha,\beta}(\kappa,\tau)$.
\end{lemma}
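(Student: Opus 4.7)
Write $\gamma := \gamma_{\alpha,\beta}(\kappa,\tau)$ and $\delta := \delta_{\alpha,\beta}(\kappa,\tau)$. By Lemma~\ref{lem:critical_point_condition}, the assumption $\gpart(\kappa,\tau) = 0$ together with $c \geq 1$ gives $\D{\frac{1}{\alpha}}{\gamma} \leq \D{\frac{1}{\alpha}}{\delta}$. Lemmas~\ref{lem:prop_gamma} and~\ref{lem:prop_delta} pin $\gamma \in (1/\beta, 1)$ and $\delta \in [0, 1/\alpha)$, and since $\alpha > \beta$ we have $1/\alpha < 1/\beta < \gamma$. Because $x \mapsto \D{\frac{1}{\alpha}}{x}$ is continuous and strictly decreasing on $[0, 1/\alpha]$ from $+\infty$ to $0$, there is a unique $\Phi(\gamma) \in [0, 1/\alpha)$ with $\D{\frac{1}{\alpha}}{\Phi(\gamma)} = \D{\frac{1}{\alpha}}{\gamma}$, and from the previous inequality we deduce $\delta \leq \Phi(\gamma)$. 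Hence it suffices to prove the $\delta$-free inequality $C_{\alpha,\beta}(\gamma) > \Phi(\gamma)$ for every $\gamma \in (1/\beta, 1)$.

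The next step is to factor the numerator arising in $C_{\alpha,\beta}(\gamma) - 1/\alpha$. A direct expansion yields
\[\alpha A_{\alpha,\beta}(x) + B_{\alpha,\beta}(x) \;=\; -\alpha^{2}\beta\,\bigl(x - \tfrac{1}{\alpha}\bigr)\bigl(x - \gamma_{0}\bigr), \qquad \gamma_{0} := \tfrac{\alpha + \beta - 1}{\alpha\beta}.\]
Using $\alpha,\beta > 1$, one verifies $\gamma_0 \in (1/\beta, 1)$, since $\gamma_0 > 1/\beta \Leftrightarrow \beta > 1$ and $\gamma_0 < 1 \Leftrightarrow (\alpha-1)(\beta-1) > 0$. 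Combined with $B_{\alpha,\beta}(\gamma) > 0$ on this range (Lemma~\ref{lem:sign_B_positive}), this produces the clean identity
\[C_{\alpha,\beta}(\gamma) - \tfrac{1}{\alpha} \;=\; \frac{\alpha\beta\,(\gamma - 1/\alpha)(\gamma - \gamma_{0})}{B_{\alpha,\beta}(\gamma)},\]
so $C_{\alpha,\beta}(\gamma) \geq 1/\alpha$ iff $\gamma \geq \gamma_0$. In the easy regime $\gamma \in [\gamma_0, 1)$ we immediately conclude $C_{\alpha,\beta}(\gamma) \geq 1/\alpha > \Phi(\gamma)$, which settles this case.

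The remaining regime $\gamma \in (1/\beta, \gamma_0)$ is the genuine technical challenge: here both $C_{\alpha,\beta}(\gamma)$ and $\Phi(\gamma)$ lie in $(0, 1/\alpha)$, the interval on which $\D{\frac{1}{\alpha}}{\cdot}$ is strictly decreasing. Consequently the target $C_{\alpha,\beta}(\gamma) > \Phi(\gamma)$ is equivalent to $\D{\frac{1}{\alpha}}{C_{\alpha,\beta}(\gamma)} < \D{\frac{1}{\alpha}}{\gamma}$. I plan to introduce the auxiliary function
\[\Psi(\gamma) \;:=\; \D{\tfrac{1}{\alpha}}{\gamma} \;-\; \D{\tfrac{1}{\alpha}}{C_{\alpha,\beta}(\gamma)},\]
and show $\Psi > 0$ on $(1/\beta, \gamma_0)$. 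Observe $\Psi(\gamma_0) = \D{\frac{1}{\alpha}}{\gamma_0} > 0$ (since $C_{\alpha,\beta}(\gamma_0) = 1/\alpha$), and a sanity check at the other endpoint gives $C_{\alpha,\beta}(1/\beta) = 1/(2\alpha - \beta)$, which must be compared to $\Phi(1/\beta)$ via the defining KL equation. I would prove positivity throughout the interval either by differentiating $\Psi$ using the explicit rational form of $C_{\alpha,\beta}$, or, alternatively, by exponentiating the KL identity defining $\Phi$, converting the target into a polynomial inequality whose roots $1/\alpha$ and $\gamma_0$ are supplied by the factorization above.

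The main obstacle will be this last positivity argument for $\Psi$ on $(1/\beta, \gamma_0)$: unlike the sub-case $\gamma \geq \gamma_0$, this comparison is not purely algebraic and hinges on the transcendental structure of $\D{\frac{1}{\alpha}}{\cdot}$. A careful monotonicity analysis of $\Psi'$ near $\gamma_0$ (using $C_{\alpha,\beta}(\gamma_0) = 1/\alpha$, which causes $\D{\frac{1}{\alpha}}{C_{\alpha,\beta}(\cdot)}$ to vanish to first order there) combined with the boundary evaluation at $\gamma = 1/\beta$ should suffice, but getting the signs right uniformly over all $\alpha > \beta > 1$ is where I expect the proof to require real work.
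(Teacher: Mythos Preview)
Your reduction to the $\delta$-free inequality $C_{\alpha,\beta}(\gamma) > \Phi(\gamma)$ is correct, as is the factorization of $\alpha A_{\alpha,\beta}(x) + B_{\alpha,\beta}(x)$ and the resulting easy case $\gamma \in [\gamma_0,1)$. But the hard case $\gamma \in (1/\beta,\gamma_0)$ is left as a plan only, and neither of your two suggested strategies is likely to go through as stated. The ``polynomial'' route cannot work: $\Phi$ is defined by a transcendental equation and no exponentiation turns the comparison $\D{1/\alpha}{C_{\alpha,\beta}(\gamma)} < \D{1/\alpha}{\gamma}$ into a polynomial identity. The derivative route on $\Psi$ is in principle viable, but $C_{\alpha,\beta}$ depends on both $\alpha$ and $\beta$, so $\Psi'$ is a two-parameter rational-plus-logarithmic mess; your own boundary check at $\gamma = 1/\beta$ already reduces to a nontrivial KL inequality $\D{1/\alpha}{1/(2\alpha-\beta)} < \D{1/\alpha}{1/\beta}$ that you have not established, and the interior analysis would be strictly harder.

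The paper closes this gap with a single idea you are missing: eliminate $\beta$. One shows that for fixed $x \in (1/\alpha,1)$ the map $\tilde\beta \mapsto C_{\alpha,\tilde\beta}(x)$ is increasing on $(1,\alpha)$ (this follows by writing $A_{\alpha,\tilde\beta}(x) + xB_{\alpha,\tilde\beta}(x)$ as a $\tilde\beta$-free positive quantity and observing $B_{\alpha,\tilde\beta}(x)$ is increasing in $\tilde\beta$). Since $\gamma > 1/\beta$ gives $\beta > 1/\gamma$, one obtains the explicit, $\beta$-independent lower bound
\[
C_{\alpha,\beta}(\gamma) \;>\; C_{\alpha,1/\gamma}(\gamma) \;=\; \frac{\gamma}{2\alpha\gamma - 1}.
\]
The remaining task is then the clean single-variable inequality $\D{1/\alpha}{\tfrac{x}{2\alpha x - 1}} \leq \D{1/\alpha}{x}$ for $x \in [1/\alpha,1)$, which the paper proves by showing the difference is increasing and vanishes at $x = 1/\alpha$. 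This $\beta$-elimination is the key step that makes the hard regime tractable; without it your proposed $\Psi$-analysis has no obvious uniform control over $(\alpha,\beta)$.
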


\begin{proof}
	The next claim allows us to eliminate the dependency on $\beta$. 

	\begin{claim}
		\label{claim:beta_eliminator}
		$C_{\alpha,\beta}\left(\gamma(\kappa,\tau) \right) > \frac{\gamma_{\alpha,\beta}(\kappa,\tau)}{ 2\cdot \alpha \cdot \gamma_{\alpha,\beta}(\kappa,\tau) -1}$.
	\end{claim}
	\begin{claimproof}
		For every $\tilde{\beta}\in (1,\alpha)$   can rewrite
		\begin{equation}
			\label{eq:exp_A}
			A_{\alpha, \tilde{\beta}}(x) ~=~  -2 + x \left(1+\alpha+\tilde{\beta} \right) -\alpha \tilde{\beta}  x^2 ~=~ -2 + x(1+\alpha) + \tilde{\beta}  x \cdot \left( 1-\alpha x \right)  ~=~ a(\alpha,x) -\tilde{\beta} x \left(\alpha x-1\right),
		\end{equation}
		where $a(\alpha,x) \coloneqq -2 + x \left(1+\alpha \right)$.  Similarly, for every $\tilde{\beta}\in (1,\alpha)$    it holds that
		\begin{equation}
			\label{eq:exp_B}
			B_{\alpha, \tilde{\beta} }(x) ~=~ x \cdot \alpha  \cdot \left(\tilde{\beta}-2\right)  +1+\alpha-\tilde{\beta} ~=~ \tilde{\beta}\left( \alpha x - 1\right) -2\cdot x \cdot \alpha +1 +\alpha ~=~ \tilde{\beta} (\alpha x-1) +b(\alpha,x),
		\end{equation}
		where $b(\alpha,x) \coloneqq -2\cdot x \cdot \alpha +1 +\alpha$.

		Using \eqref{eq:exp_A} and \eqref{eq:exp_B}, for every $\tilde{\beta}\in (1,\alpha)$ and $x\in \left(\frac{1}{\alpha},1\right)$ it holds that
		\begin{equation}
			\label{eq:A_plus_xB}
			\begin{aligned}
				A_{\alpha,\tilde{\beta}} (x) + x\cdot  B_{\alpha,\tilde{\beta}}(x) ~&=~  a(\alpha,x) + x\cdot b(\alpha,x)\\
				&=~ -2 + x(1+\alpha) -2\cdot x^2 \cdot \alpha +x + \alpha x \\
				&=~ 2 \cdot \left(  -1 +x(1+\alpha)-x^2\cdot \alpha \right) ~>~ 0.
			\end{aligned}
		\end{equation}
		The last inequality holds as $\zeta(x) = -1+x(1+\alpha) -\alpha x^2$ is concave and $\zeta\left(\frac{1}{\alpha}\right)= \zeta(1) =0$.  Observe the sum $A_{\alpha,\tilde{\beta}}(x)+x\cdot B_{\alpha,\tilde{\beta}} (x)$ does not depend on $\tilde{\beta}$.

		By \eqref{eq:A_plus_xB} we have
		\begin{equation}
			\label{eq:C_mon}
			C_{\alpha,\tilde{\beta}}(x)  ~=~ -\frac{A_{\alpha,\tilde{\beta}(x)} }{B_{\alpha,\tilde{\beta}}(x)}  ~=~ -\frac{A_{\alpha,\tilde{\beta}}(x)+x\cdot  B_{\alpha,\tilde{\beta}}(x) -x\cdot B_{\alpha,\tilde{\beta}} (x)  }{B_{\alpha,\tilde{\beta}}(x)}  ~=~  x - \frac{a(\alpha,x) +x\cdot b(\alpha,x) }{B_{\alpha,\tilde{\beta}} (x) }.
		\end{equation}
		For a fixed $x>\frac{1}{\alpha}$, by \eqref{eq:exp_B}, we have $B_{\alpha,\tilde{\beta}} (x)$ is increasing as a function of $\tilde{\beta}$.
		Hence, by \eqref{eq:A_plus_xB} and \eqref{eq:C_mon}, the expression $C_{\alpha,\tilde{\beta}}(x)$ is increasing as a function of $\tilde{\beta}$.
		By \Cref{lem:prop_gamma} it holds that $\frac{1}{\alpha} < \frac{1}{\beta} < \gamma(\kappa,\tau) < 1$.
		Using the monotonicity property of $C_{\alpha,\tilde{\beta}} (x)$ we get
		\begin{align*}
			C_{\alpha, \beta}\left( \gamma \right) ~&>~ C_{\alpha, \frac{1}{\gamma}}\left( \gamma \right)\\
			&=~ -\frac{A_{\alpha, \frac{1}{\gamma}}(\gamma)}{B_{\alpha, \frac{1}{\gamma}}(\gamma)} \\
			&=~ -\frac{-2 + \gamma (1+\alpha) + \frac{1}{\gamma}  \cdot \gamma  \cdot \left( 1-\alpha \cdot \gamma  \right)}{\frac{1}{\gamma }\left( \alpha \gamma  - 1\right) -2\cdot \gamma \cdot \alpha +1 +\alpha}\\
			&=~ -\frac{ -1 +\gamma}{ 2\alpha -2 \gamma \alpha +1 -\frac{1}{\gamma}}\\
			&=~ \frac{\gamma} {2\cdot \alpha \cdot \gamma -1}
		\end{align*}
		where the last equality follows from a re-arrangement of terms.
	\end{claimproof}

	We combine \Cref{claim:beta_eliminator} with the following inequality.

	\begin{claim}
		\label{claim:KL_prop}
		For every $x\in \left[\frac{1}{\alpha},1\right)$ it holds that
		$\D{\frac{1}{\alpha} }{\frac{x}{2\alpha x-1}}\leq \D{\frac{1}{\alpha} }{x}$.
	\end{claim}

	\begin{claimproof}
		Define $h(x) = \D{\frac{1}{\alpha} }{x} - \D{\frac{1}{\alpha} }{\frac{x}{2\alpha x-1}}$.
		The statement of the claim is equivalent to $h(x) \geq 0$ for all $x\in \left[\frac{1}{\alpha},0\right)$.
		Recall $\frac{\partial \D{a}{b}}{\partial b} = \frac{b-a}{b(1-b)}$.
		Therefore,
		$$
		\begin{aligned}
			\frac{\partial }{\partial  x }\left( \D{\frac{1}{\alpha} }{\frac{x}{2\alpha x-1}} \right) ~&=~
			\frac{2\alpha x -1 - x \cdot 2\alpha}{\left(2\alpha x -1\right)^2} \cdot \frac{\frac{x}{2\alpha x-1}-\frac{1}{\alpha}}{ \frac{x}{2\alpha x-1}\cdot \left( 1-\frac{x}{2\alpha x-1}\right)} \\
			&=~ -\frac{\left( \frac{x-\frac{1}{\alpha} \left( 2\alpha x -1\right)}{2\alpha x -1} \right) }{x(2\alpha x -1 -x) } \\
			&=~ \frac{ x-\frac{1}{\alpha}}{x(2\alpha x -1 -x)(2\alpha x -1)}.
		\end{aligned}
		$$
		Let $h'$ be the derivative of $h$.
		Thus,
		$$h'(x)~=~ \frac{x-\frac{1}{\alpha}}{x(1-x)} - \frac{ x-\frac{1}{\alpha}}{x(2\alpha x -1 -x)(2\alpha x -1)} ~ =~ \frac{x-\frac{1}{\alpha}}{x } \left( \frac{1}{1-x} - \frac{1}{ (2\alpha x -1 -x )(2\alpha x -1)}\right).$$
		For every $x\in \left[ \frac{1}{\alpha},1\right)$ it holds that $1-x,~x-\frac{1}{\alpha },~ 2\alpha x -1-x , ~2\alpha x -1~\geq~ 0$.
		So
		$$
		\begin{aligned}
			&h'(x)~\geq~ 0 &\iff\\
			&1-x ~\leq~ (2\cdot \alpha x -x-1) \cdot (2\alpha x -1)  &\iff\\
			&1-x~\leq~ 4\alpha^2 x^2 -2\alpha x -2\alpha x^2 + x -2\alpha x +1&\iff \\
			&0\leq ~ 2x \cdot (x\alpha -1 )(2\alpha -1).
		\end{aligned}
		$$
		As the last condition is true for all $x\in \left[\frac{1}{\alpha},1\right)$, it follows that $h'(x)\geq 0$.
		So $h$ is (weakly) increasing in $\left[\frac{1}{\alpha},1\right)$.
		Hence,
		$$h(x)\geq h\left(\frac{1}{\alpha }\right) =\D{\frac{1}{\alpha} }{\frac{1}{\alpha}} - \D{\frac{1}{\alpha} }{\frac{\frac{1}{\alpha}}{2\alpha\cdot  \frac{1}{\alpha}-1}}  ~=~  \D{\frac{1}{\alpha} }{\frac{1}{\alpha}} - \D{\frac{1}{\alpha} }{\frac{1}{\alpha}}~=0$$
		for all $x\in \left[\frac{1}{\alpha},1\right)$.
	\end{claimproof}

	To wrap-up the proof, consider the following cases.
	\begin{itemize}
		\item If $C_{\alpha,\beta} \left( \gamma_{\alpha,\beta} (\kappa,\tau )\right)~>~  \frac{1}{\alpha}$, then $\delta_{\alpha,\beta}(\kappa,\tau) <\frac{1}{\alpha}$ by \Cref{lem:prop_delta}.
			So $\delta_{\alpha,\beta}(\kappa,\tau) < C_{\alpha,\beta} \left( \gamma_{\alpha,\beta} (\kappa,\tau )\right)$.
		\item If $C_{\alpha,\beta}(\gamma) \leq \frac{1}{\alpha}$, then $\D{\frac{1}{\alpha}}{x}$ is decreasing in the interval $\left[0,\frac{1}{\alpha}\right]$.
			Thus
			$$\D{\frac{1}{\alpha}}{C_{\alpha,\beta}(\gamma)} ~<~ \D{\frac{1}{\alpha}}{ \frac{\gamma}{2\cdot \alpha \cdot \gamma-1} } ~\leq~ \D{\frac{1}{\alpha}}{\gamma} ~\leq~ \D{\frac{1}{\alpha}}{\delta}.$$
			The first inequality follows from \Cref{claim:beta_eliminator}, the second follows from \Cref{claim:KL_prop} (recall $\gamma\in \left(\frac{1}{\alpha},1\right)$ by \Cref{lem:prop_gamma}) and the last inequality follows from \Cref{lem:critical_point_condition}.
			Since $C_{\alpha,\beta}(\gamma),\delta \leq \frac{1}{\alpha}$ (see \Cref{lem:prop_delta}), it follows that $C_{\alpha,\beta}(\gamma) <\delta$.\qedhere
	\end{itemize}
\end{proof}

The next lemma follows from \Cref{lem:hb_bound_alpha_>_beta}.

\begin{lemma}
	\label{lem:hess_neg_alpha_>_beta}
	For all $\alpha>\beta>1$, $c\geq 1$ and $\kappa\in \left(0,\frac{1}{\beta}\right)$ it holds that $\dhes{g}{\kappa,\taust(\kappa)}<0$. 
\end{lemma}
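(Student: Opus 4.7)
The plan is to combine three results that have already been established in this section: the characterization of $\taust$ as an interior critical point (\Cref{lem:minimizer_tau}), the inequality $C_{\alpha,\beta}(\gamma)>\delta$ at critical points (\Cref{lem:hb_bound_alpha_>_beta}), and the sign criterion for $\dhes{g}{\kappa,\tau}$ (\Cref{cor:hes_third_cond}). There is essentially no further calculation to do; the work has been concentrated in the preceding lemmas.

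First I would verify that the hypotheses of \Cref{lem:minimizer_tau} are satisfied. Since $\alpha>\beta>1$ and $c\geq 1$, the triple $(\alpha,\beta,c)$ is simple (according to \Cref{def:simple}, simplicity fails only when $\alpha=\beta$ or when $c=1$ together with $\alpha<\beta$, neither of which happens here). Hence \Cref{lem:minimizer_tau} yields
\[
\taust(\kappa)\in\bigl(M_{\alpha,\beta}(\kappa),\beta\kappa\bigr)\qquad\text{and}\qquad \gpart\bigl(\kappa,\taust(\kappa)\bigr)=0
\]
for every $\kappa\in\left(0,\frac{1}{\beta}\right)$. Note that the case $c=1$ together with $\alpha>\beta$ is allowed here, in contrast to the excluded corner cases in \Cref{sec:gstar_is_concave}.

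Next I would apply \Cref{lem:hb_bound_alpha_>_beta} at the point $(\kappa,\taust(\kappa))$: since this point lies in the open rectangle $(0,\tfrac{1}{\beta})\times (M_{\alpha,\beta}(\kappa),\beta\kappa)$ and satisfies $\gpart=0$, the lemma gives
\[
C_{\alpha,\beta}\Bigl(\gamma_{\alpha,\beta}\bigl(\kappa,\taust(\kappa)\bigr)\Bigr)\;>\;\delta_{\alpha,\beta}\bigl(\kappa,\taust(\kappa)\bigr).
\]
Finally, I would invoke \Cref{cor:hes_third_cond}, which states that for $\alpha>\beta>1$, $\kappa\in(0,\tfrac{1}{\beta})$ and $\tau\in(M_{\alpha,\beta}(\kappa),\beta\kappa)$, the inequality $C_{\alpha,\beta}(\gamma_{\alpha,\beta}(\kappa,\tau))>\delta_{\alpha,\beta}(\kappa,\tau)$ is \emph{equivalent} to $\dhes{g}{\kappa,\tau}<0$. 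Applying this equivalence at $\tau=\taust(\kappa)$ gives $\dhes{g}{\kappa,\taust(\kappa)}<0$, completing the argument.

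There is no genuine obstacle left: the potential difficulty—controlling the sign of the off-diagonal Hessian entry and showing that $A_{\alpha,\beta}(\gamma)+\delta\cdot B_{\alpha,\beta}(\gamma)<0$ cannot be deduced from $\gamma>\delta$ alone when $\alpha>\beta$—has already been absorbed into \Cref{lem:hb_bound_alpha_>_beta} via the use of the critical-point identity $\D{1/\alpha}{\gamma}\leq\D{1/\alpha}{\delta}$. The present lemma is therefore the expected ``packaging'' statement that ties these ingredients together into the assertion needed in the proof of \Cref{lem:hessian_negative}.
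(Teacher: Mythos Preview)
Your proposal is correct and follows exactly the same approach as the paper: invoke \Cref{lem:minimizer_tau} (after checking simplicity) to place $\taust(\kappa)$ in the open interval with $\gpart=0$, apply \Cref{lem:hb_bound_alpha_>_beta} to obtain $C_{\alpha,\beta}(\gamma)>\delta$, and conclude via \Cref{cor:hes_third_cond}. The paper's proof is a three-line version of your plan; your added remark verifying simplicity when $\alpha>\beta>1$ and $c\geq 1$ is a welcome clarification that the paper leaves implicit.
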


\begin{proof}
	By \Cref{lem:minimizer_tau} it holds that $\taust(\kappa) \in \left(M_{\alpha,\beta}(\kappa) , \beta \cdot \kappa \right)$ and $\gpart(\kappa,\tau^*(\kappa)) = 0$.
	Thus, by \Cref{lem:hb_bound_alpha_>_beta}, we have $C_{\alpha,\beta}\left( \gamma(\kappa,\taust(\kappa))\right) > \delta\left(\kappa,\taust(\kappa) \right)$.
	So $\dhes{g}{\kappa,\taust(\kappa)}<0$ by \Cref{cor:hes_third_cond}.
\end{proof}

We can now proceed to the proof of \Cref{lem:hessian_negative}.

\hessiannegative*

\begin{proof}
	The lemma follows from \Cref{lem:hess_neg_alpha_>_beta,lem:hess_neg_beta_>_alpha}.
\end{proof}

\subsection{Partial Derivatives}
\label{sec:g_par_der}

In this section we calculate the partial derivatives of the function $g_{\abc}(\kappa,\tau)$.
For notational brevity, we use the following naming scheme for the partial derivatives:

\begin{align*}
	\gpark[\abc](\kappa_0, \tau_0) &\coloneqq \frac{\partial g_{\abc} (\kappa, \tau)}{\partial \kappa}\Bigr|_{(\kappa,\tau) = (\kappa_0, \tau_0)}\\
	\gpart[\abc](\kappa_0, \tau_0) &\coloneqq \frac{\partial g_{\abc} (\kappa, \tau)}{\partial \tau}\Bigr|_{(\kappa,\tau) = (\kappa_0, \tau_0)}\\
	\gparkt[\abc](\kappa_0, \tau_0) &\coloneqq \frac{\partial^{2} g_{\abc} (\kappa, \tau)}{\partial \kappa \partial \tau}\Bigr|_{(\kappa,\tau) = (\kappa_0, \tau_0)}\\
	\gparkk[\abc](\kappa_0, \tau_0) &\coloneqq \frac{\partial^{2} g_{\abc} (\kappa, \tau)}{\partial \kappa^2}\Bigr|_{(\kappa,\tau) = (\kappa_0, \tau_0)}\\
	\gpartt[\abc](\kappa_0, \tau_0) &\coloneqq \frac{\partial^{2} g_{\abc} (\kappa, \tau)}{\partial \tau^2}\Bigr|_{(\kappa,\tau) = (\kappa_0, \tau_0)}
\end{align*}

We sometimes omit the subscript $(\alpha, \beta, c)$ from $\gpark[\abc]$, $\gpart[\abc]$, $\gparkk[\abc]$, $\gpartt[\abc]$ and $\gparkt[\abc]$.

Recall that $\D{a}{b}  = a \ln \frac{a}{b} + (1-a)\ln \frac{1-a}{1-b}$ is the Kullback-Leibler divergence between two Bernoulli distributions with parameters $a$ and $b$.
In the next lemmas, we use algebraic properties of the KL divergence to calculate the partial derivatives of $g(\kappa,\tau)$.

It can  be easily verified that the partial derivatives of $\D{a}{b}$ and $\entropy(x)$ are
\begin{align}
	\label{eq:KL_by_b}
	\frac{\partial \D{a}{b}}{\partial b} &= \frac{b-a}{b(1-b)}\\ \nonumber \\
	\frac{\partial \entropy(x)}{\partial x} &= \ln\left( \frac{1-x}{x} \right) \label{eq:entropy_by_x}.
\end{align}
Moreover, the partial derivatives of $\gamma(\kappa,\tau)$ and $\delta(\kappa,\tau)$ are
\begin{align}
	\label{eq:gammapark}
	\frac{\partial \gamma(\kappa, \tau)}{\partial \kappa } &= \left( 1 - \frac{\beta}{\alpha} \right)\cdot \frac{1}{\tau}\\
	\label{eq:gammapart}
	\frac{\partial \gamma(\kappa, \tau)}{\partial \tau } &= -\left(1-\frac{\beta}{\alpha} \right)\frac{\kappa}{\tau^2} = -\frac{\gamma(\kappa,\tau)-\frac{1}{\alpha}}{\tau}\\
	\label{eq:deltapark}
	\frac{\partial \delta(\kappa, \tau)}{\partial \kappa } &=  \frac{\beta}{\alpha \cdot (1 - \tau)} \\
	\label{eq:deltapart}
	\frac{\partial \delta(\kappa, \tau)}{\partial \tau } &= \frac{(\beta \cdot \kappa - 1)}{\alpha \cdot (1-\tau)^{2}} =\frac{\delta(\kappa,\tau)-\frac{1}{\alpha}}{1-\tau}
\end{align}
For notational brevity, we sometimes omit the arguments  $(\kappa, \tau)$ from $\gamma(\kappa,\tau)$ and $\delta(\kappa, \tau)$, and simply use $\gamma$ and $\delta$ instead.

\derivgbytau*

\begin{proof}
	By \eqref{eq:entropy_by_x} and standard derivation rules we have
	\begin{align*}
		\gpart(\kappa,\tau) &= \frac{\partial}{\partial \tau} \Bigg( \frac{\beta \kappa - \tau}{\alpha} \ln c - \tau\cdot \entropy\left(\gamma_{\alpha,\beta}(\kappa,\tau)\right) -(1-\tau)\cdot \entropy\left(\delta_{\alpha,\beta} (\kappa,\tau)\right) + \entropy\left(\kappa\right)\Bigg)\\
		&= -\frac{\ln(c)}{\alpha} -\entropy(\gamma)+\entropy(\delta)  -\tau\cdot \ln\left( \frac{1 - \gamma}{\gamma} \right) \cdot \frac{\partial \gamma(\kappa,\tau)}{\partial \tau} -(1-\tau)\cdot\ln\left( \frac{1 - \delta}{\delta} \right) \cdot \frac{\partial \delta(\kappa,\tau)}{\partial \tau}\\
		&= -\frac{\ln(c)}{\alpha} -\entropy(\gamma) +\entropy(\delta )+\left(\gamma -\frac{1}{\alpha }\right) \cdot \ln \left( \frac{1-\gamma}{\gamma} \right) - \left(\delta-\frac{1}{\alpha} \right) \cdot   \ln \left( \frac{1-\delta}{\delta} \right)  \\
		&= -\frac{\ln(c)}{\alpha} -\D{\frac{1}{\alpha}}{\gamma} +\entropy\left( \frac{1}{\alpha}\right)+\D{\frac{1}{\alpha}}{\delta} -\entropy\left( \frac{1}{\alpha}\right)\\
		&= -\frac{\ln(c)}{\alpha} -\D{\frac{1}{\alpha}}{\gamma} +\D{\frac{1}{\alpha}}{\delta} 
	\end{align*}
	where the third equality uses \eqref{eq:gammapart} and \eqref{eq:deltapart}, and the forth follows from \Cref{lem:H_to_D_eq}.
\end{proof}

\begin{lemma}\label{lem:deriv_g_by_kappa}
	For all $\alpha, c \geq 1$ and $\beta > 1$ it holds that
	$$
	\begin{aligned}
		\gpark[\abc](\kappa,\tau) = \frac{\beta}{\alpha} \cdot \ln(c) &+ \left( \beta - \alpha \right) \cdot \Biggl( \D{\frac{1}{\alpha}}{\gamma(\kappa,\tau)} + \ln\Bigl( 1 - \gamma(\kappa,\tau) \Bigr) \Biggr)\\
		&- \beta \cdot \Biggl(\D{\frac{1}{\alpha}}{\delta(\kappa,\tau)} + \ln\Bigl( 1 - \delta(\kappa,\tau) \Bigr)\Biggr)\\
		&+ \alpha \cdot \Biggl( \D{\frac{1}{\alpha}}{\kappa} + \ln\Bigl( 1 - \kappa \Bigr) \Biggr).
	\end{aligned}
	$$
\end{lemma}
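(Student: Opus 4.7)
The plan is to compute $\gpark[\abc](\kappa,\tau)$ by direct differentiation and then reshape the result into the KL-divergence form stated in the lemma, mirroring the structure of the proof of \Cref{lem:deriv_g_by_tau}. First, applying the chain rule to the definition of $g_{\abc}$ together with \eqref{eq:entropy_by_x}, \eqref{eq:gammapark} and \eqref{eq:deltapark} yields
\[
\gpark(\kappa,\tau) = \frac{\beta}{\alpha}\ln c - \tau \cdot \ln\!\left(\frac{1-\gamma}{\gamma}\right)\cdot\left(1-\frac{\beta}{\alpha}\right)\cdot\frac{1}{\tau} - (1-\tau)\cdot\ln\!\left(\frac{1-\delta}{\delta}\right)\cdot \frac{\beta}{\alpha(1-\tau)} + \ln\!\left(\frac{1-\kappa}{\kappa}\right),
\]
which simplifies to
\[
\gpark(\kappa,\tau) = \frac{\beta}{\alpha}\ln c + \frac{\beta-\alpha}{\alpha}\ln\!\left(\frac{1-\gamma}{\gamma}\right) - \frac{\beta}{\alpha}\ln\!\left(\frac{1-\delta}{\delta}\right) + \ln\!\left(\frac{1-\kappa}{\kappa}\right).
\]

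The second step is to rewrite each of the three logarithmic terms in the form $\D{\frac{1}{\alpha}}{x} + \ln(1-x)$, modulo a common additive constant. By direct expansion of the KL-divergence one checks the identity
\[
\D{\tfrac{1}{\alpha}}{x} + \ln(1-x) = \frac{1}{\alpha}\ln\!\left(\frac{1-x}{\alpha x}\right) + \frac{\alpha-1}{\alpha}\ln\!\left(\frac{\alpha-1}{\alpha}\right),
\]
valid for every $x\in(0,1)$. Substituting $x\in\{\gamma,\delta,\kappa\}$ and plugging into the expression asserted by the lemma shows that the RHS expands to
\[
\frac{\beta}{\alpha}\ln c + \frac{\beta-\alpha}{\alpha}\ln\!\left(\frac{1-\gamma}{\alpha\gamma}\right) - \frac{\beta}{\alpha}\ln\!\left(\frac{1-\delta}{\alpha\delta}\right) + \ln\!\left(\frac{1-\kappa}{\alpha\kappa}\right) + \left[(\beta-\alpha)-\beta+\alpha\right]\cdot\frac{\alpha-1}{\alpha}\ln\!\left(\frac{\alpha-1}{\alpha}\right).
\]

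The crucial observation, which will be the only non-routine part of the verification, is that both the coefficient in front of the constant $\frac{\alpha-1}{\alpha}\ln(\frac{\alpha-1}{\alpha})$ and the coefficient in front of $\ln\alpha$ produced by splitting the logarithms above vanish identically: $(\beta-\alpha)-\beta+\alpha = 0$ and similarly $-\tfrac{\beta-\alpha}{\alpha}+\tfrac{\beta}{\alpha}-1=0$. Once these cancellations are noted, the RHS collapses exactly to the simplified expression for $\gpark(\kappa,\tau)$ computed in the first step, which completes the proof. There is no genuine obstacle here; the only subtlety is bookkeeping the constants introduced by rewriting logarithms of ratios as KL divergences, and this approach keeps the presentation parallel to \Cref{lem:deriv_g_by_tau} so that the same style carries over to subsequent Hessian computations.
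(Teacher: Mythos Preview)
Your proof is correct and follows essentially the same approach as the paper: both compute the raw derivative $\frac{\beta}{\alpha}\ln c + \frac{\beta-\alpha}{\alpha}\ln\frac{1-\gamma}{\gamma} - \frac{\beta}{\alpha}\ln\frac{1-\delta}{\delta} + \ln\frac{1-\kappa}{\kappa}$ and then match it to the KL form via an identity relating $\ln\frac{1-x}{x}$ and $\D{\frac{1}{\alpha}}{x}$, with the same constant cancellations $(\beta-\alpha)-\beta+\alpha=0$ and $-\frac{\beta-\alpha}{\alpha}+\frac{\beta}{\alpha}-1=0$. The only cosmetic difference is direction: the paper packages the identity as a separate helper (\Cref{lem:ln_kl_equality}) and transforms the derivative into the stated form, whereas you expand the stated RHS and reduce it to the derivative.
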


The following identify is used in the proof of \Cref{lem:deriv_g_by_kappa}.

\begin{lemma}\label{lem:ln_kl_equality}
	For all $x \in [0,1]$ and $\alpha \geq 1$ it holds that
	\begin{equation*}
		\ln\left( \frac{1- x}{x} \right) = \alpha\cdot \D{\frac{1}{\alpha}}{x} - \ln\left( \frac{\frac{1}{\alpha}}{1 - \frac{1}{\alpha}} \right) - \alpha\cdot\ln\left( \frac{1 - \frac{1}{\alpha}}{1 - x} \right) 
	\end{equation*}
\end{lemma}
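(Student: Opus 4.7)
The identity is purely algebraic, so the plan is simply to expand $\D{\frac{1}{\alpha}}{x}$ using its definition and then rearrange logarithms. Starting from the definition
\[\alpha\cdot \D{\frac{1}{\alpha}}{x} = \alpha\cdot\frac{1}{\alpha}\cdot \ln\frac{\frac{1}{\alpha}}{x} + \alpha\cdot\left(1-\frac{1}{\alpha}\right)\cdot\ln\frac{1-\frac{1}{\alpha}}{1-x} = \ln\frac{\frac{1}{\alpha}}{x}+(\alpha-1)\cdot\ln\frac{1-\frac{1}{\alpha}}{1-x},\]
I would move the two stray logarithmic terms from the right-hand side of the claim to join this expression and verify the identity term-by-term.

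Concretely, the plan is to show
\[\alpha\cdot \D{\frac{1}{\alpha}}{x} - \ln\frac{\frac{1}{\alpha}}{1-\frac{1}{\alpha}} - \alpha\cdot\ln\frac{1-\frac{1}{\alpha}}{1-x} = \ln\frac{1-x}{x}.\]
Substituting the expansion above, the left-hand side becomes
\[\ln\frac{\frac{1}{\alpha}}{x}+(\alpha-1)\cdot\ln\frac{1-\frac{1}{\alpha}}{1-x}-\ln\frac{\frac{1}{\alpha}}{1-\frac{1}{\alpha}}-\alpha\cdot\ln\frac{1-\frac{1}{\alpha}}{1-x}.\]
Combining the two $\ln\frac{1}{\alpha}$-type terms yields $\ln(1-\frac{1}{\alpha})$, while combining the two $\ln\frac{1-1/\alpha}{1-x}$-type terms yields $-\ln\frac{1-\frac{1}{\alpha}}{1-x}$. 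Adding these produces $\ln(1-\frac{1}{\alpha})-\ln(1-\frac{1}{\alpha})+\ln(1-x) = \ln(1-x)$, so the whole left-hand side collapses to $\ln(1-x)-\ln x = \ln\frac{1-x}{x}$, as required.

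There is no genuine obstacle here: the entire argument is one line of bookkeeping once the $\alpha$-multiples of $\frac{1}{\alpha}$ and $1-\frac{1}{\alpha}$ are simplified to $1$ and $\alpha-1$, respectively. I would likely present it in a single chain of equalities rather than as a multi-step derivation.
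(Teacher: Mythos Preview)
Your proposal is correct and takes essentially the same approach as the paper: both expand $\alpha\cdot\D{\frac{1}{\alpha}}{x}$ via its definition and reduce the claim to elementary logarithm bookkeeping. The only cosmetic difference is direction—the paper starts from $\frac{1}{\alpha}\ln\frac{1-x}{x}$ and builds up to the divergence expression, whereas you start from the right-hand side and collapse it down—but the algebraic content is identical.
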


\begin{proof}
	By a sequence of algebraic manipulations we get
	\begin{align*}
		\frac{1}{\alpha}\cdot \ln\left( \frac{1- x}{x} \right) &= \frac{1}{\alpha}\cdot \ln\left( \frac{1}{x} \right) - \frac{1}{\alpha}\cdot \ln\left( \frac{1}{1-x} \right) \\
		&= \frac{1}{\alpha}\cdot \ln\left( \frac{\frac{1}{\alpha}}{x} \right) - \frac{1}{\alpha}\cdot \ln\left( \frac{1}{\alpha} \right) -\frac{1}{\alpha}\cdot \ln\left( \frac{1 - \frac{1}{\alpha}}{1-x} \right) + \frac{1}{\alpha}\cdot \ln\left( 1 - \frac{1}{\alpha} \right)\\
		&= \frac{1}{\alpha}\cdot \ln\left( \frac{\frac{1}{\alpha}}{x} \right) - \frac{1}{\alpha}\cdot \ln\left( \frac{1}{\alpha} \right) + \left( 1 - \frac{1}{\alpha} \right) \cdot \ln\left( \frac{1 - \frac{1}{\alpha}}{1-x} \right) + \frac{1}{\alpha}\cdot \ln\left( 1 - \frac{1}{\alpha} \right) -\ln\left( \frac{1 - \frac{1}{\alpha}}{1-x} \right)\\
		&= \D{\frac{1}{\alpha}}{x} - \frac{1}{\alpha}\cdot \ln\left( \frac{1}{\alpha} \right) +  \frac{1}{\alpha}\cdot \ln\left( 1 - \frac{1}{\alpha} \right) -\ln\left( \frac{1 - \frac{1}{\alpha}}{1-x} \right) \\
		&= \D{\frac{1}{\alpha}}{x} -\frac{1}{\alpha}\cdot \ln\left( \frac{\frac{1}{\alpha}}{1 - \frac{1}{\alpha}} \right)  -\ln\left( \frac{1 - \frac{1}{\alpha}}{1-x} \right) 
	\end{align*}
\end{proof}

\begin{proof}[Proof of \Cref{lem:deriv_g_by_kappa}]
	By $\eqref{eq:g_def}$ we have
	\begin{align*}
		\gpark(\kappa,\tau) = \frac{\beta}{\alpha}\ln(c)
		+ \underbrace{\frac{\partial}{\partial \kappa}\Biggl( -\tau\cdot \entropy\left(\gamma_{\alpha,\beta}(\kappa,\tau)\right)\Biggr)}_{\Pi_1}
		+  \underbrace{\frac{\partial}{\partial \kappa} \Biggl(-(1-\tau)\cdot \entropy\left(\delta_{\alpha,\beta} (\kappa,\tau)\right)\Biggr)}_{\Pi_2}
		+\underbrace{\frac{\partial}{\partial \kappa}\Biggl(\entropy\left(\kappa\right)\Biggr)}_{\Pi_3}.
	\end{align*}
	It holds that
	\begin{align*}
		\Pi_1 &= (-\tau)\cdot\ln\left( \frac{1 - \gamma(\kappa,\tau)}{\gamma(\kappa,\tau)} \right) \cdot \frac{\partial \gamma(\kappa,\tau)}{\partial \kappa} \\
		&= \frac{\beta - \alpha}{\alpha} \cdot  \ln\left( \frac{1 - \gamma(\kappa,\tau)}{\gamma(\kappa,\tau)} \right)\\
		&= \frac{\beta - \alpha}{\alpha} \cdot \Biggl(\alpha \cdot \D{\frac{1}{\alpha}}{\gamma(\kappa,\tau)} - \ln\left( \frac{\frac{1}{\alpha}}{1 - \frac{1}{\alpha}} \right) - \alpha \cdot \ln\left( \frac{1 - \frac{1}{\alpha}}{1 - \gamma(\kappa,\tau)} \right)  \Biggr),
	\end{align*}
	where the second equality follows from \eqref{eq:gammapark} and the last equality follows from \Cref{lem:ln_kl_equality}.
	By \Cref{lem:ln_kl_equality} and \eqref{eq:deltapark} we have
	\begin{align*}
		\Pi_2 &= (-1 + \tau)\cdot \ln\left( \frac{1 - \delta(\kappa, \tau)}{\delta(\kappa,\tau)} \right) \cdot \frac{\partial \delta(\kappa,\tau)}{\partial \kappa} \\
		&= -\frac{\beta}{\alpha} \cdot \ln\left( \frac{1 - \delta(\kappa,\tau)}{\delta(\kappa,\tau)} \right)\\
		&= -\frac{\beta}{\alpha} \cdot \Biggl( \alpha \cdot \D{\frac{1}{\alpha}}{\delta(\kappa,\tau)} - \ln\left( \frac{\frac{1}{\alpha}}{1 - \frac{1}{\alpha}}\right) - \alpha\cdot \ln\left( \frac{1 - \frac{1}{\alpha}}{1 - \delta(\kappa,\tau)} \right)    \Biggr),
	\end{align*}
	and
	\begin{align*}
		\Pi_3 &= \ln\left( \frac{1 - \kappa}{\kappa} \right)  = \alpha \cdot \D{\frac{1}{\alpha}}{\kappa}  - \ln\left( \frac{\frac{1}{\alpha}}{1 - \frac{1}{\alpha}} \right) - \alpha\cdot \ln\left( \frac{1 - \frac{1}{\alpha}}{1 - \kappa} \right).
	\end{align*}
	Overall, we get
	\begin{align*}
		\gpark(\kappa,\tau) &= \frac{\beta}{\alpha}\ln(c) + \Pi_1 +\Pi_2 + \Pi_3\\
	    &= \frac{\beta}{\alpha} \cdot \ln(c) + \left( \beta - \alpha \right) \cdot \Biggl( \D{\frac{1}{\alpha}}{\gamma(\kappa,\tau)} + \ln\Bigl( 1 - \gamma(\kappa,\tau) \Bigr) \Biggr)\\
		&\quad- \beta \cdot \Biggl(\D{\frac{1}{\alpha}}{\delta(\kappa,\tau)} + \ln\Bigl( 1 - \delta(\kappa,\tau) \Bigr)\Biggr)\\
		&\quad+ \alpha \cdot \Biggl( \D{\frac{1}{\alpha}}{\kappa} + \ln\Bigl( 1 - \kappa \Bigr) \Biggr).
	\end{align*}
\end{proof}

Recall the functions $\Gamma_{\alpha,\beta}$ and $\Delta_{\alpha,\beta}$ from \eqref{eq:Gamma_def} and \eqref{eq:Delta_def}.

\begin{lemma}\label{lem:deriv_g_by_kappa_kappa}
	For all $\alpha,c \geq 1$and  $\beta > 1$ it holds that
	\begin{align*}
		\gparkk[\abc](\kappa,\tau) \,&=\, \left( 1 - \frac{\beta}{\alpha} \right)^{2}\cdot \Gamma_{\alpha,\beta}(\kappa,\tau)+ \left( \frac{\beta}{\alpha} \right)^{2} \cdot \Delta_{\alpha,\beta}(\kappa,\tau) -\frac{1}{\kappa \cdot (1 - \kappa)}.
	\end{align*}
\end{lemma}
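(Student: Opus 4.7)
The plan is to differentiate the expression for $\gpark[\abc](\kappa,\tau)$ from \Cref{lem:deriv_g_by_kappa} with respect to $\kappa$, using the chain rule together with \eqref{eq:KL_by_b}, \eqref{eq:gammapark}, and \eqref{eq:deltapark}. That expression is the constant term $\frac{\beta}{\alpha}\ln c$ (which vanishes under differentiation) plus three blocks of the form $\D{\frac{1}{\alpha}}{x} + \ln(1-x)$, one for each of $x=\gamma(\kappa,\tau)$, $x=\delta(\kappa,\tau)$, and $x=\kappa$. So it suffices to differentiate each block and sum.

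The observation that makes each block collapse cleanly is the identity
\[\frac{d}{dx}\left(\D{\frac{1}{\alpha}}{x} + \ln(1-x)\right) = \frac{x - \frac{1}{\alpha}}{x(1-x)} - \frac{1}{1-x} = -\frac{\frac{1}{\alpha}}{x(1-x)},\]
whose denominator $x(1-x)$ is precisely the factor appearing in $\Gamma_{\alpha,\beta}$ and $\Delta_{\alpha,\beta}$. Combining this identity with \eqref{eq:gammapark} turns the $\gamma$-block into
\[(\beta-\alpha)\cdot\left(-\frac{\frac{1}{\alpha}}{\gamma(1-\gamma)}\right)\cdot\left(1-\frac{\beta}{\alpha}\right)\frac{1}{\tau} = \left(1-\frac{\beta}{\alpha}\right)^{2} \Gamma_{\alpha,\beta}(\kappa,\tau),\]
where the outer prefactor $\frac{\beta-\alpha}{\alpha}$ flips sign against $\left(1-\frac{\beta}{\alpha}\right)$ to produce a perfect square. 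Similarly, combining with \eqref{eq:deltapark} turns the $\delta$-block into $\bigl(\frac{\beta}{\alpha}\bigr)^{2} \Delta_{\alpha,\beta}(\kappa,\tau)$, and the $\kappa$-block (which needs no chain-rule factor since $\kappa$ is itself the differentiation variable) reduces to $\alpha \cdot \bigl(-\frac{1}{\alpha\, \kappa(1-\kappa)}\bigr) = -\frac{1}{\kappa(1-\kappa)}$. Summing the three contributions yields the claimed formula.

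There is no real conceptual obstacle; the calculation is entirely mechanical once the collapsing identity above is in hand. The only care needed is in the sign-tracking of the $\gamma$-block, where one must recognize that the two appearances of $\bigl(1-\frac{\beta}{\alpha}\bigr)$ (one from the outer coefficient in \Cref{lem:deriv_g_by_kappa}, one from \eqref{eq:gammapark}) combine with the minus sign from the collapsing identity to form the advertised square $\bigl(1-\frac{\beta}{\alpha}\bigr)^{2}$.
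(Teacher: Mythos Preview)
Your proposal is correct and follows essentially the same approach as the paper: both differentiate the formula for $\gpark$ from \Cref{lem:deriv_g_by_kappa} block by block via the chain rule, using \eqref{eq:KL_by_b}, \eqref{eq:gammapark}, and \eqref{eq:deltapark}. Your isolation of the identity $\frac{d}{dx}\bigl(\D{\tfrac{1}{\alpha}}{x}+\ln(1-x)\bigr)=-\tfrac{1}{\alpha\,x(1-x)}$ is a slightly more streamlined presentation of the same computation the paper carries out inline for each of its terms $\Pi_1,\Pi_2,\Pi_3$.
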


\begin{proof}
	Using \Cref{lem:deriv_g_by_kappa} we have
	\begin{align*}
		\gparkk(\kappa,\tau) &= \underbrace{(\beta - \alpha) \cdot \frac{\partial}{\partial \kappa} \Biggl( \D{\frac{1}{\alpha}}{\gamma(\kappa,\tau)}  + \ln\left( 1 - \gamma(\kappa,\tau) \right) \Biggr)}_{\Pi_1}\\
		&\quad \underbrace{-\beta\cdot \frac{\partial}{\partial \kappa} \Biggl( \D{\frac{1}{\alpha}}{\delta(\kappa,\tau)} + \ln\left( 1 - \delta(\kappa,\tau) \right) \Biggr) }_{\Pi_2}\\
		&\quad \underbrace{+\alpha \cdot \frac{\partial}{\partial \kappa} \Biggl( \D{\frac{1}{\alpha}}{\kappa} + \ln\left( 1 - \kappa \right)  \Biggr)}_{\Pi_3}.
	\end{align*}
	By \eqref{eq:KL_by_b}, \eqref{eq:gammapark} and \eqref{eq:deltapark} we obtain
	\begin{align*}
		\Pi_1 &= (\beta - \alpha)\cdot \Biggl( \frac{\gamma(\kappa,\tau) - \frac{1}{\alpha}}{\gamma(\kappa,\tau) \cdot \left( 1 - \gamma(\kappa,\tau) \right) } - \frac{1}{1 - \gamma(\kappa,\tau)}\Biggr) \cdot \frac{\partial \gamma(\kappa,\tau)}{\partial \kappa}\\
		&= (\beta - \alpha)\cdot \Biggl( \frac{\gamma(\kappa,\tau) - \frac{1}{\alpha}}{\gamma(\kappa,\tau) \cdot \left( 1 - \gamma(\kappa,\tau) \right) } - \frac{1}{1 - \gamma(\kappa,\tau)}\Biggr) \cdot \left( 1 - \frac{\beta}{\alpha} \right)\cdot \frac{1}{\tau}\\
		&= \left( 1 - \frac{\beta}{\alpha} \right)^{2} \cdot \Biggl(\frac{1}{\tau \cdot \gamma(\kappa,\tau) \cdot \bigl(1 - \gamma(\kappa,\tau)\bigr)}\Biggr)\\
		&= \left(1-\frac{\beta}{\alpha}\right)^2 \cdot \Gamma_{\alpha,\beta}(\kappa,\tau),\\
		\\
		\Pi_2 &= -\beta \cdot \Biggl( \frac{\delta(\kappa,\tau) - \frac{1}{\alpha}}{\delta(\kappa,\tau) \cdot \bigl(1 - \delta(\kappa,\tau)\bigr)} - \frac{1}{1 - \delta(\kappa,\tau)} \Biggr) \cdot \frac{\partial \delta(\kappa,\tau)}{\partial \kappa}\\
		&= -\beta \cdot \Biggl( \frac{\delta(\kappa,\tau) - \frac{1}{\alpha}}{\delta(\kappa,\tau) \cdot \bigl(1 - \delta(\kappa,\tau)\bigr)} - \frac{1}{1 - \delta(\kappa,\tau)} \Biggr) \cdot \left(  \frac{\beta}{\alpha \cdot (1 - \tau)} \right) \\
		&=\left( \frac{\beta}{\alpha} \right)^{2} \cdot \Biggl( \frac{1}{ (1 - \tau) \cdot \delta(\kappa,\tau) \cdot \bigl( 1 - \delta(\kappa,\tau)\bigr) } \Biggr)\\
		&=\left(\frac{\beta}{\alpha}\right)^2\cdot\Delta_{\alpha,\beta}(\kappa,\tau),\\
		\\
		\Pi_3 &= \alpha\cdot\Biggl( \frac{\kappa - \frac{1}{\alpha}}{\kappa \cdot (1 - \kappa)} - \frac{1}{1-\kappa} \Biggr) = \frac{-1}{\kappa \cdot \left( 1 - \kappa \right) }.
	\end{align*}
	Finally, we have
	\begin{align*}
		\gparkk(\kappa,\tau) = \Pi_1 + \Pi_2 + \Pi_3 &= \left( 1 - \frac{\beta}{\alpha} \right)^{2} \cdot \Gamma_{\alpha,\beta}(\kappa,\tau) ++ \left( \frac{\beta}{\alpha} \right)^{2} \cdot  \Delta_{\alpha,\beta}(\kappa,\tau) -\frac{1}{\kappa \cdot (1 - \kappa)}.
	\end{align*}
\end{proof}

\begin{lemma}
	\label{lem:deriv_g_by_kappa_tau}
	For all $\alpha,c \geq 1$ and $\beta > 1$ it holds that
	\begin{align*}
		\gparkt[\abc](\kappa,\tau) &= -\left( 1 - \frac{\beta}{\alpha} \right)\cdot  \left(\gamma(\kappa,\tau) - \frac{1}{\alpha}\right)\cdot \Gamma_{\alpha,\beta}(\kappa,\tau) +
		\frac{\beta}{\alpha} \cdot \left(\delta(\kappa,\tau) - \frac{1}{\alpha} \right)\cdot \Delta_{\alpha,\beta}(\kappa,\tau).
	\end{align*}
\end{lemma}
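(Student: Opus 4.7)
The plan is to compute the mixed partial derivative by differentiating $\gpart$ with respect to $\kappa$, exploiting the clean form of $\gpart$ provided by \Cref{lem:deriv_g_by_tau}. Differentiating $\gpark$ with respect to $\tau$ (starting from \Cref{lem:deriv_g_by_kappa}) would work in principle, but that formula has many more terms (including the logarithmic corrections from \Cref{lem:ln_kl_equality}) that would have to be carried along and then shown to cancel; the $\gpart$ route keeps the computation compact.

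Concretely, by \Cref{lem:deriv_g_by_tau} we have
\[
\gpart(\kappa,\tau) \;=\; -\frac{\ln c}{\alpha} \;-\; \D{\tfrac{1}{\alpha}}{\gamma(\kappa,\tau)} \;+\; \D{\tfrac{1}{\alpha}}{\delta(\kappa,\tau)}.
\]
The constant $-\ln(c)/\alpha$ vanishes under $\partial/\partial\kappa$, so I would apply the chain rule to each of the two KL terms. Using the identity $\partial_b \D{a}{b} = \frac{b-a}{b(1-b)}$ from \eqref{eq:KL_by_b} together with the partial $\partial_\kappa \gamma = (1-\beta/\alpha)/\tau$ from \eqref{eq:gammapark}, the first term contributes
\[
-\frac{\gamma - \frac{1}{\alpha}}{\gamma(1-\gamma)} \cdot \left(1-\frac{\beta}{\alpha}\right)\cdot\frac{1}{\tau} \;=\; -\left(1-\frac{\beta}{\alpha}\right)\cdot\left(\gamma-\tfrac{1}{\alpha}\right)\cdot \Gamma_{\alpha,\beta}(\kappa,\tau),
\]
after recognizing the definition \eqref{eq:Gamma_def} of $\Gamma_{\alpha,\beta}$. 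Analogously, using $\partial_\kappa \delta = \beta/(\alpha(1-\tau))$ from \eqref{eq:deltapark} and the definition \eqref{eq:Delta_def} of $\Delta_{\alpha,\beta}$, the second term contributes
\[
\frac{\delta - \frac{1}{\alpha}}{\delta(1-\delta)} \cdot \frac{\beta}{\alpha(1-\tau)} \;=\; \frac{\beta}{\alpha}\cdot\left(\delta-\tfrac{1}{\alpha}\right)\cdot \Delta_{\alpha,\beta}(\kappa,\tau).
\]
Adding these two contributions yields exactly the claimed formula for $\gparkt[\abc](\kappa,\tau)$.

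There is essentially no real obstacle here; the only care needed is to keep track of signs (the minus in front of the $\gamma$-KL term) and to correctly match the packaged notations $\Gamma_{\alpha,\beta}$ and $\Delta_{\alpha,\beta}$ with the factors $\tfrac{1}{\tau \gamma(1-\gamma)}$ and $\tfrac{1}{(1-\tau)\delta(1-\delta)}$ that arise naturally. Because we differentiated the compact KL-form of $\gpart$ rather than the expanded form, no auxiliary identity such as \Cref{lem:ln_kl_equality} is needed, and the proof reduces to two routine applications of the chain rule.
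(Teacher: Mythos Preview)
Your proof is correct and takes a genuinely different route from the paper. The paper computes $\gparkt$ by differentiating $\gpark$ (from \Cref{lem:deriv_g_by_kappa}) with respect to $\tau$, so it has to carry the extra $\ln(1-\gamma)$ and $\ln(1-\delta)$ terms through the chain rule and then simplify the combinations $\frac{b-1/\alpha}{b(1-b)} - \frac{1}{1-b} = \frac{-1/\alpha}{b(1-b)}$ for $b=\gamma,\delta$ before the $\Gamma$ and $\Delta$ factors emerge. Your choice to differentiate the cleaner expression $\gpart$ (from \Cref{lem:deriv_g_by_tau}) with respect to $\kappa$ avoids those extra terms entirely and reduces the computation to two direct chain-rule applications. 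Both approaches are valid by equality of mixed partials; yours is the more economical one and, as you note, bypasses the need for any auxiliary identity like \Cref{lem:ln_kl_equality}.
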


\begin{proof}
	By \cref{lem:deriv_g_by_kappa} we get
	\begin{align*}
		\gparkt{(\kappa,\tau)} &= \underbrace{(\beta - \alpha) \cdot \frac{\partial}{\partial \tau} \Biggl(\D{\frac{1}{\alpha}}{\gamma(\kappa,\tau)} + \ln\Bigl( 1 - \gamma(\kappa,\tau) \Bigr) \Biggr)}_{\Pi_1}\\
		&\quad \underbrace{- \beta \cdot \frac{\partial}{\partial \tau}\Biggl(\D{\frac{1}{\alpha}}{\delta(\kappa,\tau)} + \ln\Bigl( 1 - \delta(\kappa,\tau) \Bigr)\Biggr)}_{\Pi_2}.
	\end{align*}
	By \eqref{eq:KL_by_b}, \eqref{eq:gammapart} and \eqref{eq:deltapart} we have
	\begin{align*}
		\Pi_1 &= (\beta- \alpha) \cdot \Biggl( \frac{\gamma(\kappa,\tau) - \frac{1}{\alpha}}{\gamma(\kappa,\tau)\cdot\Bigl(1 - \gamma(\kappa,\tau)\Bigr)} - \frac{1}{1 - \gamma(\kappa,\tau)} \Biggr) \cdot \frac{\partial \gamma(\kappa,\tau)}{\partial \tau}\\
		&= (\beta - \alpha) \cdot \left( \frac{-1}{\alpha \cdot \gamma(\kappa,\tau)\cdot\Bigl(1 - \gamma(\kappa,\tau)\Bigr)} \right) \cdot \Biggl(\frac{-\gamma(\kappa,\tau) + \frac{1}{\alpha}}{\tau}\Biggr)\\
		&= -\left(1-\frac{\beta}{\alpha}\right) \cdot \left(\gamma(\kappa,\tau) -\frac{1}{\alpha}\right) \cdot \Gamma_{\ab}(\kappa,\tau)
	\end{align*}
	and
	\begin{align*}
		\Pi_2 &= -\beta \cdot \Biggl(\frac{\delta(\kappa,\tau) - \frac{1}{\alpha}}{\delta(\kappa,\tau) \cdot \Bigl(1 - \delta(\kappa,\tau)\Bigr)} - \frac{1}{1-\delta(\kappa,\tau)}\Biggr)\cdot\frac{\partial \delta(\kappa,\tau)}{\partial \tau}\\
		&= -\beta \cdot \Biggl(\frac{-1}{\alpha \cdot \delta(\kappa,\tau) \cdot \Bigl(1 - \delta(\kappa,\tau)\Bigr)}\Biggr)\cdot\left( \frac{\delta(\kappa,\tau) - \frac{1}{\alpha}}{1 - \tau} \right)\\
		&= \frac{\beta}{\alpha} \cdot \left(\delta(\kappa,\tau) - \frac{1}{\alpha} \right) \cdot \Delta_{\ab}(\kappa,\tau).
	\end{align*}
	So
	\begin{align*}
		\gparkt(\kappa,\tau) = \Pi_1 + \Pi_2 = -\left(1-\frac{\beta}{\alpha}\right) \cdot \left(\gamma(\kappa,\tau) -\frac{1}{\alpha}\right) \cdot \Gamma_{\ab}(\kappa,\tau) + \frac{\beta}{\alpha} \cdot \left(\delta(\kappa,\tau) - \frac{1}{\alpha} \right) \cdot \Delta_{\ab}(\kappa,\tau).
	\end{align*}
\end{proof}

\derivgbytautau*

\begin{proof}
	By \Cref{lem:deriv_g_by_tau} we have
	\begin{align*}
		\gpartt = \frac{\partial}{\partial \tau} \Biggl( \D{\frac{1}{\alpha}}{\delta(\kappa,\tau)} \Biggr) - \frac{\partial}{\partial \tau} \Biggl( \D{\frac{1}{\alpha}}{\gamma(\kappa,\tau)} \Biggr).
	\end{align*}
	By \eqref{eq:KL_by_b} and \eqref{eq:deltapart} we have
	\begin{align*}
		\frac{\partial}{\partial \tau} \Biggl( \D{\frac{1}{\alpha}}{\delta(\kappa,\tau)} \Biggr) &= \frac{\delta - \frac{1}{\alpha}}{\delta \cdot (1 - \delta)} \cdot \frac{\partial \delta(\kappa,\tau)}{\partial \tau}
		= \frac{\delta - \frac{1}{\alpha}}{\delta\cdot (1 - \delta)} \cdot \frac{\delta - \frac{1}{\alpha}}{1 - \tau} = \left( \delta-\frac{1}{\alpha}\right)^2 \cdot \Delta_{\ab}(\kappa,\tau).
	\end{align*}
	Similarly, by \eqref{eq:KL_by_b} and \eqref{eq:gammapart} we have
	\begin{align*}
		\frac{\partial}{\partial \tau} \Biggl(\D{\frac{1}{\alpha}}{\gamma(\kappa,\tau)} \Biggr)= \frac{\gamma- \frac{1}{\alpha}}{\gamma\cdot \Bigl(1 - \gamma\Bigr)}\cdot \frac{\partial \gamma(\kappa,\tau)}{\partial \tau}
		= \frac{\gamma - \frac{1}{\alpha}}{\gamma \cdot (1 - \gamma)} \cdot \frac{ -\left( \gamma  -\frac{1}{\alpha} \right)}{\tau}=
		-\left(\gamma-\frac{1}{\alpha}\right)^2 \cdot \Gamma_{\ab}(\kappa,\tau).
	\end{align*}
	Together
	\begin{align*}
		\gpartt &= \frac{\partial}{\partial \tau} \Biggl( \D{\frac{1}{\alpha}}{\delta(\kappa,\tau)} \Biggr) - \frac{\partial}{\partial \tau} \Biggl( \D{\frac{1}{\alpha}}{\gamma(\kappa,\tau)} \Biggr)\\
		&= \left( \delta-\frac{1}{\alpha}\right)^2 \cdot \Delta_{\ab}(\kappa,\tau) + 	  \left(\gamma-\frac{1}{\alpha}\right)^2 \cdot \Gamma_{\ab}(\kappa,\tau).
	\end{align*}
\end{proof}

\subsection{A Formula for the Determinant of the Hessian}
\label{sec:hes_formula}

In this section we prove \Cref{lem:hessian_formula}, that is, we provide a formula for $\dhes{g}{\kappa,\tau}$.

\hessformula*

\begin{proof}
	By \Cref{lem:deriv_g_by_kappa_kappa,lem:deriv_g_by_tau_tau} we have
	\begin{equation}
		\label{eq:hess_first_part}
		\begin{aligned}
			&\gparkk(\kappa,\tau) \cdot \gpartt(\kappa,\tau) \\
			=~& \left( \left( 1 - \frac{\beta}{\alpha} \right)^{2}\cdot \Gamma+ \left( \frac{\beta}{\alpha} \right)^{2} \cdot \Delta -\frac{1}{\kappa \cdot (1 - \kappa)}\right) \cdot \left( 	\left( \gamma - \frac{1}{\alpha} \right)^{2} \cdot \Gamma +\left( \delta - \frac{1}{\alpha} \right)^{2} \cdot \Delta\right)\\
			=~& \left( \gamma - \frac{1}{\alpha} \right)^{2}  \cdot \left( 1-\frac{\beta}{\alpha}\right)^2 \cdot \Gamma^2 + \left(\delta-\frac{1}{\alpha}\right)^2 \cdot \left(\frac{\beta}{\alpha}\right)^2 \cdot \Delta^2 - \frac{\gpartt(\kappa,\tau)}{\kappa(1-\kappa)} \\
			&\quad + \Delta \cdot \Gamma \left(\left( \gamma - \frac{1}{\alpha} \right)^{2} \cdot \left(\frac{\beta}{\alpha}\right)^2  +   \left( \delta - \frac{1}{\alpha} \right)^{2} \cdot \left(1-\frac{\beta}{\alpha}\right)^2  \right).
		\end{aligned}
	\end{equation}
	Similarly, by \Cref{lem:deriv_g_by_kappa_tau} it holds that 
	\begin{equation}
		\label{eq:hess_second_part}
		\begin{aligned}
			\left(\gparkt(\kappa,\tau) \right)^2 ~&=~ \left(-\left( 1 - \frac{\beta}{\alpha} \right)\cdot \left(\gamma - \frac{1}{\alpha}\right) \cdot \Gamma+ \frac{\beta}{\alpha} \cdot \left(\delta - \frac{1}{\alpha} \right)\cdot \Delta \right)^2\\
			&=~ \left( \gamma - \frac{1}{\alpha} \right)^{2}  \cdot \left( 1-\frac{\beta}{\alpha}\right)^2 \cdot \Gamma^2 + \left(\delta-\frac{1}{\alpha}\right)^2 \cdot \left(\frac{\beta}{\alpha}\right)^2 \cdot \Delta^2\\
			&\quad -2 \cdot \left( \gamma - \frac{1}{\alpha} \right) \cdot \left( 1-\frac{\beta}{\alpha}\right)\cdot \left(\delta-\frac{1}{\alpha}\right) \cdot \left(\frac{\beta}{\alpha}\right)\cdot \Gamma\cdot \Delta.
		\end{aligned}
	\end{equation}
	By \eqref{eq:hess_first_part} and \eqref{eq:hess_second_part} we have
	\begin{equation}
		\label{eq:hess_initial_exp}
		\begin{aligned}
			\dhes{g}{\kappa,\tau} ~&=~ \gparkk(\kappa,\tau) \cdot \gpartt(\kappa,\tau) - \left( \gparkt(\kappa,\tau) \right)^2 \\
			&=~ -\frac{\gpartt(\kappa,\tau)}{\kappa(1-\kappa)} + \Delta \cdot \Gamma \left(
			\left( \gamma - \frac{1}{\alpha} \right)^{2} \cdot \left(\frac{\beta}{\alpha}\right)^2
			+\left( \delta - \frac{1}{\alpha} \right)^{2} \cdot \left(1-\frac{\beta}{\alpha}\right)^2  \right)\\
			&\quad+2\cdot \left( \gamma - \frac{1}{\alpha} \right)  \cdot \left( 1-\frac{\beta}{\alpha}\right)\cdot \left(\delta-\frac{1}{\alpha}\right) \cdot \left(\frac{\beta}{\alpha}\right)\cdot \Gamma\cdot \Delta\\
			&=~ -\frac{\gpartt(\kappa,\tau)}{\kappa(1-\kappa)} + \Delta \cdot \Gamma \left(
			\left( \gamma - \frac{1}{\alpha} \right) \cdot \left(\frac{\beta}{\alpha}\right)
			+\left( \delta - \frac{1}{\alpha} \right) \cdot \left(1-\frac{\beta}{\alpha}\right)  \right)^2\\
			&=~ \frac{\Delta \cdot \Gamma}{1-\kappa}\cdot \left(- \frac{\gpartt}{\kappa\cdot \Delta \cdot \Gamma} + (1-\kappa ) \cdot
			\left(\left( \gamma - \frac{1}{\alpha} \right) \cdot \left(\frac{\beta}{\alpha}\right)
			+\left( \delta - \frac{1}{\alpha} \right) \cdot \left(1-\frac{\beta}{\alpha}\right) \right)^2\right).
		\end{aligned}
	\end{equation}
	We define
	\begin{equation}
		\label{eq:psi_def}
		\psi_\ab(\kappa,\tau) = \frac{\beta}{\alpha} \cdot \left( \gamma_{\alpha,\beta}(\kappa,\tau) -\frac{1}{\alpha} \right) +\left(1-\frac{\beta}{\alpha} \right)\cdot \left( \delta_{\alpha,\beta}(\kappa,\tau) -\frac{1}{\alpha}\right).
	\end{equation}
	As before, we use the shorthand $\psi=\psi_\ab(\kappa,\tau)$.
	Thus,
	\begin{equation}
		\label{eq:hess_second_exp}
		\dhes{g}{\kappa,\tau} = \frac{\Delta \cdot \Gamma}{1-\kappa}\cdot \left(- \frac{\gpartt}{\kappa\cdot \Delta \cdot \Gamma} + (1-\kappa )\cdot \psi^2 \right).
	\end{equation}
	We use the following algebraic identity to simplify \eqref{eq:hess_second_exp}.

	\begin{claim}
		\label{claim:k_psi}
		$\kappa \cdot \psi = \delta\left(\gamma-\frac{1}{\alpha}\right)$.
	\end{claim}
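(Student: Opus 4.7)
The plan is to prove the identity by direct substitution using the defining formulas for $\gamma_{\alpha,\beta}(\kappa,\tau)$ and $\delta_{\alpha,\beta}(\kappa,\tau)$. Since the ambient setting has $\kappa \in (0,\frac{1}{\beta})$ and $\tau \in (M_{\alpha,\beta}(\kappa),\beta\kappa) \subseteq (0,1)$, neither of the corner cases $\tau = 0$ or $\tau = 1$ arises, so we may freely use the rational-function forms of $\gamma$ and $\delta$ from \eqref{eq:delta_def} and \eqref{eq:gamma_def}.

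First I will record two simple algebraic consequences of those definitions:
\begin{equation*}
	\tau \cdot \left(\gamma - \tfrac{1}{\alpha}\right) \;=\; \left(1-\tfrac{\beta}{\alpha}\right) \kappa
	\qquad\text{and}\qquad
	(1-\tau)\cdot\delta \;=\; \tfrac{\beta\kappa - \tau}{\alpha},
\end{equation*}
both of which follow by inspection from the definitions. The first lets me rewrite the second summand in the definition of $\psi_{\alpha,\beta}(\kappa,\tau)$ (see \eqref{eq:psi_def}), after multiplying through by $\kappa$, as follows:
\begin{equation*}
	\kappa\cdot\left(1-\tfrac{\beta}{\alpha}\right)\left(\delta - \tfrac{1}{\alpha}\right)
	\;=\; \tau\cdot\left(\gamma - \tfrac{1}{\alpha}\right)\cdot \left(\delta - \tfrac{1}{\alpha}\right).
\end{equation*}

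Substituting into $\kappa\cdot\psi$, I can factor $(\gamma - \frac{1}{\alpha})$ out of both summands to obtain
\begin{equation*}
	\kappa \cdot \psi
	\;=\; \left(\gamma - \tfrac{1}{\alpha}\right)\cdot \left[\tfrac{\beta\kappa}{\alpha} + \tau\cdot\left(\delta - \tfrac{1}{\alpha}\right)\right]
	\;=\; \left(\gamma - \tfrac{1}{\alpha}\right)\cdot \left[\tfrac{\beta\kappa - \tau}{\alpha} + \tau\cdot \delta\right].
\end{equation*}
The bracket collapses using the second algebraic consequence above: $\tfrac{\beta\kappa - \tau}{\alpha} = (1-\tau)\delta$, so $\tfrac{\beta\kappa - \tau}{\alpha} + \tau\delta = \delta$. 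This yields $\kappa\cdot\psi = \delta\cdot(\gamma - \tfrac{1}{\alpha})$, as claimed.

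There is no real obstacle here — the only minor subtlety is keeping track of which of the two rewriting identities is being used where, and making sure the denominators $\tau$ and $1-\tau$ are nonzero (which is guaranteed by the ambient hypotheses carried over from \Cref{lem:hessian_formula}). Everything else is bookkeeping.
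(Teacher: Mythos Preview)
Your proof is correct and follows essentially the same approach as the paper: both arguments rely on the two identities $\tau(\gamma-\tfrac{1}{\alpha})=(1-\tfrac{\beta}{\alpha})\kappa$ and $(1-\tau)\delta=\tfrac{\beta\kappa-\tau}{\alpha}$, use the first to factor out $(\gamma-\tfrac{1}{\alpha})$, and then use the second to collapse the remaining bracket to $\delta$. The only cosmetic difference is that the paper computes $\kappa\psi-\delta(\gamma-\tfrac{1}{\alpha})$ and shows it vanishes, whereas you compute $\kappa\psi$ directly.
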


	\begin{claimproof}
		The statement of the claim immediately follows from the following equation.
		\begin{align*}
			\kappa\cdot \psi -\delta\left(\gamma-\frac{1}{\alpha}\right) ~&=~
			\kappa\cdot \left(1-\frac{\beta}{\alpha}\right) \cdot \left(\delta-\frac{1}{\alpha}\right) +
			\kappa \cdot \frac{\beta}{\alpha}\left(\gamma -\frac{1}{\alpha}\right)  -\delta\cdot \left(\gamma-\frac{1}{\alpha}\right)\\
			&=~ \left(\delta-\frac{1}{\alpha}\right)\cdot \left(\gamma-\frac{1}{\alpha }\right)\cdot \tau +\left(\gamma -\frac{1}{\alpha}\right) \cdot \left(\kappa \cdot \frac{\beta}{\alpha} - \delta\right)\\
			&=~ \left(\gamma -\frac{1}{\alpha}\right)\cdot \left( \left(\delta-\frac{1}{\alpha}\right)\cdot \tau +\frac{\beta}{\alpha}\cdot \kappa -\delta \right)\\
			&=~ \left(\gamma -\frac{1}{\alpha}\right)\cdot \left(-\delta \cdot (1-\tau) +\frac{\beta}{\alpha }\kappa-\frac{\tau}{\alpha} \right) ~=~ 0.
		\end{align*}
		The first equality follows from an expansion of $\psi$ by \eqref{eq:psi_def}.
		The second equality uses the identity $(\gamma -\frac{1}{\alpha}) \cdot \tau = \left( 1-\frac{\beta}{\alpha}\right)\cdot \kappa$ by \eqref{eq:gamma_def}.
		The last equality holds as $\delta\cdot(1-\tau)= \frac{\beta}{\alpha} \cdot \kappa -\frac{\tau}{\alpha}$.
	\end{claimproof}

	By \Cref{lem:deriv_g_by_tau_tau} and the definition of $\Delta$ \eqref{eq:Delta_def} and $\Gamma $ \eqref{eq:Gamma_def} it holds that
	\begin{equation}
		\label{eq:gtt_expansion}
		\begin{aligned}
			\frac{\gpartt}{\kappa \cdot \Delta\cdot \Gamma } ~&=~ \frac{1}{\kappa}\cdot \left(\gamma -\frac{1}{\alpha} \right)^2\cdot (1-\tau) \cdot \delta \cdot (1-\delta) +
			\frac{1}{\kappa} \cdot \left(\delta-\frac{1}{\alpha}\right)^2 \cdot \tau \cdot (1-\gamma)\cdot \gamma\\
			&=~ \frac{1}{\kappa}\cdot \left(\gamma -\frac{1}{\alpha} \right)^2 \cdot \delta  \cdot (1-\delta)  +
			\frac{\tau}{\kappa}\left( \left(\delta-\frac{1}{\alpha}\right)^2 \cdot (1-\gamma)\cdot  \gamma -\left(\gamma -\frac{1}{\alpha} \right)^2 \cdot \delta \cdot(1-\delta) \right)\\
			&=~  \psi \cdot \left(\gamma -\frac{1}{\alpha} \right)   \cdot (1-\delta)  +\Pi_1,
		\end{aligned}
	\end{equation}
	where the last equality uses $\frac{1}{\kappa}\cdot\delta \left(\gamma-\frac{1}{\alpha}\right)=\psi$ by \Cref{claim:k_psi} and
	$$\Pi_1 \coloneqq \frac{\tau}{\kappa}\left( \left(\delta-\frac{1}{\alpha}\right)^2 \cdot (1-\gamma)\cdot  \gamma -\left(\gamma -\frac{1}{\alpha} \right)^2 \cdot \delta \cdot(1-\delta) \right).$$
	Furthermore,
	\begin{equation}
		\label{eq:hess_psi_expansion}
		(1-\kappa) \cdot\psi^2 = \psi^2 -\kappa\cdot \psi^2 = \psi^2 - \psi \cdot \delta \cdot \left(\gamma-\frac{1}{\alpha}\right),
	\end{equation}
	where the second equality follows from \Cref{claim:k_psi}.
	By \eqref{eq:hess_second_exp}, \eqref{eq:gtt_expansion} and \eqref{eq:hess_psi_expansion} we have
	\begin{equation}
		\label{eq:hess_third_exp}
		\begin{aligned}
			\dhes{g}{\kappa,\tau} ~&=~ \frac{\Delta \cdot \Gamma}{1-\kappa}\cdot \left(- \psi\left(\gamma-\frac{1}{\alpha}\right)\cdot(1-\delta)-\Pi_1 + \psi^2 -\psi\cdot \delta \cdot \left(\gamma-\frac{1}{\alpha}\right)\right)\\
			&=~ \frac{\Delta \cdot \Gamma}{1-\kappa}\cdot \left(- \psi\left(\gamma-\frac{1}{\alpha}\right)+\psi^2-\Pi_1 \right)\\
			&=~ \frac{\Delta \cdot \Gamma}{1-\kappa}\cdot \left(\Pi_2-\Pi_1 \right),
		\end{aligned}
	\end{equation}
	where
	$$\Pi_2 \coloneqq - \psi\left(\gamma-\frac{1}{\alpha}\right)+\psi^2.$$
	Observe that
	\begin{equation*}
		\begin{aligned}
		&~ \gamma \cdot(1-\gamma ) \cdot \left(\delta -\frac{1}{\alpha}\right)^2-\delta  \cdot (1-\delta )\cdot  \left(\gamma -\frac{1}{\alpha}\right)^2 \\
		=&~ \gamma \cdot (1-\gamma) \cdot \left( \delta^2 -\frac{2}{\alpha} \delta +\frac{1}{\alpha^2} \right) -\delta \cdot (1-\delta) \cdot \left( \gamma^2 -\frac{2}{\alpha} \gamma +\frac{1}{\alpha^2} \right) \\
		=&~ \delta^2\gamma -\frac{2}{\alpha} \delta\gamma +\frac{\gamma}{\alpha^2} - \delta^2\cdot \gamma^2 +\frac{2}{\alpha} \delta\cdot \gamma^2 -\frac{\gamma^2}{\alpha^2} \\
		 &\quad -\gamma^2\delta +\frac{2}{\alpha} \gamma\delta -\frac{\delta}{\alpha^2} + \gamma^2\cdot \delta^2 -\frac{2}{\alpha} \gamma\cdot \delta^2 +\frac{\delta^2}{\alpha^2} \\
		=&~ (\delta-\gamma)\cdot \gamma \delta +(\gamma-\delta )\cdot \frac{1}{\alpha^2} +(\gamma -\delta)\cdot \frac{2}{\alpha}\gamma \delta -\frac{1}{\alpha^2}(\gamma+\delta)(\gamma -\delta)\\
		=&~ (\gamma -\delta)\cdot \frac{1}{\alpha^2}\left(-\alpha^2\cdot \gamma \delta +1 +2\alpha\cdot\gamma \delta -\gamma-\delta \right) \\
		\end{aligned}
	\end{equation*}
	and $\frac{\tau}{\kappa} = \frac{1-\frac{\beta}{\alpha}}{\gamma-\frac{1}{\alpha}}$ by \eqref{eq:gamma_def}.
	Thus
	\begin{equation}
		\label{eq:Pi1_proc}
		\begin{aligned}
			\Pi_1 ~&=~ \frac{\left(1-\frac{\beta}{\alpha}\right)\cdot \left( \gamma - \delta\right)}{\left(\gamma-\frac{1}{\alpha}\right) \cdot \alpha^2 } \cdot\left(  -\alpha^2\cdot \gamma \delta +1 +2\alpha\cdot\gamma \delta -\gamma-\delta \right) \\
			&=~ \frac{\left(1-\frac{\beta}{\alpha}\right)\cdot \left( \gamma - \delta\right)}{\left(\gamma-\frac{1}{\alpha}\right) \cdot \alpha^2 } \cdot \left( \delta \cdot \left( -\alpha^2 \cdot \gamma +2\alpha\gamma -1\right)+1-\gamma \right)
		\end{aligned}
	\end{equation}
	Additionally,
	$$
	\begin{aligned}
		\Pi_2 ~&=~ -\psi\left(\gamma-\frac{1}{\alpha}\right)+\psi^2\\
		&=~ \psi\left( - \left(\gamma-\frac{1}{\alpha}\right) +\left(\delta-\frac{1}{\alpha}\right) \cdot \left(1-\frac{\beta}{\alpha}\right) + \frac{\beta}{\alpha} \cdot\left(\gamma-\frac{1}{\alpha}\right) \right)\\
		&=~ \psi\left(1-\frac{\beta}{\alpha}\right) \cdot \left( \delta-\gamma\right).
	\end{aligned}
	$$
	By further expanding the expression for $\psi$ and dividing and multiplying by $\alpha^2\cdot \left(\gamma-\frac{1}{\alpha}\right)$ we get
	\begin{equation}
		\label{eq:Pi2_proc}
		\begin{aligned}
			\Pi_2 ~&=~ -\frac{\left(1-\frac{\beta}{\alpha}\right)\cdot \left( \gamma - \delta\right)}{\left(\gamma-\frac{1}{\alpha}\right) \cdot \alpha^2 } \cdot \left(\gamma -\frac{1}{\alpha} \right)\cdot\alpha^2 \left( \left(\gamma-\frac{1}{\alpha}\right)\cdot \frac{\beta}{\alpha} +\left(\delta-\frac{1}{\alpha}\right)\cdot \left(1-\frac{\beta}{\alpha}\right)\right)\\
			&=~ -\frac{\left(1-\frac{\beta}{\alpha}\right) \left( \gamma - \delta\right)}{\left(\gamma-\frac{1}{\alpha}\right) \cdot \alpha^2 }\cdot \left( \left(\gamma-\frac{1}{\alpha}\right)\cdot \left( \left(\gamma-\frac{1}{\alpha} \right)\alpha \beta - (\alpha-\beta)\right) +\delta  \alpha \left( \gamma -\frac{1}{\alpha}\right)  (\alpha -\beta)\right)\\
			&=~ -\frac{\left(1-\frac{\beta}{\alpha}\right)\cdot \left( \gamma - \delta\right)}{\left(\gamma-\frac{1}{\alpha}\right) \cdot \alpha^2 }\cdot \left( \left(\gamma-\frac{1}{\alpha}\right)\cdot \left(  \alpha \cdot \beta\cdot \gamma -\alpha\right) +\delta \cdot\left( \gamma \alpha\cdot  (\alpha -\beta) -\alpha +\beta\right)\right)\\
			&=~ -\frac{\left(1-\frac{\beta}{\alpha}\right)\cdot \left( \gamma - \delta\right)}{\left(\gamma-\frac{1}{\alpha}\right) \cdot \alpha^2 }\cdot \left(\alpha \beta \gamma^2 -\gamma(\beta +\alpha) +1 +\delta \cdot\left( \gamma \alpha\cdot  (\alpha -\beta) -\alpha +\beta\right)\right).
		\end{aligned}
	\end{equation}
	Define $\xi = \frac{\Delta \cdot \Gamma}{1-\kappa} \cdot \frac{\left(1-\frac{\beta}{\alpha}\right)\cdot \left( \gamma - \delta\right)}{\left(\gamma-\frac{1}{\alpha}\right) \cdot \alpha^2 }$.
	By \eqref{eq:hess_third_exp}, \eqref{eq:Pi1_proc} and \eqref{eq:Pi2_proc} we have
	$$
	\begin{aligned}
		\dhes{g}{\kappa,\tau} ~&=~ \xi\cdot \Bigg(
		-\left(  \alpha \beta \gamma^2 -\gamma(\beta +\alpha) + 1 +\delta \cdot\left( \gamma \alpha\cdot  (\alpha -\beta) -\alpha +\beta\right)\right) \\
		&\quad\quad\quad\quad -\delta \cdot \left( -\alpha^2 \cdot \gamma +2\alpha\gamma -1\right)-1+\gamma
		\Bigg)\\
		&=~ \xi \cdot \left(-\alpha \beta \gamma^2+\gamma(\alpha+\beta+1) -2 + \delta \left(\gamma \alpha \cdot (\beta -2  )+ \alpha-\beta +1 \right) \right)\\
		&=~ \xi \cdot \left(A_{\alpha,\beta} (\gamma) +\delta\cdot B_{\alpha,\beta}(\gamma )\right)\\
		&=~ \frac{\Delta \cdot \Gamma}{1-\kappa} \cdot \frac{\left(1-\frac{\beta}{\alpha}\right)\cdot \left( \gamma - \delta\right)}{\left(\gamma-\frac{1}{\alpha}\right) \cdot \alpha^2 } \cdot \left(A_{\alpha,\beta} (\gamma) +\delta\cdot B_{\alpha,\beta}(\gamma )\right),
	\end{aligned}
	$$
	which completes the proof of the lemma.
\end{proof}

\section{Better than Brute Force}
\label{sec:better_than_brute}
In this section we prove \Cref{thm:amls_smaller_brute}.
We first use \Cref{lem:g_convex_by_tau} to show that $\amlsbound(\alpha,c,\beta)<\brute(\beta)$ for all $\alpha,c\geq 1 $ and $\beta>1$.
Broadly speaking, the brute force algorithm presented in \cite{EsmerKMNS22} works as follows.
The algorithm iterates over $k$ from $0$ to $\frac{n}{\beta}$ (where $n$ is the size of the universe $U$), and the analysis focuses on the iteration in which $k$ is the minimum cardinality of a set in $\CF$.
For each value of $k$ the algorithm samples random subsets of the universe $U$ of size $\beta \cdot k$ and checks if each set is in the set system $\CF$.
The number of sampled sets is selected to be sufficiently large to ensure a constant success probability.
It can be shown (though not formally used by our proofs) that the number of sampled sets of size $\beta k$ should be $\approx \exp\left(n\cdot \xi_{\beta}\left(\frac{k}{n}\right) \right)$ where $\xi_{\beta}$ is defined by
\begin{equation}
	\label{eq:xi_def}
	\xi_{\beta}(\kappa) \coloneqq -\beta\cdot \entropy\left(\frac{1}{\beta}\right) \cdot \kappa +\entropy(\kappa).
\end{equation}
for all $0\leq \kappa\leq 1$.
It can also be easily verified that
\begin{equation}
	\label{eq:xi_to_g}
	\xi_{\beta}(\kappa) = g_{\alpha,\beta,c}(\kappa,\beta\cdot \kappa)
\end{equation}
for all $\alpha,\beta,c\geq 1$ and $0\leq \kappa\leq \frac{1}{\beta}$.
We use the following property of $\xi_{\beta}$.

\begin{lemma}
	\label{lem:xi_max}
	For all $\beta > 1$ it holds that $\max_{0\leq\kappa\leq \frac{1}{\beta}} \xi_{\beta}(\kappa) = \ln\left(\brute(\beta)\right)$.
\end{lemma}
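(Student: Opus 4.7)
The plan is to directly compute the unique maximum of the strictly concave function $\xi_{\beta}$ on $[0,1/\beta]$ and verify the resulting value equals $\ln(\brute(\beta))$. Since $\entropy$ is strictly concave, $\xi_{\beta}(\kappa) = -\beta \cdot \entropy(1/\beta) \cdot \kappa + \entropy(\kappa)$ is strictly concave in $\kappa$, so it has a unique maximizer on $[0,1/\beta]$. I would compute $\xi_{\beta}'(\kappa) = -\beta \entropy(1/\beta) + \ln\!\bigl(\tfrac{1-\kappa}{\kappa}\bigr)$, set it to zero, and solve to get the candidate critical point
\[
\kappa^* = \frac{c}{1+c}, \qquad \text{where } c \coloneqq \exp\!\bigl(-\beta\cdot \entropy(1/\beta)\bigr),
\]
so that $\brute(\beta) = 1+c$ by definition.

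Next I would verify that $\kappa^*$ lies in the interior of the admissible interval $[0,1/\beta]$. The lower bound $\kappa^*>0$ is immediate. For the upper bound $\kappa^* \leq 1/\beta$, a short manipulation using the identity
\[
\beta \cdot \entropy(1/\beta) = \ln \beta + (\beta-1)\cdot \ln\!\bigl(\tfrac{\beta}{\beta-1}\bigr)
\]
gives $c = \frac{1}{\beta}\cdot\bigl(\tfrac{\beta-1}{\beta}\bigr)^{\beta-1}$, hence $c(\beta-1) = (1-1/\beta)^\beta \leq 1$, which is equivalent to $\kappa^* \leq 1/\beta$. I expect this monotonicity check to be the only place any real calculation is needed; conceptually it says the candidate critical point always lands inside the admissible range so that the maximum is attained in the interior.

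Finally, I would evaluate $\xi_{\beta}$ at $\kappa^*$. Using $\kappa^* = c/(1+c)$ and $1-\kappa^* = 1/(1+c)$, a straightforward entropy computation yields
\[
\entropy(\kappa^*) = -\frac{c}{1+c}\ln c + \ln(1+c) = \frac{c}{1+c}\cdot \beta \cdot \entropy(1/\beta) + \ln(1+c).
\]
Substituting back gives
\[
\xi_{\beta}(\kappa^*) = -\beta\cdot\entropy(1/\beta)\cdot\frac{c}{1+c} + \frac{c}{1+c}\cdot \beta\cdot \entropy(1/\beta) + \ln(1+c) = \ln(1+c) = \ln(\brute(\beta)),
\]
completing the proof. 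No step poses a genuine obstacle; the only mildly technical point is the inequality $(1-1/\beta)^\beta \leq 1$ used to ensure the critical point lies in $[0,1/\beta]$.
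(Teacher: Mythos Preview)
Your proof is correct and follows essentially the same approach as the paper: both compute the derivative of the concave function $\xi_{\beta}$, locate the critical point, and evaluate there. The only difference is that the paper verifies the maximum lies in the interior by observing $\xi_{\beta}(0)=\xi_{\beta}(1/\beta)=0$ (so by concavity the maximum must be attained strictly inside), whereas you solve explicitly for $\kappa^*=c/(1+c)$ and check $\kappa^*<1/\beta$ via the inequality $(1-1/\beta)^\beta<1$; the paper's route is a bit slicker but yours is equally valid.
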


\begin{proof}
	The expression $-\beta \cdot \entropy\left(\frac{1}{\beta}\right)\cdot \kappa$ is a linear function of $\kappa$, and $\entropy(\kappa)$ is a concave function of $\kappa$.
	So $\xi_{\beta}$ is a concave function.
	Also,
	$$\xi_{\beta}(0)= -\beta\cdot \entropy\left(\frac{1}{\beta}\right) \cdot 0 +\entropy(0)= 0$$
	and
	$$\xi_{\beta}\left(\frac{1}{\beta}\right)= -\beta\cdot \entropy\left(\frac{1}{\beta}\right) \cdot \frac{1}{\beta} +\entropy\left(\frac{1}{\beta}\right)= 0.$$
	Thus, $\xi_{\beta}$ has a maximum in $\left(0,\frac{1}{\beta} \right)$.
	Let $\kappa^*\in\left(0,\frac{1}{\beta} \right)$  be such a maximum and let $\xi'_{\beta}$ be the derivative of $\xi_{\beta}$.
	Then $\xi'_{\beta}(\kappa^*) = 0$.

	Using basic differentiation rules we have
	$$\xi'_{\beta}(\kappa) ~=~ -\beta\cdot \entropy\left(\frac{1}{\beta}\right)  + \ln\left(\frac{1-\kappa}{\kappa}\right).$$
	It follows that
	\begin{equation}
		\label{eq:kappa_star}
		\ln(1-\kappa^*) - \ln(\kappa^*)~=~\ln\left(\frac{1-\kappa^*}{\kappa^*}\right) ~=~ \beta\cdot \entropy\left(\frac{1}{\beta}\right).
	\end{equation}
	Hence,
	$$
	\begin{aligned}
		\max_{0\leq\kappa\leq \frac{1}{\beta}} \xi_{\beta}(\kappa) ~&=~ \xi_{\beta}(\kappa^*)\\
		&=~ -\beta\cdot \entropy\left(\frac{1}{\beta}\right) \cdot \kappa^* +\entropy(\kappa^*)\\
		&=~ -\kappa^*\cdot \left( \ln(1-\kappa^*) - \ln(\kappa^*)\right)  - \kappa^*\cdot \ln\left(\kappa^*\right) -(1-\kappa^*)\ln\left(1-\kappa^*\right) \\
		&=~ -\ln\left(1-\kappa^*\right) \\
		&=~ \ln\left( \frac{1-\kappa^*+\kappa^*}{1-\kappa^*}\right)\\
		&=~ \ln\left( 1+\frac{\kappa^*}{1-\kappa^*}\right)\\
		&=~ \ln\left( 1+\exp\left( - \beta\cdot \entropy\left(\frac{1}{\beta}\right)\right)\right).
	\end{aligned}
	$$
	The third equality follows from \eqref{eq:kappa_star} and the definition of $\entropy$.
	The forth, fifth and sixth equalities are simple re-arrangements of the terms.
	The seventh equality uses \eqref{eq:kappa_star} once more.
	Recall that $\brute(\beta)= 1+\exp\left(-\beta\cdot \entropy\left( \frac{1}{\beta}\right) \right)$, so $\max_{0\leq\kappa\leq \frac{1}{\beta}} \xi_{\beta}(\kappa) = \ln\left(\brute(\beta)\right)$.
\end{proof}

We use \Cref{lem:xi_max} in the proof of the following lemma.

\begin{lemma}
	\label{lem:amls_vs_brute}
	Let $\alpha,c\geq 1$ and $\beta>1$. Then $\amlsbound(\alpha,c,\beta) <\brute(\beta)$.
\end{lemma}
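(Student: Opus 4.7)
The plan is to exploit the identity already recorded as \eqref{eq:xi_to_g}, namely $\xi_\beta(\kappa) = g_{\alpha,\beta,c}(\kappa,\beta\kappa)$. This says that $\xi_\beta(\kappa)$ is simply the value of $g_{\alpha,\beta,c}(\kappa,\cdot)$ at the right endpoint $\tau=\beta\kappa$ of the interval over which we minimize when forming $g^*_{\alpha,\beta,c}(\kappa)$. Consequently the trivial inequality $g^*_{\alpha,\beta,c}(\kappa)\leq \xi_\beta(\kappa)$ holds for every $\kappa\in[0,\tfrac{1}{\beta}]$, and taking a maximum and invoking \Cref{lem:xi_max} already yields the weak bound $\amlsbound(\alpha,c,\beta)\leq \brute(\beta)$. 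The entire content of the lemma is therefore to upgrade this to a strict inequality.

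The strict inequality will come from \Cref{lem:g_convex_by_tau}, which asserts precisely that for $0<\kappa<\tfrac{1}{\beta}$ the convex function $\tau\mapsto g_{\alpha,\beta,c}(\kappa,\tau)$ attains its minimum on $[M_{\alpha,\beta}(\kappa),\beta\kappa]$ at a value \emph{strictly} smaller than $g_{\alpha,\beta,c}(\kappa,\beta\kappa)$. Combined with \eqref{eq:xi_to_g}, this gives the pointwise strict inequality $g^*_{\alpha,\beta,c}(\kappa)<\xi_\beta(\kappa)$ for every interior $\kappa\in(0,\tfrac{1}{\beta})$.

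To conclude, I would pick a maximizer $\kappa_0\in[0,\tfrac{1}{\beta}]$ of $g^*_{\alpha,\beta,c}$ (which exists since $g^*_{\alpha,\beta,c}$ is continuous on a compact interval) and split into cases. If $\kappa_0\in(0,\tfrac{1}{\beta})$, then
\[
\ln\amlsbound(\alpha,c,\beta) \;=\; g^*_{\alpha,\beta,c}(\kappa_0) \;<\; \xi_\beta(\kappa_0) \;\leq\; \max_{\kappa\in[0,1/\beta]}\xi_\beta(\kappa) \;=\; \ln\brute(\beta),
\]
where the last equality is \Cref{lem:xi_max}. If instead $\kappa_0\in\{0,\tfrac{1}{\beta}\}$, then a direct computation from the definitions gives $g_{\alpha,\beta,c}(0,0)=0$ and $g_{\alpha,\beta,c}(1/\beta,1)=0$ (at $\kappa=1/\beta,\tau=1$ we have $\gamma=1/\beta$ and $\delta=1/\alpha$), so combined with $g_{\alpha,\beta,c}\geq 0$ from \Cref{lem:g_nonneg} we get $g^*_{\alpha,\beta,c}(\kappa_0)=0<\ln\brute(\beta)$, the final inequality holding because $\brute(\beta)>1$ for $\beta>1$.

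I do not expect any serious obstacle: once the two ingredients (the strict inequality in \Cref{lem:g_convex_by_tau} and the identification of $\brute(\beta)$ as $\max\xi_\beta$ in \Cref{lem:xi_max}) are in hand, the only nontrivial step is the boundary case, which is handled by a short explicit computation. The only point where one has to be mildly careful is to use that the maximum defining $\amlsbound$ is actually \emph{attained}, so that the pointwise strict inequality lifts to strict inequality of maxima; this is immediate from continuity of $g^*_{\alpha,\beta,c}$ on the compact interval $[0,\tfrac{1}{\beta}]$.
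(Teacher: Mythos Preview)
Your proposal is correct and follows essentially the same approach as the paper's proof: both pick a maximizer $\kappa$ of $g^*_{\alpha,\beta,c}$, handle the boundary cases $\kappa\in\{0,\tfrac{1}{\beta}\}$ by directly computing $g^*_{\alpha,\beta,c}(\kappa)=0<\ln\brute(\beta)$, and for interior $\kappa$ combine the strict inequality from \Cref{lem:g_convex_by_tau} with \eqref{eq:xi_to_g} and \Cref{lem:xi_max}. Your treatment of the boundary is slightly more explicit (invoking \Cref{lem:g_nonneg} to pin down $g^*=0$), but the argument is the same.
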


\begin{proof}
	There exists $\kappa \in \left[0,\frac{1}{\beta}\right]$ such that $\exp\left( g^*_{\alpha,\beta,c}(\kappa) \right)= \amlsbound(\alpha,c,\beta)$ (see \eqref{eq:amls_via_gstar}).
	Consider the following cases.
	\begin{itemize}
		\item If $\kappa=0$ or $\kappa=\frac{1}{\beta}$, it can be easily verified that $g^*_{\alpha,\beta,c}(\kappa)=0$.
			So
			$$\amlsbound(\alpha,c,\beta) ~=~ \exp\left( g^*_{\alpha,\beta,c}(\kappa) \right) ~=~ 1~ <~ \brute(\beta).$$
		\item If $0<\kappa<\frac{1}{\beta}$, then, by \Cref{lem:g_convex_by_tau}, we have
			$$
			\begin{aligned}
				\amlsbound(\alpha,c,\beta)  ~&=~ \exp\left( g^*_{\alpha,\beta,c}(\kappa) \right) \\
				&<~ \exp\left( g_{\alpha,\beta,c}(\kappa,\beta\cdot \kappa)\right) \\
				&=~ \exp\left( \xi_{\beta}(\kappa)\right)\\
				&\leq~ \exp\left(\max_{0\leq\kappa'\leq \frac{1}{\beta}} \xi_{\beta}(\kappa') \right) \\
				&=~ \exp\left( \ln\left(\brute(\beta)\right)\right) \\
				&=~ \brute(\beta).
			\end{aligned}
			$$
			The second equality follows from \eqref{eq:xi_to_g} and the forth equality follows from \Cref{lem:xi_max}.\qedhere
	\end{itemize}
\end{proof}
 
The next lemma provides the missing ingredient towards the proof of \Cref{thm:amls_smaller_brute}.

\begin{lemma}
	\label{lem:amls_limit}
	Let $\alpha\geq 1$ and $\beta>1$.
	Then $\lim_{c\rightarrow \infty} \amlsbound(\alpha,c,\beta) = \brute(\beta)$.
\end{lemma}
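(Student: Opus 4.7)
The plan is to establish the limit by combining the already-proven strict upper bound $\amlsbound(\alpha,c,\beta) < \brute(\beta)$ from \Cref{lem:amls_vs_brute} with a matching lower bound that tends to $\brute(\beta)$ as $c\to\infty$. The guiding intuition is that the term $\frac{\beta\kappa-\tau}{\alpha}\ln c$ in $g_{\alpha,\beta,c}(\kappa,\tau)$ becomes a very steep penalty that forces the inner minimizer to hug the endpoint $\tau = \beta\kappa$; and at that endpoint \eqref{eq:xi_to_g} collapses $g_{\alpha,\beta,c}$ to $\xi_\beta(\kappa)$, which has maximum value $\ln\brute(\beta)$ by \Cref{lem:xi_max}.

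First I would fix $\kappa^* \in (0,\frac{1}{\beta})$ with $\xi_\beta(\kappa^*) = \ln\brute(\beta)$; such a $\kappa^*$ exists in the open interval because $\xi_\beta(0)=\xi_\beta(1/\beta)=0$ while $\ln\brute(\beta) > 0$. A brief case analysis on \eqref{eq:Mdef} (separating $\alpha<\beta$, $\alpha=\beta$, $\alpha>\beta$) shows $M_{\alpha,\beta}(\kappa^*) < \beta\kappa^*$, so the minimization interval $[M_{\alpha,\beta}(\kappa^*),\beta\kappa^*]$ has positive length. By \eqref{eq:amls_via_gstar} it suffices to prove
\begin{equation*}
	\liminf_{c\to\infty}\, g^*_{\alpha,\beta,c}(\kappa^*) \geq \xi_\beta(\kappa^*).
\end{equation*}

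Next I would write the decomposition
\begin{equation*}
	g_{\alpha,\beta,c}(\kappa^*,\tau) \,=\, \tfrac{\beta\kappa^* - \tau}{\alpha}\ln c \,+\, g_{\alpha,\beta,1}(\kappa^*,\tau),
\end{equation*}
where the second summand is independent of $c$ and continuous on the compact interval $[M_{\alpha,\beta}(\kappa^*),\beta\kappa^*]$. Continuity at the left endpoint is immediate; continuity at the right endpoint $\tau=\beta\kappa^*$ follows because $\delta_{\alpha,\beta}(\kappa^*,\tau)\to 0$ (with $\entropy$ continuous at $0$) and $\tau>0$ throughout, keeping $\gamma_{\alpha,\beta}$ well defined. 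In particular $g_{\alpha,\beta,1}(\kappa^*,\beta\kappa^*) = \xi_\beta(\kappa^*)$ by \eqref{eq:xi_to_g}. Given $\varepsilon>0$, I would pick $\delta\in(0,\beta\kappa^*-M_{\alpha,\beta}(\kappa^*))$ small enough so that the continuity of $g_{\alpha,\beta,1}(\kappa^*,\cdot)$ at $\beta\kappa^*$ guarantees $g_{\alpha,\beta,1}(\kappa^*,\tau) \geq \xi_\beta(\kappa^*)-\varepsilon$ for all $\tau\in[\beta\kappa^*-\delta,\beta\kappa^*]$. On this piece the first summand is nonnegative (since $c\geq 1$ and $\tau\leq \beta\kappa^*$), so $g_{\alpha,\beta,c}(\kappa^*,\tau)\geq \xi_\beta(\kappa^*)-\varepsilon$ independently of~$c$. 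On the remaining piece $[M_{\alpha,\beta}(\kappa^*),\beta\kappa^*-\delta]$, letting $C$ be a bound on $|g_{\alpha,\beta,1}(\kappa^*,\cdot)|$ on the full compact interval gives
\begin{equation*}
	g_{\alpha,\beta,c}(\kappa^*,\tau) \,\geq\, \tfrac{\delta}{\alpha}\ln c \,-\, C,
\end{equation*}
which exceeds $\xi_\beta(\kappa^*)$ for all $c$ large enough. Taking the minimum over $\tau$ yields $g^*_{\alpha,\beta,c}(\kappa^*) \geq \xi_\beta(\kappa^*)-\varepsilon$ for such $c$.

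Combining the resulting lower bound $\amlsbound(\alpha,c,\beta) \geq \exp(g^*_{\alpha,\beta,c}(\kappa^*)) \geq \brute(\beta)\cdot e^{-\varepsilon}$ with the strict upper bound from \Cref{lem:amls_vs_brute} and letting $\varepsilon\to 0$ proves \Cref{lem:amls_limit}. The second part of \Cref{thm:amls_smaller_brute} then follows from \Cref{cor:amls_is_the_best}, which identifies $\bestbound(\alpha,c,\beta)$ with $\amlsbound(\alpha,c,\beta)$, while the first part is immediate from \Cref{lem:amls_vs_brute} (using that $\bestbound(\CL,\beta)\leq \bestbound(\alpha,c,\beta)$ for any $(\alpha,c)\in\CL$). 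I do not anticipate a serious obstacle; the one delicate point is verifying continuity of $g_{\alpha,\beta,1}(\kappa^*,\cdot)$ at $\tau=\beta\kappa^*$, where $\delta_{\alpha,\beta}$ degenerates to $0$, but the convention $\entropy(0)=0$ and the prefactor $(1-\tau)$ remaining bounded handle this cleanly.
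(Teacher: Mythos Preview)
Your argument is correct and shares the paper's core insight: the penalty $\tfrac{\beta\kappa-\tau}{\alpha}\ln c$ forces the inner minimizer toward $\tau=\beta\kappa$, where $g_{\alpha,\beta,c}$ reduces to $\xi_\beta$ via \eqref{eq:xi_to_g}. The execution differs. The paper extracts a sequence $c_i\to\infty$ realizing the $\liminf$, applies Bolzano--Weierstrass to the corresponding minimizers $\tau_i$, and argues by contradiction that any subsequential limit $\tau^*$ must equal $\beta\kappa$ (else the $\ln c$ term blows up). You instead split off the $c$-independent part $g_{\alpha,\beta,1}(\kappa^*,\cdot)$, use its continuity at $\beta\kappa^*$ to handle a $\delta$-neighborhood of the right endpoint, and use boundedness on the complement together with the linear growth in $\ln c$ to handle the rest. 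Your route is slightly more elementary (no compactness argument needed) and yields an explicit threshold for $c$; the paper's route is a bit more concise once Bolzano--Weierstrass is invoked. One small remark: your parenthetical ``$\tau>0$ throughout'' is not literally true when $\alpha=\beta$ (then $M_{\alpha,\beta}(\kappa^*)=0$), but this is harmless because $\gamma_{\alpha,\beta}(\kappa^*,\tau)\equiv \tfrac{1}{\alpha}$ in that case and the term $\tau\,\entropy(\gamma)$ is continuous at $\tau=0$ regardless.
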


\begin{proof}
	By \Cref{lem:xi_max} there is some $0 \leq \kappa \leq \frac{1}{\beta}$ such that $\xi_{\beta}(\kappa) =\ln\left(\brute(\beta)\right)$.
	We define $L \coloneqq \liminf_{c\rightarrow \infty } g^*_{\alpha,c,\beta}(\kappa)$.
	By \Cref{lem:amls_vs_brute}, for every $c\geq 1$, it holds that $g^*_{\alpha,c,\beta} (\kappa)\leq \ln(\amlsbound(\alpha,c,\beta))\leq \ln\left(\brute(\beta)\right))$.
	So $L \leq \ln\left(\brute(\beta)\right))$.

	There exists  a strictly increasing sequence $(c_i)_{i=1}^{\infty}$ such that $L = \lim_{i\rightarrow \infty } g_{\alpha,c_i,\beta}(\kappa)$.
	For every $i\in \mathbb{N}$ define $\tau_i \coloneqq \argmin_{M^*_{\alpha,\beta}(\kappa)\leq \tau \leq \beta\kappa} g_{\alpha,\beta,c_i} (\kappa,\tau)$.
	Recall that the Bolzano-Weierstrass Theorem asserts that every bounded sequence has a convergent subsequence (see, e.g., in \cite[Theorem 2.4.1]{LafferriereLM22}).
	By the Bolzano-Weierstrass Theorem, as $M^*_{\alpha,\beta}(\kappa)\leq \tau_i \leq \beta\kappa$ for all $i\in \NN$, there exists a monotone sequence of indices $\left(i_j\right)_{j=0}^{\infty}$ such that $\tau_{i_j} \xrightarrow[j\rightarrow \infty]{} \tau^*$ for some $M^*_{\alpha,\beta}(\kappa)\leq \tau^* \leq \beta\kappa$.
	Thus, we have
	\begin{equation}
		\label{eq:liminf_L}
		\begin{aligned}
			L ~&=~ \lim_{j\rightarrow \infty} g^*_{\alpha,c_{i_j}, \beta}(\kappa )\\
			&=~ \lim_{j\rightarrow \infty} g_{\alpha,c_{i_j}, \beta}(\kappa,\tau_{i_j})\\
			&=~ \lim_{j\rightarrow \infty}  \left( \frac{\beta \kappa - \tau_{i_j}}{\alpha} \ln c_{i_j} - \tau_{i_j}\cdot \entropy\left(\gamma_{\alpha,\beta}(\kappa,\tau_{i_j})\right) -(1-\tau_{i_j})\cdot \entropy\left(\delta_{\alpha,\beta} (\kappa,\tau_{i_j})\right) + \entropy\left(\kappa\right)\right),
		\end{aligned}
	\end{equation}
	where second equality uses the definition of $g^*_{\abc}$~\eqref{eq:gstar_def} and the third equality uses the definition of $g_{\abc}$~\eqref{eq:g_def}.
 
	Assume towards contradiction that $\tau^* < \beta\kappa$.
	Then $ \frac{\beta \kappa - \tau_{i_j}}{\alpha} \ln c_{i_j} \xrightarrow[j\rightarrow \infty]{} \infty$.
	Also, since the entropy function and $\tau_{i_j}$ are both bounded, it follows that the expression $-(1-\tau_{i_j})\cdot \entropy\left(\delta_{\alpha,\beta} (\kappa,\tau_{i_j})\right) + \entropy\left(\kappa\right)$ is bounded.
	Thus, by \eqref{eq:liminf_L}, we have $L=\infty$, contradicting $L\leq \ln \left(\brute(\beta)\right)$.
	So $\tau^*=\beta \kappa$.
 
	Using $\tau^*=\beta\cdot\kappa$ and $\tau_{i_j} \xrightarrow[j\rightarrow \infty]{} \tau^*$, we can simplify the limit in \eqref{eq:liminf_L} and obtain
	\[
 	\begin{aligned}
 		L ~&=~ \lim_{j\rightarrow \infty}  \left( \frac{\beta \kappa - \tau_{i_j}}{\alpha} \ln c_{i_j} - \tau_{i_j}\cdot \entropy\left(\gamma_{\alpha,\beta}(\kappa,\tau_{i_j})\right) -(1-\tau_{i_j})\cdot \entropy\left(\delta_{\alpha,\beta} (\kappa,\tau_{i_j})\right) + \entropy\left(\kappa\right)\right).\\
 		&\geq~ -\beta \cdot \kappa\cdot \entropy\left(\gamma_{\alpha,\beta}(\kappa,\beta \cdot \kappa)\right) -(1-\beta \cdot \kappa)\cdot \entropy\left(\delta_{\alpha,\beta} (\kappa,\beta \cdot \kappa)\right) + \entropy\left(\kappa\right)\\
 		&=~ -\kappa \cdot \beta\cdot  \entropy\left(\frac{1}{\beta}\right)-(1-\beta \cdot \kappa)\cdot \entropy\left(0\right) + \entropy\left(\kappa\right)\\
 		&=~ \xi_{\beta}(\kappa) ~=~ \ln(\brute(\beta)).
 	\end{aligned} 
	\]
	Therefore,
	$$\liminf_{c\rightarrow \infty } \ln\left( \amlsbound(\alpha,c,\beta) \right) ~\geq~\liminf_{c\rightarrow \infty} g^*_{\abc}(\kappa) ~\geq~ \ln\left(\brute(\beta)\right),$$
	where the first inequality follows from \eqref{eq:amls_via_gstar}.
	Since $\ln$ is continuous the last inequality implies
	\begin{equation}
		\label{eq:liminf_amls}
		\liminf_{c\rightarrow \infty} \amlsbound(\alpha,c,\beta) ~\geq~ \brute(\beta),
	\end{equation}
	Also, by \Cref{lem:amls_vs_brute}, we have
	\begin{equation}
		\label{eq:limsup_amls}
		\limsup_{c\rightarrow \infty } \amlsbound(\alpha,c,\beta) ~\leq~ \limsup_{c\rightarrow \infty } \brute(\beta) ~=~ \brute(\beta).
	\end{equation}
	Combining \eqref{eq:liminf_amls} and \eqref{eq:limsup_amls} we get $\lim_{c\rightarrow \infty} \amlsbound(\alpha,c,\beta) =\brute(\beta)$.
\end{proof}

We can now proceed to the proof of \Cref{thm:amls_smaller_brute}.

\begin{proof}[Proof of \Cref{thm:amls_smaller_brute}]
	Let $\beta>1$ and let $\CL$ be a specification list.
	Pick an arbitrary element $(\alpha^*,c^*) \in \CL$ (recall that a specification list is always non-empty) and observe that $\alpha^*,c^*\geq 1$.
	Now, we have
	\begin{align*}
		\bestbound(\CL,\beta) ~&=~ \amlsbound(\CL,\beta)\\
		&=~ \exp\left( \max_{0\leq \kappa\leq \frac{1}{\beta}} \min_{(\alpha,c)\in \CL} g^*_{\abc}(\kappa))\right)\\
		&\leq~ \exp\left( \max_{0\leq \kappa\leq \frac{1}{\beta}}  g^*_{\alpha^*,c^*,\beta }(\kappa)\right)\\
		&=~ \amlsbound(\alpha^*,c^*,\beta)\\
		&<~ \brute(\beta),
	\end{align*}
	where the first equality follows from \Cref{cor:amls_is_the_best}, the second and third equalities follows from~\eqref{eq:amls_via_gstar}, and the last inequality follows from \Cref{lem:amls_vs_brute}.
	
	For the second part, let $\alpha \geq 1$.
	Then
	\[\lim_{c\rightarrow \infty} \bestbound(\alpha,c,\beta) ~=~ \lim_{c\rightarrow \infty} \amlsbound(\alpha,c,\beta) ~=~ \brute(\beta),\]
	where the first equality follows from \Cref{cor:amls_is_the_best}, and the second equality follows from \Cref{lem:amls_limit}.
\end{proof}

\section{Monotonicity Properties}
\label{sec:monotonicity_amls}
In this section we prove that $\amlsbound(\alpha,c,\beta)$ is strictly monotone in $\alpha$ in the interval $\left[1,\beta\right]$, and use this result to prove \Cref{lem:better_than_esa}.

\begin{lemma}
	\label{lem:monotone_in_alpha}
	For every $\beta\geq  \alpha'>\alpha \geq 1$ and every $c > 1$ it holds that $\amlsbound(\alpha,c,\beta) < \amlsbound(\alpha',c,\beta)$.
\end{lemma}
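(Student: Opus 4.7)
The plan is to prove the stronger pointwise fact that, for every $\kappa \in (0,1/\beta)$ and every $c>1$, the map $\alpha \mapsto g^*_{\alpha,\beta,c}(\kappa)$ is strictly increasing on $[1,\beta]$. Together with the observation that $\max_{0 \le \kappa \le 1/\beta} g^*_{\alpha,\beta,c}(\kappa)$ is attained at some interior point $\kappa^*_\alpha \in (0,1/\beta)$, this immediately gives
\begin{equation*}
\max_\kappa g^*_{\alpha,\beta,c}(\kappa) = g^*_{\alpha,\beta,c}(\kappa^*_\alpha) < g^*_{\alpha',\beta,c}(\kappa^*_\alpha) \le \max_\kappa g^*_{\alpha',\beta,c}(\kappa),
\end{equation*}
and hence $\amlsbound(\alpha,c,\beta) < \amlsbound(\alpha',c,\beta)$ via \eqref{eq:amls_via_gstar}. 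To see that the maximum over $\kappa$ is attained in $(0,1/\beta)$, the identity $\kappa = \tau\gamma_{\alpha,\beta}(\kappa,\tau) + (1-\tau)\delta_{\alpha,\beta}(\kappa,\tau)$ and concavity of $H$ yield $g_{\alpha,\beta,c}(\kappa,\tau) \ge \frac{\beta\kappa-\tau}{\alpha}\ln c > 0$ for every $\tau < \beta\kappa$ when $c>1$, and at $\tau = \beta\kappa$ the value is $g_{\alpha,\beta,c}(\kappa,\beta\kappa) = \xi_\beta(\kappa) > 0$ on $(0,1/\beta)$ (as in the analysis of $\xi_\beta$ preceding \Cref{lem:xi_max}). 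Since $g^*_{\alpha,\beta,c}(0) = g^*_{\alpha,\beta,c}(1/\beta) = 0$, any maximizer lies strictly inside.

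For the strict monotonicity in $\alpha$, fix $\kappa \in (0,1/\beta)$, $c>1$, and $\alpha \in [1,\beta)$. The triple $(\alpha,\beta,c)$ is simple, so \Cref{lem:minimizer_tau} gives a unique interior minimizer $\tau^*(\alpha) := \tau^*_{\alpha,\beta,c}(\kappa) \in (M_{\alpha,\beta}(\kappa),\beta\kappa)$ with $\gpart[\alpha,\beta,c](\kappa,\tau^*(\alpha)) = 0$. Applying the implicit function theorem exactly as in the proof of \Cref{lem:tstar_differentiable} but with $\alpha$ (instead of $\kappa$) as the varying parameter, and using $\gpartt > 0$ from \Cref{lem:gk_convex_in_tau}, shows that $\alpha \mapsto \tau^*(\alpha)$ is continuously differentiable on $(1,\beta)$. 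The envelope theorem then yields
\begin{equation*}
\frac{d}{d\alpha} g^*_{\alpha,\beta,c}(\kappa) \;=\; \frac{\partial g_{\alpha,\beta,c}(\kappa,\tau)}{\partial \alpha}\bigg|_{\tau = \tau^*(\alpha)}.
\end{equation*}
A direct calculation, analogous to those of \Cref{sec:g_par_der}, using $H'(x) = \ln\frac{1-x}{x}$ together with $\frac{\partial \gamma}{\partial \alpha} = \frac{\beta\kappa-\tau}{\alpha^2\tau}$ and $\frac{\partial \delta}{\partial \alpha} = -\frac{\beta\kappa-\tau}{\alpha^2(1-\tau)}$, gives
\begin{equation*}
\frac{\partial g_{\alpha,\beta,c}(\kappa,\tau)}{\partial \alpha} \;=\; -\frac{\beta\kappa-\tau}{\alpha^2}\bigl(\ln c + H'(\gamma) - H'(\delta)\bigr).
\end{equation*}
Rewriting the first-order condition from \Cref{lem:deriv_g_by_tau} as $\ln c = \ln\frac{\gamma^*}{\delta^*} + (\alpha-1)\ln\frac{1-\gamma^*}{1-\delta^*}$ (where $\gamma^* = \gamma_{\alpha,\beta}(\kappa,\tau^*(\alpha))$ and similarly for $\delta^*$) and substituting reduces the bracketed expression to $\alpha \ln\frac{1-\gamma^*}{1-\delta^*}$, so
\begin{equation*}
\frac{d}{d\alpha} g^*_{\alpha,\beta,c}(\kappa) \;=\; -\frac{\beta\kappa-\tau^*(\alpha)}{\alpha}\ln\frac{1-\gamma^*}{1-\delta^*}.
\end{equation*}
By \Cref{lem:gamma_>_delta} we have $\gamma^* > \delta^*$ (strictly), so $\ln\frac{1-\gamma^*}{1-\delta^*} < 0$, while $\beta\kappa-\tau^*(\alpha) > 0$ since $\tau^*(\alpha) < \beta\kappa$. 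Therefore the derivative is strictly positive on $(1,\beta)$.

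Integrating over $[\alpha,\alpha']$ for $1 \le \alpha < \alpha' < \beta$ yields $g^*_{\alpha,\beta,c}(\kappa) < g^*_{\alpha',\beta,c}(\kappa)$, giving strict monotonicity on $[1,\beta)$. The remaining boundary case $\alpha' = \beta$ is handled by continuity: the map $\alpha \mapsto g^*_{\alpha,\beta,c}(\kappa)$ is continuous on $[1,\beta]$ (by continuity of $g_{\alpha,\beta,c}$ in all its arguments together with the continuous dependence of the compact interval $[M_{\alpha,\beta}(\kappa),\beta\kappa]$ on $\alpha$, via Berge's maximum theorem), so picking any $\alpha_0 \in (\alpha,\beta)$ gives $g^*_{\alpha,\beta,c}(\kappa) < g^*_{\alpha_0,\beta,c}(\kappa) \le g^*_{\beta,\beta,c}(\kappa)$, where the last inequality uses continuity and the already-established monotonicity on $[1,\beta)$. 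The main obstacle is the clean application of the implicit function theorem and the envelope theorem in the parameter $\alpha$, which mirrors \Cref{lem:tstar_differentiable} and is essentially the only nontrivial ingredient; the rest is a mechanical derivative computation followed by the sign determination via \Cref{lem:gamma_>_delta}.
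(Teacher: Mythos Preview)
Your proof is correct and takes a genuinely different route from the paper. Both arguments reduce to the same pointwise fact---that $\alpha \mapsto g^*_{\alpha,\beta,c}(\kappa)$ is strictly increasing on $[1,\beta]$ for each $\kappa \in (0,1/\beta)$---and both then observe that the maximum over $\kappa$ is attained in the interior (the paper does this via the same Jensen argument you use). The difference is in how the pointwise monotonicity is established.

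The paper introduces an auxiliary two-variable formulation: it writes $g^*_{\alpha,\beta,c}(\kappa) = \min_{(\tau,y)\in D_{\alpha,\beta}(\kappa)} \tilde g_\kappa(\tau,y)$ for a function $\tilde g_\kappa$ that does \emph{not} depend on $\alpha$, with all the $\alpha$-dependence pushed into the feasible region $D_{\alpha,\beta}(\kappa)$ (Lemma~\ref{lem:equiv_g}). One then checks that the minimizer $(\tau',y')$ for $\alpha'$ lies strictly inside $D_{\alpha,\beta}(\kappa)$ but off the curve $y = X_{\alpha,\beta}(\kappa,\tau)$ on which the $\alpha$-minimum must lie, yielding the strict inequality (Lemma~\ref{lem:pointwise_kappa}). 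This is a clean set-containment argument that avoids any differentiation in $\alpha$.

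Your approach is the direct calculus one: apply the implicit function theorem in the parameter $\alpha$ (exactly parallel to Lemma~\ref{lem:tstar_differentiable}), invoke the envelope theorem, and compute $\partial g/\partial \alpha$ explicitly. The key simplification---that substituting the first-order condition from Lemma~\ref{lem:deriv_g_by_tau} collapses $\ln c + H'(\gamma) - H'(\delta)$ to $\alpha\ln\frac{1-\gamma}{1-\delta}$---is a nice observation, and the sign then follows immediately from Lemma~\ref{lem:gamma_>_delta}. Your handling of the boundary case $\alpha'=\beta$ via Berge's theorem is correct. Your route is arguably more streamlined (no auxiliary reformulation), at the cost of leaning more heavily on the smooth-optimization machinery already developed in Section~\ref{sec:math_functions}; the paper's route is more self-contained and yields the two-variable characterization of Lemma~\ref{lem:equiv_g} as a byproduct, which may have independent interest.
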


The proof of \Cref{lem:monotone_in_alpha} is given towards the end of this section. 
We first use  \Cref{lem:monotone_in_alpha} to prove \Cref{lem:better_than_esa}. 

\begin{proof}[Proof of \Cref{lem:better_than_esa}]
	Let $\beta>\alpha \geq 1$ and $c>1$. Then
	$$\bestbound(\alpha,c,\beta) ~=~ \amlsbound(\alpha,c,\beta) ~<~ \amlsbound(\beta,c,\beta) ~=~ \bestbound(\beta,c,\beta) ~=~ \esaamlsbound(\beta,c).$$
	The first and second equalities follow from \Cref{cor:amls_is_the_best}.
	The inequality holds by \Cref{lem:monotone_in_alpha}, and the last equality follows form \Cref{lem:coincide_with_esa}.
\end{proof}

In order to prove \Cref{lem:monotone_in_alpha} we give an alternative formula for $g^*_{\alpha,\beta,c}$ \eqref{eq:gstar_def} as a solution for a continuous optimization problem in two variables.
Though this alternative formula uses a continuous optimization problem, it is inspired by an interpretation of the discrete analysis of \Cref{algo:intermediary,algo:final}.
The algorithm samples a $t$-element set $X\subseteq U$, and the analysis focuses on samples which satisfy $\abs{X\cap \OPT} \geq y$, for a carefully selected $y$.
Subsequently, the $\alpha$-extension oracle is invoked with the query $(X, k-y)$.
The algorithm optimally selects $y=\left(1-\frac{\beta}{\alpha}\right) \cdot k +\frac{t}{\alpha}$.
The analysis we present in this section leaves $y$ as an additional parameter to be optimized.
Though this seems to only yield a more involved formula, this formula turns out to be useful to determine the behavior of $\amlsbound(\alpha,c,\beta)$ as $\alpha$ changes.

For every $1\leq \alpha\leq \beta$ we define
\begin{equation}
	\label{eq:X_def}
	X_{\alpha, \beta}(\kappa,\tau)  ~\coloneqq \left( 1 - \frac{\beta}{\alpha} \right)\cdot \kappa + \frac{\tau}{\alpha}.
\end{equation}
Similarly, for every $1\leq \alpha\leq \beta$ and $\kappa \in \left[ 0,\frac{1}{\beta}\right]$ we define a set
\begin{equation}
	\label{eq:D_def}
	D_{\alpha, \beta}(\kappa) ~\coloneqq \big\{ (\tau,y) \in \mathbb{R}^{2} \bigmid 0 \leq \tau \leq \beta \cdot \kappa,~ \max\{0,X_{\alpha, \beta}(\kappa,\tau)\} \leq y \leq \min\{\kappa,\tau\}\big\}.
\end{equation}
Finally, for every $1\leq \alpha\leq \beta$ and $\kappa\in \left[0,\frac{1}{\beta}\right]$ we define the function
\begin{equation}
	\label{eq:gtilk_def}
	\gtilk[c] (\tau,y) \coloneqq (\kappa-y)\cdot \ln(c) - \tau\cdot \entropy\left(\frac{y}{\tau}\right) - (1-\tau)\cdot \entropy\left( \frac{\kappa - y}{1 - \tau}\right) + \entropy(\kappa).
\end{equation}
It can be easily verified that
\begin{equation}
	\label{eq:gs_vs_gt}
	g^*_{\abc}(\kappa,\tau) = \gtilk[c]\big( \tau, X_{\alpha,\beta}\left(\kappa,\tau) \right)
\end{equation}
unless $\tau\in \{0,1\}$.

The next lemma provides the alternative formula for $g^*_{\alpha,\beta,c}$.

\begin{lemma}\label{lem:equiv_g}
	Let $\beta \geq \alpha \geq 1$,  $c> 1$ and $\kappa \in \left(0,\frac{1}{\beta}\right)$. Then
	\begin{equation}
		\label{eq:alternative_gstar}
		\gst[\abc](\kappa )~=~ \min_{(\tau,y) \in D_{\alpha, \beta}(k)} \gtilk[c](\tau, y).
	\end{equation}
	Moreover, for all $(\tau,y) \in D_{\alpha, \beta}(\kappa)$
	such that $\gst[\abc](\kappa) = \gtilk[c](\tau, y)$, we have $y = X_{\alpha,\beta}(\kappa,\tau)$.
\end{lemma}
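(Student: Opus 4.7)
The plan is to prove the two inequalities $\gst[\abc](\kappa) \geq \min_{D_{\alpha,\beta}(\kappa)} \gtilk[c]$ and $\gst[\abc](\kappa) \leq \min_{D_{\alpha,\beta}(\kappa)} \gtilk[c]$ separately, and then obtain the moreover claim from the structure of the minimizers. The first inequality is easy: I would take $\tau^* \in [M_{\alpha,\beta}(\kappa), \beta\kappa]$ with $\gst[\abc](\kappa) = g_{\abc}(\kappa, \tau^*)$ and set $y^* = X_{\alpha,\beta}(\kappa, \tau^*)$. A short check confirms $(\tau^*, y^*) \in D_{\alpha,\beta}(\kappa)$ (noting that $\tau^* \geq M_{\alpha,\beta}(\kappa) > (\beta-\alpha)\kappa$ for $\alpha<\beta$ guarantees $X_{\alpha,\beta}(\kappa,\tau^*)\geq 0$), and identity \eqref{eq:gs_vs_gt} (which applies since $\tau^* \in (0,1)$) gives $\gtilk[c](\tau^*, y^*) = g_{\abc}(\kappa, \tau^*) = \gst[\abc](\kappa)$.

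For the reverse inequality I would fix $y$ and minimize $\gtilk[c](\cdot, y)$ over the admissible $\tau$-range. Differentiating in $\tau$ gives
\[
\frac{\partial \gtilk[c]}{\partial \tau}(\tau, y) = \ln\frac{\tau - y}{\tau} + \ln\frac{1 - \tau}{1 - \tau - \kappa + y},
\qquad
\frac{\partial^2 \gtilk[c]}{\partial \tau^2}(\tau, y) = \frac{y}{\tau(\tau - y)} + \frac{\kappa - y}{(1 - \tau)(1 - \tau - \kappa + y)} > 0,
\]
so $\gtilk[c](\cdot, y)$ is strictly convex with unique unconstrained minimizer $\tau = y/\kappa$. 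Direct inspection of the defining inequalities of $D_{\alpha,\beta}(\kappa)$ shows that for fixed $y \in [0,\kappa]$, the admissible $\tau$-range is the interval $[y, \tau_{\max}(y)]$, where $\tau_{\max}(y) \coloneqq \alpha y + (\beta - \alpha)\kappa$ is the unique $\tau$ satisfying $X_{\alpha,\beta}(\kappa, \tau) = y$; moreover a short calculation gives $y/\kappa \leq \tau_{\max}(y) \iff y \leq M_{\alpha,\beta}(\kappa)\cdot \kappa$.

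This splits the analysis into two cases. In \emph{Case A} ($y \leq M_{\alpha,\beta}(\kappa)\cdot\kappa$) the constrained minimum is at the interior critical point $\tau = y/\kappa$, and direct substitution yields the clean formula $\gtilk[c](y/\kappa, y) = (\kappa - y)\ln c$; minimizing further over $y \in [0, M_{\alpha,\beta}(\kappa)\cdot\kappa]$ gives the value $\kappa(1 - M_{\alpha,\beta}(\kappa))\ln c$, which by \Cref{lem:delta_gamma_extreme_tau} (giving $\gamma_{\alpha,\beta}(\kappa,M) = \delta_{\alpha,\beta}(\kappa,M) = \kappa$) equals $g_{\abc}(\kappa, M_{\alpha,\beta}(\kappa)) \geq \gst[\abc](\kappa)$. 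In \emph{Case B} ($y > M_{\alpha,\beta}(\kappa)\cdot\kappa$) the unconstrained minimizer lies to the right of the admissible interval, so by convexity the constrained minimum is at $\tau = \tau_{\max}(y)$, where $y = X_{\alpha,\beta}(\kappa, \tau_{\max}(y))$ by construction; identity \eqref{eq:gs_vs_gt} then gives $\gtilk[c](\tau_{\max}(y), y) = g_{\abc}(\kappa, \tau_{\max}(y))$, and as $y$ sweeps $[M_{\alpha,\beta}(\kappa)\cdot\kappa, \kappa]$ the value $\tau_{\max}(y)$ sweeps $[M_{\alpha,\beta}(\kappa), \beta\kappa]$, making the infimum equal to $\gst[\abc](\kappa)$. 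Both cases deliver the desired lower bound.

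The moreover claim falls out of this case analysis: any minimizer lies in Case B (where optimality forces $\tau = \tau_{\max}(y)$, i.e.,\ $y = X_{\alpha,\beta}(\kappa, \tau)$), or at the common endpoint $(M_{\alpha,\beta}(\kappa), M_{\alpha,\beta}(\kappa)\cdot\kappa)$ shared by the two cases, which also satisfies $y = X_{\alpha,\beta}(\kappa, \tau)$ by the identity $X_{\alpha,\beta}(\kappa, M_{\alpha,\beta}(\kappa)) = M_{\alpha,\beta}(\kappa)\cdot\kappa$. The main technical obstacle will be the bookkeeping around the case split, in particular correctly identifying the shape $[y, \tau_{\max}(y)]$ of the admissible $\tau$-interval, verifying the equivalence $y/\kappa \leq \tau_{\max}(y) \iff y \leq M_{\alpha,\beta}(\kappa)\cdot\kappa$, and tracking the Case A computation $\gtilk[c](y/\kappa,y)=(\kappa-y)\ln c$ through to its equality with $g_{\abc}(\kappa, M_{\alpha,\beta}(\kappa))$ at the relevant endpoint.
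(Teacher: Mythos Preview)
Your proposal is correct and follows essentially the same approach as the paper: both fix $y$, exploit strict convexity of $\gtilk[c](\cdot,y)$ with minimizer $\tau=y/\kappa$, identify the admissible $\tau$-interval as $[y,\tau_{\max}(y)]$ (the paper formalizes this as the equality $D_{\alpha,\beta}(\kappa)=E_{\alpha,\beta}(\kappa)$), and split on $y\lessgtr M_{\alpha,\beta}(\kappa)\cdot\kappa$ with the same Case A value $(\kappa-y)\ln c$ and Case B reduction to $g_{\abc}(\kappa,\tau_{\max}(y))$. The only presentational difference is that you verify the inequality $\gst[\abc](\kappa)\geq\min_{D}\gtilk[c]$ directly by exhibiting the point $(\tau^*,X_{\alpha,\beta}(\kappa,\tau^*))$, whereas the paper obtains both directions from the chain of equalities; note also that your citation of \Cref{lem:delta_gamma_extreme_tau} technically assumes $\alpha>1$, but the identity $\gamma(\kappa,M)=\delta(\kappa,M)=\kappa$ holds for $\alpha=1$ as well by the same computation.
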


\begin{proof}
	The proof uses an alternative representation of the set $D_{\alpha,\beta}(\kappa)$.
	We define
	\begin{equation}
		\label{eq:E_set_def}
		E_{\alpha, \beta}(\kappa) \coloneqq \left\{ (\tau,y) \in \mathbb{R}^{2} ~\middle|~ 0 \leq y \leq \kappa, ~y \leq \tau \leq \alpha \cdot \left( y - \left( 1 - \frac{\beta}{\alpha} \right) \cdot \kappa \right)  \right\}.
	\end{equation}

	\begin{claim}
		\label{claim:D_is_E}
		$E_{\alpha,\beta}(\kappa) = D_{\alpha,\beta}(\kappa)$.
	\end{claim}
	\begin{claimproof}
		Let $(\tau,y) \in D_{\alpha,\beta}(\kappa)$.
		Then
		\[0\leq \max\{0,X_{\alpha,\beta}(\kappa,\tau)\} \leq y\leq \min\{\kappa,\tau\}\leq \kappa.\]
		Furthermore, $y \leq \min\{\kappa,\tau\}\leq \tau$ and
		\[y\geq \max\{0,X_{\alpha,\beta}(\kappa,\tau)\} \geq X_{\alpha,\beta}(\kappa,\tau) = \left(1-\frac{\beta}{\alpha}\right)\kappa +\frac{\tau}{\alpha}\]
		and thus, $\tau \leq \alpha\left( y- \left(1-\frac{\beta}{\alpha}\right)\cdot \kappa \right)$.
		Overall, we have that $0\leq y\leq\kappa$ and $t\leq \tau \leq \alpha\left( y- \left(1-\frac{\beta}{\alpha}\right)\cdot \kappa \right)$.
		By \eqref{eq:E_set_def} we have $(\tau,y)\in E_{\alpha,\beta}(\kappa)$, and we conclude that
		\begin{equation}
			\label{eq:D_subset_E}
			D_{\alpha,\beta}(\kappa)\subseteq E_{\alpha,\beta}(\kappa).
		\end{equation}
		
		Similarly, let $(\tau',y') \in E_{\alpha,\beta}(\kappa)$.
		It follows that $\tau'\geq y' \geq 0$.
		Furthermore,
		\[\tau' ~\leq~ \alpha\left( y'- \left(1-\frac{\beta}{\alpha}\right) \cdot \kappa \right) ~\leq~ \alpha \cdot\left(\kappa- \left(1-\frac{\beta}{\alpha}\right) \cdot \kappa \right)  ~=~ \beta \cdot \kappa,\]
		where the second inequality follows from $y' \leq \kappa$.
		So together, we obtain that
		\begin{equation}
			\label{eq:tau_prime_cond}
			0\leq \tau'\leq \beta \cdot \kappa.
		\end{equation}
		
		By re-arranging the inequality $\tau' \leq \alpha \left(y'-\left( 1-\frac{\beta}{\alpha}\right)\cdot \kappa \right) $ we obtain 
		$$y'\geq \left( 1-\frac{\beta}{\alpha}\right) \cdot \kappa +\frac{\tau'}{\alpha} = X_{\alpha,\beta}(\kappa,\tau').$$
		As $(\tau',y')\in E_{\alpha,\beta}(\kappa)$, it also holds that $y'\geq 0$.
		So $y'\geq \max\{0,X_{\alpha,\beta}(\kappa,\tau')\}$.
		Finally, since $(\tau',y')\in E_{\alpha,\beta}(\kappa)$ we have $y'\leq \kappa$ and $y'\leq \tau'$, and hence $y'\leq \min\{\kappa,\tau'\}$.
		So overall
		\begin{equation}
			\label{eq:y_prime_cond}
			\max\{0,X_{\alpha,\beta}(\kappa,\tau')\} ~\leq~ y' ~\leq~ \min\{\kappa,\tau'\}.
		\end{equation}
		By \eqref{eq:tau_prime_cond} and \eqref{eq:y_prime_cond} it holds that $(\tau',y') \in D_{\alpha,\beta}(\kappa)$.
		Thus
		\begin{equation}
			\label{eq:E_subset_D}
			E_{\alpha,\beta}(\kappa)\subseteq D_{\alpha,\beta}(\kappa).
		\end{equation}
		By \eqref{eq:D_subset_E} and \eqref{eq:E_subset_D} we have $D_{\alpha,\beta}(\kappa) = E_{\alpha,\beta}(\kappa)$.
	\end{claimproof}

	Using \Cref{claim:D_is_E} we have
	\begin{equation}
		\label{eq:alternative_first_step}
		\min_{(\tau,y) \in D_{\alpha, \beta}(k)} \gtilk[c](\tau,y) ~=~ \min_{(\tau,y) \in E_{\alpha, \beta}(k)} \gtilk[c](\tau, y) ~=~ \min_{~0\leq y\leq \kappa~} \min_{~y\leq \tau \leq \alpha\cdot \left(y-\left(1-\frac{\beta}{\alpha}\right) \cdot \kappa \right)~} \gtilk[c](\tau, y).
	\end{equation}
	In order to simplify \eqref{eq:alternative_first_step} we use some analytical properties of $\gtilk[c](\tau,y)$.

	\begin{claim}
		\label{claim:gtilk_der}
		It holds that $\frac{\partial \gtilk[c](\tau,y)}{\partial \tau} = \ln \left(1- \frac{y}{\tau}\right) - \ln \left(1-\frac{\kappa-y}{1-\tau}\right)$.
	\end{claim}
	\begin{claimproof}
		For every $a\in \mathbb{R}$ define $q_{a} (x) = x\cdot \entropy\left( \frac{a}{x}\right)$. Using basic differentiation rules we have,
		\begin{align*}
			\frac{\partial q_a(x) }{\partial x} ~&=~ \entropy\left( \frac{a}{x}\right) +x\cdot \frac{-a}{x^2}\cdot \ln\left( \frac{1-\frac{a}{x}}{\left(\frac{a}{x}\right)}\right)\\
			&=~-\frac{a}{x} \cdot \ln \left( \frac{a}{x}\right)  - \left(1-\frac{a}{x}\right) \cdot \ln \left(1-\frac{a}{x}\right) -\frac{a}{x} \cdot \ln \left(1-\frac{a}{x}\right)  +\frac{a}{x} \cdot \ln \left( \frac{a}{x}\right) \\
			&=~ -\ln\left(1-\frac{a}{x}\right).
		\end{align*}
		Thus,
		\begin{align*}
			\frac{\partial \gtilk[c](\tau,y)}{\partial \tau} ~&=~ \frac{\partial}{\partial \tau} \left( (\kappa-y)\cdot \ln(c) - \tau\cdot \entropy\left( \frac{y}{\tau} \right)  - (1-\tau)\cdot \entropy\left( \frac{\kappa - y}{1 - \tau} \right)  + \entropy(\kappa) \right) \\
			&=~ -\frac{\partial q_y(\tau)}{\partial \tau} - \frac{\partial q_{\kappa-t} (1-\tau)}{\partial \tau}\\
			&=~ \ln\left( 1-\frac{y}{\tau}\right) - \ln \left(1-\frac{\kappa-y}{1-\tau} \right).
		\end{align*}
	\end{claimproof}

	We use \Cref{claim:gtilk_der} to show the following.
	\begin{claim}
		\label{claim:gtlik_convex}
		For every $0\leq y\leq \kappa$, it holds that $\gtilk[c](\tau,y)$ is strictly convex as a function of $\tau$, and has a minimum at $\tau =\frac{y}{\kappa}$.
	\end{claim}

	\begin{claimproof}
		For every $0\leq y\leq \kappa$ the expression $1-\frac{y}{\tau}$ is increasing with $\tau$ and the expression $1-\frac{\kappa-y}{1-\tau}$ is decreasing with $\tau$.
		Furthermore, $1-\frac{y}{\tau}$ is \emph{strictly} increasing, unless $y=0$, and in this case $1-\frac{\kappa-y}{1-\tau}$ is \emph{strictly} decreasing.
		It follows that $\frac{\partial \gtilk[c](\tau,y)}{\partial \tau} = \ln \left(  1- \frac{y}{\tau}\right)- \ln \left(1-\frac{\kappa-y}{1-\tau}\right)$ is strictly increasing with $\tau$ for every $0\leq y\leq \kappa$.
		We conclude that $\gtilk[c](\tau,y)$ is strictly convex as a function of $\tau$ (for a fixed $y\in [0,\kappa]$).
		Furthermore,
		\[\frac{\partial \gtilk[c](\tau,y)}{\partial \tau}\Bigg|_{\tau = \frac{y}{\kappa}} ~=~ \ln \left(  1- \frac{y}{\frac{y}{\kappa}}\right)- \ln \left(1-\frac{\kappa-y}{1-\frac{y}{\kappa}}\right)~=~ \ln(1-\kappa )- \ln(1-\kappa) ~=~ 0,\]
		so the minimum of $\gtilk[c](\tau,y)$, as a function of $\tau$ (for a fixed $y$), is at $\tau = \frac{y}{\kappa}$.
	\end{claimproof}

	For every $0\leq y\leq \kappa$,
	the value of $\gtilk[c](\tau,y)$ at its minimum is
	\begin{equation}
		\label{eq:gtilk_minimum}
		\begin{aligned}
			\gtilk[c]\left(\frac{y}{\kappa},y\right) ~&=~ (\kappa - y)\cdot\ln(c) - \frac{y}{\kappa}\cdot\entropy\left(\frac{y}{\frac{y}{\kappa}}\right)- \left( 1 - \frac{y}{\kappa} \right) \cdot \entropy\left( \frac{\kappa - y}{1 - \frac{y}{\kappa}} \right) + \entropy(\kappa) \\
			&=~ (\kappa - y)\cdot\ln(c) - \frac{y}{\kappa}\cdot\entropy(\kappa) - \left( 1 - \frac{y}{\kappa} \right) \cdot \entropy\left( \kappa \right) + \entropy(\kappa)\\
			&=~ (\kappa - y)\cdot\ln(c).
		\end{aligned}
	\end{equation}

	By the above, for every $0\leq y\leq \kappa$ such that $\frac{y}{\kappa} ~\leq ~ \alpha\cdot \left(y-\left(1-\frac{\beta}{\alpha}\right)\cdot \kappa \right)$ it holds that
	\begin{equation}
		\label{eq:minimum_in_range}
		\min_{~y\leq \tau \leq \alpha\cdot \left(y-\left(1-\frac{\beta}{\alpha}\right) \cdot \kappa \right)~ } \gtilk[c](\tau, y) ~=~ \gtilk[c]\left(\frac{y}{\kappa},y\right)  ~=~ \left( \kappa-y\right)\cdot \ln (c).
	\end{equation}
	The first equality holds since $\gtilk[c](\tau,y)$ is convex with a minimum at $\frac{y}{\kappa}\geq y$, as a function of $\tau$ (\Cref{claim:gtlik_convex}).
	The second equality follows from \eqref{eq:gtilk_minimum}.
	Also, observe that
	\begin{equation}
		\label{eq:y_greater_cond}
		\begin{aligned}
			&\frac{y}{\kappa} ~\leq~ \alpha\cdot \left(y-\left(1-\frac{\beta}{\alpha}\right)\cdot \kappa \right)\\
			\iff~&y\left( \frac{1}{\kappa} -\alpha\right) ~\leq~  -(\alpha -\beta)\cdot \kappa \\
			\iff~&y~\leq~ \frac{\beta -\alpha }{1-\alpha\kappa} \cdot \kappa^2 ~=~ M_{\alpha,\beta}(\kappa)\cdot \kappa,
		\end{aligned}
	\end{equation}
	where the second transition holds as $\frac{1}{\kappa} > \beta \geq \alpha$.
	Recall that $M_{\alpha,\beta}(\kappa)$ is defined in \eqref{eq:Mdef}.

	By \eqref{eq:minimum_in_range} and \eqref{eq:y_greater_cond} we have
	\begin{equation}
		\label{eq:alternative_elimination}
		\begin{aligned}
			&\min_{~0\leq y\leq M_{\alpha,\beta}(\kappa)\cdot \kappa~} \min_{~y\leq \tau \leq \alpha\cdot \left(y-\left(1-\frac{\beta}{\alpha}\right) \cdot \kappa \right)~ } \gtilk[c](\tau, y) \\
			=~& \min_{~0\leq y\leq M_{\alpha,\beta}(\kappa)\cdot \kappa~} (\kappa-y)\ln(c) \\
			=~& (\kappa - M_{\alpha,\beta}(\kappa)\cdot \kappa)\cdot \ln c \\
			=~& \gtilk[c]\left(\frac{M_{\alpha,\beta}(\kappa)\cdot \kappa}{\kappa} ,~M_{\alpha,\beta}(\kappa)\cdot \kappa\right) \\
			=~& \min_{~M_{\alpha,\beta}(\kappa)\cdot \kappa~\leq~ \tau ~\leq~ \alpha\cdot \left(M_{\alpha,\beta}(\kappa)\cdot \kappa-\left(1-\frac{\beta}{\alpha}\right) \cdot \kappa \right)~ } \gtilk[c]\Big(\tau,~ M_{\alpha,\beta}(\kappa)\cdot \kappa\Big).
		\end{aligned}
	\end{equation}

	It can be easily verified that $M_{\alpha,\beta}(\kappa) \leq 1$.
	Thus, we can use \eqref{eq:alternative_elimination} to change the range of $y$ in \eqref{eq:alternative_first_step} as follows:
	\begin{equation}
		\label{eq:alternative_second_step}
		\begin{aligned}
			\min_{(\tau,y) \in D_{\alpha, \beta}(k)} \gtilk[c](\tau, y)~=&~ \min_{~0\leq y\leq \kappa~} \min_{~y\leq \tau \leq \alpha\cdot \left(y-\left(1-\frac{\beta}{\alpha}\right) \cdot \kappa \right)~ } \gtilk[c](\tau, y)\\
			~=&~ \min_{~M_{\alpha,\beta}(\kappa)\cdot \kappa\leq y\leq \kappa~} \min_{~y\leq \tau \leq \alpha\cdot \left(y-\left(1-\frac{\beta}{\alpha}\right) \cdot \kappa \right)~ } \gtilk[c](\tau, y).
		\end{aligned}
	\end{equation}

	By \Cref{claim:gtlik_convex}, for every $0\leq y\leq \kappa$ such that $\frac{y}{\kappa }~\geq ~ \alpha\cdot \left(y-\left(1-\frac{\beta}{\alpha}\right)\cdot \kappa \right)$, it holds that
	\begin{equation}
		\label{eq:minimum_out_of_range}
		\min_{~y\leq \tau \leq \alpha\cdot \left(y-\left(1-\frac{\beta}{\alpha}\right) \cdot \kappa \right)~ } \gtilk[c](\tau, y) ~=~  	\gtilk[c]\left( \alpha\cdot \left(y-\left(1-\frac{\beta}{\alpha}\right) \cdot \kappa \right),~y\right) .
	\end{equation}
	Similarly to \eqref{eq:y_greater_cond} it holds that
	\begin{equation}
		\label{eq:y_geq_cond}
		\frac{y}{\kappa }~\geq ~ \alpha\cdot \left(y-\left(1-\frac{\beta}{\alpha}\right)\cdot \kappa \right) ~~\iff ~~y\geq M_{\alpha,\beta} (\kappa)\cdot \kappa.
	\end{equation}
	Thus, using \eqref{eq:minimum_out_of_range} and \eqref{eq:alternative_second_step}, we get
	\begin{equation}
		\label{eq:alternative_third_step}
		\begin{aligned}
			\min_{(\tau,y) \in D_{\alpha, \beta}(k)} \gtilk[c](\tau, y)
			~=~& \min_{~M_{\alpha,\beta}(\kappa)\cdot \kappa\leq y\leq \kappa~} \min_{~y\leq \tau \leq \alpha\cdot \left(y-\left(1-\frac{\beta}{\alpha}\right) \cdot \kappa \right)~ } \gtilk[c](\tau, y)\\
			=~& \min_{~M_{\alpha,\beta}(\kappa)\cdot \kappa\leq y\leq \kappa~} \gtilk[c]\left( \alpha\cdot \left(y-\left(1-\frac{\beta}{\alpha}\right) \cdot \kappa \right),~y\right) \\
			=~&\min_{\Mab(\kappa) \leq \tau \leq \beta\cdot\kappa} \gtilk[c](\tau, X_{\alpha,\beta}(\kappa,\tau))\\
			=~& \min_{\Mab(\kappa) \leq \tau \leq \beta\cdot\kappa} g_{\alpha,\beta,c}(\kappa,\tau) \\
			=~& \gst[\abc](\kappa).
		\end{aligned}
	\end{equation}
	The third equality follows from substitution $y$ with $\tau = \alpha \left( y- \left(1-\frac{\beta}{\alpha} \right) \cdot \kappa\right)$, which is equivalent to $y=X_{\alpha,\beta}(\kappa,\tau)$.
	The forth equality follows from $\gtilk(\tau,X_{\alpha,\beta}(\kappa,\tau)) = g_{\alpha,\beta,c}(\kappa,\tau)$.
	That last equality follows from \eqref{eq:gstar_def}.
	Observe that \eqref{eq:alternative_third_step} implies \eqref{eq:alternative_gstar}.

	Let $(\tau,y)\in D_{\alpha,\beta}(\kappa)=E_{\alpha,\beta}(\kappa)$ such that $g^*_{\alpha,\beta,c} (\kappa )  = \gtilk[c](\tau,y)$. To complete the proof of the lemma we are left to show that $y=X_{\alpha,\beta}(\kappa,\tau)$. By \eqref{eq:alternative_third_step} it follows that
	\begin{equation}
		\label{eq:show_cond}
	\gtilk[c](\tau,y) ~=~ \min_{ (\tau',y')\in D_{\alpha,\beta}(\kappa)} \gtilk[c](\tau',y')~=~\min_{ (\tau',y')\in E_{\alpha,\beta}(\kappa)} \gtilk[c](\tau',y').
	\end{equation}

	Assume towards contradiction that $y< M_{\alpha,\beta}(\kappa)\cdot \kappa$.
	Then, since $(y,\tau)\in E_{\alpha,\beta}(\kappa)$, we have
	\begin{equation}
		\label{eq:show_cond_y_small}
		\begin{aligned}
	\gtilk[c](\tau,y) ~\geq~&
	\min_{~y\leq \tau' \leq \alpha\cdot \left(y-\left(1-\frac{\beta}{\alpha}\right) \cdot \kappa \right)~ } \gtilk[c](\tau', y)  \\
	=~&(\kappa - y)\cdot \ln(c)\\
	>~&
	(\kappa - M_{\alpha,\beta}(\kappa)\cdot \kappa)\cdot \ln(c)  \\
	=~& \min_{~M_{\alpha,\beta}(\kappa)\cdot \kappa~\leq~ \tau' ~\leq~ \alpha\cdot \left(M_{\alpha,\beta}(\kappa)\cdot \kappa-\left(1-\frac{\beta}{\alpha}\right) \cdot \kappa \right)~ } \gtilk[c]\left(\tau',~ M_{\alpha,\beta}(\kappa)\cdot \kappa\right)\\
	\geq~& \min_{(\tau',t')\in E_{\alpha,\beta}(\kappa)}\gtilk[c]\left(\tau',~ y'\right).
		\end{aligned}
	\end{equation}
	The first inequality follows from \eqref{eq:show_cond} and the definition of $E_{\alpha,\beta}$ \eqref{eq:E_set_def}.
	The first equality follows from \eqref{eq:minimum_in_range} and \eqref{eq:y_greater_cond}.
	The second equality uses \eqref{eq:alternative_elimination}.
	The last inequality is also a consequence of the definition of $E_{\alpha,\beta}$ \eqref{eq:E_set_def}.
	Observe that \eqref{eq:show_cond_y_small} contradict \eqref{eq:show_cond}, so $y\geq M_{\alpha,\beta}(\kappa)\cdot \kappa$.

	By \eqref{eq:show_cond} it holds that
	\begin{equation}
		\label{eq:gtilk_greater_y}
		\gtilk[c](\tau,y) ~=~ \min_{~y\leq \tau' \leq \alpha\cdot \left(y-\left(1-\frac{\beta}{\alpha}\right) \cdot \kappa \right)~ } \gtilk[c](\tau', y)
	\end{equation}
	By \Cref{claim:gtlik_convex} we have that $\gtilk(\tau',y)$ is strictly convex as a function of $\tau'$ with a minimum at $\tau'=\frac{y}{\kappa}$ and by \eqref{eq:y_geq_cond} we have
	$\frac{y}{\kappa }\geq  \alpha\cdot \left(y-\left(1-\frac{\beta}{\alpha}\right)\cdot \kappa \right)$.
	Thus, the minimum of the RHS of \eqref{eq:gtilk_greater_y} is at $\tau'= \alpha\cdot \left(y-\left(1-\frac{\beta}{\alpha}\right)\cdot \kappa \right)$.
	Therefore, $\tau = \alpha\cdot \left(y-\left(1-\frac{\beta}{\alpha}\right)\cdot \kappa \right)$, and by rearranging the term we get $y= X_{\alpha,\beta}(\kappa,\tau)$.
\end{proof}

Using \Cref{lem:equiv_g} we can easily derive the following lemma.

\begin{lemma}
	\label{lem:pointwise_kappa}
	Let $\beta\geq \alpha' >\alpha \geq 1$, $c> 1$  and $\kappa \in \left(0,\frac{1}{\beta}\right)$.
	Then $g^*_{\alpha,\beta,c}(\kappa) < g^*_{\alpha',\beta,c} (\kappa)$.
\end{lemma}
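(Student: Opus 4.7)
The plan is to exploit the alternative representation $g^*_{\alpha,\beta,c}(\kappa) = \min_{(\tau,y) \in D_{\alpha,\beta}(\kappa)} \gtilk[c](\tau,y)$ provided by \Cref{lem:equiv_g}. The crucial feature of this formulation is that the objective $\gtilk[c]$ depends only on $c$ and $\kappa$, while all of the $\alpha$-dependence is packaged into the feasible set $D_{\alpha,\beta}(\kappa)$. Thus the comparison of $g^*_{\alpha,\beta,c}(\kappa)$ and $g^*_{\alpha',\beta,c}(\kappa)$ reduces entirely to a comparison of feasible sets.

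The first step is to establish the set inclusion $D_{\alpha',\beta}(\kappa) \subseteq D_{\alpha,\beta}(\kappa)$ whenever $\alpha < \alpha' \leq \beta$. Computing $\frac{\partial X_{\alpha,\beta}(\kappa,\tau)}{\partial \alpha} = \frac{\beta\kappa - \tau}{\alpha^2} \geq 0$ for every $\tau \in [0,\beta\kappa]$, I conclude that the lower bound $\max\{0, X_{\alpha,\beta}(\kappa,\tau)\}$ on $y$ is weakly non-decreasing in $\alpha$, while the remaining defining constraints of $D_{\alpha,\beta}(\kappa)$ are independent of $\alpha$. The containment, together with monotonicity of minimization over nested domains, already gives the weak inequality $g^*_{\alpha,\beta,c}(\kappa) \leq g^*_{\alpha',\beta,c}(\kappa)$.

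To upgrade to strict inequality I argue by contradiction. Assume $g^*_{\alpha,\beta,c}(\kappa) = g^*_{\alpha',\beta,c}(\kappa)$ and let $(\tau^*,y^*)$ be a minimizer of $\gtilk[c]$ on $D_{\alpha',\beta}(\kappa)$. By the ``moreover'' clause of \Cref{lem:equiv_g} applied with $\alpha'$, we have $y^* = X_{\alpha',\beta}(\kappa,\tau^*)$. By the containment $(\tau^*,y^*) \in D_{\alpha,\beta}(\kappa)$, and under the equality assumption it is also a minimizer on $D_{\alpha,\beta}(\kappa)$, so the same clause applied with $\alpha$ forces $y^* = X_{\alpha,\beta}(\kappa,\tau^*)$. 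Equating the two expressions for $y^*$ and simplifying yields $(\tau^* - \beta\kappa)\bigl(\tfrac{1}{\alpha} - \tfrac{1}{\alpha'}\bigr) = 0$; since $\alpha \neq \alpha'$, this forces $\tau^* = \beta\kappa$, and then $y^* = X_{\alpha',\beta}(\kappa,\beta\kappa) = \kappa$.

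Finally, a direct evaluation at $(\beta\kappa,\kappa)$ gives
\[
\gtilk[c](\beta\kappa,\kappa) = 0 \cdot \ln c - \beta\kappa \cdot \entropy\!\left(\tfrac{1}{\beta}\right) - (1-\beta\kappa) \cdot \entropy(0) + \entropy(\kappa) = \xi_\beta(\kappa),
\]
and by \eqref{eq:gs_vs_gt} this equals $g_{\alpha',\beta,c}(\kappa,\beta\kappa)$. Hence the equality assumption would give $g^*_{\alpha',\beta,c}(\kappa) = g_{\alpha',\beta,c}(\kappa,\beta\kappa)$, directly contradicting \Cref{lem:g_convex_by_tau}, which asserts strictly $\min_{\tau}g_{\alpha',\beta,c}(\kappa,\tau) < g_{\alpha',\beta,c}(\kappa,\beta\kappa)$. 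The only modest obstacle is correctly handling the boundary case where $\max\{0,X_{\alpha,\beta}(\kappa,\tau)\} = 0$ in the containment step; this is harmless because the two bounds $0$ and $X_{\alpha,\beta}$ both satisfy the required monotonicity, so the maximum is again non-decreasing in $\alpha$.
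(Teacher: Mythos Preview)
Your proof is correct and follows essentially the same approach as the paper: both arguments hinge on \Cref{lem:equiv_g} to isolate the $\alpha$-dependence into the feasible region $D_{\alpha,\beta}(\kappa)$, use the monotonicity of $X_{\alpha,\beta}(\kappa,\tau)$ in $\alpha$ to obtain the containment $D_{\alpha',\beta}(\kappa)\subseteq D_{\alpha,\beta}(\kappa)$, invoke the ``moreover'' clause of \Cref{lem:equiv_g} to pin down the minimizer, and finally appeal to \Cref{lem:g_convex_by_tau} to rule out $\tau^*=\beta\kappa$. The only difference is organizational: the paper first excludes $\tau'=\beta\kappa$ and then derives $y'\neq X_{\alpha,\beta}(\kappa,\tau')$ directly, whereas you proceed by contradiction, equating $X_{\alpha',\beta}(\kappa,\tau^*)=X_{\alpha,\beta}(\kappa,\tau^*)$ to force $\tau^*=\beta\kappa$ and then contradict \Cref{lem:g_convex_by_tau}.
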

\begin{proof}
	By \Cref{lem:equiv_g} we have $g^*_{\alpha',\beta,c}(\kappa)= \min_{(\tau,y)\in D_{\alpha',\beta}(\kappa)} \gtilk[c](\tau,y)$.
	Thus, there is $(\tau',y') \in D_{\alpha',\beta}$ such that $\gtilk[c](\tau',y')=g^*_{\alpha',\beta,c}(\kappa) $ and $y'= X_{\alpha',\beta}(\kappa,\tau')$.
	
	Assume towards contradiction that $ \tau'=\beta \kappa$. Then 
	$$
	g^*_{\alpha',\beta,c} (\kappa)  ~=~ \gtilk[c](\tau',y')~=~\gtilk[c](\beta\cdot \kappa, X_{\alpha',\beta}(\kappa,\beta\cdot \kappa )) ~=~g_{\alpha',\beta,c}(\kappa,\beta\cdot \kappa).
	$$
	where the last equality follows from \eqref{eq:gs_vs_gt}. However, by \Cref{lem:g_convex_by_tau} it holds that $g^*_{\alpha',\beta,c} (\kappa)  < g_{\alpha',\beta,c}(\kappa,\beta\cdot \kappa)$, contradicting the above. Thus $\tau'\neq \beta \kappa$.
	
	As $(\tau',y') \in D_{\alpha',\beta}$ it also  holds that $0\leq \tau'< \beta\cdot \kappa$ and $0\leq y'\leq \min\{\kappa,\tau'\}$. 
	Furthermore,
	\[
	\begin{aligned}
		y' ~&=~ X_{\alpha',\beta}(\kappa,\tau') \\
		&=~ \left(1-\frac{\beta}{\alpha'}\right) \cdot \kappa +\frac{\tau'}{\alpha'} \\
		&=~ 1 -\frac{1}{\alpha'} \left( \beta\cdot \kappa -\tau'\right)\\
		&>~ 1 -\frac{1}{\alpha} \left( \beta\cdot \kappa -\tau'\right)\\
		&=~ X_{\alpha,\beta}(\kappa,\tau'),
		\end{aligned}\]
	where the inequality holds since $\tau'<\beta\cdot \kappa$, and $ \alpha' >\alpha$.
	Thus $(\tau',y')\in D_{\alpha,\beta}(\kappa)$ and $y'\neq X_{\alpha,\beta}(\kappa,\tau')$.
	Using \Cref{lem:equiv_g} once more we get
	\[g^*_{\alpha',\beta,c} (\kappa) ~=~ \gtilk[c] (\tau',y') ~>~ \min_{(\tau,y)\in D_{\alpha,\beta}} \gtilk[c](\tau,t) ~=~g^*_{\alpha,\beta,c}(\kappa).\qedhere\]
\end{proof} 

\Cref{lem:monotone_in_alpha} essentially follows from \Cref{lem:pointwise_kappa}, though the proof itself involves some technical steps which exclude the possibility of a corner case.

\begin{proof}[Proof of \Cref{lem:monotone_in_alpha}]
	Define $\tilde{\kappa} = \frac{1}{2}\cdot \frac{1}{\beta}$.
	Since $\entropy(x)$ is concave, for every $M_{\alpha,\beta}(\tilde{\kappa})\leq \tau < \beta \tilde{\kappa}$, it holds that
	\begin{equation}
		\label{eq:mon_corner_first}
		\begin{aligned}
			g_{\alpha,\beta,c}(\tilde{\kappa},\tau) ~=~& \frac{\beta \tilde{\kappa} - \tau}{\alpha} \ln c -
				\tau\cdot \entropy\left(\gamma_{\alpha,\beta}(\tilde{\kappa},\tau)\right) -(1-\tau)\cdot \entropy\left(\delta_{\alpha,\beta} (\tilde{\kappa},\tau)\right) + \entropy\left(\tilde{\kappa}\right)\\
			\geq~& \frac{\beta \tilde{\kappa} - \tau}{\alpha} \ln c -
				\entropy\left(\tau \cdot \gamma_{\alpha,\beta}(\tilde{\kappa},\tau)+(1-\tau)\cdot\delta_{\alpha,\beta} (\tilde{\kappa},\tau)\right) + \entropy\left(\tilde{\kappa}\right)\\
	 		=~& \frac{\beta \tilde{\kappa} - \tau}{\alpha} \ln c -
				\entropy\left(\tilde{\kappa}\right) + \entropy\left(\tilde{\kappa}\right)\\
			=~& \frac{\beta \tilde{\kappa} - \tau}{\alpha} \ln c~>~0,
		\end{aligned}
	\end{equation}
	where the second equality holds as $\tau \cdot \gamma_{\alpha,\beta}(\tilde{\kappa},\tau)+(1-\tau)\cdot\delta_{\alpha,\beta} (\tilde{\kappa},\tau) = \tilde{\kappa}$ and the last inequality uses $c>1$.
	Furthermore, 
	\begin{equation}
		\label{eq:mon_corner_second}
		\begin{aligned}
			&g_{\alpha,\beta,c}(\tilde{\kappa},\beta \cdot \tilde{\kappa}) \\
			=~& \frac{\beta \tilde{\kappa} - \beta \cdot \tilde{\kappa}}{\alpha} \ln c -
				\beta \cdot \tilde{\kappa}\cdot \entropy\left(\gamma_{\alpha,\beta}(\tilde{\kappa},\beta \cdot \tilde{\kappa})\right) -(1-\beta \cdot \tilde{\kappa})\cdot \entropy\left(\delta_{\alpha,\beta} (\tilde{\kappa},\beta \cdot \tilde{\kappa})\right) + \entropy\left(\tilde{\kappa}\right)\\
			=~& -\beta \cdot \tilde{\kappa}\cdot \entropy\left(\frac{1}{\beta}\right) -(1-\beta \cdot \tilde{\kappa})\cdot \entropy\left(0\right) + \entropy\left(\tilde{\kappa}\right)\\
			>~& -\entropy\left(\beta \tilde{\kappa}\cdot \frac{1}{\beta} + \left(1-\beta\tilde{\kappa}\right)\cdot 0  \right) + \entropy\left(\tilde{\kappa}\right) ~=~0,
		\end{aligned}
	\end{equation}
	where the inequality holds as $\entropy$ is strictly concave.
	By \eqref{eq:mon_corner_first} and \eqref{eq:mon_corner_second} it follows that
	\[g^*_{\alpha,\beta,c} (\tilde{\kappa}) =\min_{M_{\alpha,\beta}(\tilde{\kappa})\leq \tau \leq \beta \tilde{\kappa}} g_{\alpha,\beta,c}(\tilde{\kappa},\tau) >0.\]
	
	Now, there exists $\kappa\in \left[ 0,\frac{1}{\beta}\right]$ such that
	\[g^*_{\alpha,\beta,c} (\kappa) ~=~ \max_{\kappa' \in \left[0,\frac{1}{\beta}\right]} g^*_{\alpha,\beta,c}(\kappa') ~>~ 0.\]
	It can be easily verified that $g^*_{\alpha,\beta,c}(0) = g^*_{\alpha,\beta,c}\left(\frac{1}{\beta}\right)= 0$ (select $\tau =0$ in the first and $\tau = \beta \cdot \kappa$ in the latter case).
	So $0< \kappa < \frac{1}{\beta}$.
	
	Thus, by \Cref{lem:pointwise_kappa}, we have
	\begin{align*}
		\ln\left(\amlsbound(\alpha,\beta,c)\right) ~=~& \max_{k'\in \left[0,\frac{1}{\beta}\right]} g^*_{\alpha,\beta,c}(\kappa')\\
		=~& g^*_{\alpha,\beta,c}(\kappa)\\
		<~& g^*_{\alpha',\beta,c}(\kappa)\\
		\leq~& \max_{k'\in \left[0,\frac{1}{\beta}\right]} g^*_{\alpha',\beta,c}(\kappa')\\
		=~& \ln\left(\amlsbound(\alpha',\beta,c)\right)
	\end{align*}
	The first and last equality follows from \eqref{eq:amls_via_gstar}.
	Thus, $\amlsbound(\alpha,\beta,c) < \amlsbound(\alpha',\beta,c)$.
\end{proof}

\section{Conclusion}
\label{sec:conclusion}
In this paper we studied how exponential-time approximation algorithms can be obtained from existing polynomial-time approximation algorithms, existing parameterized exact algorithms and existing parameterized approximation algorithms.
We provided a theoretical oracle model by which the above question can be rigorously studied and showed that the approximate monotone local search approach \cite{FominGLS19,EsmerKMNS22} attains optimal running times (up to polynomial factors). Furthermore, we provided the mathematical machinery to compute the running time of the resulting algorithms in practice.

While previous works on monotone local search \cite{FominGLS19,EsmerKMNS22,Lee21} only provided algorithmic results, in this work we use a restricted oracle model in which we are also able to show the optimality of our algoritms (which in particular implies the algorithms from \cite{FominGLS19,EsmerKMNS22} are optimal).
This way, we provide a complete answer of how to repurpose (any finite number of) parameterized approximation algorithms (which includes polynomial-time approximations and exact parameterized algorithms as special cases) for the design of exponential-time (approximation) algorithms.

Still, our work raises a number of follow-up questions.
First, we focused on allowing a finite number of extension oracles.
However, for problems such as \textsc{Vertex Cover}, there is a  parameterized $\alpha$-approximation algorithm for every $\alpha \geq 1$.
In the language of this work, this gives rise to an infinite number of extension oracles;  it would be interesting to properly formalize such a setting and extend our result to it.
Even if we only want to provide a finite number of extension oracles, it is already unclear how to choose these oracles in an optimal way.
In fact, we already encountered this problem in \Cref{sec:application_vc_3hs} where we adopted a simple discretization appraoch (which is likely not optimal) to choose a finite number of extension oracles.

The second question asks what happens with other types of oracles.
Indeed, in this work, we focused on repurposing parameterized approximation algorithms (which includes polynomial-time approximations and exact parameterized algorithms as special cases) for the design of exponential-time approximation algorithms.
Can we find other types of algorithms that can be repurposed in a similar way?
For example, is it possible to repurpose exact exponential-time algorithms (e.g., \textsc{Vertex Cover} can be solved in time $\OO^*(1.1996^n)$ \cite{XiaoN17}) in a meaningful way?
More generally, in \cite{GaspersL17} the authors show that monotone local search can also be used to convert a $c^k\cdot b^n\cdot n^{O(1)}$ time algorithm into a $\left( 1+b -\frac{1}{c}\right)^n \cdot n^{O(1)}$ time algorithm for the same problem.
What is the optimal way of repurposing such an algorithm in order to obtain an exponential $\beta$-approximation algorithm for any $\beta > 1$?
We remark that our lower-bound technique can be used to show the result of \cite{GaspersL17} already repurposes such algorithms in an optimal way in the exact setting.

Finally, we ask about weighted problems.
Similarly to our setting, one can define an extension oracle model for \emph{weighted} problems in which the objective is to find set $S \in \F$ of (approximately) minimum weight.
Using this model its possible to define a weighted variant of $\bestbound$.
Is this variant equal to the function $\bestbound$ defined in this paper?
Which algorithm attains this cost for weighted problems?

\bibliographystyle{plainurl}
\bibliography{refs}

\appendix

\section{Problem Definitions}
\label{sec:problem_definitions}
In this section, we give the problem definitions of all the problems discussed in the paper.

\medskip

\defproblem{{\sc Vertex Cover ({\sc VC})}}{An undirected graph $G$.}{Find a minimum set $S$ of vertices of $G$ such that $G-S$ has no edges.}

\defproblem{{\sc Partial Vertex Cover ({\sc VC})}}{An undirected graph $G$ and an integer $t \geq 0$.}{Find a minimum set $S$ of vertices of $G$ such that $G-S$ has at most $|E(G)| - t$ many edges.}

\defproblem{{\sc $d$-Hitting Set ({\sc $d$-HS})}}{A universe $U$ and set family $\mathcal{F} \subseteq \binom{U}{\leq d}$.}{Find a minimum set $S \subseteq U$ such that for each $F \in \mathcal{F}$, $S \cap F \neq \emptyset$.}

\defproblem{{\sc Feedback Vertex Set ({\sc FVS})}}{An undirected graph $G$.}{Find a minimum set $S$ of vertices of $G$ such that $G-S$ is an acyclic graph.}

\defproblem{{\sc Subset Feedback Vertex Set ({\sc Subset FVS})}}{An undirected graph $G$ and a set $T \subseteq V(G)$.}{Find a minimum set $S$ of vertices of $G$ such that $G-S$ has no cycle that contains at least one vertex of $T$.}

\defproblem{{\sc Tournament Feedback Vertex Set ({\sc TFVS})}}{A tournament graph $G$.}{Find a minimum set $S$ of vertices of $G$ such that $G-S$ is an acyclic tournament.}

\defproblem{{\sc Directed Feedback Vertex Set ({\sc DFVS})}}{A directed graph $G$.}{Find a minimum set $S$ of vertices of $G$ such that $G-S$ is a directed acyclic graph.}

\defproblem{{\sc Directed Subset Feedback Vertex Set ({\sc Subset DFVS})}}{A directed graph $G$ and a set $T \subseteq V(G)$.}{Find a minimum set $S$ of vertices of $G$ such that $G-S$ has no directed cycle that contains at least one vertex of $T$.}

\defproblem{{\sc Odd Cycle Transversal ({\sc OCT})}}{An undirected graph $G$.}{Find a minimum set $S$ of vertices of $G$ such that $G-S$ has no cycle of odd length.}

\defproblem{{\sc Directed Odd Cycle Transversal ({\sc DOCT})}}{A directed graph $G$.}{Find a minimum set $S$ of vertices of $G$ such that $G-S$ has no directed cycle of odd length.}

\defproblem{{\sc Multicut}}{An undirected graph $G$ and a set $\mathcal{P} \subseteq V(G) \times V(G)$.}{Find a minimum set $S$ of vertices of $G$ such that $G-S$ has no path from $u$ to $v$ for any $(u,v) \in \mathcal{P}$}

\defproblem{{\sc Edge Multicut on Trees}}{A tree $T$ and a set $\mathcal{P} \subseteq V(G) \times V(G)$.}{Find a minimum set $S$ of edges of $T$ such that $T-S$ has no path from $u$ to $v$ for any $(u,v) \in \mathcal{P}$}

\defproblem{{\sc $d$-Steiner Multicut}}{An undirected graph $G$, a family of at most $d$-sized sets $\mathcal{P} \subseteq \binom{V(G)}{d}$.}{Find a minimum set $S$ of vertices of $G$ such that for each $T \in \mathcal{P}$, there exists $u,v \in T$ such that $G-S$ either has no $u$ to $v$ path.}

\defproblem{{\sc Directed Symmetric Multicut}}{A directed graph $G$ and a set $\mathcal{P} \subseteq V(G) \times V(G)$.}{Find a minimum set $S$ of vertices of $G$ such that for each $(u,v) \in \mathcal{P}$, $G-S$ either has no $u$ to $v$ path, or no $v$ to $u$ path.}

\defproblem{{\sc Interval Vertex Deletion}}{An undirected graph $G$.}{Find a minimum set $S$ of vertices of $G$ such that $G-S$ is an interval graph.}

\defproblem{{\sc Proper Interval Vertex Deletion}}{An undirected graph $G$.}{Find a minimum set $S$ of vertices of $G$ such that $G-S$ is a proper interval graph.}

\medskip

For the next problems, we require some additional definitions.
A graph $G$ is \emph{cluster graph} if every connected component of $G$ is a complete graph.
We say $G$ is a \emph{block graph} if every $2$-connected component of $G$ is a complete graph.
A \emph{cograph} is a graph $G$ which does not contain $P_4$ (a path on $4$ vertices) is an induced subgraph.
Finally, a graph $G$ is a \emph{split graph} if the vertex set can be partitioned into two sets $V(G) = I \uplus C$ such that $I$ is an independent set and $C$ is a clique in $G$.

\medskip

\defproblem{{\sc Block Graph Vertex Deletion}}{An undirected graph $G$.}{Find a minimum set $S$ of vertices of $G$ such that $G-S$ is a block graph.}

\defproblem{{\sc Cluster Graph Vertex Deletion}}{An undirected graph $G$.}{Find a minimum set $S$ of vertices of $G$ such that $G-S$ is a cluster graph.}

\defproblem{{\sc Cograph Vertex Deletion}}{An undirected graph $G$.}{Find a minimum set $S$ of vertices of $G$ such that $G-S$ is a cograph.}

\defproblem{{\sc Split Vertex Deletion}}{An undirected graph $G$.}{Find a minimum set $S$ of vertices of $G$ such that $G-S$ is a split graph.}

\section{Running Times of Exponential Approximation Algorithms}
\label{sec:running_times}
We provide extensive data sets on the running times for the obtained exponential approximation algorothms for the problems listed in Section \ref{sec:application_main}.
More precisely, we provide data sets fr the problems \textsc{FVS}, \textsc{Tournament FVS}, \textsc{Subset FVS}, \textsc{$4$-Hitting Set}, \textsc{Odd Cycle Transversal}, \textsc{Interval Vertex Deletion}, \textsc{Proper Interval Vertex Deletion}, \textsc{Block Graph Vertex Deletion}, \textsc{Cluster Graph Vertex Deletion}, \textsc{Cograph Vertex Deletion}, \textsc{Split Vertex Deletion}, \textsc{Edge Multicut on Trees} and \textsc{Partial Vertex Cover}.
\Cref{table:runtimes-appendix-1,table:runtimes-appendix-2} contain the running times for selected approximation ratios,
and graphical visualizations can be found in \Cref{fig:runtimes-appendix-1,fig:runtimes-appendix-2,fig:runtimes-appendix-3}

\begin{table}[H]
 \small
 \centering
 {\sc Feedback Vertex Set}
 \medskip

 \medskip
 \medskip
 {\sc Tournament Feedback Vertex Set}
 \medskip

 \medskip
 \medskip
 {\sc Subset Feedback Vertex Set}
 \medskip

 \begin{tabular}{c|c|c|c|c|c|c|c|c|c|c|}
	$(\alpha,c)$ & $1.1$ & $1.8$ & $2.5$ & $3.2$ & $3.9$ & $4.6$ & $5.3$ & $6.0$ & $6.7$ & $7.4$\\
	\hline
	$(\beta,4.0)$ & $1.5474$ & $1.2323$ & $1.1507$ & $1.1118$ & $1.0889$ & $1.0738$ & $1.0631$ & $1.0552$ & $1.049$ & $1.044$\\
	\hline
	$(1.0,4.0)$ & $1.5098$ & $1.1952$ & $1.1235$ & $1.0906$ & $1.0716$ & $1.0592$ & $1.0505$ & $1.044$ & $1.039$ & $1.035$\\
	\hline
	$(8.0,1.0)$ & $1.7153$ & $1.2891$ & $1.1787$ & $1.1225$ & $1.0871$ & $1.0623$ & $1.0436$ & $1.029$ & $1.0171$ & $1.0073$\\
	\hline
	combined & $1.5098$ & $1.1952$ & $1.1235$ & $1.0906$ & $1.0716$ & $1.0592$ & $1.0436$ & $1.029$ & $1.0171$ & $1.0073$\\
	\hline
\end{tabular}

 \medskip
 \medskip
 {\sc $4$-Hitting Set}
 \medskip

 \begin{tabular}{c|c|c|c|c|c|c|c|c|c|c|}
	$(\alpha,c)$ & $1.1$ & $1.4$ & $1.7$ & $2.0$ & $2.3$ & $2.6$ & $2.9$ & $3.2$ & $3.5$ & $3.8$\\
	\hline
	$(\beta,3.0755)$ & $1.4961$ & $1.3117$ & $1.2318$ & $1.1853$ & $1.1546$ & $1.1327$ & $1.1163$ & $1.1035$ & $1.0933$ & $1.0849$\\
	\hline
	$(1.0,3.0755)$ & $1.4506$ & $1.2592$ & $1.1861$ & $1.1459$ & $1.1202$ & $1.1023$ & $1.0891$ & $1.0789$ & $1.0708$ & $1.0643$\\
	\hline
	$(4.0,1.0)$ & $1.7147$ & $1.4208$ & $1.2882$ & $1.2072$ & $1.151$ & $1.1093$ & $1.0768$ & $1.0506$ & $1.029$ & $1.0107$\\
	\hline
	combined & $1.4506$ & $1.2592$ & $1.1861$ & $1.1459$ & $1.1202$ & $1.1023$ & $1.0768$ & $1.0506$ & $1.029$ & $1.0107$\\
	\hline
\end{tabular}

 \medskip
 \medskip
 {\sc Odd Cycle Transversal}
 \medskip

 \begin{tabular}{c|c|c|c|c|c|c|c|c|c|}
	$(\alpha,c)$ & $1.1$ & $1.2$ & $1.3$ & $1.4$ & $1.5$ & $1.6$ & $1.7$ & $1.8$ & $1.9$\\
	\hline
	$(\beta,2.3146)$ & $1.4223$ & $1.3521$ & $1.3046$ & $1.2695$ & $1.2421$ & $1.22$ & $1.2018$ & $1.1864$ & $1.1733$\\
	\hline
	$(1.0,2.3146)$ & $1.3689$ & $1.2908$ & $1.243$ & $1.2098$ & $1.185$ & $1.1658$ & $1.1503$ & $1.1376$ & $1.1269$\\
	\hline
\end{tabular}

 \medskip
 \medskip
 {\sc Interval Vertex Deletion}
 \medskip

 \begin{tabular}{c|c|c|c|c|c|c|c|c|c|c|}
	$(\alpha,c)$ & $1.1$ & $1.8$ & $2.5$ & $3.2$ & $3.9$ & $4.6$ & $5.3$ & $6.0$ & $6.7$ & $7.4$\\
	\hline
	$(\beta,8.0)$ & $1.6319$ & $1.262$ & $1.1688$ & $1.1248$ & $1.0991$ & $1.0822$ & $1.0702$ & $1.0613$ & $1.0544$ & $1.0489$\\
	\hline
	$(1.0,8.0)$ & $1.6111$ & $1.2401$ & $1.1526$ & $1.1121$ & $1.0886$ & $1.0733$ & $1.0625$ & $1.0545$ & $1.0483$ & $1.0434$\\
	\hline
	$(8.0,1.0)$ & $1.7153$ & $1.2891$ & $1.1787$ & $1.1225$ & $1.0871$ & $1.0623$ & $1.0436$ & $1.029$ & $1.0171$ & $1.0073$\\
	\hline
	combined & $1.6111$ & $1.2401$ & $1.1526$ & $1.1121$ & $1.0871$ & $1.0623$ & $1.0436$ & $1.029$ & $1.0171$ & $1.0073$\\
	\hline
\end{tabular}

 \caption{An entry at in row $(\alpha,c)$ and column $\beta$ is $\bestbound(\alpha,c,\beta)$.}
 \label{table:runtimes-appendix-1}
\end{table}

\begin{table}[H]
 \small
 \centering
 {\sc Proper Interval Vertex Deletion}
 \medskip

 \begin{tabular}{c|c|c|c|c|c|c|c|c|c|c|}
	$(\alpha,c)$ & $1.1$ & $1.6$ & $2.1$ & $2.6$ & $3.1$ & $3.6$ & $4.1$ & $4.6$ & $5.1$ & $5.6$\\
	\hline
	$(\beta,6.0)$ & $1.6038$ & $1.3001$ & $1.204$ & $1.1551$ & $1.1252$ & $1.105$ & $1.0905$ & $1.0795$ & $1.0708$ & $1.0639$\\
	\hline
	$(1.0,6.0)$ & $1.577$ & $1.2698$ & $1.1799$ & $1.1354$ & $1.1087$ & $1.0908$ & $1.078$ & $1.0684$ & $1.0609$ & $1.0549$\\
	\hline
	$(6.0,1.0)$ & $1.7153$ & $1.3436$ & $1.2215$ & $1.1546$ & $1.111$ & $1.0797$ & $1.0559$ & $1.0371$ & $1.0217$ & $1.0089$\\
	\hline
	combined & $1.577$ & $1.2698$ & $1.1799$ & $1.1354$ & $1.1087$ & $1.0797$ & $1.0559$ & $1.0371$ & $1.0217$ & $1.0089$\\
	\hline
\end{tabular}

 \medskip
 \medskip
 {\sc Block Graph Vertex Deletion}
 \medskip

 \begin{tabular}{c|c|c|c|c|c|c|c|c|c|c|}
	$(\alpha,c)$ & $1.1$ & $1.4$ & $1.7$ & $2.0$ & $2.3$ & $2.6$ & $2.9$ & $3.2$ & $3.5$ & $3.8$\\
	\hline
	$(\beta,4.0)$ & $1.5474$ & $1.3406$ & $1.2521$ & $1.201$ & $1.1674$ & $1.1435$ & $1.1257$ & $1.1118$ & $1.1007$ & $1.0916$\\
	\hline
	$(1.0,4.0)$ & $1.5098$ & $1.2961$ & $1.2131$ & $1.1672$ & $1.1379$ & $1.1174$ & $1.1022$ & $1.0906$ & $1.0813$ & $1.0738$\\
	\hline
	$(4.0,1.0)$ & $1.7147$ & $1.4208$ & $1.2882$ & $1.2072$ & $1.151$ & $1.1093$ & $1.0768$ & $1.0506$ & $1.029$ & $1.0107$\\
	\hline
	combined & $1.5098$ & $1.2961$ & $1.2131$ & $1.1672$ & $1.1379$ & $1.1093$ & $1.0768$ & $1.0506$ & $1.029$ & $1.0107$\\
	\hline
\end{tabular}

 \medskip
 \medskip
 {\sc Cluster Graph Vertex Deletion}
 \medskip

 \begin{tabular}{c|c|c|c|c|c|c|c|c|c|}
	$(\alpha,c)$ & $1.1$ & $1.2$ & $1.3$ & $1.4$ & $1.5$ & $1.6$ & $1.7$ & $1.8$ & $1.9$\\
	\hline
	$(\beta,1.9102)$ & $1.3584$ & $1.3007$ & $1.2614$ & $1.232$ & $1.209$ & $1.1904$ & $1.1749$ & $1.1619$ & $1.1507$\\
	\hline
	$(1.0,1.9102)$ & $1.3015$ & $1.2367$ & $1.1974$ & $1.1703$ & $1.1501$ & $1.1345$ & $1.1219$ & $1.1116$ & $1.1029$\\
	\hline
	$(2.0,1.0)$ & $1.6588$ & $1.4847$ & $1.3657$ & $1.2768$ & $1.2072$ & $1.1507$ & $1.1037$ & $1.064$ & $1.0298$\\
	\hline
	combined & $1.3015$ & $1.2367$ & $1.1974$ & $1.1703$ & $1.1501$ & $1.1345$ & $1.1037$ & $1.064$ & $1.0298$\\
	\hline
\end{tabular}

 \medskip
 \medskip
 {\sc Split Vertex Deletion}
 \medskip

 \begin{tabular}{c|c|c|c|c|c|c|c|c|c|}
	$(\alpha,c)$ & $1.1$ & $1.2$ & $1.3$ & $1.4$ & $1.5$ & $1.6$ & $1.7$ & $1.8$ & $1.9$\\
	\hline
	$(\beta,2.0)$ & $1.3749$ & $1.314$ & $1.2726$ & $1.2418$ & $1.2176$ & $1.1981$ & $1.1819$ & $1.1683$ & $1.1566$\\
	\hline
	$(1.0,2.0)$ & $1.3186$ & $1.2503$ & $1.2089$ & $1.1802$ & $1.1589$ & $1.1423$ & $1.129$ & $1.1181$ & $1.1089$\\
	\hline
	$(2.0001,1.0)$ & $1.6588$ & $1.4847$ & $1.3657$ & $1.2768$ & $1.2072$ & $1.1507$ & $1.1038$ & $1.064$ & $1.0298$\\
	\hline
	combined & $1.3186$ & $1.2503$ & $1.2089$ & $1.1802$ & $1.1589$ & $1.1423$ & $1.1038$ & $1.064$ & $1.0298$\\
	\hline
\end{tabular}

 \medskip
 \medskip
 {\sc Multicut on Trees}
 \medskip

 \begin{tabular}{c|c|c|c|c|c|c|c|c|c|}
	$(\alpha,c)$ & $1.1$ & $1.2$ & $1.3$ & $1.4$ & $1.5$ & $1.6$ & $1.7$ & $1.8$ & $1.9$\\
	\hline
	$(\beta,1.5538)$ & $1.2729$ & $1.2313$ & $1.2025$ & $1.1807$ & $1.1634$ & $1.1494$ & $1.1376$ & $1.1277$ & $1.1191$\\
	\hline
	$(1.0,1.5538)$ & $1.2173$ & $1.1699$ & $1.1416$ & $1.1221$ & $1.1076$ & $1.0964$ & $1.0874$ & $1.08$ & $1.0738$\\
	\hline
	$(2.0,1.0)$ & $1.6588$ & $1.4847$ & $1.3657$ & $1.2768$ & $1.2072$ & $1.1507$ & $1.1037$ & $1.064$ & $1.0298$\\
	\hline
	combined & $1.2173$ & $1.1699$ & $1.1416$ & $1.1221$ & $1.1076$ & $1.0964$ & $1.0874$ & $1.064$ & $1.0298$\\
	\hline
\end{tabular}

 \medskip
 \medskip
 {\sc Partial Vertex Cover}
 \medskip

 \begin{tabular}{c|c|c|c|c|c|c|c|c|c|}
	$(\alpha,c)$ & $1.1$ & $1.2$ & $1.3$ & $1.4$ & $1.5$ & $1.6$ & $1.7$ & $1.8$ & $1.9$\\
	\hline
	$(2.0,1.0)$ & $1.6588$ & $1.4847$ & $1.3657$ & $1.2768$ & $1.2072$ & $1.1507$ & $1.1037$ & $1.064$ & $1.0298$\\
	\hline
\end{tabular}

 \caption{An entry at in row $(\alpha,c)$ and column $\beta$ is $\bestbound(\alpha,c,\beta)$.}
 \label{table:runtimes-appendix-2}
\end{table}

\begin{figure}[H]
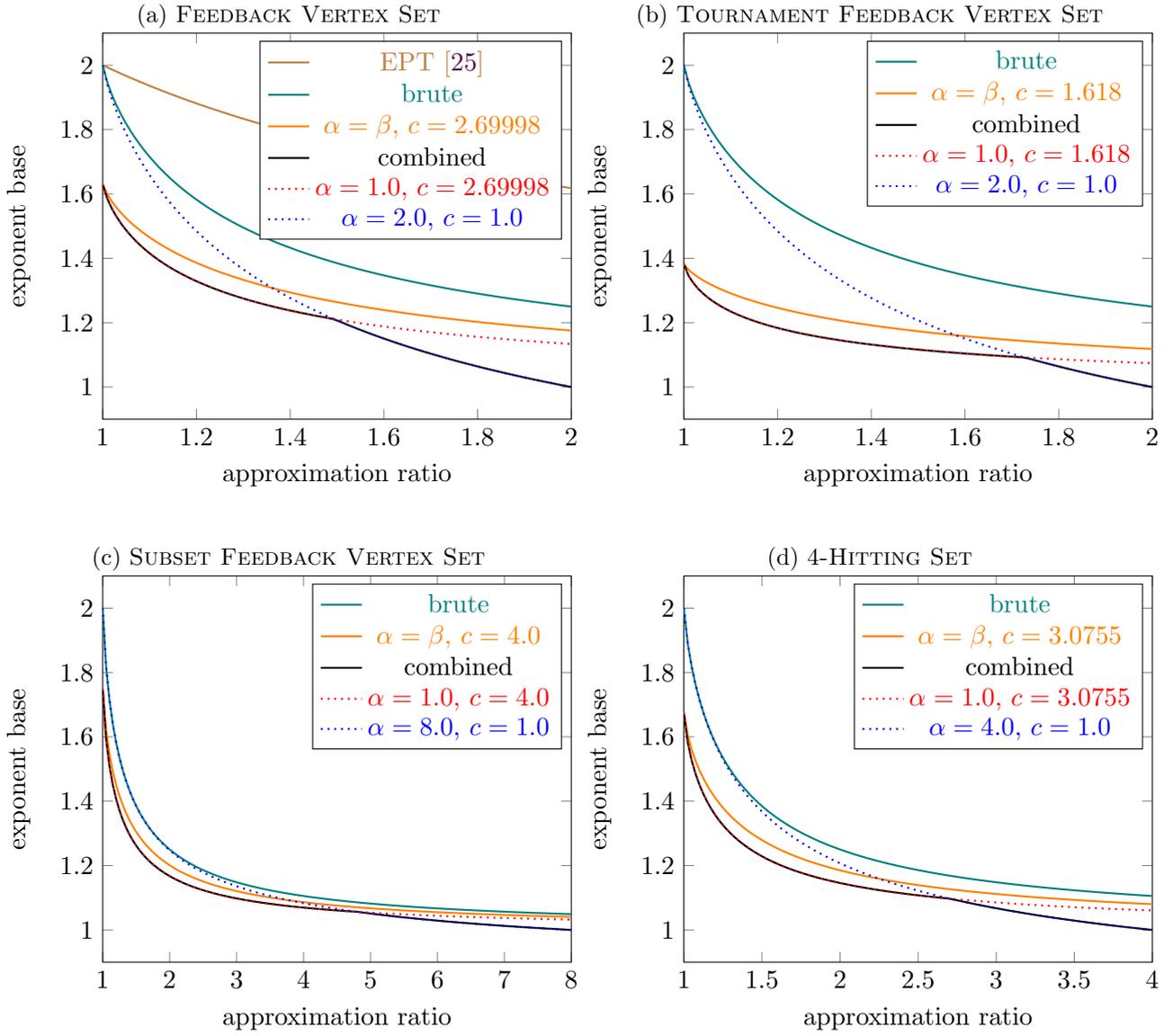

 \centering
 \begin{subfigure}{.5\textwidth}
  \centering
  \caption{{\sc Feedback Vertex Set}}
  \input{plots/figure_fvs.tex}
  \label{fig:fvs-results}
 \end{subfigure}%
 \begin{subfigure}{.5\textwidth}
  \centering
  \caption{{\sc Tournament Feedback Vertex Set}}
  \input{plots/figure_tfvs.tex}
  \label{fig:tfvs-results}
 \end{subfigure}%

 \begin{subfigure}{.5\textwidth}
  \centering
  \caption{{\sc Subset Feedback Vertex Set}}
  \input{plots/figure_subset_fvs.tex}
  \label{fig:subset-fvs-results}
 \end{subfigure}%
 \begin{subfigure}{.5\textwidth}
  \centering
  \caption{{\sc $4$-Hitting Set}}
  \input{plots/figure_4hs.tex}
  \label{fig:4hs-results}
 \end{subfigure}%

 \caption{Results for {\sc Feedback Vertex Set}, {\sc Tournament Feedback Vertex Set}, {\sc Subset Feedback Vertex Set} and {\sc $4$-Hitting Set}.
   A dot at $(\beta,d)$ means that the respective algorithm outputs an $\beta$-approximation in time $\OO^*(d^n)$.}
 \label{fig:runtimes-appendix-1}
\end{figure}

\begin{figure}[H]
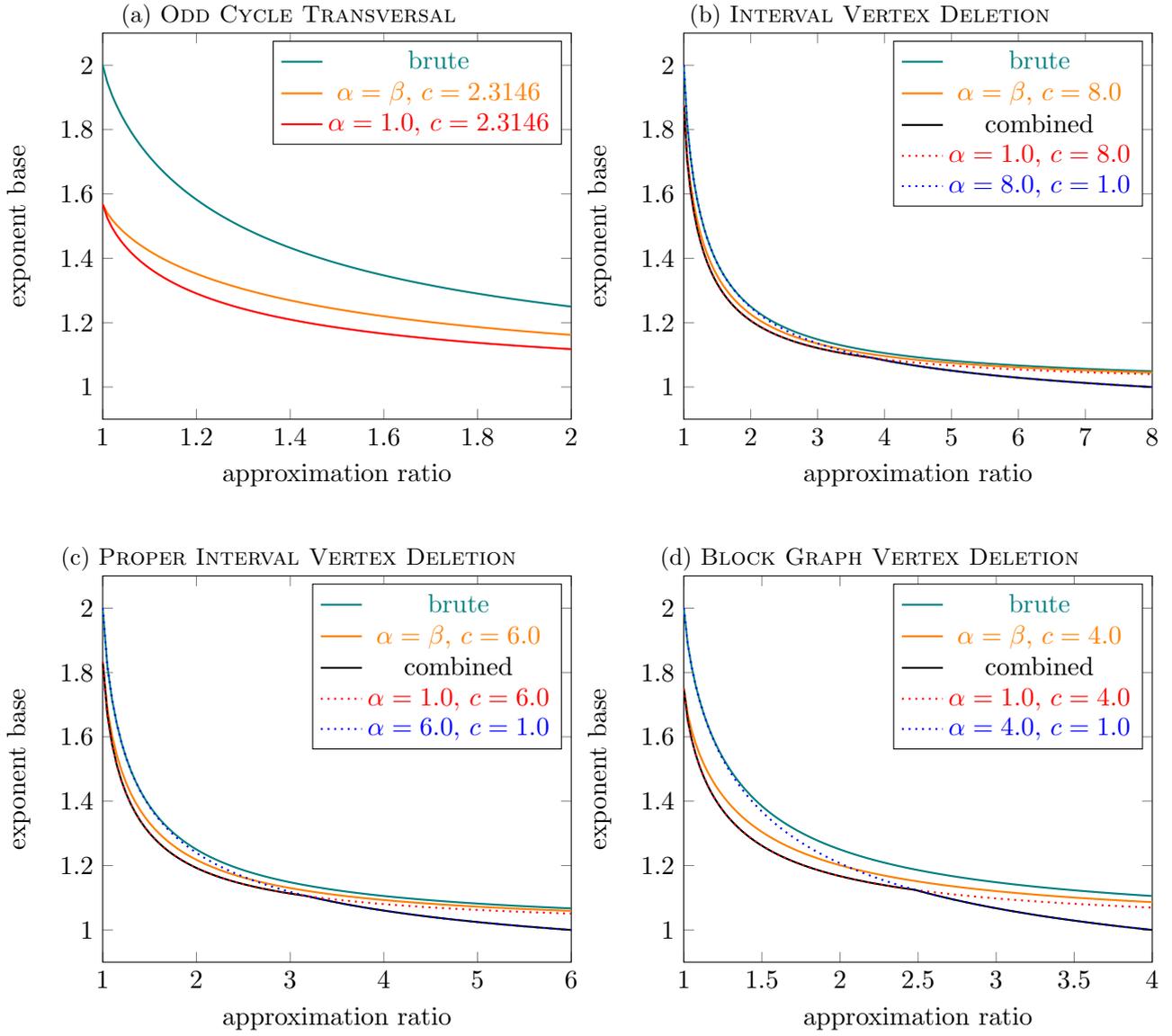

 \centering
 \begin{subfigure}{.5\textwidth}
  \centering
  \caption{{\sc Odd Cycle Transversal}}
  \begin{tikzpicture}[scale = 1.0]
	\begin{axis}[xmin = 1, xmax = 2, ymin = 0.9, ymax = 2.1, xlabel = {approximation ratio}, ylabel = {exponent base}]

	\addplot[teal, thick] coordinates {
		(1.0, 2.0)
		(1.01, 1.9454431345)
		(1.02, 1.9062508454)
		(1.03, 1.8731549013)
		(1.04, 1.8440491304)
		(1.05, 1.8178991111)
		(1.06, 1.794081097)
		(1.07, 1.7721758604)
		(1.08, 1.7518817491)
		(1.09, 1.7329712548)
		(1.1, 1.7152667656)
		(1.11, 1.698625911)
		(1.12, 1.6829321593)
		(1.13, 1.6680884977)
		(1.14, 1.6540130203)
		(1.15, 1.6406357531)
		(1.16, 1.6278963084)
		(1.17, 1.615742114)
		(1.18, 1.6041270506)
		(1.19, 1.5930103866)
		(1.2, 1.5823559323)
		(1.21, 1.5721313608)
		(1.22, 1.5623076557)
		(1.23, 1.552858658)
		(1.24, 1.5437606905)
		(1.25, 1.534992244)
		(1.26, 1.5265337131)
		(1.27, 1.5183671728)
		(1.28, 1.510476187)
		(1.29, 1.5028456449)
		(1.3, 1.4954616201)
		(1.31, 1.4883112476)
		(1.32, 1.4813826173)
		(1.33, 1.4746646809)
		(1.34, 1.4681471696)
		(1.35, 1.4618205222)
		(1.36, 1.4556758212)
		(1.37, 1.4497047356)
		(1.38, 1.443899471)
		(1.39, 1.4382527242)
		(1.4, 1.4327576428)
		(1.41, 1.427407789)
		(1.42, 1.4221971069)
		(1.43, 1.4171198929)
		(1.44, 1.4121707695)
		(1.45, 1.4073446605)
		(1.46, 1.4026367697)
		(1.47, 1.3980425604)
		(1.48, 1.3935577374)
		(1.49, 1.3891782307)
		(1.5, 1.3849001795)
		(1.51, 1.380719919)
		(1.52, 1.3766339674)
		(1.53, 1.3726390137)
		(1.54, 1.3687319076)
		(1.55, 1.3649096489)
		(1.56, 1.3611693784)
		(1.57, 1.3575083696)
		(1.58, 1.3539240206)
		(1.59, 1.3504138468)
		(1.6, 1.3469754742)
		(1.61, 1.343606633)
		(1.62, 1.3403051519)
		(1.63, 1.3370689522)
		(1.64, 1.3338960432)
		(1.65, 1.3307845174)
		(1.66, 1.3277325456)
		(1.67, 1.3247383734)
		(1.68, 1.3218003168)
		(1.69, 1.3189167589)
		(1.7, 1.3160861463)
		(1.71, 1.3133069861)
		(1.72, 1.3105778425)
		(1.73, 1.3078973347)
		(1.74, 1.3052641335)
		(1.75, 1.3026769593)
		(1.76, 1.3001345795)
		(1.77, 1.2976358065)
		(1.78, 1.2951794954)
		(1.79, 1.2927645423)
		(1.8, 1.2903898821)
		(1.81, 1.2880544871)
		(1.82, 1.2857573652)
		(1.83, 1.2834975581)
		(1.84, 1.2812741403)
		(1.85, 1.2790862173)
		(1.86, 1.2769329244)
		(1.87, 1.2748134255)
		(1.88, 1.2727269118)
		(1.89, 1.2706726008)
		(1.9, 1.2686497351)
		(1.91, 1.2666575813)
		(1.92, 1.2646954293)
		(1.93, 1.2627625911)
		(1.94, 1.2608584001)
		(1.95, 1.2589822102)
		(1.96, 1.2571333949)
		(1.97, 1.2553113469)
		(1.98, 1.2535154767)
		(1.99, 1.2517452127)
		(2.0, 1.25)
	};

	\addplot[orange, thick] coordinates {
		(1.0, 1.5679599067)
		(1.01, 1.5398203418)
		(1.02, 1.5199219491)
		(1.03, 1.5031469947)
		(1.04, 1.4883747622)
		(1.05, 1.4750673878)
		(1.06, 1.462906363)
		(1.07, 1.4516807077)
		(1.08, 1.4412403946)
		(1.09, 1.4314733531)
		(1.1, 1.4222927711)
		(1.11, 1.4136294993)
		(1.12, 1.4054272242)
		(1.13, 1.3976392524)
		(1.14, 1.3902262849)
		(1.15, 1.3831548271)
		(1.16, 1.3763960232)
		(1.17, 1.369924781)
		(1.18, 1.3637191039)
		(1.19, 1.3577595703)
		(1.2, 1.3520289231)
		(1.21, 1.3465117411)
		(1.22, 1.3411941738)
		(1.23, 1.3360637234)
		(1.24, 1.3311090656)
		(1.25, 1.3263199002)
		(1.26, 1.3216868255)
		(1.27, 1.3172012325)
		(1.28, 1.312855214)
		(1.29, 1.308641488)
		(1.3, 1.3045533309)
		(1.31, 1.3005845196)
		(1.32, 1.2967292817)
		(1.33, 1.2929822517)
		(1.34, 1.2893384324)
		(1.35, 1.2857931611)
		(1.36, 1.2823420798)
		(1.37, 1.2789811086)
		(1.38, 1.2757064217)
		(1.39, 1.272514427)
		(1.4, 1.2694017464)
		(1.41, 1.2663651995)
		(1.42, 1.2634017877)
		(1.43, 1.2605086808)
		(1.44, 1.2576832046)
		(1.45, 1.254922829)
		(1.46, 1.2522251584)
		(1.47, 1.2495879218)
		(1.48, 1.2470089644)
		(1.49, 1.2444862399)
		(1.5, 1.2420178033)
		(1.51, 1.2396018041)
		(1.52, 1.2372364805)
		(1.53, 1.2349201538)
		(1.54, 1.2326512232)
		(1.55, 1.2304281612)
		(1.56, 1.2282495091)
		(1.57, 1.2261138728)
		(1.58, 1.2240199195)
		(1.59, 1.2219663739)
		(1.6, 1.2199520149)
		(1.61, 1.2179756728)
		(1.62, 1.2160362265)
		(1.63, 1.2141326006)
		(1.64, 1.2122637633)
		(1.65, 1.2104287241)
		(1.66, 1.2086265312)
		(1.67, 1.2068562701)
		(1.68, 1.2051170616)
		(1.69, 1.2034080597)
		(1.7, 1.2017284504)
		(1.71, 1.2000774498)
		(1.72, 1.1984543029)
		(1.73, 1.1968582821)
		(1.74, 1.1952886859)
		(1.75, 1.193744838)
		(1.76, 1.1922260855)
		(1.77, 1.1907317987)
		(1.78, 1.1892613691)
		(1.79, 1.1878142093)
		(1.8, 1.1863897516)
		(1.81, 1.1849874474)
		(1.82, 1.1836067661)
		(1.83, 1.1822471946)
		(1.84, 1.1809082363)
		(1.85, 1.1795894109)
		(1.86, 1.1782902531)
		(1.87, 1.1770103124)
		(1.88, 1.1757491524)
		(1.89, 1.1745063502)
		(1.9, 1.1732814958)
		(1.91, 1.172074192)
		(1.92, 1.1708840532)
		(1.93, 1.1697107056)
		(1.94, 1.1685537864)
		(1.95, 1.1674129434)
		(1.96, 1.1662878349)
		(1.97, 1.1651781288)
		(1.98, 1.1640835028)
		(1.99, 1.1630036435)
		(2.0, 1.1619382467)
	};

	\addplot[red, thick] coordinates {
		(1.0, 1.5679599067)
		(1.01, 1.524293669)
		(1.02, 1.4952907253)
		(1.03, 1.4717896969)
		(1.04, 1.4517536905)
		(1.05, 1.4342018244)
		(1.06, 1.4185542344)
		(1.07, 1.4044287823)
		(1.08, 1.3915557712)
		(1.09, 1.379735594)
		(1.1, 1.3688152434)
		(1.11, 1.3586742179)
		(1.12, 1.349215548)
		(1.13, 1.3403598149)
		(1.14, 1.3320410104)
		(1.15, 1.3242035845)
		(1.16, 1.316800284)
		(1.17, 1.3097905364)
		(1.18, 1.3031392184)
		(1.19, 1.2968157012)
		(1.2, 1.2907931018)
		(1.21, 1.2850476855)
		(1.22, 1.2795583863)
		(1.23, 1.2743064162)
		(1.24, 1.2692749451)
		(1.25, 1.2644488351)
		(1.26, 1.2598144199)
		(1.27, 1.2553593189)
		(1.28, 1.2510722799)
		(1.29, 1.2469430464)
		(1.3, 1.2429622427)
		(1.31, 1.2391212763)
		(1.32, 1.2354122531)
		(1.33, 1.231827904)
		(1.34, 1.2283615206)
		(1.35, 1.2250068998)
		(1.36, 1.2217582941)
		(1.37, 1.2186103689)
		(1.38, 1.2155581637)
		(1.39, 1.2125970587)
		(1.4, 1.2097227444)
		(1.41, 1.2069311946)
		(1.42, 1.204218643)
		(1.43, 1.2015815607)
		(1.44, 1.199016638)
		(1.45, 1.1965207658)
		(1.46, 1.1940910209)
		(1.47, 1.191724651)
		(1.48, 1.1894190625)
		(1.49, 1.1871718082)
		(1.5, 1.184980577)
		(1.51, 1.1828431842)
		(1.52, 1.1807575624)
		(1.53, 1.1787217536)
		(1.54, 1.176733902)
		(1.55, 1.1747922467)
		(1.56, 1.1728951156)
		(1.57, 1.1710409202)
		(1.58, 1.1692281495)
		(1.59, 1.1674553656)
		(1.6, 1.1657211992)
		(1.61, 1.164024345)
		(1.62, 1.1623635585)
		(1.63, 1.1607376519)
		(1.64, 1.159145491)
		(1.65, 1.157585992)
		(1.66, 1.156058119)
		(1.67, 1.1545608806)
		(1.68, 1.1530933282)
		(1.69, 1.1516545532)
		(1.7, 1.1502436849)
		(1.71, 1.1488598889)
		(1.72, 1.1475023643)
		(1.73, 1.1461703431)
		(1.74, 1.1448630874)
		(1.75, 1.1435798888)
		(1.76, 1.1423200663)
		(1.77, 1.141082965)
		(1.78, 1.1398679554)
		(1.79, 1.1386744314)
		(1.8, 1.1375018097)
		(1.81, 1.1363495283)
		(1.82, 1.1352170461)
		(1.83, 1.1341038412)
		(1.84, 1.1330094107)
		(1.85, 1.1319332693)
		(1.86, 1.1308749489)
		(1.87, 1.1298339975)
		(1.88, 1.1288099788)
		(1.89, 1.1278024713)
		(1.9, 1.1268110677)
		(1.91, 1.1258353744)
		(1.92, 1.1248750109)
		(1.93, 1.123929609)
		(1.94, 1.1229988126)
		(1.95, 1.1220822771)
		(1.96, 1.1211796689)
		(1.97, 1.1202906649)
		(1.98, 1.1194149522)
		(1.99, 1.1185522277)
		(2.0, 1.1177021975)
	};

	\addlegendentry[no markers, teal]{brute}
	\addlegendentry[no markers, orange]{$\alpha = \beta$, $c = 2.3146$}
	\addlegendentry[no markers, red]{$\alpha = 1.0$, $c = 2.3146$}

	\end{axis}
\end{tikzpicture}
  \label{fig:oct-results}
 \end{subfigure}%
 \begin{subfigure}{.5\textwidth}
  \centering
  \caption{{\sc Interval Vertex Deletion}}
  \input{plots/figure_ivd.tex}
  \label{fig:ivd-results}
 \end{subfigure}%

 \begin{subfigure}{.5\textwidth}
  \centering
  \caption{{\sc Proper Interval Vertex Deletion}}
  \input{plots/figure_pivd.tex}
  \label{fig:pvid-results}
 \end{subfigure}%
 \begin{subfigure}{.5\textwidth}
  \centering
  \caption{{\sc Block Graph Vertex Deletion}}
  \input{plots/figure_bgvd.tex}
  \label{fig:bgvd-results}
 \end{subfigure}%
 \caption{Results for {\sc Odd Cycle Transversal}, {\sc Interval Vertex Deletion}, {\sc Proper Interval Vertex Deletion} and {\sc Block Graph Vertex Deletion}.
   A dot at $(\beta,d)$ means that the respective algorithm outputs an $\beta$-approximation in time $\OO^*(d^n)$.}
 \label{fig:runtimes-appendix-2}
\end{figure}

\begin{figure}[H]
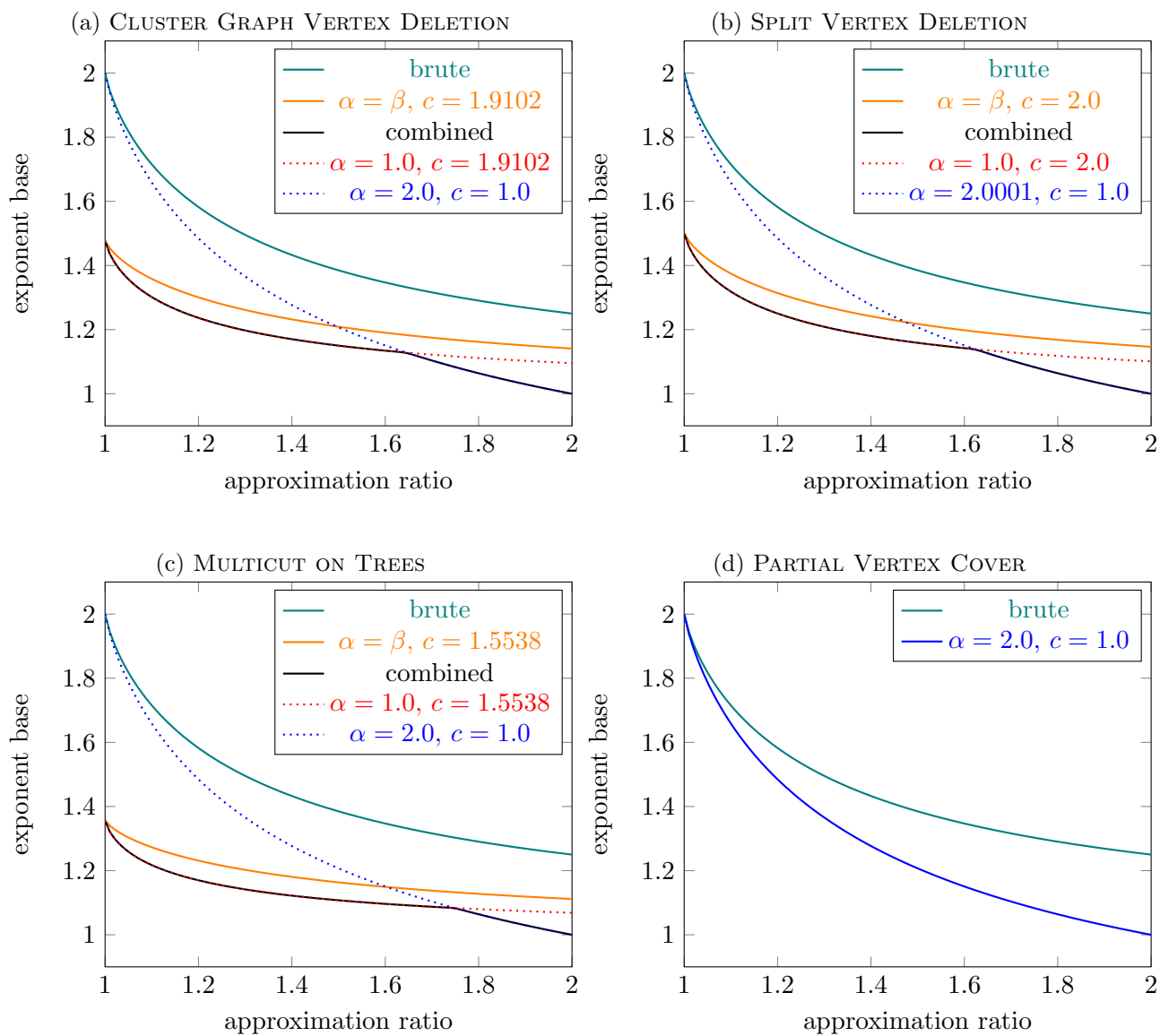

 \centering
 \begin{subfigure}{.5\textwidth}
  \centering
  \caption{{\sc Cluster Graph Vertex Deletion}}
  \input{plots/figure_cgvd.tex}
  \label{fig:cgvd-results}
 \end{subfigure}%
 \begin{subfigure}{.5\textwidth}
  \centering
  \caption{{\sc Split Vertex Deletion}}
  \input{plots/figure_svd.tex}
  \label{fig:svd-results}
 \end{subfigure}%

 \begin{subfigure}{.5\textwidth}
  \centering
  \caption{{\sc Multicut on Trees}}
  \input{plots/figure_mct.tex}
  \label{fig:mct-results}
 \end{subfigure}%
 \begin{subfigure}{.5\textwidth}
  \centering
  \caption{{\sc Partial Vertex Cover}}
  \begin{tikzpicture}[scale = 1.0]
	\begin{axis}[xmin = 1, xmax = 2, ymin = 0.9, ymax = 2.1, xlabel = {approximation ratio}, ylabel = {exponent base}]

	\addplot[teal, thick] coordinates {
		(1.0, 2.0)
		(1.01, 1.9454431345)
		(1.02, 1.9062508454)
		(1.03, 1.8731549013)
		(1.04, 1.8440491304)
		(1.05, 1.8178991111)
		(1.06, 1.794081097)
		(1.07, 1.7721758604)
		(1.08, 1.7518817491)
		(1.09, 1.7329712548)
		(1.1, 1.7152667656)
		(1.11, 1.698625911)
		(1.12, 1.6829321593)
		(1.13, 1.6680884977)
		(1.14, 1.6540130203)
		(1.15, 1.6406357531)
		(1.16, 1.6278963084)
		(1.17, 1.615742114)
		(1.18, 1.6041270506)
		(1.19, 1.5930103866)
		(1.2, 1.5823559323)
		(1.21, 1.5721313608)
		(1.22, 1.5623076557)
		(1.23, 1.552858658)
		(1.24, 1.5437606905)
		(1.25, 1.534992244)
		(1.26, 1.5265337131)
		(1.27, 1.5183671728)
		(1.28, 1.510476187)
		(1.29, 1.5028456449)
		(1.3, 1.4954616201)
		(1.31, 1.4883112476)
		(1.32, 1.4813826173)
		(1.33, 1.4746646809)
		(1.34, 1.4681471696)
		(1.35, 1.4618205222)
		(1.36, 1.4556758212)
		(1.37, 1.4497047356)
		(1.38, 1.443899471)
		(1.39, 1.4382527242)
		(1.4, 1.4327576428)
		(1.41, 1.427407789)
		(1.42, 1.4221971069)
		(1.43, 1.4171198929)
		(1.44, 1.4121707695)
		(1.45, 1.4073446605)
		(1.46, 1.4026367697)
		(1.47, 1.3980425604)
		(1.48, 1.3935577374)
		(1.49, 1.3891782307)
		(1.5, 1.3849001795)
		(1.51, 1.380719919)
		(1.52, 1.3766339674)
		(1.53, 1.3726390137)
		(1.54, 1.3687319076)
		(1.55, 1.3649096489)
		(1.56, 1.3611693784)
		(1.57, 1.3575083696)
		(1.58, 1.3539240206)
		(1.59, 1.3504138468)
		(1.6, 1.3469754742)
		(1.61, 1.343606633)
		(1.62, 1.3403051519)
		(1.63, 1.3370689522)
		(1.64, 1.3338960432)
		(1.65, 1.3307845174)
		(1.66, 1.3277325456)
		(1.67, 1.3247383734)
		(1.68, 1.3218003168)
		(1.69, 1.3189167589)
		(1.7, 1.3160861463)
		(1.71, 1.3133069861)
		(1.72, 1.3105778425)
		(1.73, 1.3078973347)
		(1.74, 1.3052641335)
		(1.75, 1.3026769593)
		(1.76, 1.3001345795)
		(1.77, 1.2976358065)
		(1.78, 1.2951794954)
		(1.79, 1.2927645423)
		(1.8, 1.2903898821)
		(1.81, 1.2880544871)
		(1.82, 1.2857573652)
		(1.83, 1.2834975581)
		(1.84, 1.2812741403)
		(1.85, 1.2790862173)
		(1.86, 1.2769329244)
		(1.87, 1.2748134255)
		(1.88, 1.2727269118)
		(1.89, 1.2706726008)
		(1.9, 1.2686497351)
		(1.91, 1.2666575813)
		(1.92, 1.2646954293)
		(1.93, 1.2627625911)
		(1.94, 1.2608584001)
		(1.95, 1.2589822102)
		(1.96, 1.2571333949)
		(1.97, 1.2553113469)
		(1.98, 1.2535154767)
		(1.99, 1.2517452127)
		(2.0, 1.25)
	};

	\addplot[blue, thick] coordinates {
		(1.0, 2.0)
		(1.01, 1.9387047513)
		(1.02, 1.8930929239)
		(1.03, 1.853849694)
		(1.04, 1.8188398378)
		(1.05, 1.7870069014)
		(1.06, 1.7577092525)
		(1.07, 1.7305126)
		(1.08, 1.705102307)
		(1.09, 1.6812394878)
		(1.1, 1.6587364406)
		(1.11, 1.6374417619)
		(1.12, 1.6172307723)
		(1.13, 1.5979990628)
		(1.14, 1.5796579792)
		(1.15, 1.5621313614)
		(1.16, 1.545353129)
		(1.17, 1.5292654531)
		(1.18, 1.5138173453)
		(1.19, 1.498963551)
		(1.2, 1.484663669)
		(1.21, 1.4708814415)
		(1.22, 1.457584176)
		(1.23, 1.4447422687)
		(1.24, 1.4323288084)
		(1.25, 1.4203192453)
		(1.26, 1.4086911107)
		(1.27, 1.3974237788)
		(1.28, 1.3864982635)
		(1.29, 1.3758970425)
		(1.3, 1.3656039065)
		(1.31, 1.3556038269)
		(1.32, 1.3458828408)
		(1.33, 1.3364279501)
		(1.34, 1.3272270326)
		(1.35, 1.3182687633)
		(1.36, 1.3095425448)
		(1.37, 1.3010384449)
		(1.38, 1.2927471418)
		(1.39, 1.2846598738)
		(1.4, 1.2767683951)
		(1.41, 1.2690649358)
		(1.42, 1.2615421652)
		(1.43, 1.2541931594)
		(1.44, 1.2470113718)
		(1.45, 1.2399906054)
		(1.46, 1.233124989)
		(1.47, 1.2264089541)
		(1.48, 1.2198372148)
		(1.49, 1.2134047491)
		(1.5, 1.2071067812)
		(1.51, 1.2009387662)
		(1.52, 1.1948963754)
		(1.53, 1.1889754825)
		(1.54, 1.1831721518)
		(1.55, 1.1774826264)
		(1.56, 1.1719033178)
		(1.57, 1.1664307958)
		(1.58, 1.1610617798)
		(1.59, 1.1557931299)
		(1.6, 1.1506218396)
		(1.61, 1.1455450278)
		(1.62, 1.1405599327)
		(1.63, 1.1356639047)
		(1.64, 1.1308544012)
		(1.65, 1.1261289803)
		(1.66, 1.1214852963)
		(1.67, 1.1169210943)
		(1.68, 1.112434206)
		(1.69, 1.1080225451)
		(1.7, 1.1036841036)
		(1.71, 1.0994169478)
		(1.72, 1.0952192149)
		(1.73, 1.0910891093)
		(1.74, 1.0870249001)
		(1.75, 1.0830249175)
		(1.76, 1.0790875504)
		(1.77, 1.0752112435)
		(1.78, 1.0713944949)
		(1.79, 1.067635854)
		(1.8, 1.0639339188)
		(1.81, 1.060287334)
		(1.82, 1.0566947891)
		(1.83, 1.0531550165)
		(1.84, 1.0496667894)
		(1.85, 1.0462289206)
		(1.86, 1.0428402603)
		(1.87, 1.0394996953)
		(1.88, 1.0362061466)
		(1.89, 1.0329585688)
		(1.9, 1.0297559486)
		(1.91, 1.0265973033)
		(1.92, 1.0234816797)
		(1.93, 1.0204081532)
		(1.94, 1.0173758263)
		(1.95, 1.0143838281)
		(1.96, 1.0114313128)
		(1.97, 1.008517459)
		(1.98, 1.0056414688)
		(1.99, 1.002802567)
		(2.0, 1.0)
	};

	\addlegendentry[no markers, teal]{brute}
	\addlegendentry[no markers, blue]{$\alpha = 2.0$, $c = 1.0$}

	\end{axis}
\end{tikzpicture}
  \label{fig:pvc-results}
 \end{subfigure}%
 \caption{Results for {\sc Cluster Graph Vertex Deletion}, {\sc Split Vertex Deletion}, {\sc Multicut on Trees} and {\sc Partial Vertex Cover}.
   A dot at $(\beta,d)$ means that the respective algorithm outputs an $\beta$-approximation in time $\OO^*(d^n)$.}
 \label{fig:runtimes-appendix-3}
\end{figure}

\end{document}